\pgfplotsset{/pgf/number format/use comma,compat=newest}
\newcommand{\E}{\mathbb{E}}
\newcommand{\R}{\mathbb{R}}
\newtheorem{proposition}{Proposition}
\theoremstyle{remark}
\newcommand\iid{\mathrel{\stackrel{\makebox[0pt]{\mbox{\normalfont\tiny iid}}}{\sim}}}
\newcommand\distreq{\mathrel{\stackrel{\makebox[0pt]{\mbox{\normalfont\tiny D}}}{=}}}
\begin{document}
\title{On the phase diagram of extensive-rank symmetric matrix denoising\texorpdfstring{\\}{} beyond rotational invariance}

\author{Jean Barbier}
\email[]{jbarbier@ictp.it}

\author{Francesco Camilli}
\email[]{fcamilli@ictp.it}

\affiliation{The Abdus Salam International Centre for Theoretical Physics\\ 
 Strada Costiera 11, 34151 Trieste, Italy}

\author{Justin Ko}
\email[]{justin.ko@uwaterloo.ca}
\affiliation{University of Waterloo \\ 
Department of Statistics and Actuarial Science, 200 University Ave W, Waterloo, ON, N2L 3G1, Canada}

\author{Koki Okajima}
\email[]{darjeeling@g.ecc.u-tokyo.ac.jp}
\affiliation{The University of Tokyo \\ Department of Physics, Graduate School of Science, 7-3-1 Hongo, Tokyo 113-0033, Japan }


\begin{abstract}
Matrix denoising is central to signal processing and machine learning. Its statistical analysis when the matrix to infer has a factorised structure with a rank growing proportionally to its dimension remains a challenge, except when it is rotationally invariant. In this case the information theoretic limits and an efficient Bayes-optimal denoising algorithm, called rotational invariant estimator \cite{ledoit2011eigenvectors,BABP}, are known. Beyond this setting few results can be found. The reason is that the model is not a usual spin system because of the growing rank dimension, nor a matrix model (as appearing in high-energy physics) due to the lack of rotation symmetry, but rather a hybrid between the two.

In this paper we make progress towards the understanding of Bayesian matrix denoising when the hidden signal is a factored matrix $\bX\bX^\intercal$ that is not rotationally invariant. Monte Carlo simulations suggest the existence of a \emph{denoising-factorisation transition} separating a phase where denoising using the rotational invariant estimator remains Bayes-optimal due to universality properties of the same nature as in random matrix theory, from one where universality breaks down and better denoising is possible, though algorithmically hard. We also argue that it is only beyond the transition that factorisation, i.e., estimating $\bX$ itself, becomes possible up to irresolvable ambiguities. On the theoretical side, we combine mean-field techniques in an interpretable multiscale fashion in order to access the minimum mean-square error and mutual information. Interestingly, our alternative method yields equations reproducible by the replica approach of \cite{SK13EPL}. Using numerical insights, we delimit the portion of phase diagram where we conjecture the mean-field theory to be exact, and correct it using universality when it is not. Our complete ansatz matches well the numerics in the whole phase diagram when considering finite size effects. 

\end{abstract}

\maketitle


\section{Introduction}

{Statistical inference is the powerhouse of modern information processing systems as appearing for instance in signal processing, error correction for communication, data cleaning in neuroscience, biology or finance. Ultimately, the goal is always to separate the signal from the noise in the data at hand, where ``signal'' simply means what is considered the relevant part of the data for a specific application. Given the ubiquity of arrays to represent modern data, a particularly general inference task is the one of matrix denoising. E.g., in recommender systems, rows represent users and columns the products they have rated, or in medicine with patient versus measured physiological features. A basic reflex to make sense of a large data array is to write down its empirical covariance matrix, followed by a preprocessing denoising step.}

{However, denoising a large covariance matrix, modeled as a product of two factors, is a hard problem, both algorithmically, and from the viewpoint of theoretical analysis, especially if its rank is proportional to its dimension as in the present contribution. The applications of matrix denoising range from dictionary learning \cite{dictionaryLearning}, representation learning \cite{bengio2013representation} or sparse coding \cite{olshausen1996emergence,olshausen1997sparse,kreutz2003dictionary,mairal2009} in machine learning and computational neuroscience, to robust principal components analysis \cite{candes2011robust,perry2018optimality}, sub-matrix localization \cite{hajek2017information}, blind source separation \cite{belouchrani1997blind}, matrix completion and recommender systems \cite{candes2009exact,candes2010power,PMF_salakhutdinov} or video processing \cite{mairal2009,Mairal_colorrestoration} in signal processing, but also appear in the analysis of community detection in large graphs \cite{abbe2017community,lelarge2019fundamental,caltagirone2017recovering}. Matrix denoising problems can also be interpreted as ``planted spin glasses'', and are therefore naturally linked to the statistical physics of disordered systems \cite{nishimori2001statistical}, a connection we will exploit extensively here.}

{In this paper we focus on the denoising of symmetric positive definite signal matrices, which can therefore be interpreted as empirical covariance matrices, corrupted by a symmetric Gaussian matrix, which is a standard modeling assumption for noise. In order to study the \emph{fundamental} limits of inference, i.e., the information theoretically optimal performance achievable independently of the computational cost to do so, the correct framework is the Bayesian one. Denoising is thus formulated as a Bayesian inference problem over an empirical covariance matrix $\bX\bX^\intercal$ given a noisy data matrix $\bY\in\RR^{N\times N}$ generated as
\begin{align}\label{eq:channel0}
    \bY=\sqrt{\frac{\lambda}{N}}\bX\bX^\intercal+\bZ.
\end{align}
The rectangular matrix $\bX\in\RR^{N\times M}$ is the hidden factor (also called signal, a terminology we will also use for the square matrix $\bX\bX^\intercal\in\RR^{N\times N}$ of rank at most $\min(N,M)$ if there is no ambiguity) and $\bZ\in\RR^{N\times N}$ is a noise drawn from the Gaussian orthogonal ensemble (GOE), i.e., a symmetric matrix with independent Gaussian entries on its upper triangular part $Z_{ij}=Z_{ji}\sim\mathcal{N}(0,1+\delta_{ij})$. We consider the ideal setting where the signal $\bX$ has independent and identically distributed (i.i.d.) components $(X_{i\mu})_{i\leq N,\mu\leq M}$ drawn from a sub-Gaussian prior distribution $P_X$ with zero mean and unit variance.} 

{When the prior is different than a centered Gaussian, the signal $\bX\bX^\intercal$ (and thus the data) is \emph{not} rotational invariant. Rotational invariance means in this context that the probability density $P(\bX\bX^\intercal)$ would be equal to $P(\bO\bX\bX^\intercal\bO^\intercal)$ for any $N\times N$ orthogonal matrix $\bO$. The setting where the prior $P_X$ breaks this invariance, as well as the differences this induces w.r.t. the rotational invariant Gaussian prior $P_X=\mathcal{N}(0,1)$, will be of particular importance in this paper.}

\vspace{5pt}
\noindent {\textbf{The well known: denoising a rank-one matrix.} Model \eqref{eq:channel0} may look rather special at first sight. However, a long line of work in the statistical physics of high-dimensional inference has proven that the analysis of prototypical models allow us to understand general mechanisms, similarly to how dissecting the Ising model in physics has paved the way to understand phase transitions and critical phenomena way beyond it. The connection between the Ising model and matrix denoising is actually more than an analogy. Indeed, the rank-one version $M=1$ of matrix denoising (also called the spiked Wigner model) with signal entries taking binary values $X_i=\pm1$ is precisely the planted version of the celebrated Sherrington-Kirkpatrick mean-field spin glass \cite{sherrington1975solvable} (see \cite{nishimori2001statistical,zdeborova2016statistical} for the mapping), whose analysis is at the core of a fruitful forty-year research field involving physicists and mathematicians \cite{mezard1990spin,Panchenko2013}.}

{The special features of this spin model inherent to the presence of a planted signal (or equivalently, ``on the Nishimori line''), first analysed using spin glass techniques by Nishimori \cite{nishimori2001statistical}, have triggered numerous works by various communities. Spiked matrix models (i.e., low-rank perturbations of much larger rank matrices, in the present case the noise) are naturally a whole subject in random matrix theory. It was first understood in this context that the presence of the signal may yield a phase transition phenomenon now known as the Ben Arous-Baik-Péché (BBP) transition \cite{baik2005phase,baik2006eigenvalues,BENAYCHGEORGES2011} at a critical value of the signal-to-noise ratio (SNR) $\lambda$. This is one of the most studied examples of phase transition in information processing. The top part of \figurename~\ref{fig:lowRank} illustrates the BBP transition: for SNR greater than a critical value, here denoted $\lambda_{\rm Algo}$, an outlier eigenvalue of the data $\bY$ escapes the semicircular bulk of eigenvalues associated mostly with the noise component. A simple spectral estimator corresponding to the associated eigenvector then has a non-trivial overlap with $\bX$. Before this point, however, no eigenvector of the data aligns with the signal, and spectral estimation is thus doomed. In the statistics literature, the spiked Wigner model was introduced as a statistical model for sparse principal components analysis (PCA) \cite{johnstone2001distribution,johnstone2004sparse,zou2006sparse,johnstone2012consistency}.} 

\begin{figure}
    \centering    \includegraphics[width=1\linewidth]{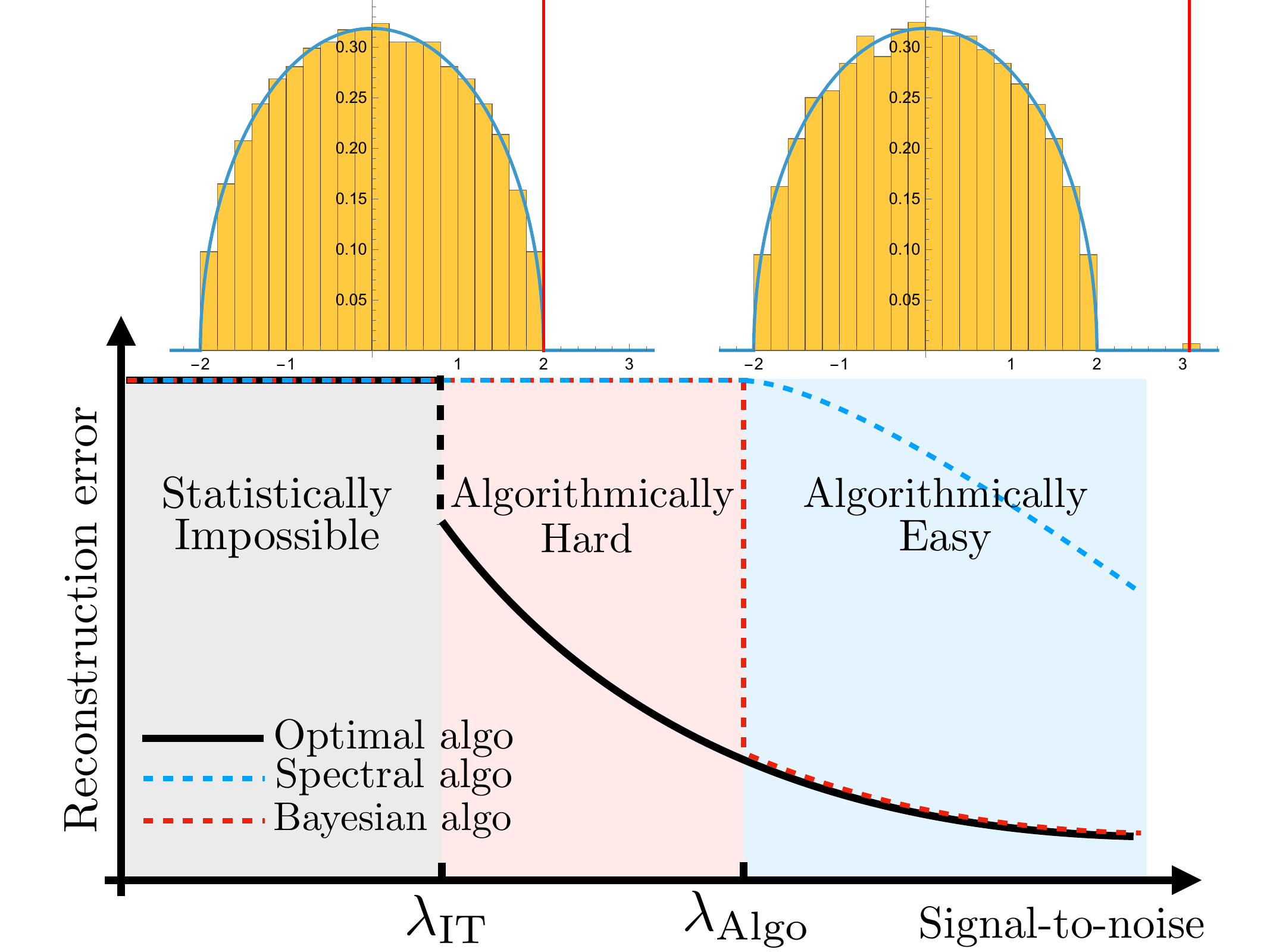}
    \caption{{The top illustrates the Ben Arous-Baik-Péché transition occurring in the rank-one spiked matrix model. In orange is the histogram of eigenvalues of $\bY$ when $M=1$. Before the transition it closely matches Wigner's semicircular law (blue) and its top eigenvalue (red line) sticks to the endpoint of the bulk of eigenvalues. All eigenvectors of $\bY$ have a vanishing $o_N(1)$ overlap with the hidden signal $\bX$. The BBP transition is marked by an outlier eigenvalue detaching from the bulk. The associated eigenvector aligns non-trivially with $\bX$ and can thus serve as spectral estimator. At the bottom is the phase diagram of the model considered from the Bayesian (information theoretic) perspective, with sparse prior $P_X= \rho \delta_0+\tau \delta_{-a}+(1-\tau-\rho) \delta_{b}$ for a proper choice of parameters (see, e.g., \cite{XXT,miolane2019fundamental}). An impossible, hard, and easy inference phases appear for typical realisations of \eqref{eq:channel0} as the signal strength increases, delimited by information theoretic and algorithmic transitions. The behavior of the Bayes-optimal, spectral, and Bayesian approximate message-passing algorithm are shown. For this example of prior $P_X$, the transitions for the message-passing and spectral algorithms match, but the Bayesian algorithm outperforms the spectral one as it exploits the prior, while the latter does not.}}
    \label{fig:lowRank}
\end{figure}

{Concerning the Bayesian viewpoint on the model, as considered in the present paper, multiple approaches are now available for the computation of the main information theoretic quantities, in particular the mutual information between signal and data and the minimum achievable mean square error. We mention the replica method \cite{mezard1990spin,lesieur2015mmse,lesieur2017constrained} among the non-rigorous ones, and proofs based on the analysis of message-passing algorithms \cite{6875223,deshpande2017asymptotic}, spatial coupling \cite{XXT,XX_long}, the interpolation method and its adaptive version \cite{koradamacris,BarbierM17a,Barbier_2019,reeves2020information}, the cavity method \cite{ASS,lelarge2019fundamental}, large deviation arguments \cite{el2018estimation} or partial differential equations approaches \cite{mourrat_Hamilton_spike,chen2022statistical}. This influential series of works has led to a rich and general picture summarized by the bottom part of \figurename~\ref{fig:lowRank}.}

{The phase diagram shown in the figure possesses three distinct regions. In the ``impossible phase'' at low $\lambda$, for a typical large realisation of the probabilistic model \eqref{eq:channel0} with $M=1$, the data contains a vanishing information about the signal and thus inference, whether computationally efficient or not, is impossible. Consequently, even the Bayes-optimal estimator (corresponding to the Bayes posterior mean, black solid curve) yields trivially bad reconstruction error. Beyond the information theoretic transition $\lambda_{\rm IT}$ the picture changes: here, the data typically contain information and inference becomes in theory possible. However, no practical algorithm running with a sub-exponential complexity in $N$ is known. Those we have at hand, including the aforementioned spectral estimator and a Bayesian, prior-aware, approximate message-passing (AMP) algorithm \cite{kabashima2003cdma,donoho2009message,rangan2012iterative,lesieur2017constrained} are all failing. Finally, beyond the algorithmic threshold $\lambda_{\rm Algo}$, the signal is strong enough to be efficiently extracted from the data and, in particular, the Bayesian AMP algorithm suddenly becomes optimal in spite of its low computational cost, following a phase transition in its performance. In this region inference is thus ``easy''.} 

{Such a phase diagram with a hard region (also called computational-to-statistical gap) and associated discontinuous transitions in algorithms performance, is not specific to matrix denoising but is actually typical in high-dimensional inference \cite{zdeborova2016statistical,BarbierGLM-PNAS}. Their study is the central question in average-case complexity theory in modern computer science. And web-of-reductions exist that prove that if the hard phase is fundamental (in the sense that there exists no polynomial-in-$N$ complexity algorithm for signal extraction in this phase), then a plethora of other interconnected inference problems must also be hard \cite{brennan2018reducibility,brennan2020reducibility}. The analyses of prototypical matrix denoising models therefore have a conceptual impact that goes way beyond their specific instance.}

{We mention that numerous generalisations of the spike Wigner model have recently been considered \cite{PNAS_structured_PCA,barbier2025information,barbier2022price,barbier2024information,alberici2021MSKNL,albericiDBM,zhang-marco2024matrixdenoising}.}

\vspace{5pt}
\noindent {\textbf{The challenge: extensive-rank matrix denoising, beyond rotational invariance.}\ \ In this paper we aim at going beyond low-rank matrix inference (i.e., $M=O(1)$ while $N\gg 1$), by considering the much more challenging extensive-rank setting, that is $N,M=M_N$ both large and verifying
\begin{align*}
    \lim_{N\to+\infty}\frac MN=\alpha \in (0,+\infty).
\end{align*}
In this scaling regime much less is known; we refer to Table 1 in \cite{Jean_Farzad_matrixinference_2024} for a compact account of the current status of the problem. 
This scaling is natural in random matrix theory.
It is indeed the one where the signal is not anymore a ``perturbation'' (due to its low-rank nature) of the noise, but rather a ``macroscopic deformation'' of it. In this case, the BBP phenomenology is not valid anymore, because for any $\lambda>0$, the spectrum of the data deviates macroscopically from the semicircular law. Many questions can then be answered using random matrix theory. For instance, the data asymptotic spectral density can be obtained using free probability \cite{potters2020first}.} 

{In this paper, however, we are not interested in typical random matrix theory questions, but rather in information theoretic ones, usually more akin to spin glass approaches. A major difficulty is that when interpreted in statistical physics terms, the model \eqref{eq:channel0} is \emph{not} a planted version of a well-known spin model as when $M=1$, because we have to treat a whole matrix rather than a vector of variables as usual. This induces novel difficulties.}

{The standard strategy to access the information theoretic quantities in statistical mechanics of inference is to cast the problem as the computation of a free entropy (i.e., log-partition function) associated with the Gibbs-Boltzmann distribution of the model, which is the posterior distribution given the data in the context of inference, see \eqref{eq:posterior}. In doing so, we will see in Section~\ref{sec:setting} that the partition function of extensive-rank matrix denoising resembles a \emph{matrix model}. A prototypical matrix model is a partition function of the type
\begin{align}\label{eq:one-matrix-integral}
    \mathcal{Z}_{\rm MM}(\bA):=\int_{\mathbb{H}_N} d\bF\exp\big(-\Tr\bF^2+\Tr\bA\bF\big),
\end{align}
where $\mathbb{H}_N$ is the space of Hermitian $N\times N$ matrices, $d\bF$ denotes the Lebesgue measure over it, and the source $\bA\in \mathbb{H}_N$. This one-matrix integral can be solved in the large $N$ limit by the character expansion method \cite{Migdal_Character_expansion,KAZAKOV_character_Gravity} and by means of the Harish Chandra-Itzykson-Zuber (HCIZ) integral \cite{harish1957differential,itzykson1980planar} defined later in \eqref{eq:HCIZ-def}. Matrix models appear in high energy physics, in particular in Yang-Mills theories with internal symmetry groups of diverging dimension \cite{KAZAKOV-Migdal_Gauge}, in random matrix theory approaches to quantum gravity \cite{Ambjorn1997_quantum_geometry_RMT,Francesco_1995_RMT_gravity}, quantum chromo-dynamics \cite{Verbaarschot_RMT_QCD}, statistical mechanics of planar random lattices \cite{KAZAKOV_Ising_Planar,Kazakov_almostflat_planar,BOULATOV1987379,KAZAKOV_Potts_Planar_lattice} or when counting planar diagrams \cite{Brezin_Parisi_PlanarDiagrams,ZVONKIN_map_enumeration}. We refer the interested reader to the review \cite{kazakov2000sreview}.}

{The known approaches for computing partition functions of matrix models require a Lebesgue measure over the matrix elements or a rotational invariant one. This allows integration of the $\Theta(N^2)$ rotational degrees of freedom using the HCIZ integral, leaving only $N$ variables for the eigenvalues. If the exponent of the integrated matrix model stays $\Theta(N^2)$, then a final saddle point over eigenvalues (that are only $N$) is justified to finish the computation.}

{In the present paper, however, we can no longer rely on rotational symmetry, because it is in general broken by the prior measure $P_X$ as we shall see. Therefore, the challenge that we tackle here is to provide a framework for analysing information theoretic limits of matrix denoising, cast as a statistical mechanics model looking like a matrix model (or vice versa). In doing so, we will not be able to rely on standard spin glass approaches, as it is a \emph{matrix} rather than a vector spin system, nor on those from matrix models due to the breaking of rotational invariance. Nevertheless, we will see that both types of techniques are key when properly adapted. This is the main contribution of the paper, with the ultimate goal of drawing the equivalent of the phase diagram in \figurename~\ref{fig:lowRank} but in the extensive-rank case $M\approx \alpha N$.} 

{Finally, this scaling regime is not only interesting due its challenging nature, but is also practically relevant. Many applications mentioned at the beginning in signal processing and learning naturally belong to this regime. Importantly, like it was realised that low-rank spiked matrix models play an important role in modern neural networks \cite{ba2022high,JMLR:v25:23-1543,wang2024nonlinear}, so does their extensive-rank counterpart, see the recent works \cite{maillard2024bayes,2025_nn_quadratic_Barbier}, which exploit a deep connection to extensive-rank matrix denoising in order to analyse modern scaling regimes of neural networks.}

\section{Bayesian setting, related literature and main contributions}\label{sec:setting}

Before properly defining the setting let us introduce few notations and information theoretic basic definitions.

Symbol $\sim$ expresses that a random variable is drawn from a certain law. $X_{i\mu}\iid P$ means that $(X_{i\mu})_{i,\mu}$ are independent and identically distributed according to $P$. For a matrix $\bx=(x_{i\mu})_{i\leq N,\mu\leq M}$ we denote  $dP_X(\bx)=\prod_{i,\mu}dP_X(x_{i\mu})$. Vectors are columns, their transpose are rows. { $\bx_i=(x_{i\mu})_{\mu\leq M}\in\mathbb{R}^M$ is the vector of the elements of $\bx$ on row $i$. Consistently, $\bx_i$ will still be referred to as a column vector, even if it contains row elements. Analogously $\bx_\mu=(x_{i\mu})_{i\leq N}\in\mathbb{R}^N$ is the vector of the elements of $\bx$ on column $\mu$.} We denote the inner product between vectors equivalently as $\bu\cdot\bv =\bu^\intercal \bv=\sum_{i} u_iv_i$. Symbol $\distreq$ means equality in law. The norm $\|\bA\|=(\sum_{ij}A_{ij}^2)^{1/2}$ is the Frobenius norm. The limit $\lim_{N\to \infty}$ means that both $N,M$ diverge with $M=\alpha N +o_N(1)$ and is called \emph{thermodynamic limit}. $\mathcal{N}(m,\sigma^2)$ is the density of a Gaussian with mean $m$ and variance $\sigma^2$.

{
The mutual information between two random variables $X,Y$ with joint distribution $P_{X,Y}$ and marginals $P_X,P_Y$ is the Kullback-Leibler divergence $I(X;Y)=I(Y;X):=D(P_{X,Y}\|P_X\otimes P_Y)$. For our purposes, $Y=\bY$ will be a continuous r.v.\ whereas $X=\bX$ can be discrete or continuous, but the conditional density $P_{Y\mid X}(y\mid x)$ will always be accessible and well defined. In this regards, the MI can also be recast as 
\begin{align*}
    I(X;Y)&=\int dP_X(x)dy\,P_{Y\mid X}(y\mid x)\ln\frac{P_{Y\mid X}(y\mid x)}{P_Y(y)}\\
    &=\mathcal{H}(Y)-\mathcal{H}(Y\mid X),
\end{align*}
where $\mathcal{H}(Y)$ is the differential entropy and $\mathcal{H}(Y\mid X)$ the conditional differential entropy. 
The same symbol $\mathcal{H}$ will also be used for discrete random variables, in which case it denotes the Shannon entropy. We refer the reader not familiar with these notions to \cite{cover1999elements,polyanskiy2025information} for more details.}

\subsection{Bayesian matrix denoising: setting}

Consider the data generating model \eqref{eq:channel0}. In order to perform Bayesian inference, the first step is to write down the posterior distribution of the signal given the data using Bayes' law. As the signal has a known factorised structure, the posterior is a probability distribution over the factor $\bX$ alone. Given observation \eqref{eq:channel0} the posterior $dP(\bx\mid\bY):=\mathbb{P}(\bX\in [\bx,\bx +d\bx)\mid\bY)$ reads
\begin{align}\label{eq:posterior}
    dP(\bx\mid\bY)=\frac{\exp\big(-\frac{1}{4}\|\bY-\sqrt{\frac{\lambda}{N}}\bx\bx^\intercal\|^2\big)}{C(\bY)}dP_X(\bx),
\end{align}with $C(\bY)$ a normalization.
We denote the expectation w.r.t. the posterior using the bracket notation
$$\langle\,\cdot\,\rangle= \langle\,\cdot\,\rangle_\bY:=\EE[\,\cdot\mid\bY].$$
Expectation $\EE$ is w.r.t. to the data $\bY$ generated as \eqref{eq:channel0}, or equivalently, w.r.t. the independent variables $(\bX,\bZ)$.
The posterior \eqref{eq:posterior} gathers all information on the generating process of the signal and data: the factorised form of $\bX\bX^\intercal$, the size $N\times M$ of the unknown factor $\bX$, its prior law $P_X$, the Gaussian nature of the symmetric noise, as well as the SNR $\lambda$. This setting where the posterior is known exactly to the statistician is said to be \emph{Bayes-optimal}. This ideal case is the appropriate one for studying the fundamental performance limits in inference.

For the information theoretical analysis, one of the main goals is the computation of the mutual information (MI) density, i.e., per signal element:
\begin{align}    \frac{I(\bX\bX^\intercal;\bY)}{MN}=\frac{\mathcal{H}(\bY)}{MN}-\frac{\mathcal{H}(\bY\mid \bX\bX^\intercal)}{MN}=\frac{\mathcal{H}(\bY)}{MN}-\frac{\mathcal{H}(\bZ)}{MN}\nonumber
\end{align}
where, given that the normalised distribution of the data is $P(\bY)=C(\bY)/\prod_{i\le j\le N}\sqrt{2\pi(1+\delta_{ij})}$, the differential entropy of the data is
\begin{align*}
    \mathcal{H}(\bY)&
    =\frac{1}{2}\sum_{i\leq j}\ln (2\pi(1+\delta_{ij}))+\frac{1}{4}\EE\|\bY\|^2\\
    &\quad -\EE\ln \int_{\mathbb{R}^{N\times M}} dP_X(\bx)e^{\frac{1}{2}\sqrt{\frac{\lambda}{N}}\Tr\bY\bx\bx^\intercal-\frac{\lambda}{4N}\Tr(\bx\bx^\intercal)^2}
\end{align*}
and the noise' one is $\mathcal{H}(\bZ)=\frac{1}{2}\sum_{i\leq j}\ln (2\pi e(1+\delta_{ij}))$.
Noticing that $\frac{1}{MN^2}\EE\|\bX\bX^{\intercal}\|^2
=1+\alpha+O(1/N)$ for a centered unit variance prior, the MI density therefore reads 
\begin{align}\label{eq:MI_definition_complete}
    &\frac{I(\bX\bX^\intercal;\bY)}{MN}=\frac{\lambda}{4}(1+\alpha)+O(1/N)\\&\quad-\frac{1}{MN}\EE\ln \int_{\mathbb{R}^{N\times M}} dP_X(\bx)e^{\frac{1}{2}\sqrt{\frac{\lambda}{N}}\Tr\bY\bx\bx^\intercal-\frac{\lambda}{4N}\Tr(\bx\bx^\intercal)^2}.\nonumber
\end{align}
{Observe that the MI between $\bX\bX^\intercal$ and $\bY$ is the same as that between $\bX$ and $\bY$ because $\bY$ is a stochastic function of $\bX$ only through $\bX\bX^\intercal$.} In statistical mechanics of inference the MI plays the same role as the free energy, and is thus the thermodynamic potential of the problem. In the thermodynamic limit, its non-analytic point(s) possibly encountered when tuning the control parameters $(\alpha,\lambda)$ correspond(s) to the phase transition(s), whose characterisation is a central goal in this paper.

The integral in the MI \eqref{eq:MI_definition_complete} is the logarithm of the partition function of a matrix model like \eqref{eq:one-matrix-integral} where $\bF$ is replaced by $\bx\bx^\intercal$ which is symmetric positive semi-definite and where the data $\bY$ plays the role of source $\bA$. A major difference is that the reference measure of the integral is not rotational invariant in general.

The noise being additive Gaussian, the MI is linked to the MMSE on $\bX\bX^\intercal$ via a thermodynamic identity called \enquote{I-MMSE relation} \cite{Verdu_I-MMSE}:
\begin{align}\label{eq:I-MMSE}
    \frac{d}{d\lambda}\frac{I(\bX\bX^\intercal;\bY)}{MN}=\frac{1}{4MN^2}\EE\|\bX\bX^\intercal-\langle\bx\bx^\intercal\rangle\|^2.
\end{align}
The r.h.s. is the ${\rm MMSE}={\rm MMSE}_N$ divided by four, with
\begin{align}
    \label{MMSEuseful}
    {\rm MMSE}&:=\frac{1}{MN^2}\EE\|\bX\bX^\intercal-\langle\bx\bx^\intercal\rangle\|^2\nonumber\\
    &\ =\frac{1}{M}\EE\Big\|\frac{\bX\bX^\intercal}{N}\Big\|^2-\frac{1}{M}\EE\Big\|\frac{\langle\bx\bx^\intercal\rangle}{N}\Big\|^2,
\end{align}
using Proposition~\ref{prop:Nishi_ID} in the appendix for the second line.

\subsection{Related works}\label{sec:related_works} 

In the past fifteen years, an interest in the statistical analysis of the challenging proportional scaling regime, $M/N\to\alpha$ as $N\to+\infty$, has grown. Given the numerous conjectures revolving around extensive-rank matrix denoising, we now review relevant results.

\vspace{5pt}
\noindent \textbf{Rotational invariant setting.} \ \ A major contribution came from random matrix theory. Ledoit and Péché \cite{ledoit2011eigenvectors} and Bun et al. \cite{BABP} introduced a rotational invariant estimator
(which could be more appropriately called rotational equivariant estimator as noted in \cite{semerjian2024matrix}) for optimal denoising of rotationally invariant signals. Given data as in equation \eqref{eq:channel0}, the RIE is the best estimator diagonalizing on the eigenbasis of the data, collected in an orthogonal matrix $\bO_\bY$. That given, it just remains to clean the eigenvalues by means of the \emph{shrinkage formula}
\begin{align} \xi_i(\bY):=\frac{\gamma_{\bY,i}-2\pi\mathsf{H}[\rho_Y](\gamma_{\bY,i})}{\sqrt{\lambda}}\label{RIEshrinkage}
\end{align}
where $(\gamma_{\bY,i})$ are the eigenvalues of $\bY/\sqrt{N}$, the spectral density $\rho_Y$ is the corresponding weak limit of $\frac1N\sum_{i}\delta_{\gamma_{\bY,i}}$ as $N\to+\infty$,  and $\mathsf{H}$ is the Hilbert transform $\mathsf{H}[\rho_Y](x):=\frac{1}{\pi}\text{P.V.}\int d\rho_Y(y)/(x-y)$. The RIE is then composed as
\begin{align*}
\Xi(\bY):=\bO_\bY\,\text{diag}\big((\xi_i(\bY))_{i\leq N}\big)\,\bO_\bY^\intercal.
\end{align*}

{The RIE exploits the factorised structure of the signal and its rank in \eqref{RIEshrinkage} through its Marchenko-Pastur (MP) asymptotic spectral law $\rho_{XX^\intercal}$, needed to compute $\rho_Y=\rho_{XX^\intercal}\boxplus \rho_Z$ by free convolution $\boxplus$ with the noise semicircular spectral law $\rho_Z$ (see \cite{potters2020first}), and which depends on $\alpha$ and the first and second moments of $P_X$; higher order moments are irrelevant to $\rho_{XX^\intercal}$ by spectral universality \cite{10.1214/13-AAP939}. This is, however, apriori partial compared to the complete prior knowledge that the signal is of the form $\bX\bX^\intercal$ with $X_{i\mu}\iid P_X$, as exploited by a Bayesian estimator. E.g., this washes out the dependence on local statistics of the eigenvalues or on the moments of $P_X$ beyond the second. Yet, the RIE is optimal when the signal $\bX\bX^\intercal$ is rotational invariant \cite{ledoit2011eigenvectors,BABP}.
}

The special case of Gaussian prior $P_X=\mathcal{N}(0,1)$ of the present model is the only one inducing rotational invariance of the signal and data. In this case the log-partition function and MI of the model are exactly computable by means of the HCIZ integral defined as
\begin{align}\label{eq:HCIZ-def}
    \mathcal{Z}_{\rm HCIZ}(\bA,\bB):=\int d\mu(\bO)\exp\Big(\frac{\beta N}{2}\Tr\bO\bA\bO^\intercal\bB\Big)
\end{align}where $\beta=2$ if $\bA,\bB$ are $N\times N$ Hermitian matrices, and $\beta=1$ if they are real symmetric. Respectively the integral is over the unitary group $\mathbb{U}(N)$ when $\beta=2$ or orthogonal one $\mathbb{O}(N)$ when $\beta=1$, w.r.t. the corresponding uniform Haar measure $\mu$. Remarkably, for $\beta=2$ this integral is solved in closed form \cite{itzykson1980planar} for any $N$, and its asymptotics was also computed first by Matytsin \cite{Matytsin_1994} and then rigorously proved in \cite{guionnet2002large,guionnet2004first}. The asymptotic of this integral when $\beta =1$ is also known \cite{zuber2008large}.

Starting from the HCIZ integral, the simplifications in \cite{Jean_Farzad_matrixinference_2024,perturbative_Maillard21} show that the MMSE with Gaussian prior reads
\begin{align}
    \lim_{N\to\infty}\text{MMSE}&=\frac{1}{\lambda\alpha}\Big(1-\frac{4\pi^2}{3}\int \rho_Y(y)^3\,dy\Big),\label{MMSEspherical}
\end{align}
Integrating the I-MMSE relation \eqref{eq:I-MMSE} in the limit, one gets the asymptotic MI density $\iota^{\rm G}$ for matrix denoising when $\bX$ has i.i.d. Gaussian $\mathcal{N}(0,1)$ entries:
\begin{align}
    \iota^{\rm G}(\alpha,\lambda)&:=\lim_{N\to+\infty}\frac{I(\bX\bX^\intercal;\bY)}{MN}\label{MIspherical}\\&\ =\frac{1}{8\alpha}+\frac{1}{2\alpha}\int dx dy
    \,\rho_Y(x)\rho_Y(y)\ln|x-y|.\nonumber
\end{align}
For its numerical computation at SNR $\lambda$, a convenient approach is to use the I-MMSE relation, i.e., to numerically integrate $4$ times the MMSE \eqref{MMSEspherical} w.r.t. $\lambda'\in[0,\lambda]$ using $\rho_Y$ given explicitly in \cite{Jean_Farzad_matrixinference_2024} for a Wishart signal $\bX\bX^\intercal/\sqrt N$.

Optimal denoising and the RIE for the rectangular, non-symmetric, version of the problem was proposed recently in \cite{troiani2022optimal,pourkamali2023rectangular,pourkamali2024rectangular,landau2023singular} based on the rectangular HCIZ integral \cite{guionnet2021large}. The work \cite{pourkamali2024bayesian} provided an interesting approach to the factorisation problem discussed in Section~\ref{sec:findings_overlap} based on the RIE. The authors of \cite{bodin2023gradient} managed to exploit rotational invariance in order to analyse the dynamics of gradient flow for the denoising of a Wishart matrix.

\vspace{5pt}
\noindent \textbf{Statistical mechanics approaches and beyond.} \ \ Moving away from the rotationally invariant setting, among the first works based on statistical physics we mention \cite{SK13EPL} which proposed a replica approach with an original \enquote{Gaussian ansatz} (see App.~\ref{appendix:Kabashima_derivation}) for dictionary learning in an optimisation setting. It was followed by the concurrent works \cite{SK13ISIT,KMZ_DL-2013,Marc-Kabashima} applying the same method in a Bayesian setting similar to ours. These works, however, seem not to take into account some correlations between variables that turn out to be relevant \emph{in general} in the high-dimensional limit, as understood in \cite{comments,schmidt:tel-03227132} and further detailed by some of the previous authors \cite{perturbative_Maillard21}. Although it is true that in \cite{SK13EPL,SK13ISIT,KMZ_DL-2013,Marc-Kabashima} relevant correlations are neglected when considering the Gaussian prior, our numerical evidence strongly suggests that this is \emph{not} the case in the factorisation phase we identify. The latter cannot appear for the Gaussian prior, and has thus eluded subsequent studies \cite{schmidt:tel-03227132} that aimed at matching the formulas in \cite{SK13EPL,SK13ISIT,KMZ_DL-2013,Marc-Kabashima} with the theoretical predictions for Gaussian prior by means of the asymptotic expansion of Matytsin \cite{Matytsin_1994}. However, \cite{schmidt:tel-03227132} was missing a justification of their final result, that was later provided in the concurrent works \cite{barbier2022DL,perturbative_Maillard21}, and rigorised in \cite{Jean_Farzad_matrixinference_2024}.

The authors of \cite{perturbative_Maillard21} provide a perturbative attempt to the computation of the MI based on Plefka-Georges-Yedidia (PGY) expansion \cite{Plefka1982ConvergenceCO,georges1991expand}. The PGY formalism yields $M N$ equations that should approximate to a certain perturbation order the TAP equations \cite{thouless1977solution}. When comparing the PGY estimator, truncated at order 2 and 3 of the perturbative expansion, with the exact solution for Gaussian prior given by the HCIZ integral, the authors still measure a significant discrepancy, a sign that higher perturbative orders may be non-negligible. The authors of \cite{barbier2022DL} attempted an innovative spectral replica symmetric ansatz, yet leading to an unsatisfactory solution as it is not possible to evaluate it and it is thus hard to extract predictions from it. 

A step towards the phase diagram for generic prior was done in the series of works \cite{camilli2023new,camilli2023matrix,camilli2024decimation}, where the authors map the model into a sequence of neural networks for associative memory. They introduce a \emph{decimation scheme}, greatly inspired by the replica computation for the Hopfield model \cite{Hopfield82} carried out by Amit, Gutfreund and Sompolinsky in \cite{Amit1}, and by the unlearning mechanism \cite{Hopfield_unlearning,vanHemmen_unlearning,Benedetti_unlearning}. \cite{camilli2023matrix} provides a phase diagram for the decimation scheme and numerical evidence that when the SNR is high, $\alpha$ is low and the prior is informative enough (such as Rademacher), decimation is able to outperform the RIE, despite requiring a computational time that is exponential in the size $N$ of the system. The idea of decimation is that of retrieving the columns $(\bX_\mu)$ (i.e., \emph{patterns} in an associative memory interpretation) one at a time. In order to do this one needs to simplify the term $-\frac{\lambda}{2N}\Tr(\bx\bx^\intercal)^2$ in the Hamiltonian \eqref{eq:Hamiltonian}. The latter favours posterior samples whose columns $(\bx_\mu)$ remain quasi-orthogonal to each other, a sort of \enquote{caging effect} preventing patterns estimators to collapse onto the same ground truth patterns, thus boosting the joint recovery of all of them. This rigidity, which helps in the estimation, is completely given up on by decimation in favour of an analytically tractable model, paying the price of a brittle estimation: it works only for rather high SNR. On the contrary, in extensive-rank matrix denoising it is possible to give a non-trivial estimate of $\bX\bX^\intercal$ as soon as $\lambda>0$ via the RIE. Therefore, a comprehensive theory should be able to take into account this caging effect.

Recently, \cite{semerjian2024matrix} put forward a conjecture of \emph{universality}, intended as independence of the MMSE and MI on the prior $P_X$ (as long as it is centered and with unit variance), which can be formulated in a weak and a strong version. The weak one states that the RIE is optimal among finite-degree polynomial estimators in the data matrix elements, whereas the strong one would posit the universality to hold for all estimators. The strong version would imply that, under Gaussian additive noise, the MI is the one corresponding to the Gaussian prior case of \cite{guionnet2002large,barbier2022DL,
perturbative_Maillard21,Jean_Farzad_matrixinference_2024}. However, this is in contrast with the evidence produced with decimation algorithms. From our analysis a richer picture emerges. We disprove strong universality rigorously for any $\alpha>0$. More precisely, we can prove that there exists an SNR beyond which the RIE cannot be optimal. Nevertheless, the conjecture is not to be completely discarded. In fact, we argue that universality of the MI and MMSE holds for low enough SNR, as supported by next section's numerics. A possible explanation for the breaking of validity of the analysis in \cite{semerjian2024matrix} arguing strong universality is the following. The argument is based on the analysis of low-degree polynomials \cite{kunisky2019notes}. The author is able to show that no algorithm based on degree-$D$ polynomials of the data entries with $D\le 3$ can outperform the RIE when $N\to+\infty$. He also suggests that his argument extends to larger $D$.
Then, \emph{assuming that the $N\to+\infty$ and $D\to+\infty$ can be exchanged}, he concludes that no function of the data (approximated by a polynomial of arbitrary degree) can thus serve as better estimator. However, commutation of limits can break at a phase transition, as mentioned by the author. Therefore, it may be that the argument is valid up to a critical SNR, a point beyond which more care should be taken with the large degree and size limits.

The universality-breaking threshold $\lambda_c$ we identify in Section~\ref{sec:completeansatz} depends on $\alpha$. In particular, recent literature supports the fact that when $\alpha\to 0$ the universality phase should shrink below $\lambda_c(\alpha=0)=:\bar\lambda_c>0$ (see \figurename~\ref{fig:phase_diag}). In \cite{barbier2024information} the authors prove that with a factorised prior, under some suitable hypothesis, the MI and MMSE reduce to those of the standard rank-1 formulas when $M=M_N$ grows sufficiently slowly, which is prior dependent and thus clearly not universal above the phase transition. Note that, for SNRs lower than the critical one, universality holds trivially at low rank, because the MMSE is that of the null estimator, i.e. $1$. On top of it, here we show in Appendix \ref{appendix:alpha-to-0-RIE-MSE} that the MSE of the RIE when $\alpha\to 0$ is that of naive PCA, which is worse than the Bayes-optimal performance for non rotational invariant priors beyond the transition. Furthermore, still for $\alpha\to 0$, decimation's performance reduces to the one found via the low-rank formula for the MI \cite{camilli2024decimation}, and is thus likely optimal in said limit. All this evidence supports that universality cannot hold for all $\lambda$ when $\alpha\to 0$.

{\subsection{Organization and main contributions} 
We now outline the organisation of the remaining, while presenting our main results in order of appearance.

In Section~\ref{sec:Motivation} we present the observations obtained from numerical experiments based on parallel tempering \citep{Marinari92SimulatedTempering, Hukushima96REMC} and annealed importance sampling \citep{neal2001annealed} used to compute the MI and MMSE. 

$(i)$ Our numerical evidence in Section~\ref{sec:partial_universality}, see \figurename~\ref{fig:ExactMIfiniteSize}, \ref{fig:MCMCvsRIE} and \ref{fig:MCMC_NonSym}, suggests the existence of a phase transition occurring for structured enough priors, coined \emph{denoising-factorisation transition}, separating two regions of the diagram of different nature: a low SNR region where prior universality holds, and its complement where it does not.

$(ii)$ These numerics also support the fact that the rotational invariant estimator discussed earlier is optimal for SNRs smaller than the critical value $\lambda_c(\alpha)$ (later identified theoretically by \eqref{transTheory}), and sub-optimal for larger values. They also suggest in \figurename~\ref{fig:multiscalevsMCMC} that despite outperforming the RIE is information theoretically possible beyond $\lambda_c(\alpha)$, this is algorithmically hard for Monte Carlo without a sufficiently informative initialisation. 

$(iii)$ Importantly, the sub-optimality of the RIE and the breaking of prior universality for discrete priors is proven in Section~\ref{sec:breakingUniv}, see \eqref{lambdac}, thus discarding the strong version of the universality conjecture in \cite{semerjian2024matrix} in a portion of the diagram. Yet, as previously mentioned, our numerics suggest that universality does hold before the transition, which we exploit for the analytical solution in that region, see result $(iv)$.

All these findings are summarized by a qualitative phase diagram in $(\alpha,\lambda)$ plane, see \figurename~\ref{fig:phase_diag} which generalises the low-rank picture of \figurename~\ref{fig:lowRank}.

Section~\ref{sec:multiscale_cavity} develops a \emph{multiscale mean-field theory} able to match the numerical observations at high enough SNR. It relies on a combination of methods from spin glasses. 
Our approach yields equations characterizing the MMSE through \eqref{eq:MMSE_Matrix_facto} and the MI \eqref{eq:MI_Matrix_facto}, also obtainable by an earlier replica calculation by Sakata and Kabashima \cite{SK13EPL} which was deemed incorrect in the whole phase diagram $(\alpha,\lambda)$ for reasons discussed in Section~\ref{sec:related_works}. With our independent multiscale mean-field approach, we redeem their results \emph{but only in a restricted region of the diagram}.

$(iv)$ In Section~\ref{sec:completeansatz_A}, we identify this sub-region of the phase diagram where these two, a-posteriori equivalent, mean-field approaches are conjecturally exact: the \emph{factorisation} or \emph{non-universal phase} after the transition \eqref{transTheory}. Using our numerical findings $(i,iii)$ and the work \cite{semerjian2024matrix} while assuming the existence of a single transition point, we supersede the mean-field equations by the known exact ones for Gaussian prior in the complement of the phase diagram based on the universality of the solution in $P_X$: this is the \emph{denoising} or \emph{universal phase}. The resulting complete ansatz therefore takes into account these two scenarios (effective presence or lack of prior universality), and provides expressions for the MI between data and signal and for the MMSE, for all $(\alpha,\lambda)$: it corresponds to formula \eqref{conjecture} paired with \eqref{eq:MI_Matrix_facto} for the MI after the identified transition \eqref{transTheory}, and \eqref{eq:MMSE_Matrix_facto} for the MMSE. Instead, for lower SNR than the transition, the HCIZ integral \eqref{MIspherical} is used for the MI and \eqref{MMSEspherical} for the MMSE. This complete ansatz together with $(iii)$ are the main theoretical results, whose consequences are discussed in Section~\ref{sec:sp_equations}.

Our complete ansatz \eqref{conjecture} also brings forward the major conceptual result of this work: in extensive-rank matrix denoising, the model behaves in the two identified phases in fundamentally different ways, which require treatments based on very different techniques. The universal phase is akin to a matrix model, where universality properties hold and the HCIZ integral is the proper tool. In the factorisation phase, strong concentration-of-measure effects take place and the model is instead akin to a planted spin glass treatable by mean-field methods. This mixed behavior in a single model is an interesting feature, in contrast with the low-rank regime which remains always mean-field in nature.

Section~\ref{sec:findings_overlap} elaborates around the possibility to carry out factorisation, namely finding $\bX$ alone (rather than the covariance $\bX\bX^\intercal$), up to unresolvable symmetries. 

$(v)$ There we propose Algo.~\ref{alg:sync}, which is empirically shown to estimate $\bX$ in the factorisation phase (thus its name) when given posterior samples obtained by Monte Carlo. We also argue that, in contrast, its estimation is asymptotically impossible in the denoising phase where only non-trivial denoising of $\bX\bX^\intercal$ is statistically possible.

Finally, the reader can find appendices after the Conclusions in the following order. In Appendix \ref{appendix:diagonals} we discuss the irrelevance (from an information theoretical perspective) of the diagonal of $\bY$. Appendix \ref{app:MCMC} gives the details about our numerical experiments carried out using Monte Carlo sampling. In Appendix \ref{app:multiscale_MFT} we report the derivations of our multiscale mean-field theory of Section \ref{sec:multiscale_cavity}. In Appendix \ref{app:MI_reconstruction_potential} we discuss the main properties of the multiscale mean-field ansatz for the MI. In Appendix \ref{appendix:alpha-to-0-RIE-MSE} we compute the $\alpha\to0$ limit of the estimation performance of the RIE. Appendix \ref{appendix:Kabashima_derivation} contains the reduction of the approach of \cite{SK13EPL} to our setting. Appendix~\ref{sec:improv} explores potentially richer ansatzes for the MI, which end up yielding the same solution, thus supporting the validity of the proposed mean-field solution in the factorisation phase.}


\vspace{5pt}
\noindent \textbf{Code repository.} The Monte Carlo simulations used in this paper are accessible at the following link \cite{github_code}.

\section{Exploring the phase diagram}\label{sec:Motivation}

\begin{figure*}[t!!]
    {
        \includegraphics[width=0.95\linewidth]{./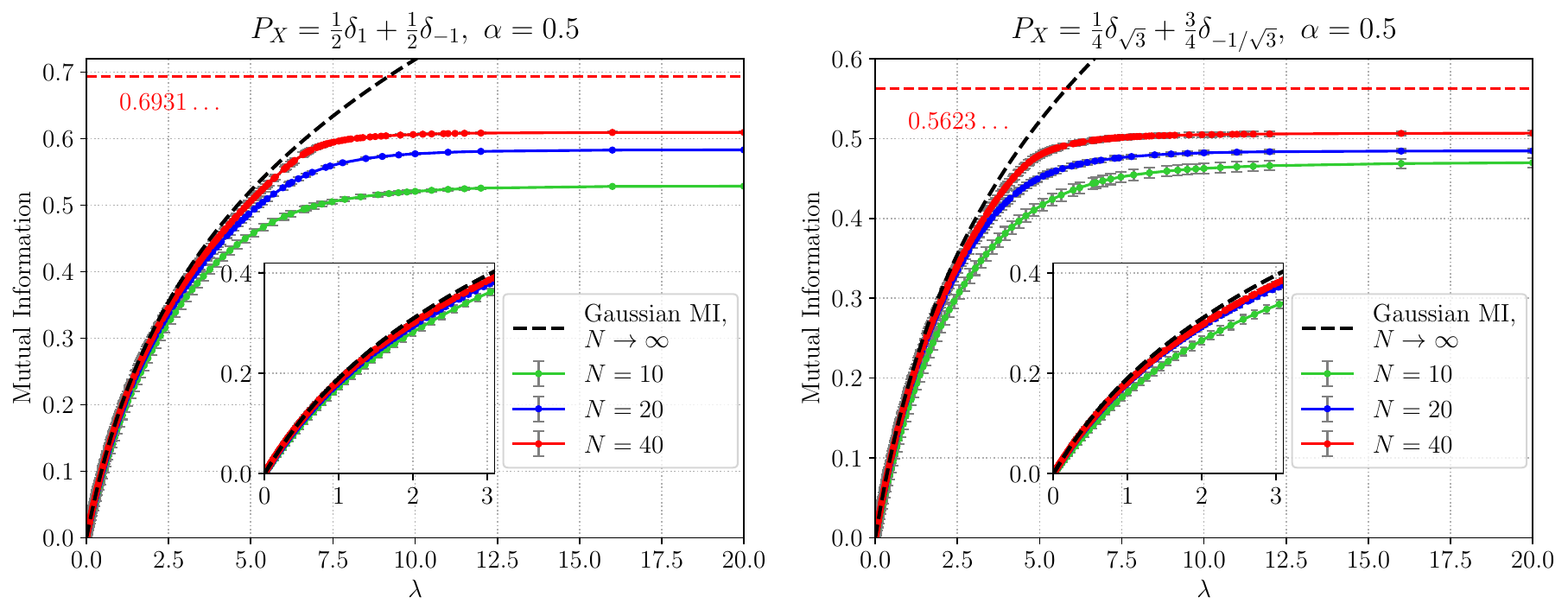}
    }
    \caption{Monte Carlo results for the mutual information with Rademacher prior (left panel) and discrete prior $P_X=\frac14 \delta_{\sqrt{3}}+\frac34 \delta_{-1/\sqrt{3}}$ (right panel) for various sizes with $\alpha=0.5$, i.e., $M=N/2$. Error bars represent the standard error of the mean. The curves are compared to the exact infinite size limit \eqref{MIspherical} in the case of standard Gaussian prior computed using the HCIZ integral. The red dashed curve correspond to Shannon entropy of the prior considered in each panel, which bounds the MI for all sizes from above.}
\label{fig:ExactMIfiniteSize}
\end{figure*}

We start by illustrating some important findings about the nature of extensive-rank matrix denoising that can be probed using numerics and simple rigorous arguments. To do so we conducted replica-exchange Monte Carlo simulations, the details of which can be found in Appendix~\ref{app:MCMC}. The left panel of \figurename~\ref{fig:ExactMIfiniteSize} are the results concerning the finite size value of the MI with $\alpha=0.5$ 
for the Rademacher prior $P_X=\frac12 \delta_1+\frac12 \delta_{-1}$, which will be our main running example of discrete prior with same mean and variance as the standard Gaussian prior throughout this paper. {However, most of the conclusions we draw are not specific to the Rademacher prior; 
in fact, we also present the simulation results for the non-symmetric distribution
$P_X=\frac14 \delta_{\sqrt{3}}+\frac34 \delta_{-1/\sqrt{3}}$ (again having the same mean and variance) in the right panel of \figurename~\ref{fig:ExactMIfiniteSize}, and demonstrate that quantitatively similar conclusions can be drawn.}
Because in the Gaussian case we know the exact asymptotic solution, we use as comparison the curve computed using the HCIZ integral corresponding to the $N,M\to\infty$ limit of the MI, see \eqref{MMSEspherical}, \eqref{MIspherical} and text below.

According to the I-MMSE relation \eqref{eq:I-MMSE}, the derivative of the MI w.r.t. to the SNR gives access to the MMSE. We plot it on \figurename~\ref{fig:MCMCvsRIE} (where it was computed by Monte Carlo, rather than by numerical differentiation of the MI) for the Rademacher prior. In addition, we plot the performance of the RIE for the same sizes. We recall that the RIE denoises the data $\bY$ through eigenvalues shrinkage (keeping its eigenvectors untouched) in order to estimate $\bX\bX^\intercal$, {while exploiting its factorised structure and prior only through its asymptotic spectral law}. This procedure is provably Bayes-optimal when the signal is rotationally invariant, which corresponds to Gaussian prior $P_X=\mathcal{N}(0,1)$ in the present setting, as the MSE of the RIE matches the MMSE \eqref{MMSEspherical}, see \cite{BUN20171,Jean_Farzad_matrixinference_2024}.

\begin{figure*}[t!]
        
    \includegraphics[width=\linewidth]{./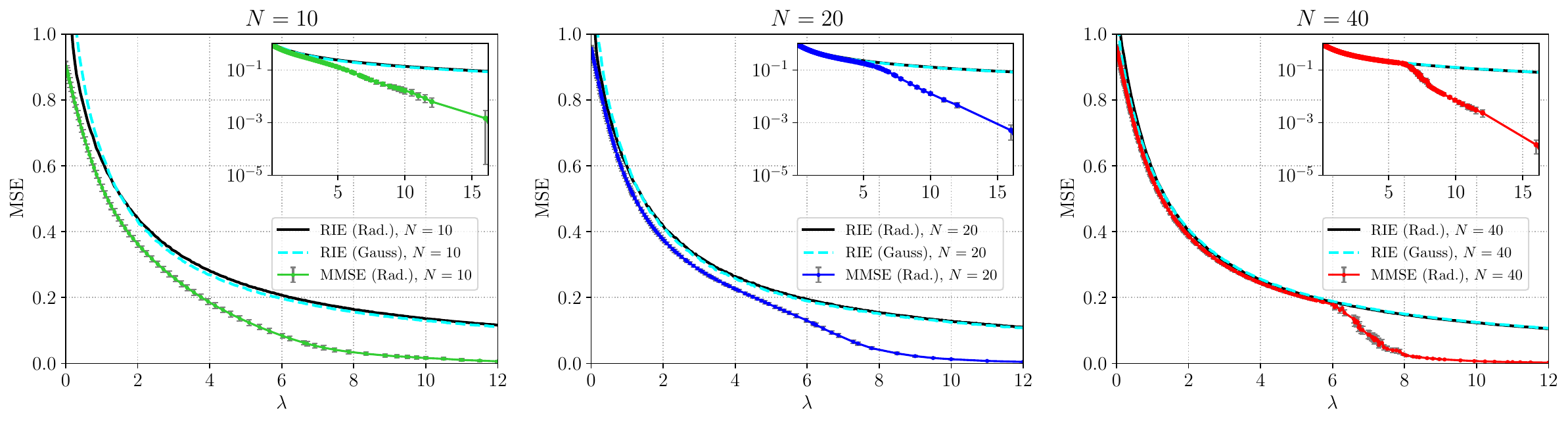}
    \caption{Monte Carlo results for the MMSE for $N=10$ (leftmost figure), $20$ (middle) and $40$ (rightmost) with Rademacher prior (solid colored lines) and $\alpha=0.5$. They are compared to the MSE of the RIE for both Rademacher and Gaussian priors with corresponding $N$. Error bars for the RIE are omitted since they are too small to be visible. Error bars for the MMSE represent standard errors of the mean. Insets are in log-scale.}
\label{fig:MCMCvsRIE}
\end{figure*}

\begin{figure}
    \centering
    \includegraphics[width=0.95\linewidth]{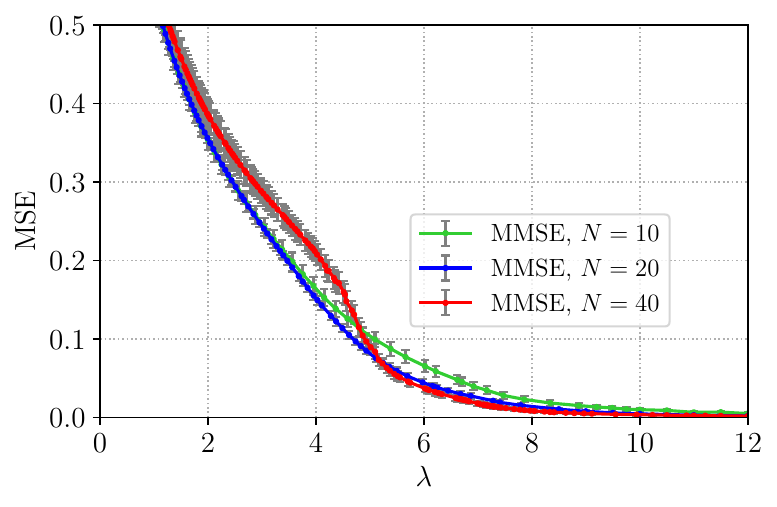}
    \caption{Monte Carlo results for the MMSE for $N=10$, $20$ and $40$ with prior $P_X=\frac14 \delta_{\sqrt{3}}+\frac34 \delta_{-1/\sqrt{3}}$ and $\alpha=0.5$. Error bars represent standard errors of the mean.}
    \label{fig:MCMC_NonSym}
\end{figure}

\subsection{Partial universality, a phase transition and \texorpdfstring{\\}{}a statistical-to-computational gap}\label{sec:partial_universality}

As mentioned before, the recent strong conjecture from Semerjian \cite{semerjian2024matrix} states that for any prior measure $P_X$ for the i.i.d. entries of $\bX$ which is centered with unit variance, the MI and MMSE are universal in the sense that for a large system, their values match for all $\lambda\ge 0$ and $\alpha>0$ the ones in the Gaussian case $P_X=\mathcal{N}(0,1)$ (up to vanishing, non-universal corrections). Moreover, this MMSE performance should be reachable efficiently using the RIE. That this conjecture is wrong \emph{in general} is clear from our finite size numerical experiments. Indeed, a striking observation is that beyond a certain SNR threshold, the MI curves for the Gaussian and discrete priors display an evident discrepancy in the behavior of their respective slopes, a point beyond which the MI for the discrete priors saturate quickly. This effect becoming sharper as the system size grows suggests a first order discontinuous phase transition in the $\lambda$-derivative of the MI, and thus the MMSE, which suddenly decreases beyond it in \figurename~\ref{fig:MCMCvsRIE}. 
{The same effects suggesting a first-order transition are also observed for the non-symmetric discrete prior, for which the MMSE is provided in \figurename~\ref{fig:MCMC_NonSym} where the sharpening of the transition with increased system size becomes even more apparent}. 
Based on the current numerics and later on our theory (see Section~\ref{sec:sp_equations}), we believe that in the thermodynamic limit this \emph{denoising-factorisation transition} $\lambda_c(\alpha)$ is first order for $\alpha>0$; instead for $\alpha\to0$ which recovers the low-rank setting, the transition can be second order (continuous in the MMSE), like for the Rademacher prior.

\begin{figure}[t!!]
    \includegraphics[width=1\linewidth]{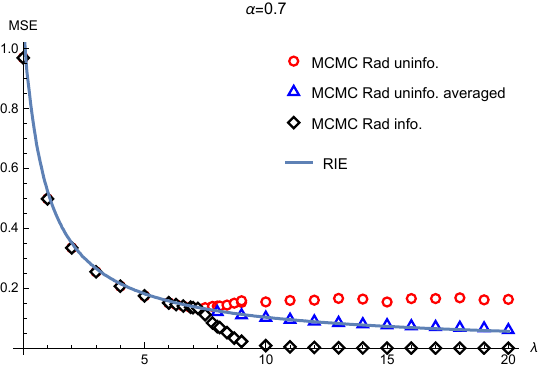} 
    \caption{MSE of the Metropolis-Hastings algorithm for estimating $\bX\bX^\intercal$ with $(N,M)=(40,28)$, so $\alpha=0.7$, with each 
    estimator (red {\color{red}$\circ$}, blue {\tiny\color{blue}$\triangle$} and black {\color{black}$\diamond$} markers) designated in the main text under ``Experimental setup''. 
    Before the transition all markers overlap. These are compared to the RIE performance in the large system limit (blue line). Experimental markers are averaged over $36$ i.i.d. instances of the problem $(\bX,\bZ)$.}    \label{fig:multiscalevsMCMC}
\end{figure}

Another important observation is the good agreement before the transition between the asymptotic MI for Gaussian prior and the finite size MI for both discrete priors, which improves with increasing $N$ and $M$. 
The numerics thus strongly suggest that despite the universality is not holding for all SNRs, it does before the transition in the thermodynamic limit $N\to +\infty$. Also at the level of the MSE we observe an almost perfect matching between the MMSE for Rademacher prior computed by Monte Carlo and the performance of the RIE already at $N=40$, the latter being evaluated for both Rademacher and Gaussian priors (the same holds for the non-symmetric prior of \figurename~\ref{fig:MCMC_NonSym}). As it should, the RIE performance is independent of the prior details as long as mean and variance match. But while the RIE is Bayes-optimal for all SNR in the Gaussian case, beyond $\lambda_c$ 
it becomes sub-optimal for the tested structured priors, as expected if universality breaks down given that it cannot exploit the discrete nature of the priors. 

A natural question is then whether in the region $\lambda>\lambda_c$ there is a statistical-to-computational gap that not only prevents the RIE to reach the MMSE performance, but also a larger class of efficient algorithms including Bayesian ones like Monte Carlo. {
To address this, we also run standard Monte Carlo Metropolis-Hastings simulations (without replica exchange) with different initialization schemes and estimators summarized here:
\newpage
\begin{itemize}[leftmargin=*]
\item[] \underline{Experimental setup in \figurename~\ref{fig:multiscalevsMCMC}:}
\vspace*{-3pt}
\begin{itemize}[leftmargin=*]
    \item[\color{red}$\circ$] A single Monte Carlo simulation with a random initialization, in which the sample average over the Markov chain is given as an estimator. 
    \item[{\tiny\color{blue}$\triangle$}] Results from 32 Monte Carlo simulations, each given a different random initialization. The sample average over all 32 Markov chains is the estimator. 
    \item[\color{black}$\diamond$] A single Monte Carlo simulation initialized from the ground truth. 
\end{itemize}
\end{itemize}}
Further details are given in Appendix \ref{app:MCMC}. 
The numerics indicate that only with a strongly informative initialisation about the signal ({\color{black}$\diamond$}), outperforming the RIE becomes possible. 
Moreover, we observe that, beyond the transition where the ``informative factorisation state'' appears ($\lambda$ larger than $\sim 7.5$), the performance of an estimator constructed from a single Monte Carlo chain with random initialization ({\color{red}$\circ$}) is worse than the RIE (the name ``factorisation state'' is linked to the possibility of solving the factorisation task using samples associated with this thermodynamic state, see Section~\ref{sec:findings_overlap}). The poor performance of Monte Carlo is probably due to a dynamical glass phase transition \cite{antenucci2019glassy} where exponentially many metastable states appear \cite{mezard2009information}, preventing it to properly sample the equilibrium corresponding to the factorisation state beyond $\approx 7.5$, recall \figurename~\ref{fig:MCMCvsRIE}. The metastable states are, however, correlated due to the data. Therefore, an additional average over random initialisations (and thus over associated metastable states) should somehow \enquote{clean} the effect of glassiness. 
This hypothesis can be tested by running several Monte Carlo chains, each with an independent random initialization, and 
adopting the sample average over all of these chains as an estimator ({\tiny\color{blue}$\boldsymbol{\triangle}$}). The performance reached in this way matches precisely that of the RIE, so the aforementioned cleaning takes place, suggesting that the dynamical glass corresponds to a shattering of the RIE state. Yet, the equilibrium factorisation state is not reached, which would require instead averaging over an exponentially (in $N$) large number of Monte Carlo samples (or a smarter sampling procedure, like the replica exchange Monte Carlo algorithm used for \figurename~\ref{fig:MCMCvsRIE}, which is \emph{not} an inference algorithm). We conclude that the dynamical glass phase can be cleaned in polynomial time but the final outcome cannot outperform the RIE, which supports the existence of a hard phase. Algorithmic hardness was hypothesized in \cite{semerjian2024matrix} as a possible consequence of the validity of the weak universality conjecture but not of the strong one.

To summarize, the picture that emerges is that of universality of the information theoretic quantities effectively holding at low SNR, which suggests calling this region the \emph{universal phase}. As proven in the next sub-section, universality cannot hold for a discrete prior beyond a certain SNR. We claim that prior universality breaks down at larger SNR through a phase transition marking the starting point of the \emph{non-universal phase}, which is also the \emph{hard phase} for Monte Carlo algorithms.

\begin{figure}[t]
\includegraphics[width=\linewidth,trim={0.8cm 0cm 0cm 0cm},clip]{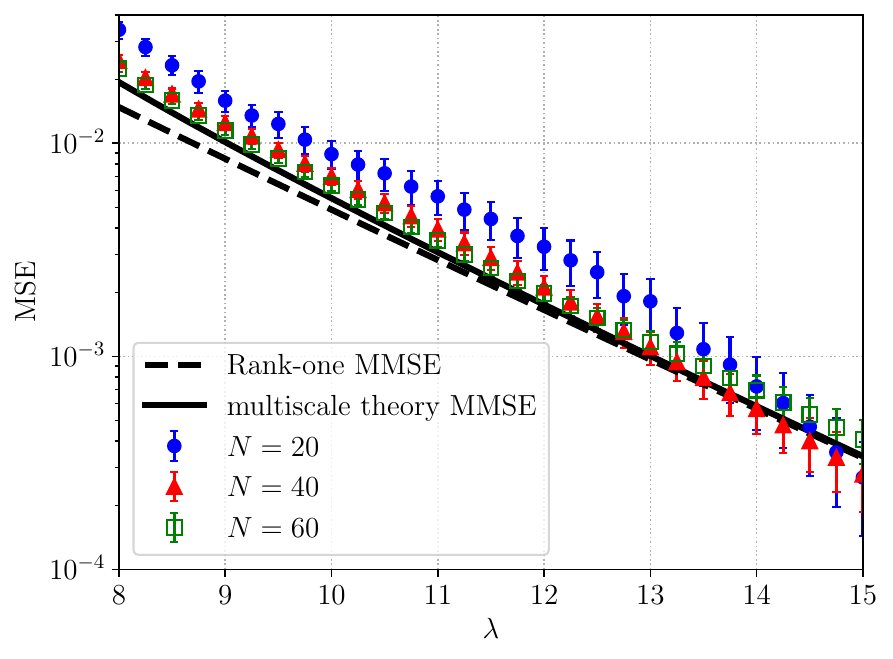}    
    \caption{Monte Carlo results for the MMSE for $N \in\{20, 40,60\}$ with Rademacher prior, with $\alpha=0.5$ and SNR $\lambda$ in the factorisation phase. We also plot the rank-1 MMSE and the MMSE obtained from our multiscale theory (see Section \ref{sec:multiscale_cavity}) for comparison. Error bars represent standard errors of the mean.
    }
    \label{fig:MMSE_log}
\end{figure}

The hard non-universal phase can also be alternatively called the \emph{factorisation phase}. The name comes from the fact that in order to denoise $\bX\bX^\intercal$ optimally in this phase, exploiting all the prior information about the factorised structure of the signal and $P_X$, beyond simply $\rho_{XX^\intercal}$ as the RIE does, becomes necessary and, importantly, factorisation is possible only there, see Section~\ref{sec:findings_overlap}. The universal phase is instead algorithmically easy by optimality of the RIE for denoising there. We thus also call this region the \emph{denoising phase}. 

In the factorisation phase there is a close match between the MMSE and the one in the rank-1 setting where factorisation occurs, see \figurename~\ref{fig:MMSE_log}.  Quantitatively, this translates into an exponential decay of the MMSE as a function of the SNR (similarly to what happens in dictionary learning, see \figurename~5 of \cite{Marc-Kabashima}), whose exponent is the same as in rank-1 matrix estimation, see \figurename~\ref{fig:MMSE_log}. We also stress the even better agreement of our theory presented in Section~\ref{sec:completeansatz_A} with the $N=60$ points in this figure. In particular, despite the gap between the green points and the rank-1 curve is very small, our theory can explain it. In contrast, in the denoising phase the MMSE decay in $\lambda$ is polynomial, see insets of \figurename~\ref{fig:MCMCvsRIE}. This change of polynomial to exponential decay of the MMSE will be captured by the theory, as well as the fact that the MMSE is close to the rank-1 curve in the factorisation phase, see Section~\ref{sec:sp_equations}.


\subsection{Breaking of universality beyond the transition} \label{sec:breakingUniv}

With all these observations in mind we thus agree with the strong universality conjecture of \cite{semerjian2024matrix} when the SNR is sufficiently low, but we can rigorously discard it beyond a threshold using a simple argument based on the definition of the MI. Consider a discrete prior $P_X$ with same mean and variance as $\mathcal{N}(0,1)$. Since $\bX$ is discrete, the MI can always be written as (see \cite[Theorem 3.4]{polyanskiy2025information})
\begin{align}
\frac{I(\bX\bX^\intercal;\bY)}{MN}=\frac{\mathcal{H}(\bX\bX^\intercal)}{MN}-\frac{\mathcal{H}(\bX\bX^\intercal\mid\bY)}{MN}\le \frac{\mathcal{H}(\bX\bX^\intercal)}{MN},\nonumber
\end{align}
where the upper bound follows from the fact that for discrete priors both entropies are non-negative. That is it: the MI cannot grow indefinitely as a function of the SNR for a discrete prior (the upper bound remains finite for any size), while it does so for the Gaussian prior (black dashed curve in \figurename~\ref{fig:ExactMIfiniteSize}). This discards the possibility of universality at least beyond the point where the MI density for the Gaussian case crosses the entropy density associated with the matrix $\bX\bX^\intercal$. 
We are going to show below that $|\mathcal{H}(\bX\bX^\intercal)-\mathcal{H}(\bX)|=o(MN)$. Therefore, we have a bound on the denoising-factorisation transition. Let $X\sim P_X$, a centered discrete prior of unit variance. We have the following upper bound on the domain of possible prior universality:
\begin{align}
    \lambda_c(\alpha)\le\inf\big\{\lambda\ge 0 :\iota^{\rm G}(\alpha,\lambda)\ge \mathcal{H}(X)\big\}.\label{lambdac}
\end{align}
Because $\iota^{\rm G}(\alpha,\lambda)$ is known using the HCIZ integral via \eqref{MMSEspherical}, \eqref{MIspherical} the bound is explicit (we have used that the prior is factorised over the signal matrix entries to get $\mathcal{H}(\bX)/{MN}=\mathcal{H}(X)$ but a similar bound can be written even if there is no such factorisation).
In the case of Rademacher prior its entropy equals $\ln 2$. We indeed see in the left panel of \figurename~\ref{fig:ExactMIfiniteSize} that the bound holds (also for the other prior on right panel) but it does not look sharp.

Interestingly, we can also explain the apparent gap between the value towards which the finite size MI saturates and the entropy of the prior. In fact, the observations $\bY$ are informed about the ground truth $\bX$ only through the combination $\bX\bX^\intercal=\sum_{\mu\le M}\bX_\mu\bX_\mu^\intercal$. Hence for large $\lambda$ we shall have $I(\bX\bX^\intercal;\bY)\approx \mathcal{H}(\bX\bX^\intercal)$ at leading order, which is smaller than $\mathcal{H}(\bX)$ by the data processing inequality. The inequality in this case is strict: $\mathcal{H}(\bX\bX^\intercal)<\mathcal{H}(\bX)$ because $\bX\bX^\intercal$ is a non-invertible function of $\bX$. The degeneracy of the mapping $\bX\mapsto\bX\bX^\intercal$ explains accurately the gap. Specifically, for each $\bX\bX^\intercal$ there can be $M!$ permutations of the columns of $\bX$ that leave it invariant. Furthermore, if the prior is also symmetric like in the Rademacher prior case in the left panel of \figurename~\ref{fig:ExactMIfiniteSize}, we must add possible sign flips of these columns to the invariance group of $\bX\bX^\intercal$. Of these $2^M M!$, some are actually redundant, because columns of $\bX$ can match. The probability that any two columns match with a factorised Rademacher prior, however, is $O(M^22^{-N})$, which is negligible for large enough $N$ with respect to the leading order corrections that we compute. The probability of more columns matching is of even lower order. Hence we can finally say that
\begin{align}
    \mathcal{H}(\bX)=\mathcal{H}(\bX\bX^\intercal)+\ln(2^M M!)-o_N(1) . \label{entropyXX}
\end{align}
Putting these considerations together, we readily get
\begin{align}\label{eq:finite_size_correct_MI}
    \frac{I(\bX\bX^\intercal;\bY)}{MN}\xrightarrow[]{\lambda\to\infty}
    \frac{\mathcal{H}(\bX)}{MN}-\frac{\ln(2^M M!)}{MN}+o_N(1),
\end{align}where the remainder can be estimated to be of order $O( 2^{-N}M\ln M)$. 
It is now manifest that as $M,N\to\infty$, despite the degeneracy induced by signed permutations, the contributions yielded by the second and third terms on the r.h.s.\ vanish. This explains why increasing $N$ and $M$ in \figurename~\ref{fig:ExactMIfiniteSize} is sufficient to reduce the gap with the saturation value $\ln2$. E.g., for $N=20, M=10$ the contribution of the signed permutations is $\approx 0.1102$ which summed to the saturation value of the blue curve $\approx 0.5826$ yields $\approx 0.6928$, which is remarkably close to $\ln2\approx 0.6931$. The same holds for the other gaps, and the argument also explains the discrepancy for the non-symmetric prior in the right panel of \figurename~\ref{fig:ExactMIfiniteSize}, where in this case one does not need to consider sign invariance. Note that we do not claim that beyond the transition the thermodynamic limit of the MI should saturate strictly to the prior entropy. Indeed, \figurename~\ref{fig:MCMCvsRIE} shows that a small gap survives whose size decreases exponentially fast with~$\lambda$.

The r.h.s. of \eqref{lambdac} is a rigorous bound above which universality of the MI and MMSE cannot hold for discrete priors in any possible scenario.
We note that given the extensive set of works on universality in random matrices \cite{anderson2010introduction,potters2020first,bai2010spectral,tao2010random,erdHos2010bulk,erdHos2011universality,10.1214/11-AOP648,10.1214/13-AAP939,ajanki2017universality}, it may come as a surprise that it breaks down. E.g., at the level of the asymptotic spectrum of $\bX\bX^\intercal$ and thus of $\bY$, or even at the level of more local statistics, the previous papers show universality. But the mutual information depends on all the fine statistical properties of the data and $\bX$, which become manifest at high enough SNR.

We also remark that the reasoning leading to the existence of a threshold beyond which universality w.r.t.\ the base measure of the matrix to infer cannot hold can be extended to a much broader set of matrix inference/learning models. In any such model, if the signal $\bX$ is discrete the MI between data and signal cannot exceed the entropy of $\bX$ (as a function of the SNR, number of data, etc). This establishes an upper bound on the region of possible validity of prior universality in models where replacing the signal by a Gaussian equivalent with matching moments yields an unbounded MI. And indeed, this MI saturation has already been shown to have deep implications in neural networks \cite{2025_nn_quadratic_Barbier} .

{ Whether this universality breaking holds for continuous (but non-Gaussian) priors is an unanswered question in this paper. 
We conjecture that, like the case with discrete priors, the non-rotational-invariant nature of the prior should be exploitable to construct a Bayesian estimator with performance well exceeding the RIE once the SNR is high enough. 
However, as the mutual information cannot be bounded from above in such settings, the previous arguments negating universality do not apply. 
Performing numerical experiments under continuous priors is also considerably more difficult due to the inability to exploit the binary nature of $\bx$ in the implementation for fast computations. 
Moreover, the enlargement of the phase space from a discrete one to a continuous one may result in a longer relaxation time for the Monte Carlo chain to converge. 
Confirming whether this conjecture holds or not, either theoretically or numerically, is a challenging problem left open for future investigation. 
}

{
Note that while $N,M$ may seem small, the total number of variables itself is $MN$, which ranges from 50 to 1800 in our experiments. This is comparable with the extensive numerical studies done on the Edwards-Anderson Ising spin glass using parallel tempering \cite{Wang15_EASpinGlassNumerics}, in which the system size ranges from 64 to 1000. 
}

\begin{figure}[t!]
    \centering
    \includegraphics[width=\linewidth]{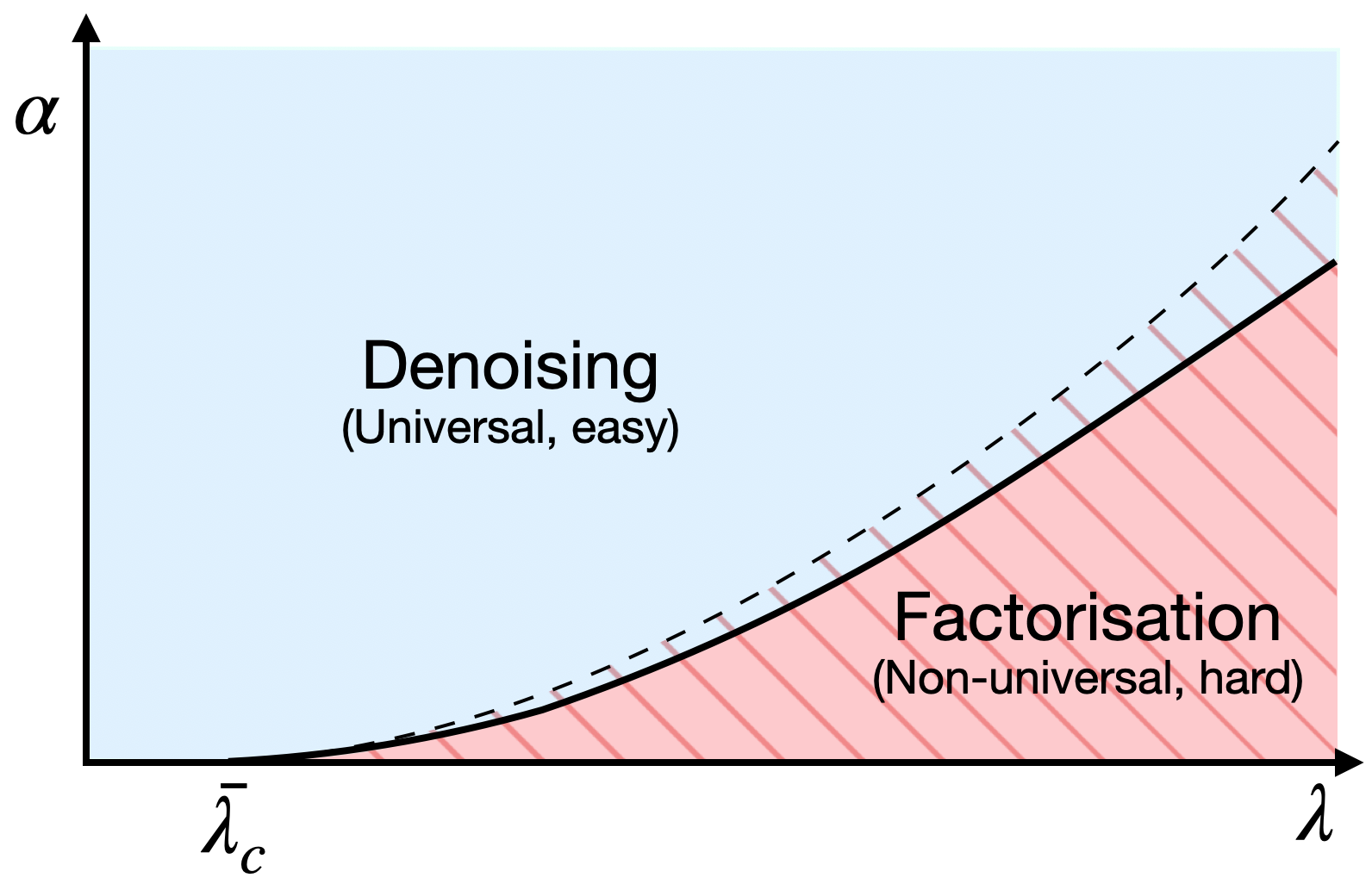}
    \caption{Qualitative phase diagram for matrix denoising with discrete (Rademacher) prior; it generalises to the extensive-rank regime the one of \figurename~\ref{fig:lowRank} for low rank. The thermodynamic phase transition $\lambda_c(\alpha)$ (solid line) is believed to be first order for $\alpha>0$. Increasing $\lambda$, it is preceded by a spinodal transition where the thermodynamic factorisation state appears but is not yet the equilibrium (dashed line). The blue region is the denoising phase where prior universality holds, and optimal denoising is algorithmically easy using the RIE. The shaded region corresponds to the dynamical glass phase (we cannot assess whether it starts precisely at the spinodal transition as displayed). The red region is the factorisation phase where prior universality does not hold. There, it is statistically possible but algorithmically hard to estimate $\bX$ and to denoise $\bX\bX^\intercal$ better than the RIE. The line $\alpha=0$ recovers the results for the rank-1 problem and is thus special: $\bar \lambda_c=\lambda_c(0)$ is second order and there is no hard phase for Rademacher prior. For other priors (possibly discrete), a hard phase can appear in the low-rank case too, see \figurename~\ref{fig:lowRank}.}
    \label{fig:phase_diag}
\end{figure}

\vspace{5pt}
\noindent \textbf{Summary and phase diagram.} Our findings are condensed into a phase diagram in the $(\alpha,\lambda)$ plane, see \figurename~\ref{fig:phase_diag}, which generalises \figurename~\ref{fig:lowRank} to the extensive-rank regime. We have identified two phases for matrix denoising with discrete prior under the hypothesis that the \emph{denoising-factorisation transition} $\lambda_c(\alpha)$ exists and is unique. In the \emph{denoising} or \emph{universal phase} $\lambda\in [0,\lambda_c(\alpha))$ treatable using matrix models techniques and the HCIZ integral, we conjecture that it is statistically not possible to outperform the RIE for denoising $\bX\bX^\intercal$ despite it uses only its asymptotic spectral distribution as prior. It is therefore \emph{algorithmically easy} to reach the MMSE performance (i.e., in polynomial complexity in $N$). Universality in the prior of the MI and MMSE holds, and may be linked to the quasi uniformity of the eigenvectors of $\bX\bX^\intercal$ \cite{f8052592-465a-34f1-b9a1-f4c6eade3b19,erdHos2009local,benaych2011eigenvectors,o2016eigenvectors} and the universality of its spectrum \cite{10.1214/11-AOP648,10.1214/13-AAP939} when $\bX$ is made of i.i.d. entries. We argue in Section~\ref{sec:findings_overlap} that in this region, factorisation, namely estimating $\bX$ up to a unresolvable ambiguities inherent to the mapping $\bX\mapsto\bX\bX^\intercal$, is statistically impossible.

Beyond the transition point we find the \emph{factorisation} or \emph{non-universal phase} $\lambda\in (\lambda_c(\alpha),+\infty)$ where techniques from mean-field spin glasses become correct. In this phase, exploiting the complete prior knowledge about $\bX\bX^\intercal$ beyond its spectrum becomes essential: information theoretically, a Bayesian algorithm, such as Monte Carlo sampling, can outperform the RIE for denoising.  Moreover, factorisation is also possible on top of denoising. Both tasks are however \emph{algorithmically hard} due to the presence of a dynamical glass phase appearing through a spinodal transition.

Finally, we can also interpret the transition as marking the starting point of a \emph{feature learning regime}, where it becomes information theoretically possible to detect and exploit fine features in the data for achieving better denoising and factorisation. In the denoising phase, on the contrary, even knowing in advance the features (factorised and discrete structure) does not yield any improvement compared to assuming no particular structure (i.e., a Gaussian prior).

Notice two major differences compared to the phase diagram \figurename~\ref{fig:lowRank} for $M=1$. Firstly, in the extensive-rank case the diagram is defined on a plane $(\alpha,\lambda)$ rather than parametrized by the SNR $\lambda$ only. The transition is thus now a line $\lambda_c(\alpha)$ and its nature (order) can change along it. And indeed, between $\alpha=0$ or strictly positive, it does change for instance in the Rademacher prior case. Secondly, at fixed $\alpha>0$, no easy regime appears once the transition is crossed, even at very large SNR. Extensive-rank matrix denoising is thus algorithmically much more demanding than in the low-rank case for large $\lambda$.

\section{Multiscale mean-field theory}\label{sec:multiscale_cavity}
We now outline a theory matching the numerical observations discussed earlier. As already mentioned, we cannot employ methods from matrix models such as the HCIZ integral. Instead, we rely on spin glass techniques concatenated in an original way. The proposed approach is called \emph{multiscale mean-field theory}. The reason is that it uses multiple mean-field techniques in a multiscale fashion. {The details of the derivations of this section are deferred to Appendix \ref{app:multiscale_MFT}.}

{To begin with, as proved in App.~\ref{appendix:diagonals}, we observe that the diagonal of the data $(Y_{ii})_{i\le N}$ does not contribute to the MI in the thermodynamic limit. In fact, $Y_{ii}=\sqrt{\lambda/N}\|\bX_i\|^2+Z_{ii} \approx \sqrt{\lambda/N}\|\bX_i\|^2$. But $\|\bX_i\|^2/N$ can be already estimated precisely without knowing $Y_{ii}$: its value is known to a Bayes-optimal statistician, as it concentrates on the second moment of the prior (i.e., $1$) by the law of large numbers. We thus now work with}
\begin{align}\label{eq:channel1}
    Y_{ij}=Y_{ji}=\sqrt{\frac{\lambda}{N}}\bX_i^\intercal\bX_j+Z_{ij},\quad 1\leq i <j\leq N.
\end{align}
The MI for this channel then reads
\begin{align}\label{eq:MI_and_free_entropy}
    \frac{I(\bX\bX^\intercal;\bY)}{MN}&=\frac{\lambda}{4}-\phi_N+O(N^{-1})
    \end{align}
    with \emph{free entropy}
    \begin{align}        
    &\phi_N:=\frac{1}{MN}\EE_\bY\ln  \mathcal{Z}_{N,M}(\bY).  \label{eq:free_entropy} 
\end{align}
Using the language of statistical mechanics, we introduced the partition function of the model
\begin{align*}
    \mathcal{Z}_{N,M}(\bY):=\int_{\mathbb{R}^{N\times M}} dP_X(\bx)e^{-H_N(\bx;\bY)}
\end{align*}
with (minus) Hamiltonian (which is the log-likelihood)
\begin{align}\label{eq:Hamiltonian}
    -H_N(\bx;\bY)\!=\!\sqrt{\frac{\lambda}{N}}\!\sum_{i<j,1}^N\! Y_{ij}\bx_i^\intercal\bx_j-\frac{\lambda}{2N}\!\sum_{i<j,1}^N\!(\bx_i^\intercal\bx_j)^2.
\end{align}The associated Boltzmann-Gibbs average, i.e., the posterior average, will also be denoted by $\langle\,\cdot\,\rangle$ in this section.

The MMSE of the inference problem \eqref{eq:channel1} does not include any error on the reconstruction of the diagonal part of the signal $\bX\bX^\intercal$, and is thus defined (with a little abuse of notation) as
\begin{align}\label{eq:MMSE_Multiscale_starting_point}
    {\rm MMSE}&=\frac{1}{MN^2}\EE\sum_{i\neq j,1}^N\big[(\bX_i^\intercal\bX_j)^2-\langle \bx_i^\intercal\bx_j\rangle^2\big]\nonumber\\
    &=1-\frac{1}{NM}\EE\sum_{i=1}^{N-1}\langle \bx_i^\intercal\bx_N\rangle^2+O(N^{-1}).
\end{align}
One can show that the above is equal to \eqref{MMSEuseful} up to $o_N(1)$, which is why it has been denoted in the same way. This $o_N(1)$ discrepancy is precisely due to the diagonal contributions that would appear in \eqref{MMSEuseful} (see Appendix \ref{appendix:diagonals}). The second line is instead obtained using the exchangeability of the rows, and selecting $j=N$ for convenience.

The free entropy $\phi_N$ can be viewed as that of a non-standard spin system made of $N$ spins (the rows of $\bx$, $(\bx_i)_{i\leq N}$) which are $M$-dimensional (i.e., have extensive dimensionality instead of being scalar or low-dimensional variables) or, instead, $M$ spins (the columns of $\bx$, $(\bx_\mu)_{\mu\leq M}$) which are $N$-dimensional. The two viewpoints are equivalent, {but the first turns out to be more amenable for a cavity computation, which would imply the extraction of a row of $\bx$. The reason is that the part of the Hamiltonian depending on row $\bx_i\in\mathbb{R}^M$ carries a noise variable $\bZ_i\in\mathbb{R}^N$ independent of the rest. 
More precisely, the posterior can be decomposed in terms of rows $\bx_i$ and $\bY_i=(Y_{ij})_{j\le i-1}$, $i=1,\ldots,N$, as
\begin{align}
dP(\bx\mid \bY) &\propto  dP_X(\bx_N) P(\bY_N\mid \bx_N,(\bx_j)_{j\le N-1}) \nonumber\\
&\qquad \times  \prod_{i\le N-1} dP_X(\bx_i) P(\bY_i\mid \bx_i, (\bx_j)_{j\le i-1})\nonumber\\
&\propto dP(\bx_N \mid \bY_N,(\bx_j)_{j\le N-1})\nonumber \\
&\qquad\times dP((\bx_j)_{j\le N-1} \mid (\bY_j)_{j\le N-1})\label{cavVSbulkMeasure}
\end{align}
where we have dropped normalisations. This decomposition following from \eqref{eq:channel1} emphasizes that, conditional on the ``bulk'' $(\bx_j)_{j\le N-1}$, the ``cavity'' $\bx_N$ depends on a single data row $\bY_N$, and the bulk marginal posterior measure is independent of the cavity. Note that the same argument holds when extracting any other cavity row after a proper permutation of row indices. This decomposition is crucial as, firstly, it makes manifest that the posterior of the cavity conditional on the bulk corresponds to a lower dimensional Bayes-optimal inference problem. And secondly, the bulk measure can be simplified without having to modify the cavity measure. Such a decomposition cannot be written if we dig a column cavity. }

\subsection{Computing the limiting MMSE}
The perspective introduced above is particularly useful when combined with the MMSE in \eqref{eq:MMSE_Multiscale_starting_point}. The latter is in fact the so-called YMMSE of a random linear estimation (RLE) problem with noisy design (see \cite{barbier2020mutual,krzakala2013compressed}), where samples $\bar\bx:=(\bx_i)_{i\leq N-1}$ are used for the estimation of the uncertain design matrix $\bar\bX\in\mathbb{R}^{N-1\times M}$, while samples $\bx_N$ from the cavity posterior measure $dP(\bx_N\mid\bY_N,(\bx_j)_{j\leq N-1})$ are used to estimate $\bX_N$.

\vspace{5pt}
\noindent \textbf{First scale reduction.} Mapping onto the RLE task has the advantage of converting a matrix model into a vector inference problem over the cavity (still with an annealed bulk matrix), potentially tractable using known approaches. This has the effect of \enquote{reducing the scale of the problem}: we go from a partition function associated with an Hamiltonian of magnitude $O(MN)$ to one with an $O(M)$ Hamiltonian. This $O(MN)\to O(M)$ reduction performed by means of the cavity method corresponds to the first scale reduction in our scheme. Contrary to standard spin or inference models (like the Hopfield model or the perceptron treated with the cavity method in \cite{Tala_vol1,Tala_vol2}), the RLE problem that will appear remains high-dimensional because $M\to+\infty$ with $N$. 

{Let us now identify the correct RLE problem. We denote the cavity variable $\bx_N$ by $\beeta\in\mathbb{R}^M$, $\bX_N$ by $\bH\in\mathbb{R}^M$ and the bulk (i.e., all but the cavity) variables as $\bar\bx,\bar\bX\in\mathbb{R}^{N-1\times M}$. The RLE under study then corresponds to the two following observation channels:
\begin{align}\label{eq:RLE_channel_cavity}
    &\tilde \bY(\lambda)=\sqrt{\frac{\lambda}{N}}\bar\bX\bH+\tilde\bZ\\
    \label{eq:RLE_channel_bulk}
    &Y_{ij}'(\zeta)=\sqrt{\frac{\zeta}{N}}\bX_i^\intercal\bX_j+Z_{ij},\quad 1\leq i<j\leq N-1.
\end{align}The first one is the linear observation of the cavity ground truth variable $\bH$ through the uncertain design matrix $\bar\bX$, whereas the second is additional information about  $\bar\bX$ only. $\zeta$ is an auxiliary SNR that should be taken equal to $\lambda$, but we consider a more general model where it may differ, in which case we denote the data $\bY'(\zeta)=(Y_{ij}'(\zeta))_{i<j\le N-1}$. Denoting by $\langle\,\cdot\,\rangle_{\circ}$ the posterior average acting on $(\bar\bx,\beeta)$ of this Bayes-optimal inference problem over $(\bar \bX,\bH)$, we define the YMMSE given \eqref{eq:RLE_channel_cavity}, \eqref{eq:RLE_channel_bulk} as
\begin{align*}
    \text{YMMSE}(\lambda,\zeta)&:=\frac{1}{NM}\EE\|\bar\bX\bH-\langle\bar\bx\beeta\rangle_{\circ}\|^2\nonumber\\
    &\ =1-\frac{1}{NM}\EE\|\langle\bar \bx\beeta\rangle_\circ\|^2,
\end{align*}
where the second equality follows from the Nishimori identity Proposition~\ref{prop:Nishi_ID} in App.~\ref{app:multiscale_MFT}. This is the correct RLE problem because for $\zeta=\lambda$ the above is equal to \eqref{eq:MMSE_Multiscale_starting_point} up to vanishing errors (and also $\langle\,\cdot\,\rangle=\langle\,\cdot\,\rangle_\circ$). The YMMSE is obtained by an I-MMSE relation similar to \eqref{eq:I-MMSE} through a $\lambda$-derivative of the following conditional MI:
\begin{align}\label{eq:iotaRLE_def}
    \iota&_N^{\rm RLE}(\lambda,\zeta):=\frac{1}{M}I(({\bar\bX,\bH});\tilde\bY(\lambda)\mid \bY'(\zeta))\\
    &=\frac{1}{M}\mathcal{H}(\tilde \bY(\lambda)\mid \bY'(\zeta))-\frac{1}{M}\mathcal{H}(\tilde \bY(\lambda)\mid \bY'(\zeta),\bar\bX,\bH).\nonumber
\end{align}
Specifically, we have
\begin{align}\label{eq:I-YMMSE_RLE}
    \frac{d}{d\lambda}\iota_N^{\rm RLE}(\lambda,\zeta)=\frac{1}{2}{\rm YMMSE}(\lambda,\zeta).
\end{align}
The reason we introduced $\zeta$ in the first place is to be able to write down this relation. Taking $\zeta=\lambda$ from the beginning would prevent accessing ${\rm MMSE}\approx {\rm YMMSE}(\lambda,\zeta=\lambda)$ through an I-MMSE relation with SNR $\lambda$ influencing a single (cavity) row. This is thus needed to relate the calculation of ${\rm MMSE}$ \eqref{eq:MMSE_Multiscale_starting_point} to the RLE problem. The second entropy in \eqref{eq:iotaRLE_def} is just the one of the noise $\tilde \bZ$ and is thus simple. The first is instead related to the free entropy
\begin{align}
    &\phi^{\rm RLE}_N(\lambda,\zeta)=\phi^{\rm RLE}_N\label{eq:RLE_entropy}\\
    &\ :=\frac{1}{M}\EE_{\tilde \bY(\lambda),\bY'(\zeta)}\ln \int_{\mathbb{R}^M} dP_X(\beeta)\Big\langle e^{-H^{\rm row}_{N,M}(\beeta;\bar\bx,\lambda)}\Big\rangle'\nonumber
\end{align}by a constant shift and a sign. Here, $H^{\rm row}_{N,M}$ given by \eqref{eq:Hrow} in App.~\ref{app:first_scale} is the ``cavity Hamiltonian'', namely, the term capturing all the dependence on the extracted row $\beeta$ (and its interaction with the bulk) in the total Hamiltonian \eqref{eq:Hamiltonian}. What remains of \eqref{eq:Hamiltonian} is the ``bulk Hamiltonian'', see \eqref{eq:Hbulk}. The latter defines the posterior average $\langle\,\cdot\,\rangle'$ given $\bY'(\zeta)$ only, thus acting on the bulk $\bar\bx$ but not the cavity $\beeta$. $\phi^{\rm RLE}_N$ clarifies the connection between our mapping to the RLE setting, and the cavity method. It actually mirrors a term appearing in the Aizenman-Sims-Starr representation of the free energy in standard spin glasses and inference \cite{ASS,lelarge2019fundamental}.}

\vspace{5pt}

\noindent \textbf{Second scale reduction.}
The previous reduction by the cavity method is exact. But the complexity due the inherent matrix nature of the problem has not disappeared: it is now segregated into the uncertainty of the design $\bar\bx=(\bx_i)_{i\leq N-1}$ measured by $\langle\,\cdot\,\rangle'$. In order to integrate this bulk we will resort to a mean-field ansatz, yielding a model that is tractable by means of rigorous replica formulas \cite{BarbierGLM-PNAS}, in which one can identify scalar inference problems with $O(1)$ Hamiltonians. This $O(M)\to O(1)$ magnitude reduction of the Hamiltonian corresponds to the second scale reduction in our approach.

Before doing this let us make a remark. We could have tried to simplify the problem using an ansatz direclty at the level of the original Hamiltonian/problem in order to go from $O(MN)$ degrees of freedom to $O(1)$ effective ones in one step, e.g. using the replica method as in \cite{SK13EPL,Marc-Kabashima,barbier2022DL,perturbative_Maillard21}. But having already reduced the scale down to $O(M)$ by the cavity method \emph{in an exact manner}, our assumptions will only be at the level of the bulk (and thus potentially more interpretable): the ``direct'' interaction between cavity and bulk captured by channel \eqref{eq:RLE_channel_cavity}, which defines the first posterior in \eqref{cavVSbulkMeasure}, is kept untouched, with approximations only made through the simplification of $\langle\,\cdot\,\rangle'$, i.e., the second posterior law in \eqref{cavVSbulkMeasure}. In essence, our mean-field ansatz assumes that the interactions among bulk variables have a lower order effect on the free entropy $\phi_N^{\rm RLE}$ obtained by the cavity method compared to the direct cavity-bulk interaction. 

Besides being physically meaningful, these assumptions allow us to integrate the bulk variables. We achieve this by replacing channel \eqref{eq:RLE_channel_bulk} by an effective, more manageable, one:
\begin{align}\label{eq:aux_channel_bulk}
    \bY'_{\rm eff}(\sigma)=\sqrt{\sigma}\,\bar\bX+\bZ_{\rm eff}',
\end{align}where $\bZ_{\rm eff}'=(Z'_{i\mu})_{i\leq N-1,\mu\leq M}$ is a matrix of i.i.d. standard Gaussian variables, $\bar\bX$ is the ground truth bulk as before, $\sigma$ is a tunable SNR for the effective observations $\bY'_{\rm eff}(\sigma)=(\bY'_i\in\mathbb{R}^{M})_{i\le N-1}=(Y'_{i\mu})_{i\leq N-1,\mu\leq M}$. This is the anticipated mean-field ansatz, i.e., a parametrization of the bulk measure in terms of marginals only. Its potential validity in this extensive-rank setting will be thoroughly discussed later. For now, let us just notice that this form of marginal is asymptotically exact for standard low-rank models \cite{XXT,lelarge2019fundamental,BarbierGLM-PNAS}, provided $\sigma$ is suitably tuned.

Within this effective factorised mean-field ansatz the simplified bulk measure reads
\begin{align*}
\langle\,\cdot\,\rangle_{\rm eff}'=\prod_{i,\mu=1}^{N-1,M}\frac{1}{\mathcal{Z}'(Y'_{i\mu})}\int dP_X(x_{i\mu})e^{\sqrt{\sigma} Y'_{i\mu}x_{i\mu}-\frac{1}{2}\sigma x_{i\mu}^2}(\,\cdot\,) 
\end{align*} 
with proper normalisation $\mathcal{Z}'$. We can now leverage on known results \cite{BarbierGLM-PNAS} in order to derive a variational expression for the limit of $\phi^{\rm RLE}_N$ after trading \eqref{eq:RLE_channel_bulk} for \eqref{eq:aux_channel_bulk} (see Appendix~\ref{app:multiscale_MFT}):
\begin{align}\label{eq:RLE_RS_free_entropy}
    &\phi^{\rm RLE}_{\rm eff}(\lambda,\sigma)={\rm extr}_+\Big\{\EE_{\xi,H}\ln\int \!dP_X(\eta)e^{(\xi\sqrt{r}+Hr)\eta-\frac{1}{2}r\eta^2}\nonumber\\
    &\qquad\quad-\frac{rq}{2}+\frac{\lambda}{2}-\frac{1}{2\alpha}\ln \big(1+\lambda\alpha (1-J(\sigma)q)\big)
    \Big\},
\end{align}where $\xi\sim\mathcal{N}(0,1),H\sim P_X$, and ${\rm extr}_+$ denotes the extremization operation w.r.t.\ $(r,q)$. The $+$ subscript indicates that, in case of multiple extremal points, the one that realizes the largest value of the potential should be considered. Finally, we called this free entropy $\phi^{\rm RLE}_{\rm eff}$ and not $\phi^{\rm RLE}$ to distinguish our ansatz based on \eqref{eq:aux_channel_bulk} from the true asymptotic free entropy. Through our ansatz, $\zeta$ has disappeared in favor of a properly chosen parameter $\sigma$ (we will later see how to chose it), that appears in a function $J(\sigma)$ defined as 
\begin{align}\label{eq:def_J}
    J(\sigma)&:=\EE_{X,Z'} \Big[X\frac{\int dP_X(x) \,x\,e^{(\sqrt{\sigma}Z'+X\sigma)x-\frac{1}{2}\sigma x^2}}{\int dP_X(x)e^{(\sqrt{\sigma}Z'+X\sigma)x-\frac{1}{2}\sigma x^2}}\Big]
\end{align}where $X\sim P_X$ and $Z'\sim \mathcal{N}(0,1)$. This is the overlap of a Gaussian channel of SNR~$\sigma$.
Based on definition \eqref{eq:iotaRLE_def}, we can relate the free entropy $\phi^{\rm RLE}_{\rm eff}$ for RLE with uncertain design with effective observations \eqref{eq:aux_channel_bulk} to the corresponding approximation $\iota^{\rm RLE}_{\rm eff}(\lambda,\sigma)$ of $\lim_{N\to\infty}\iota_N^{\rm RLE}(\lambda,\zeta)$:
\begin{align}\label{eq:MIRLE_vs_psiRLE}
    \iota_{\rm eff}^{\rm RLE}(\lambda,\sigma)=\frac{\lambda}{2}-\phi^{\rm RLE}_{\rm eff}(\lambda,\sigma).
\end{align}
Note that, under the effective bulk observations \eqref{eq:aux_channel_bulk}, this equality is essentially rigorous. This is because in this case the problem can be mapped onto a standard RLE problem with perfect knowledge of the design matrix but with higher measurement noise (as we exploit in Appendix~\ref{app:second_scale}), which is covered by the theorems in \cite{BarbierGLM-PNAS}.

Before moving to the saddle point equations we observe that in the Bayes-optimal setting the extremizer $q_*=q_*(\lambda,\sigma)$ in \eqref{eq:RLE_RS_free_entropy} corresponds to the asymptotic overlap between the ground truth row $\bH$ and a sample $\beeta$ from the posterior associated to the RLE problem with noisy covariates \eqref{eq:RLE_channel_cavity}, \eqref{eq:aux_channel_bulk}, or between two conditionally independent posterior samples (by the Nishimori identity, see \ref{general_NishiId}), which concentrate onto their mean given by
\begin{align*}
    q_*&=\lim_{N\to\infty}\EE \frac1 M \bH \cdot \EE[ \beeta\mid \tilde \bY(\lambda),\bY'_{\rm eff}(\sigma)]\\
    &=\lim_{N\to\infty}\EE \frac1 M  \| \EE[ \beeta\mid \tilde \bY(\lambda),\bY'_{\rm eff}(\sigma)]\|^2, 
\end{align*}
where $\EE[ \beeta\mid \tilde \bY,\bY'_{\rm eff}]$ equals the posterior mean of the cavity given direct observations $\tilde \bY$ about it, and the effective observations for the bulk. Instead, the extremizer $r_*=r_*(\lambda,\sigma)$ is the effective SNR which controls the cavity fields underlying the marginals of the cavity coordinates under model \eqref{eq:RLE_channel_cavity}, \eqref{eq:aux_channel_bulk}: asymptotically, the unormalised random marginal probability measure of any of the $\eta_i$ reads
\begin{align}
dP(\eta_i\in[\eta,\eta+d\eta))\!\propto \!dP_X(\eta) e^{(\xi_i\sqrt{r_*}+H_ir_*)\eta-\frac{1}{2}r_*\eta^2}\label{cavity marginal}
\end{align}
with randomness $\xi_i\sim\mathcal{N}(0,1)$ and $H_i\sim P_X$. 
In other words, in the thermodynamic limit a cavity coordinate has marginal corresponding to a random posterior distribution $\mathbb{P}(H_i\in\,\cdot \mid \sqrt{r_*}H_i+\xi_i)$ associated with the effective scalar observation $\sqrt{r_*}H_i+\xi_i$ with prior $P_X$.

\subsection{Saddle point equations and mutual information}
The saddle point equations are obtained by equating the gradient of $\{\cdots\}$ in \eqref{eq:RLE_RS_free_entropy} to zero:
\begin{align}\label{eq:RLE_saddlepoint_r}
 &q=\EE_{\xi,H} [H\langle \eta \rangle_{r}],\qquad r=\frac{\lambda J(\sigma)}{1+\lambda\alpha(1-J(\sigma)q)},
\end{align}
where
\begin{align*}
    \langle\,\cdot \,\rangle_{r}=\langle\,\cdot \,\rangle_{r}(\xi,H):=\frac{1}{\mathcal{Z}_{r}}\int dP_X(\eta)e^{(\xi\sqrt{r}+ Hr)\eta-\frac{1 }{2}r\eta^2}(\,\cdot\,) ,
\end{align*}and $\mathcal{Z}_{r}(\xi,H)$ is the normalisation. The solutions $(r_*,q_*)$ depend on $\alpha,\lambda$, and $J(\sigma)$.

The main question now is: \emph{how to set the effective bulk SNR $\sigma$ in order to capture the effect of the original bulk observations \eqref{eq:RLE_channel_bulk} when $\zeta=\lambda$?} We argue that it can be determined with a consistency argument. $\sigma$ parametrizes the marginals of the bulk variables $\bar \bx=(x_{i\mu})_{i\le N-1,\mu\le M}$ through \eqref{eq:aux_channel_bulk}. In the original problem, or equivalently the model \eqref{eq:RLE_channel_cavity}, \eqref{eq:RLE_channel_bulk} with  $\zeta=\lambda$, rows of $\bx$ are completely exchangeable under the law $\EE\langle \, \cdot\,\rangle$. Therefore the random cavity marginals \eqref{cavity marginal} corresponding to SNR $\lambda$ should have the same law as the bulk marginals. This is achieved by matching the SNRs in the effective scalar observation channels for the cavity coordinates, $\sqrt{r_*(\lambda,\sigma)}H_i+\xi_i$, and bulk coordinates, $\sqrt{\sigma}X_{i\mu}+Z_{i\mu}'$. This yields the third missing consistency equation needed for $\sigma_*(\lambda)$ in addition to the two we already have for $(q,r)$:
\begin{align}
    \label{eq:consistency_sigma}
    r_*=\frac{\lambda J(\sigma_*)}{1+\lambda\alpha(1-J(\sigma_*)q_*)}=\sigma_*.
\end{align}
By \eqref{eq:def_J} $J$ is the overlap that any bulk row in $\bar\bx$ has with the corresponding ground truth row in $\bar\bX$. As intuition may suggest, matching the laws of the cavity and bulk marginals also equates the two overlaps $J$ and $q_*$. To see this, it is sufficient to realize that the function $F(r)=\EE H\langle \eta\rangle_r$ is monotonic in $r\geq 0$, and thus invertible. Then, by a simple application of it to both sides of \eqref{eq:consistency_sigma} we readily get $q_*=J(\sigma_*)$ as predicted. The two conditions $\sigma_*=r_*$ and $J(\sigma_*)=q_*$ are thus equivalent.

Using this knowledge, we recast \eqref{eq:RLE_saddlepoint_r} as
\begin{align}
    &q=\EE_{\xi,H} [H\langle \eta \rangle_{r}],\qquad r=\frac{\lambda q}{1+\lambda\alpha(1-q^2)}.
    \label{eq:Kabashima_r}
\end{align}
These are solved via fixed point iterations.
Their solutions shall be denoted by $q_*(\alpha,\lambda)$ and $ r_*(\alpha,\lambda)$. Using the relation \eqref{eq:MIRLE_vs_psiRLE}, and that the solution is extremizing $\phi^{\rm RLE}_{\rm eff}$, we finally get an ansatz for the MMSE by using the I-MMSE relation \eqref{eq:I-YMMSE_RLE} but for the model with the mean-field bulk ansatz (here approximate equalities are up to vanishing corrections $O(1/N)$):
\begin{align}\label{eq:MMSE_Matrix_facto}
    \text{MMSE}\approx {\rm YMMSE}(\lambda,\lambda)&\approx 2\frac{\partial}{\partial\lambda }\Big(\frac{\lambda}{2}-\phi^{\rm RLE}_{\rm eff}(\alpha,\lambda,\sigma_*)\Big)\nonumber\\
    &=\frac{1-q_*(\alpha,\lambda)^2}{1+\lambda\alpha(1-q_*(\alpha,\lambda)^2)}.
\end{align}

Note that $q_*=r_*=\sigma_*=0$ is always a solution for any $(\alpha,\lambda)$. We call it the \emph{paramagnetic solution} as it corresponds to no alignment between posterior sample and ground truth. However, the MMSE for this solution,
\begin{align*}
    {\rm MMSE}_{\rm para} = \frac 1{1+\alpha \lambda},
\end{align*}
is a non-trivial decreasing function of the SNR. This is because denoising is still possible, for any $\lambda>0$, when the matrix $\bX$ has no macroscopic overlap with a posterior sample ($q_*=0$ which is both the row or column overlap under ansatz \eqref{eq:aux_channel_bulk}). We have checked that apart from very low SNR, this solution is unstable in the sense that any small positive initialisation for the iterative solution of the saddle point equations will yield another solution.

Given the MMSE, one could be tempted to use the I-MMSE formula \eqref{eq:I-MMSE} to integrate it w.r.t.\ the SNR $\lambda$ in order to obtain the MI of matrix denoising. However, there could be multiple solutions $q_*(\alpha,\lambda)$ associated to given values of $\lambda$ and $\alpha$, which impeaches this naive reasoning. A more careful approach is required. Given equations \eqref{eq:Kabashima_r}, we reconstruct in Appendix~\ref{app:MI_reconstruction_potential} the \emph{mutual information potential} that generates them, and we shall always choose solutions that minimize it. Indeed, there is a unique potential function $\iota(r,q;\alpha,\lambda)$ that yields the system \eqref{eq:Kabashima_r} when extremized over $(r,q)$, and that at the same time satisfies the I-MMSE relation \eqref{eq:I-MMSE} with the previously found solution \eqref{eq:MMSE_Matrix_facto} as MMSE, namely such that
\begin{align*}
    \frac{d}{d\lambda}\iota(r_*,q_*;\alpha,\lambda)=\frac{1}{4}\frac{1-q_*(\alpha,\lambda)^2}{1+\lambda\alpha(1-q_*(\alpha,\lambda)^2)}\,.
\end{align*}
This potential is
\begin{align}
&\iota(r,q;\alpha,\lambda):=
    \frac{rq}{2}+\frac{1}{4\alpha}\ln\big(1+\lambda\alpha(1-q^2)\big)\nonumber\\
    &\ -\EE_{\xi\sim\mathcal{N}(0,1),H\sim P_X}\ln\int dP_X(\eta)e^{(\xi\sqrt{r}+Hr)\eta-\frac{1}{2}r\eta^2}.\label{eq:MI_Matrix_facto_pot}
\end{align}
The mean-field prediction for the MI of the original problem is then
\begin{align}\label{eq:MI_Matrix_facto}
\iota(\alpha,\lambda)&:={\rm extr}_-\big\{ \iota(r,q;\alpha,\lambda)\big\},
\end{align}
where extremization is w.r.t.\ $(r,q)$, and ${\rm extr}_-$ selects the solution of the saddle point equations \eqref{eq:Kabashima_r} minimizing the potential. Let us stress again that this formula will \emph{not} be our final prediction for the MI in the whole phase diagram $(\alpha,\lambda)$, but just for a sub-region of it as explained in the next section.

This formula recovers the known replica symmetric potential for the rank-1 case \cite{XXT,lelarge2019fundamental} (which is always correctly described by mean-field theory) when $\alpha\to 0$:
\begin{align*}
&\iota(r,q;0,\lambda)=
    \frac{rq}{2}+\frac{\lambda}{4}(1-q^2)\nonumber\\
    &\qquad-\EE_{\xi,H}\ln\int dP_X(\eta)e^{(\xi\sqrt{r}+Hr)\eta-\frac{1}{2}r\eta^2}.
\end{align*} 
{We observe that, in the converse limit $\alpha\to\infty$ the formula trivializes and is optimized at $r=q=0$. This is due to our choice of normalizations in channel \eqref{eq:channel1}. We refer to \cite{perturbative_Maillard21,semerjian2024matrix} for a different scaling where this limit is meaningful and thoroughly investigated.} {In addition, Appendix~\ref{app:large-lambda-limit-MI} shows that $\iota(\alpha,\lambda\to\infty)$ approaches the entropy of the prior $\mathcal{H}(X)$ as it should; it is also apparent from \figurename~\ref{fig:MI_vs_SNR_differentalpha} appearing later on.}

From the more general derivation in Appendix \ref{app:multiscale_MFT}, it is clear that our multiscale mean-field theory can handle generic observation channels as long as the observations remain independent conditionally on the signal, namely $Y_{ij}\sim P_{\rm out}(\,\cdot\mid\bX_i\cdot\bX_j/\sqrt{N})$ conditionally independent. The fixed point equation $q=\EE H\langle\eta \rangle_r$ is unchanged, whereas $r=g_{\rm out}(q)$ with a channel-dependent $g_{\rm out}$. The consistency $r_*=\sigma_*$ still holds, as well as $J(\sigma_*)=q_*$.

\vspace{5pt}
\noindent\textbf{Relation to the replica method with Gaussian ansatz of Sakata and Kabashima.}
The same formula for the MI \eqref{eq:MI_Matrix_facto} can be produced using the earlier approach of \cite{SK13EPL}, also used in \cite{SK13ISIT,KMZ_DL-2013,Marc-Kabashima} in a Bayesian context. We have adapted their method to our model in Appendix~\ref{appendix:Kabashima_derivation} for completeness, which is close to what is done in \cite{Marc-Kabashima}. This alternative derivation does not go through a cavity computation but instead uses a non-standard replica method at the scale $O(MN)$. From our understanding, the first \enquote{Gaussian ansatz} they use is to approximate  $(z^a_{ij}=\bx_i^a\cdot\bx_j^a/\sqrt{N})_{i,j=1,\ldots,N, a=0,\ldots,n}$, where $a$ is the replica index, as a jointly Gaussian family. The second ansatz is on their covariance. Somewhat counter intuitively, they consider them independent. Overall, this is equivalent to consider $(z_{ij}^a)_{i,j}$ as a GOE matrix, with a semicircular spectral density. This is hard to justify a-priori, given that this matrix is of Wishart type, whose spectrum follows the Marcenko-Pastur law. This assumption may however be more justifiable as $\alpha\approx M/N$ gets large because in this case, a Wishart matrix resembles more a GOE one with a mean. As discussed in \cite{perturbative_Maillard21}, this approximation is supposedly more accurate in this case (as confirmed by our numerics), and even possibly exact in the $\alpha\to\infty$ limit. Interestingly, \eqref{eq:MI_Matrix_facto_pot} also yields the rank-1 formula when $\alpha\to 0$, which signals that the approach is actually taking into account both $\alpha$ limits.

As mentioned in Section~\ref{sec:related_works}, equations \eqref{eq:Kabashima_r} were deemed incorrect for a long time \cite{comments,perturbative_Maillard21}. In fact, in the past they were tested against the exact MI for Gaussian prior only \cite{perturbative_Maillard21}. But this is precisely the case in which they are not supposed to hold for all $(\alpha,\lambda)$, because the denoising phase never ends for Gaussian prior and so the HCIZ integral is the right tool. With our contribution we redeem the replica equations {found more than a decade ago} in \cite{SK13EPL,SK13ISIT,KMZ_DL-2013,Marc-Kabashima} using an independent, original, mean-field approach, by showing that they actually yield excellent results but \emph{in the factorisation phase only} (and good ones in the denoising phase for large $\alpha$), a phase which exists only for non-Gaussian priors. Despite leading to the same equations, the multiscale theory we have presented is based on different, interpretable, hypotheses. {The fact that the earlier replica method and our cavity-based approach yield the same equations is not evident a-priori given the non-standard Gaussian ansatz used in the former. This is an indication of their validity when the concentration-of-measure effects needed to reduce the high-dimensional integrals entering the free entropy to saddle-point equations over order parameters take place. This, we claim, happens only in the factorisation phase, which was not understood until now.}

\section{Complete ansatz taking into account the prior universality in the denoising phase}\label{sec:completeansatz}

\subsection{Complete ansatz for the whole phase diagram}\label{sec:completeansatz_A}

We have confirmed numerically that the strong prior universality conjecture of \cite{semerjian2024matrix} holds but only before the denoising-factorisation transition. This should be incorporated to correct the inaccuracies of the mean-field approximations in this region. Indeed, the MMSE predicted by the mean-field theory does not match the known exact one for Gaussian prior, nor for discrete priors in the denoising phase. This is somehow expected. The reason is that our ansatz for the bulk measure, corresponding to taking the posterior associated with the channel $\bY'_{\rm eff}=\sqrt{\sigma}\,\bX+\bZ_{\rm eff}'$, breaks an \enquote{effective rotational symmetry} that seems to hold in the denoising phase as we argue in Section~\ref{sec:findings_overlap}. The rotational symmetry breaking comes from the fact that this channel corresponds to a direct observation of $\bX$ instead of $\bX\bX^\intercal$ which ``mixes'' the entries. Therefore there is no reason for the theory to be correct in the denoising phase. 
Yet, in the factorisation phase the mean-field prediction for the MMSE is in agreement with the exponential decay in \figurename~\ref{fig:MMSE_log}, due to the strong alignment between posterior \enquote{patterns} $(\bx_\mu)_{\mu \le M}$ and the planted ones $(\bX_\mu)_{\mu \le M}$ (see Section \ref{sec:findings_overlap}). This implies that in this phase the behavior of a row $\beeta$ is indeed dominated by its direct interaction with the associated planted row $\bH$ (mediated by $H^{\rm row}_{N,M}$) while the effect of the bulk simplifies greatly and is captured by the mean-field ansatz \eqref{eq:aux_channel_bulk}. We conclude that the denoising phase is a \enquote{matrix model regime} which needs to be analyzed by its own set of techniques (i.e., proofs of universality and the HCIZ integral). In contrast, the factorisation phase is a \enquote{mean-field regime} amenable to statistical physics methods. We are thus confident that the mean-field solution is accurate in this region, as backed up by the numerics and its robustness to attempts on improving it, see Section~\ref{sec:improv}, as well as the matching with the replica approach of Sakata and Kabashima \cite{SK13EPL}.

\begin{figure*}[t!!]
    \centering \includegraphics[width=0.41\linewidth,trim={0 0 4.1cm 0},clip]{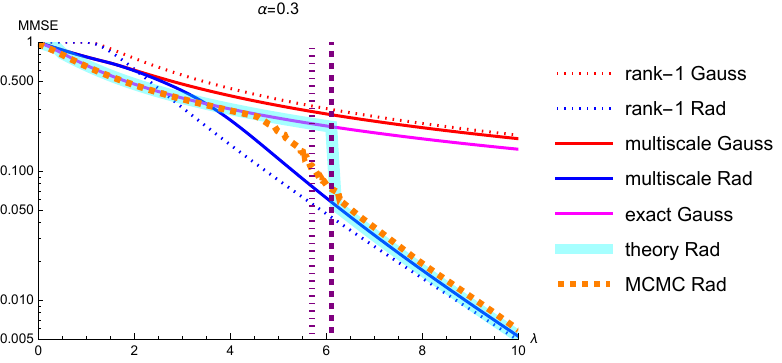}
    \includegraphics[width=0.57\linewidth]{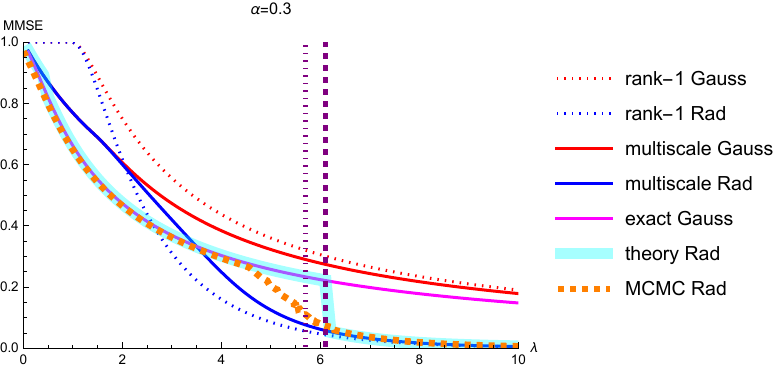}
    
    \includegraphics[width=0.41\linewidth,trim={0 0 4.1cm 0},clip]{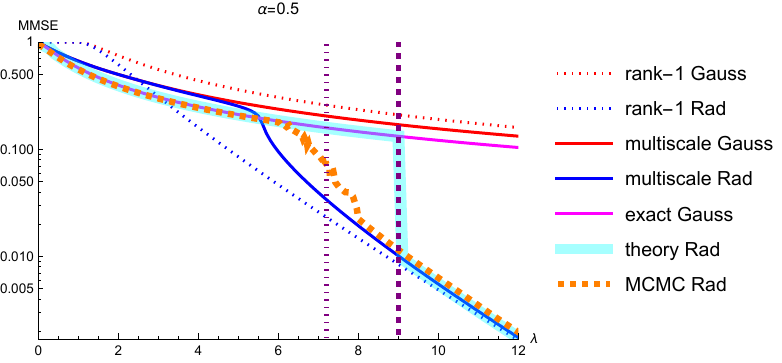}   \includegraphics[width=0.57\linewidth]{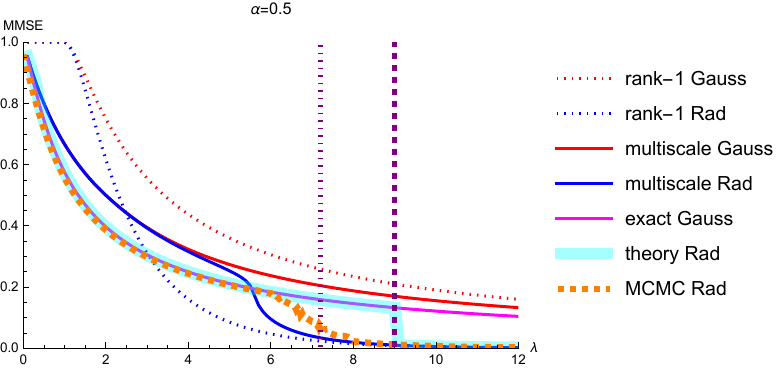}   
    
    \includegraphics[width=0.41\linewidth,trim={0 0 4.1cm 0},clip]{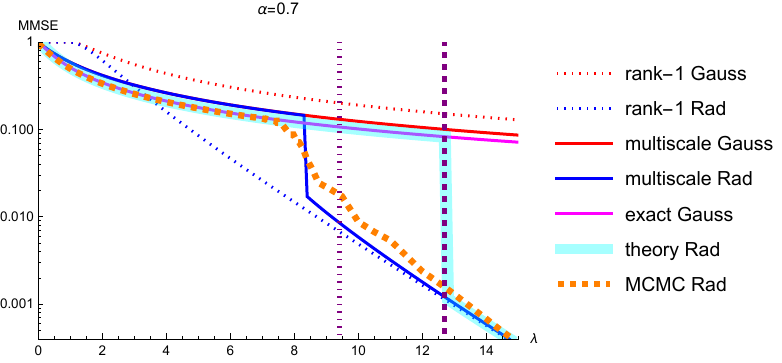}
    \includegraphics[width=0.57\linewidth]{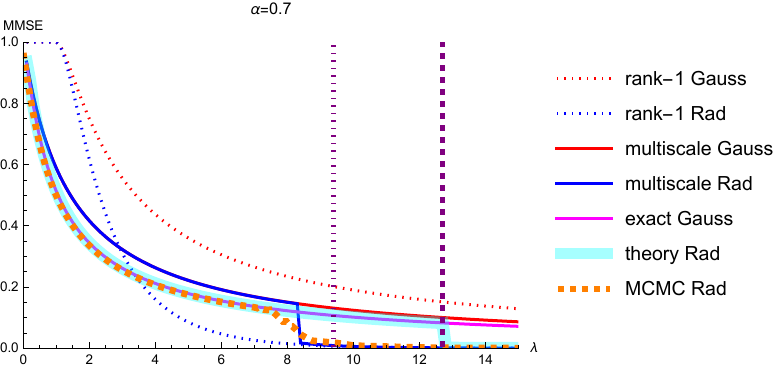}
    
    \caption{Thermodynamic limit of the MMSE as function of the SNR for $\alpha\in\{0.3,0.5,0.7\}$ with Gaussian and Rademacher priors. Logarithmic scale on the left, linear scale on the right. We plot $i)$ the rank-1 MMSEs, $ii)$ the \enquote{multiscale} curves correspond to the mean-field equation \eqref{eq:MMSE_Matrix_facto} with the equilibrium solution selected, $iii)$ the \enquote{exact Gauss} curve is the true Gaussian MMSE obtained via \eqref{MMSEspherical}, which also corresponds to the performance of the RIE for any prior with zero mean and unit variance, $iv)$ the cyan \enquote{theory Rad} is our theoretical prediction for the MMSE with Rademacher prior corresponding to four times the derivative of $\iota^*(\alpha,\lambda)$ \eqref{conjecture} w.r.t. $\lambda$, and $v)$ \enquote{MCMC Rad} are the Monte Carlo results with Rademacher prior (see Tables \ref{tab:exp_data} and \ref{tab:fig_exp_ref} for details). Finally $vi)$ the rightmost vertical line pinpoints the location of the infinite $N$ transition $\lambda_c$ \eqref{transTheory}, while the one on the left is its approximate location once taking into account finite size effects, see the caption of \figurename~\ref{fig:MI_vs_SNR_differentalpha} and the text for an explanation.}
    \label{fig:MMSE_th}
\end{figure*}

\begin{figure*}[t!!]
    \centering  \includegraphics[width=0.39\linewidth,trim={0 0 4.6cm 0},clip]{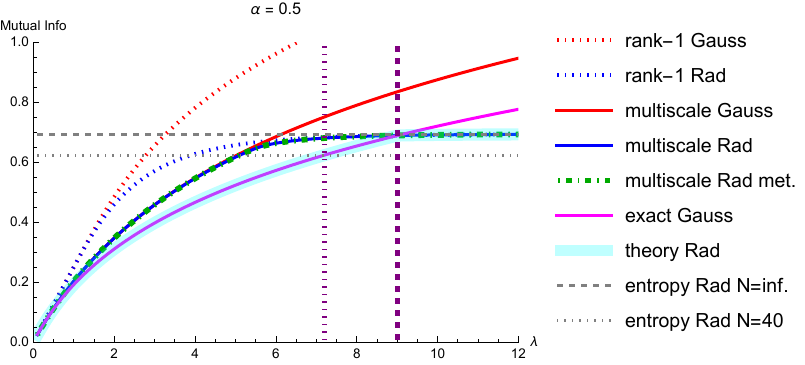}
    \includegraphics[width=0.59\linewidth]{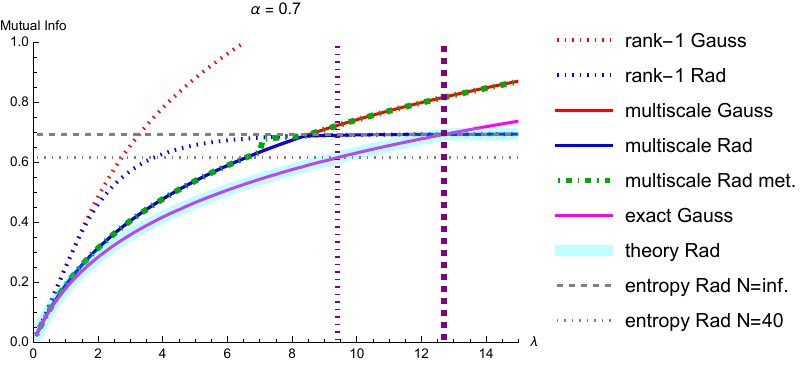}
    
    \caption{Thermodynamic limit of the mutual information as function of the SNR for $\alpha\in\{0.5,0.7\}$ with Gaussian and Rademacher priors. We plot $i)$ the rank-1 MIs, $ii)$ the \enquote{multiscale} curves referring to $\iota(\alpha,\lambda)$ \eqref{eq:MI_Matrix_facto} with the equilibrium solution selected, $iii)$ the \enquote{multiscale met.} curve corresponds to the metastable solution (which only exists for Rademacher prior), $iv)$ the \enquote{exact Gauss} curve is the exact MI for Gaussian prior $\iota^{\rm G}(\alpha,\lambda)$ \eqref{MIspherical}, $v)$ the \enquote{theory Rad} is the complete ansatz $\iota^*(\alpha,\lambda)$ \eqref{conjecture} for Rademacher prior once taking into account universality in the denoising phase, $vi)$ the entropy density $\mathcal{H}(\bX\bX^\intercal)/(NM)$ for $N\to+\infty$, equal to $\ln 2$, and its finite-size approximation \eqref{entropyXX} given by $\ln 2-\ln(2^M M!)/(MN)$ with $N=40,M=\alpha N$. Finally $vii)$ the rightmost vertical line pinpoints the crossing of the exact Gaussian MI with the multiscale mean-field prediction, i.e., the infinite size transition $\lambda_c$ \eqref{transTheory}, while the leftmost vertical line shows the crossing of the Gaussian MI with $\ln 2-\ln(2^M M!)/(MN)$ which approximates the location of the denoising-factorisation transition taking into account finite size effects.
    }    \label{fig:MI_vs_SNR_differentalpha}
\end{figure*}

 \begin{figure}[t!]
    \centering
\includegraphics[width=\linewidth]{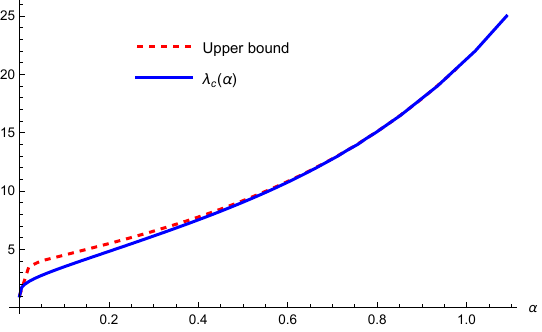}
    \caption{Thermodynamic denoising-factorisation phase transition $\lambda_c(\alpha)$ given by \eqref{transTheory} and its upper bound given by the r.h.s. of \eqref{lambdac}, for Rademacher prior.}
    \label{fig:trans}
\end{figure}

From \figurename~\ref{fig:multiscalevsMCMC} and the related discussion, we deduce that there are only three possible thermodynamic states: one corresponds to the ``universal RIE branch'', another to the ``non-universal factorisation branch'' reached from informative initialisation, and finally the dynamical glass state. Given our numerical evidence as well as the work of Semerjian \cite{semerjian2024matrix}, we select the universal RIE branch as the equilibrium state for low SNRs, which corresponds to the Gaussian MI \eqref{MIspherical}. After the transition yet to be identified, we instead select the factorisation state described by the mean-field formula \eqref{eq:MI_Matrix_facto}.

{The question now is: \emph{how to locate the denoising-factorisation phase transition, assuming there is only one transition?} The single transition scenario is backed up by our numerical experiments. 
We know three key properties of the asymptotic MI that can be used to locate the transition between the universal and mean-field branches:
$i)$ the MI is continuous in $\lambda$, $ii)$ it is also concave in $\lambda$ (its second derivative is the first derivative of the MMSE, which must decrease when $\lambda$ increases), and $iii)$ for a discrete prior, we know from \eqref{eq:finite_size_correct_MI} that the MI saturates to $\mathcal{H}(X)$ when $\lambda\to+\infty$. Therefore, the only possibility to link the Gaussian solution at low SNR to the mean-field one at higher values (which does approach $\mathcal{H}(X)$ exponentially fast in $\lambda$ by Appendix~\ref{app:large-lambda-limit-MI}) is to locate the transition as the point where they cross, namely}
\begin{align}
    \lambda_c(\alpha)=\sup\big\{\lambda\ge 0 :\iota^{\rm G}(\alpha,\lambda)\le \iota(\alpha,\lambda)\big\}.\label{transTheory}
\end{align}
As we show in the next section this is in good agreement with our finite size experiments once we take properly into account finite size effects. Consistently, our final ansatz $\iota^*$ for the MI is then
\begin{align}
    \iota^*(\alpha,\lambda)=\min\big\{
        \iota^{\rm G}(\alpha,\lambda),
        \iota(\alpha,\lambda)\big\}.\label{conjecture}
\end{align}

{The choice of selecting the Gaussian MI in the denoising phase for an informative prior like Rademacher is also well motivated by the following argument. An informative (structured) prior carries more information than the Gaussian one, and thus the associated MMSE cannot be larger than in the Gaussian case given the same data. Therefore, for such a prior, the mean-field theory cannot be correct when it predicts an MMSE \eqref{eq:MMSE_Matrix_facto} larger than the exact one for Gaussian prior \eqref{MMSEspherical}. The Gaussian MMSE is thus at least a better approximation to the true one in such a region. \figurename~\ref{fig:MMSE_th} shows that there is indeed a low SNR region where this happens and therefore the mean-field solution is incorrect. We claim that in this region the Gaussian curve is not just a better approximation to the truth, but is actually exact, in agreement with \cite{semerjian2024matrix} and our numerics. On the other hand, as illustrated by \figurename~\ref{fig:MMSE_log} and further discussed in the following section, the choice of the mean-field solution beyond a point is validated by the excellent agreement with our numerical experiments. Then, from the single transition hypothesis and the aforementioned properties $i),ii),iii)$ of the MI, the transition \eqref{transTheory} emerges.}

{ The form of the final ansatz $\iota^*$ shares features of classical limit formulas in spin glasses such as the Parisi formula \cite{parisi1979infinite, parisi1980sequence}. In the Parisi formula, the limit free entropy $F^*(\beta)$ at inverse temperature $\beta$ is also given by a minimization over different candidate ``branches''
\[
F^*(\beta) = \min \big\{ F_{{\rm RS}}(\beta) , F_{{\rm 1RSB}}(\beta), F_{{\rm 2RSB}}(\beta), \dots \big\},
\]
where $F_{{\rm RS}}(\beta), F_{{\rm 1RSB}}(\beta), \dots$, etc, are the respective replica symmetric, $1$-replica symmetry breaking step$, \dots$, approximations of the free entropy. But there are important differences too. Firstly, our model is replica symmetric as it lives on its Nishimori line. But more importantly, in contrast to our setting, a key difference in spin glasses and usual inference problems is that in all their phases (i.e., regimes of symmetry breaking), these models remain mean-field in nature, and thus the different branches of symmetry breaking can be described by one formalism yielding an overarching formula. This is \emph{not} the case for extensive-rank matrix denoising: the current techniques to compute the MI in the denoising and factorisation phases are very different, which reflects their fundamentally distinct nature. We believe it is unlikely there exists a mean-field theory describing both concurrently, unlike what happens in the Parisi formula. This is the main reason why we had to employ matrix model techniques in the denoising phase.}

\subsection{Theoretical predictions and consistency with the numerical experiments}\label{sec:sp_equations}

We plot in \figurename~\ref{fig:MMSE_th} the mean-field solution for the MMSE \eqref{eq:MMSE_Matrix_facto} obtained by iterating \eqref{eq:Kabashima_r} (dark blue curves), and the associated asymptotic MI \eqref{eq:MI_Matrix_facto} in \figurename~\ref{fig:MI_vs_SNR_differentalpha}. 
When solving \eqref{eq:Kabashima_r} by fixed point iteration, we call the initialisation $q^{t=0}=1$ \emph{informative}. The \emph{uninformative} initialisation is instead $q^{t=0}=\varepsilon\ll 1$ but non-zero. From these two obtained solutions, the \emph{equilibrium} $(q_*,r_*)$ is the one that minimizes \eqref{eq:MI_Matrix_facto_pot}. The other solution (we always observed at most two) is called \emph{metastable}. It matches the equilibrium solution if unique, as for $\alpha=0.5$ in \figurename~\ref{fig:MI_vs_SNR_differentalpha}, while they can differ for $\alpha=0.7$.

Recall however that before the transition \eqref{transTheory} we claim that the correct branch is instead the universal one. For all plots displayed, the complete ansatz taking into account universality at low SNR $\iota^*(\alpha,\lambda)$  \eqref{conjecture} (cyan curves for both MI and MMSE) correctly describes the key features of the problem evidenced by our numerical experiments. In particular we stress the very good agreement between four times the $\lambda$-derivative of $\iota^*$ and the finite size MMSEs in the factorisation phase, see \figurename~\ref{fig:MMSE_log} and \figurename~\ref{fig:MMSE_th}, but also in the denoising phase, see \figurename~\ref{fig:MCMCvsRIE} and \ref{fig:MMSE_th}. Hereby we describe our theoretical findings focusing on the complete ansatz $\iota^*$ if not specified otherwise, comparing them with those of Section \ref{sec:Motivation}. 

\vspace{3pt}
$i)$ \emph{Agreement with the mutual information and transition upper bounds.}\ $\iota^*$ remains upper bounded by the entropy $\mathcal{H}(X)$ of the prior as it should, see App.~\ref{app:large-lambda-limit-MI} and \figurename~\ref{fig:MI_vs_SNR_differentalpha}. This implies that the theoretical transition \eqref{transTheory} occurs before the rigorous upper bound \eqref{lambdac}, which becomes tighter as $\alpha$ increases, see \figurename~\ref{fig:trans}. This non-trivial consistency check entails that for high SNR, the theoretical MMSE decays exponentially with it, as supported by the numerical experiments, see \figurename~\ref{fig:MMSE_log} and \ref{fig:MMSE_th}. This exponential decay in the factorisation phase, which is asymptotically in $\lambda$ the same as for the rank-1 case with Rademacher prior, hints at the fact that in this phase the posterior patterns $(\bx_\mu)$ are weakly coupled, and the problem resembles (but is not equivalent to) $M$ independent rank-1 problems with properly tuned SNRs. 

\vspace{3pt}
$ii)$ \emph{Universality at low SNR.}\   Despite the mean-field solution $\iota$ being only approximate in the denoising phase, at low SNR it captures universality: the mean-field curves $\iota(\alpha,\lambda)$ for Gaussian and Rademacher priors do match, see \figurename~\ref{fig:MI_vs_SNR_differentalpha}; it is also evident at the level of the MMSE in \figurename~\ref{fig:MMSE_th}. This is in agreement with experiments \figurename~\ref{fig:ExactMIfiniteSize} and \ref{fig:MCMCvsRIE}. When $\alpha$ increases, this matching lasts until close to the crossing of $\iota$ and the Rademacher entropy: this signature of universality becomes more pronounced. It shows that $\iota$ becomes more accurate and consistent at larger $\alpha$ (but it also is exact at $\alpha=0$). Obviously, $\iota^*$ is universal before the transition by construction. 

\vspace{3pt}
$iii)$ \emph{Denoising-factorisation transition.}\ $\iota^*$ encodes the denoising-factorisation transition observed in our numerical experiments. Indeed, for all $\alpha$'s tested we see a change of behaviour in the $\lambda$-derivative of $\iota^*(\alpha,\lambda)$ separating the regime where the MI is prior-independent, from the factorisation phase where the MI for discrete prior approaches the entropy of the prior exponentially fast in $\lambda$. The MI for the Gaussian prior instead continues to increase smoothly. The algebraic versus exponential decay of the MMSE with the SNR observed numerically is quantitatively caught by the theory, see \figurename~\ref{fig:MMSE_th}.

For what concerns the transition location, it seems from \figurename~\ref{fig:MMSE_th} that there is a non-negligible gap with what $\iota^*(\alpha,\lambda)$ predicts. We argue here that this is due to the finite size of the simulated system, and can be accounted for. In fact, our arguments in Section~\ref{sec:Motivation} make it clear that for finite sizes the MI for Rademacher prior must not saturate to $\ln 2$ due to the degeneracy induced by column permutations and signs, which explains precisely the gap to $\ln2$ in the left panel of \figurename~\ref{fig:ExactMIfiniteSize}. Since the empirical MI curves closely follow $\iota^{\rm G}(\alpha,\lambda)$ at low SNR, they must start saturating earlier, namely at a smaller $\lambda$ than the asymptotic one $\lambda_c$, and towards a lower value than $\ln 2$. For instance, for $\alpha=0.5$ and $N=40$ saturation occurs at MI $\approx 0.6228$, which is also reported as the lowest dotted horizontal line in the left panel of \figurename~\ref{fig:MI_vs_SNR_differentalpha}. The curve $\iota^{\rm G}$ crosses that value at $\lambda\approx 7.2$, which gives us a rough upper bound on the transition location for that size, in good agreement with what is observed in \figurename~\ref{fig:ExactMIfiniteSize} and \ref{fig:MCMCvsRIE} (even if formally we cannot talk about phase transition at finite size). But we see that the change in the behaviour of the derivative of the finite size MI is not sudden and starts even before $7.2$. This is consistent with \figurename~\ref{fig:MMSE_th} for $\alpha=0.5$, where the MCMC curve detaches from the Gaussian MI around $\lambda \approx 6.5$. Nevertheless, observe that the finite-size-corrected approximation to the transition at $7.2$ falls nicely in the middle of the \enquote{crossover region} between the two phases for the MMSE. The same correcting approach works for the other values of $\alpha$. 

Let us explain in more detail the way we computed corrections to the transition location. The procedure starts by computing the finite size entropy $\mathcal{H}(\bX\bX^\intercal)/(NM)$ using \eqref{entropyXX}. Then we find the SNR where $\iota^{\rm G}$ attains this value: this gives an upper bound on the finite size ``transition'', which we expect to be close to it given the numerical and theoretical evidence. This implies that the larger $\alpha$, the smaller the slope of $\iota^{\rm G}(\alpha,\lambda)$ in $\lambda$, and thus the larger the gap between the asymptotic transition point $\lambda_c$ and its approximate location for finite size, see \figurename~\ref{fig:MI_vs_SNR_differentalpha}.

On top of this gap increasing with $\alpha$, the crossover region for the transition should also get wider. To see it we notice that the finite size corrections to the MI, estimated through \eqref{eq:finite_size_correct_MI} as $\ln (2^M M!)/(MN)$ ($\approx 5\cdot 10^{-2}$ in all our experiments), are large given the size of the system ($MN\approx 1000$ in all our experiments). As comparison, in the rank-1 spiked Wigner model, finite size corrections to the MI instead vanish as $1/N$ which is faster relatively to the system size. For size $1000$, this yields a correction roughly two orders of magnitude smaller than here. Notice also that finite size corrections increase in $\alpha$, since they are directly linked to the number of column permutations. In summary, thanks to the aforementioned correction procedure, the predicted finite size \enquote{transition} does fall in the middle of the crossover region for the MMSE, which gets wider with larger $\alpha$. {Out of these crossover regions, the theory proves to be extremely accurate. Despite the exponential decays and the modest sizes that can be simulated, our predictions for the MMSEs deep in the factorisation phase remain precise even at a $y$-scale of $10^{-3}$.}

\vspace{3pt}
$iv)$ \emph{Spinodal transition and statistical-to-computational gap.}\ According to our theory $\iota^*$, the thermodynamic denoising-factorisation phase transition at $\lambda_c(\alpha)$ is of first order, i.e., with a discontinuity in its $\lambda$-derivative which is the MMSE over four by \eqref{eq:I-MMSE}, for any $\alpha>0$. 
We have already discussed based on \figurename~\ref{fig:multiscalevsMCMC} that at low SNR, the only thermodynamic state present is the universal RIE state. This implies that before the first order transition, there must be a spinodal transition linked to the appearance of the informative factorisation state in addition to the RIE state, a point around which the RIE state seems to shatter into a dynamical glassy state observed in \figurename~\ref{fig:multiscalevsMCMC}. A spinodal transition followed by the coexistence of an informative and glassy states is a typical mechanism at the root of algorithmic hardness in high-dimensional inference \cite{PhysRevLett.65.1683,antenucci2019glassy}. This connection should however be considered with some precaution, as it has been noticed in various settings that the onset of glassiness does not always prevent inference algorithms, including sampling-based, to perform well \cite{antenucci2019glassy,angelini2023limits,decelle2011asymptotic}. $\iota^*$ is a pure equilibrium theory, and is therefore unable to detect the location of the spinodal transition. One would need to modify it in order to take into acount out-of-equilibrium glassy effects linked to replica symmetry breaking \cite{mezard1990spin,antenucci2019glassy}, which is out of the scope of the present paper. Given that we are in the Bayes-optimal setting, the thermodynamic equilibrium should anyway be described by a replica symmetric theory \cite{nishimori2001statistical,barbier2019overlap,barbier2022strong}.

It is worth noting that even though the mean-field MI $\iota(\alpha,\lambda)$ is inexact in the denoising phase, the potential \eqref{eq:MI_Matrix_facto_pot} captures the presence of a spinodal transition for $\alpha\ge 0.7$. Indeed, when $\alpha=0.7$ the mean-field theory exhibits a metastable branch around $\lambda\approx 7$, see the discontinuity in the green curve in \figurename~\ref{fig:MI_vs_SNR_differentalpha}. It detaches from the equilibrium mean-field curve through a spinodal transition where the informative fixed point appears. We thus interpret the mean-field theory as an \emph{approximation} in the denoising phase (which becomes exact after the transition, while $\iota^*$ is deemed to be exact everywhere), in the sense of the critical phenomenology it implies that cannot be probed from the equlibirium theory $\iota^*$ alone.

\section{Probing the phases through the overlap matrix, and factorisation}\label{sec:findings_overlap}
So far, we have focused on the denoising task of recovering the square signal $\bX\bX^\intercal$. We have divided the phase diagram in two thermodynamic phases: the universal denoising phase where the RIE is optimal, and the factorisation phase where it is not, because exploiting the prior and \emph{factorised} structure of the signal beyond its asymptotic spectrum $\rho_{XX^\intercal}$ is needed there to better denoise. 
It is thus natural to wonder if optimal denoising is related to the \emph{factorisation task}, i.e., estimating the single factor $\bX$ non-trivially. Non-trivial estimation means finding an estimator $\hat \bX=\hat \bX(\bY)\in \mathbb{R}^{N\times M}$ verifying 
\begin{align}
\lim_{N\to+\infty} \frac1{NM}\max_{\mathbf{\Pi}\in \bar \pi_M}\Tr\big(\bX^\intercal\hat \bX\mathbf{\Pi}  \big)>0,   \label{optOverlap} 
\end{align}
where $\bar\pi_M$ is the set of $M\times M$ signed permutation matrices. It is not clear that it is possible given the numerous symmetries and invariances of the model when $M\to+\infty$ with $N$. When $M$ is independent of $N$, if denoising is possible then factorisation is, because the number of symmetries to resolve in order to infer $\bX$ given an estimator $\widehat {\bX\bX^\intercal}$ of $\bX\bX^\intercal$ remains generally finite; e.g., for $M=1$, the eigenvector of $\widehat {\bX\bX^\intercal}$ with largest eigenvalue estimates $\bX$ up to a global sign.

We will explore here this question by getting a more detailed description of the phases through the analysis of a richer order parameter than the MMSE, namely the \emph{overlap} matrix:
\begin{align*}
    \bQ:=\frac{\bX^\intercal \bx}{N}=\Big(\frac{\bX_\mu^\intercal \bx_\nu}{N}\Big)_{\mu,\nu\le M}\in\mathbb{R}^{M\times M}, 
\end{align*}
where $\bx\sim P(\,\cdot\mid \bY)$ is a posterior sample, and $(\bx_\mu)$ are the columns/patterns of $\bx$. The MMSE can be written directly in terms of the overlap:
\begin{align}
    {\rm MMSE}&=\frac1{M}{\EE\Big\|\frac{\bX\bX^\intercal}N\Big\|^2}-\frac1M\EE\langle\| \bQ\|^2\rangle\label{Q-MMSE},
\end{align}
which follows from the Nishimori identity. From \eqref{optOverlap}, a posterior sample $\bx$ is a non-trivial estimator for the factorisation task if $\frac1M\Tr (\bQ \mathbf{\Pi}^*)>o_N(1)$ for a $\bx$-dependent permutation $\mathbf{\Pi}^*$. {The overlap can also be defined in terms of two conditionally i.i.d. samples from the posterior rather than using the ground truth: $\bx^{(1)\intercal} \bx^{(2)}/N$. By Bayes-optimality and the Nishimori identities, the conclusions of this section based on $\bQ$ would still hold.}

\subsection{Behavior of the overlap in the two phases: \texorpdfstring{\\}{}can \texorpdfstring{$\bX$}{} be inferred in addition to \texorpdfstring{$\bX\bX^\intercal$}{}?}

It is insightful to distinguish between diagonal and off-diagonal contributions of the overlap to the MMSE. Note that from the independence of the data on the permutation of the columns of $\bX$ (and their signs for symmetric priors), the induced invariance of the posterior implies that the columns of posterior samples $\bx$ obtained from the Monte Carlo sequence must be sorted appropriately to match the ones of the ground truth in order to access a meaningful overlap. In other words, we need an efficient way to approximate the solution of the optimisation problem \eqref{optOverlap}. Numerically, this is achieved via iteration w.r.t. the columns of $\bX$. In the first iteration, one finds the column of $\bx$ showing the largest absolute inner product with $\bX_1$, labeling its index as $\pi(1)$. For iterations $\mu = 2, \ldots, M$, choose $\pi(\mu)$ as 
$\pi(\mu) = {\operatorname{argmax}}_{{\nu \in S_\mu } }|\bX_\mu^\intercal \bx_\nu|$, where $S_\mu$ is the set of indices from $1$ to $M$ excluding $\pi(1), \ldots, \pi(\mu-1)$, see Algorithm~\ref{alg:greedy} below:

\begin{algorithm}[H]
\caption{Greedy column and sign alignment}
\begin{algorithmic}[1]
\Statex \textbf{Input:} Reference matrix $\bX$, ordered matrix $\bx$
\Statex \noindent \textbf{Output:} Permutation $\pi$, signs $\sigma$
\Statex \textbf{Initialize:} $S \gets \{1, 2, \ldots, M\}$ (Set of available indices)
\Statex \textbf{Initialize:} $\pi$ and $\sigma$ as empty arrays of length $M$
\For{$\mu = 1$ to $M$}
        \State $\pi(\mu) \gets \operatorname{argmax}_{\nu \in S} |\mathbf{X}_\mu^\intercal \mathbf{x}_\nu|$
        \State $\sigma(\mu) \gets  {\rm sign} (\mathbf{X}_{\mu}^\intercal \mathbf{x}_{\pi(\mu)}) $
    \State $S \gets S \setminus \{\pi(\mu)\}$
\EndFor
\end{algorithmic}
\label{alg:greedy}
\end{algorithm}

By executing this procedure for \emph{each} posterior sample, we can distinguish between diagonal and off-diagonal contributions to the MMSE, with the diagonal one representing the error made from misalignment between matched patterns, and the off-diagonals representing the contributions from alignment between unmatched patterns. For all the rest of this section, every time a posterior sample $\bx$ appears, it is assumed that the signs and ordering of its columns has already been resolved according to the above greedy procedure as $\bx\leftarrow  (\pi\,\circ\,\sigma)(\bx)$, and the Monte Carlo approximations of expectations $\langle \, \cdot\,\rangle$ are taking into account this per-sample operation.

\begin{figure}[t!]
    {
        \includegraphics[width=\linewidth,trim={16cm 0 0 0},clip]{./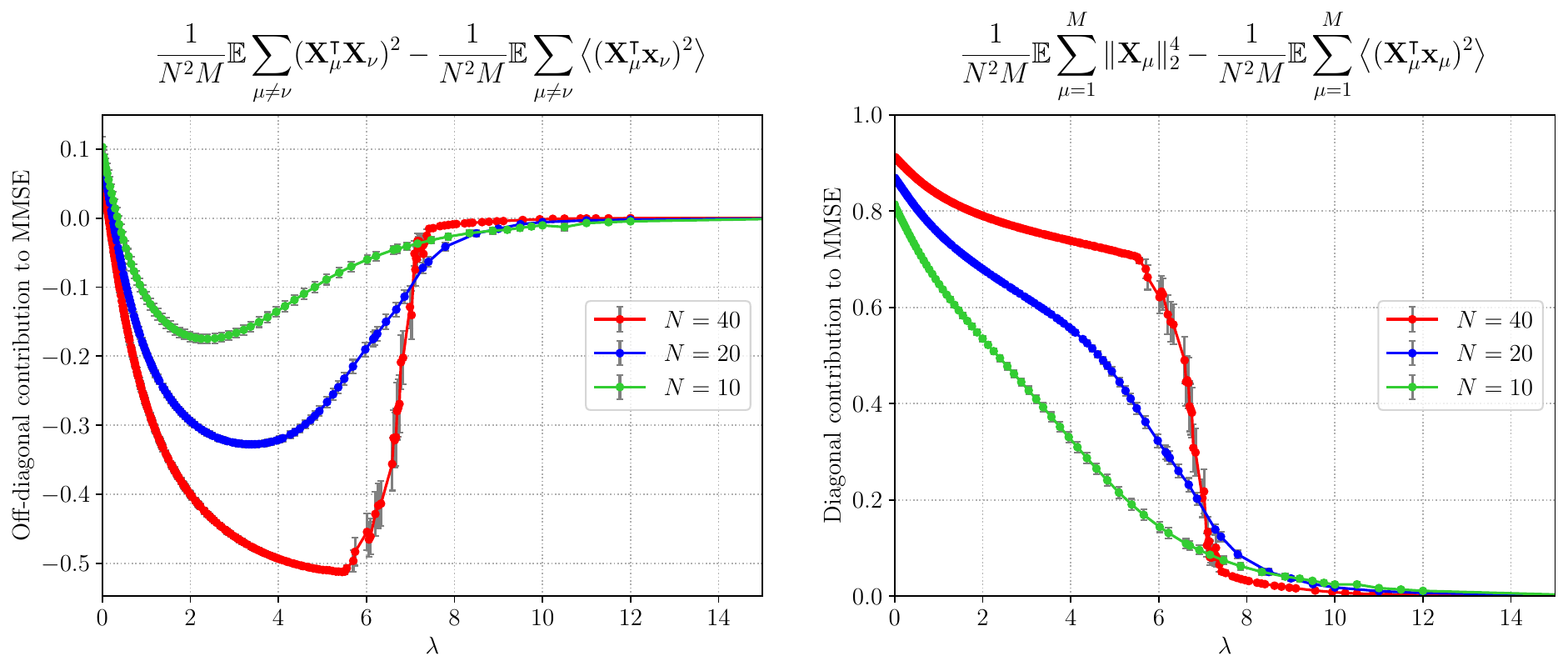}        \includegraphics[width=\linewidth,trim={0 0 16cm 0},clip]{./Plots/offdiag_alpha0.500000_re.pdf}
    }
\vspace{-0.6cm}
    \caption{Monte Carlo results for the diagonal (upper figure) and off-diagonal (lower one) contributions to the MMSE \eqref{Q-MMSE} after reordering and signs selection of the columns of each posterior sample according to Algorithm~\ref{alg:greedy}, with Rademacher prior, for various $N$ with $\alpha=0.5$. Error bars are standard errors of the mean.}
\label{fig:offDiag}
\end{figure}

\begin{figure}[t!]
    {
       \includegraphics[width=\linewidth,trim={0cm 0cm 0 2.5cm},clip]{./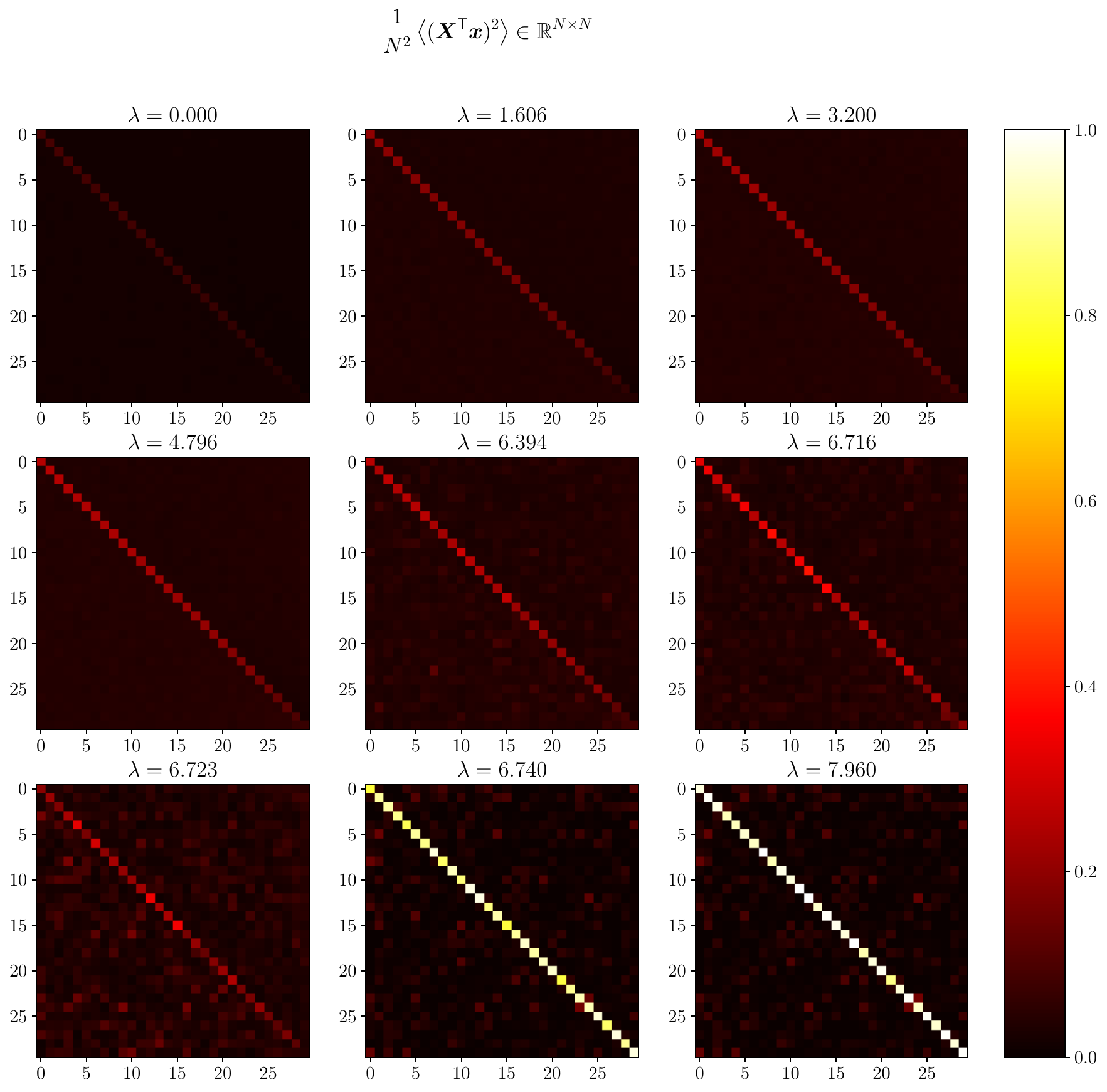}
    }

    \caption{Average squared overlap $N^{-2}\langle(\bX^\intercal\bx)^{\odot 2}\rangle$ (the square applies entrywise) for various SNRs obtained by Monte Carlo after reordering and signs selection of the columns of each posterior sample according to Algorithm~\ref{alg:greedy}, for a single instance $(\bX,\bZ)$ with Rademacher prior and $(N,M) = (60, 30)$.}
\label{fig:transOverlapSquare}
\end{figure}

\figurename~\ref{fig:offDiag} shows the diagonal and off-diagonal overlap contributions to the MMSE \eqref{Q-MMSE} for finite sizes computed by Monte Carlo. First, looking at the diagonal contribution for SNRs in the denoising phase ($\lambda\lesssim6$ for this size), the curve seems to approach $1=\E\sum_{\mu=1}^M \|\bX_\mu\|^4/(N^2M)$ as $N$ increases, indicating that no strong ``pattern matching'' is happening. Looking at the bottom part of \figurename~\ref{fig:offDiag} we observe that in this phase it is instead the off-diagonal of the overlap that contributes the most in decreasing the MMSE, e.g., in \figurename~\ref{fig:MCMCvsRIE}. The strong negative value of $(\EE \sum_{\mu\neq \nu} (\bX_\mu^\intercal\bX_\nu)^2-\EE \sum_{\mu\neq \nu} \langle (\bX_{\mu}^\intercal\bx_\nu)^2\rangle)/(N^2M)$ indicates that $(Q_{\mu\nu})_{\mu\neq \nu}$ fluctuate more around $0$ than in the case of i.i.d. patterns $(\bX_\mu^\intercal \bX_\nu/N)_{\mu\neq \nu}$. This is even more evident when plotting the histogram of the overlap entries in the denoising phase, see \figurename~\ref{fig:histo}. The broader distributions compared to the gray one indicate larger fluctuations than in the random case (where $\bX$ and $\bx$ are i.i.d.). Each $\bx_\mu$ has thus a small projection on most $(\bX_\nu)$ of order $O(1/\sqrt{N})$; yet, as $\lambda$ increases a large fraction of these projections become typically larger than between independent patterns (but of the same order). 

Beyond the transition, however, the diagonal contribution to the MMSE drops to zero, meaning that posterior patterns pair with ground truth patterns which are \emph{retrieved} in a synchronized way (lower part of \figurename~\ref{fig:transOverlapSquare}). Consistently, the off-diagonal contribution approaches $0$, as posterior patterns \enquote{orthogonalize} w.r.t. to the planted patterns, except for matched pairs that contribute to the reduction in the diagonal MMSE contribution. This shows that in the factorisation phase, non-trivial estimation of $\bX$ is possible, up to the unavoidable signs and permutation ambiguities of the columns. 

{Let us discuss a concrete algorithm to do so. Once the posterior samples' column permutation and sign symmetry are broken in the factorisation phase, their average is expected to possess a nontrivial alignment with $\bX$ up to the aforementioned ambiguity. 
The posterior samples can be synchronized efficiently by taking a single posterior sample as reference matrix (instead of $\bX$), and then aligning each remaining sample against this using again Algorithm~\ref{alg:greedy}, see Algorithm~\ref{alg:sync} below:
}
\begin{algorithm}[H]
\caption{Synchronized posterior average}
\begin{algorithmic}[1]
\Statex{\textbf{Input:} Posterior samples $\{\bx_t\}_{t = 1}^{N_{\rm MC}}$.}
\Statex \textbf{Output:} $\langle \bx \rangle_{\rm alg.}$
\Statex{\textbf{Initialize:} $\langle \bx \rangle_{\rm alg.} = \bx_1 / N_{\rm MC}$}
\For{$t = 2$ to $N_{\rm MC}$}
    \State{$\pi, \sigma \gets$ Algorithm 1 with Inputs $\bx_1$ and $\bx_t$}
    \State{$\langle \bx \rangle_{\rm alg.} \gets \langle \bx \rangle_{\rm alg.} + (\pi \circ \sigma)(\bx_t) / N_{\rm MC}$}
\EndFor
\end{algorithmic}
\label{alg:sync}
\end{algorithm}

{This procedure outputs an estimator $\langle \bx \rangle_{\rm alg.}$ of $\bX$ verifying criterion \eqref{optOverlap}. If in addition one is given an oracle permutation $\hat{\pi}(\bX)$ and signs $\hat{\sigma}(\bX)$, the remaining ambiguities can be broken by setting $\langle \bx \rangle_{\rm alg.} \gets (\hat{\pi} \circ \hat{\sigma}) (\langle \bx \rangle_{\rm alg.})$. \figurename~\ref{fig:X_MSE} depicts the recovery performance of $\langle \bx \rangle_{\rm alg.}$ from Algorithm~\ref{alg:sync} (in which the oracle permutation and signs are exploited for the sake of illustration). As can be seen clearly from both the Frobenius norm and trace overlap between $\langle \bx \rangle_{\rm alg.}$ and $\bX$, information on $\bX$ itself can be retrieved in the factorisation phase. Notice that Algorithm~\ref{alg:sync} requires as input posterior samples, which take a time exponential in $N$ to sample by Monte Carlo beforehand. Thus we do not claim that factorisation can be done efficiently, but rather that it is statistically possible.}

\begin{figure}
    \centering
    \includegraphics[width=\linewidth]{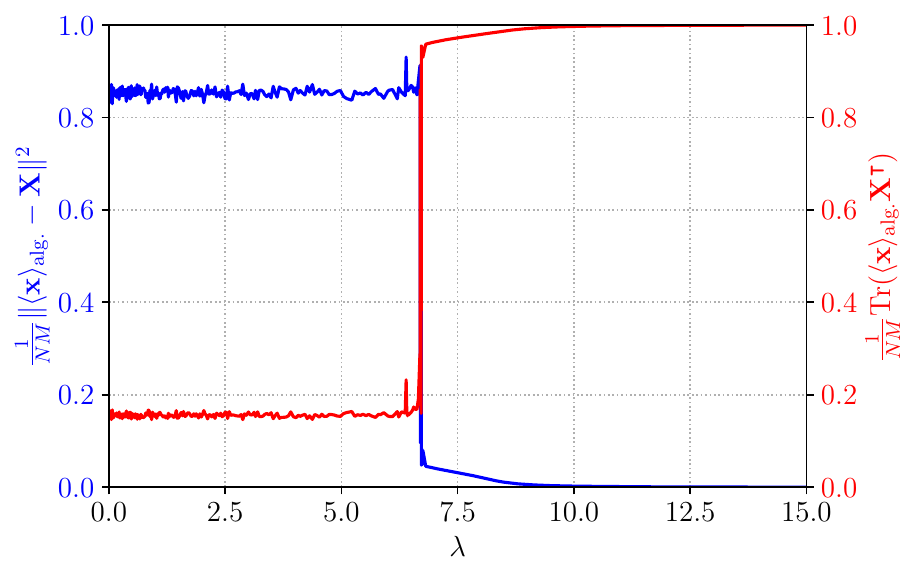}
    \caption{MSE (blue line, left axis) and overlap with the ground truth $\bX$ (red, right) for estimator $\langle \bx \rangle_{\rm alg.}$ from Algorithm~\ref{alg:sync} with $5.10^5$ Monte Carlo samples, for $(N, M) = (60, 30)$ and a single data instance. 
}
    \label{fig:X_MSE}
\end{figure}


Whether a non-trivial estimation of $\bX$ is possible in the denoising phase is less clear. 
The averaged entrywise squared overlap is displayed in \figurename~\ref{fig:transOverlapSquare} for various SNRs, and in particular close to the crossover region (three plots at the bottom). The homogeneous background and weaker diagonal before the transition (upper part) signals that the information about the ground truth patterns is somehow diluted in all the posterior patterns, consistently with the bottom panel of \figurename~\ref{fig:offDiag}. The diagonal then gets suddenly stronger and the background much smaller around the transition, indicating again a strong pattern-to-pattern matching between posterior samples and the signal in the factorisation phase. Overall these figures illustrate a clear change of behavior between the two phases, where the quasi basis of posterior patterns $(\bx_\mu)$ (i.e., a set of $N$-dimensional vectors whose pairwise inner products are approximately $\sqrt{N}$) \enquote{aligns} much more with the quasi basis of planted patterns $(\bX_\mu)$ beyond the transition.

We have made two empirical observations that may suggest at first sight that factorisation is possible in the denoising phase, too: $i)$ the diagonal contribution to the MMSE (upper part of \figurename~\ref{fig:offDiag}) seems bounded away from one, and $ii)$ the diagonals displayed in \figurename~\ref{fig:transOverlapSquare} seem $O(1)$ in the denoising phase. The question we thus aim at elucidating is whether these are artefacts of finite size effects combined with the column reordering procedure applied to the posterior samples (which necessarily amplifies the overlap diagonal) or not. In other words, \emph{is the overlap $|\bx_\mu\cdot \bX_\mu|/N$ between matched pairs vanishing or not in the thermodynamic limit in the denoising phase (still assuming the permutation ambiguity is resolved)?}

In order to answer this question, we can heuristically estimate finite size corrections, helping ourselves with the histograms at the top panel in \figurename~\ref{fig:histo}. Similar histograms were obtained from overlap matrices to which we removed the diagonal for a fair comparison (even if the histogram almost does not change if keeping the diagonal, which is not true anymore beyond the transition as clear form the bottom panel in \figurename~\ref{fig:histo}). The diagonal is removed since it has different statistical properties than the off-diagonal entries, and we need to understand the statistics of the complement of the diagonal in order to assess if such random variables can induce the observed diagonal just through the reordering procedure. From \figurename~\ref{fig:histo} we deduce that the marginal law of off-diagonal elements rescaled by $\sqrt{N}$ behave as centered Gaussian variables with variance $\sigma^2(\lambda)$ tuned by the SNR, evaluated numerically from the histograms. We thus make the simplifying assumption that the correlations among the $(Q_{\mu\nu})_{\mu,\nu\leq M}$ are sufficiently weak to consider them independent. This allows us to estimate the average of the diagonal of the entrywise squared overlap matrix under the null hypothesis that it is just due to the reordering. 

According to the aforementioned independence and Gaussianity hypotheses on the overlap entries, the law of the $\mu$th diagonal squared overlap element in \figurename~\ref{fig:transOverlapSquare} can be modeled as that of
\begin{align*}
    \chi_\mu:=\max_{\mu\leq\nu\leq M} Q_{\mu\nu}^2.
\end{align*} 
The cumulative distribution function of $\chi_\mu$ is explicit:
\begin{align*}
    \mathbb{P}(\chi\leq t)=\mathbb{P}(|Q_{11}|\leq \sqrt{t})^{M_\mu}=\text{erf}
    \Big(\sqrt{\frac{Nt}{2\sigma^2(\lambda)}}\Big)^{M_\mu}
\end{align*}for any $t\geq0$, where $M_\mu:=M-\mu+1$, and $\text{erf}(z):=\frac2{\sqrt{\pi}}\int_0^ze^{-x^2}dx$. Furthermore, $\chi_\mu$ being a positive random variable, the expectation of any function $f$ of it can be computed via tail integration as
\begin{align*}
    \EE f(\chi_\mu)&=f(0)+\int_0^\infty f'(t)\mathbb{P}(\chi_\mu\geq t)\,dt\nonumber\\
    &=f(0)+\int_0^\infty f'(t)\Big(1-\text{erf}
    \Big(\sqrt{\frac{Nt}{2\sigma^2(\lambda)}}\Big)^{M_\mu}\Big)\,dt.
\end{align*}We are interested in evaluating the mean and order of the fluctuations of the average of the diagonal elements
\begin{align*}
    \bar\chi:=\frac{1}{M}\sum_{\mu=1}^M\chi_\mu.
\end{align*}
Following our assumptions, the $\chi_\mu$'s are independent random variables so the variance $\mathbb{V}(\bar\chi)=\sum_{\mu=1}^M\mathbb{V}(\chi_\mu)/{M^2}$. Therefore, if the empirical dominant diagonal in \figurename~\ref{fig:transOverlapSquare} was just an artefact of the reordering, its average value should be compatible with
\begin{align*}
    \bar\chi\approx \EE\,\bar\chi\pm\frac{2}{M}\sqrt{\sum_{\mu=1}^M\mathbb{V}(\chi_\mu)},
\end{align*}
considering two standard deviations as confidence interval.
Recall that both values on the r.h.s. depend on the SNR trough $\sigma^2(\lambda)$.

\begin{figure}[t!]
{
       \includegraphics[width=\linewidth,trim={0cm 0.7cm 0cm 0cm},clip]{./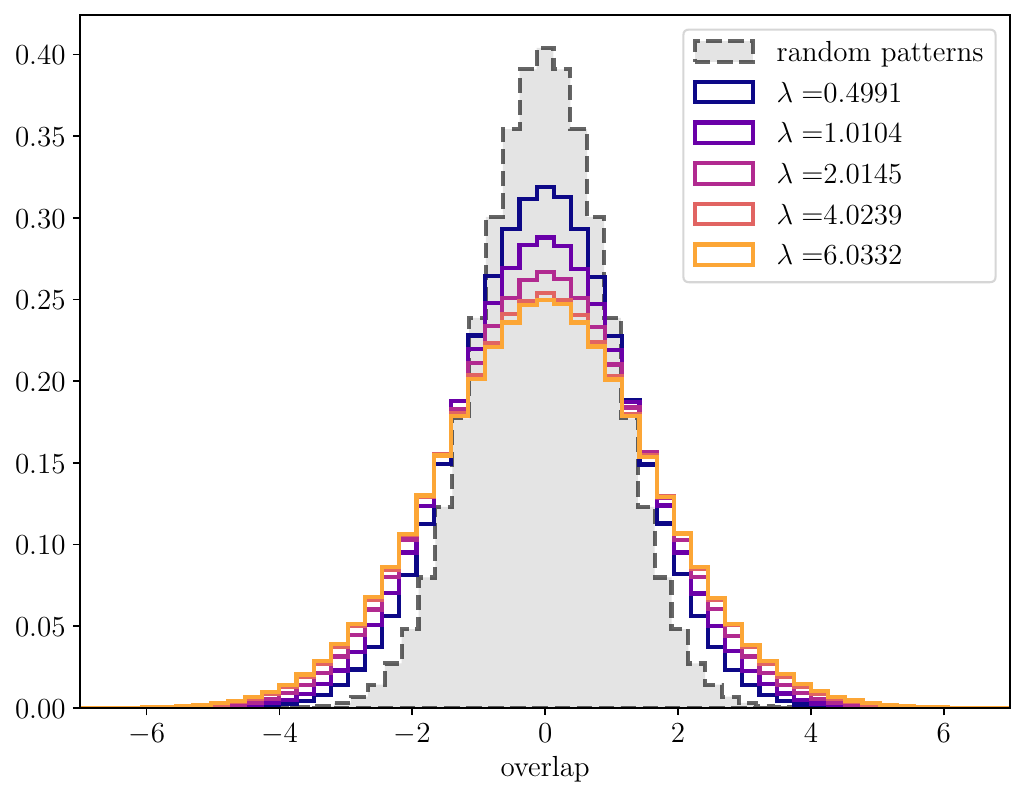}
       \includegraphics[width=\linewidth]{./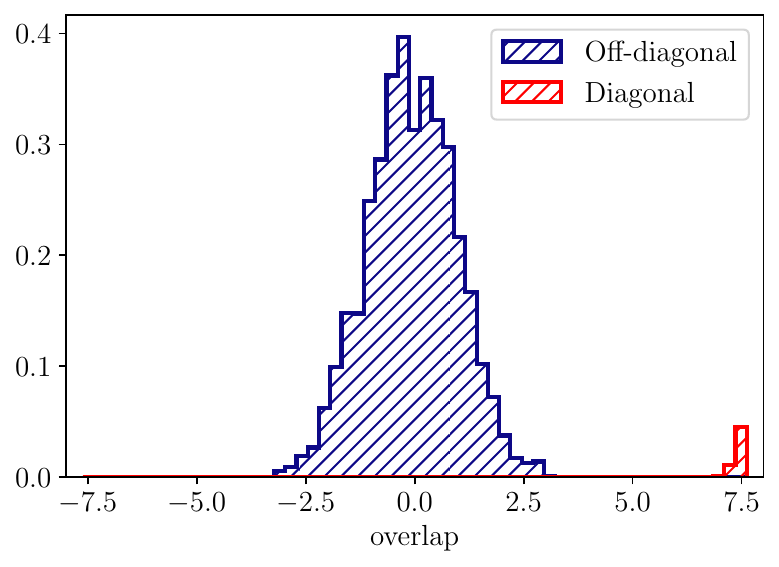}
    }
    \vspace{-0.6cm}
    \caption{Upper figure: Histogram of the entries of $\sqrt{N}\bQ$ in the denoising phase for a single instance of $(\bX,\bZ)$ with size $(N,M) = (60, 30)$ and Rademacher prior, obtained from $5\cdot 10^3$ Monte Carlo samples after reordering of each sample columns. All entries are accumulated into each histogram, resulting in a sample size of $5\cdot 10^3 M^2$ each. The gray histogram indicates the distribution of the overlap between two random Rademacher patterns of size $N$, which approaches a standard normal distribution as $N\to +\infty$. Lower~figure: Histogram of $\sqrt{N}\bQ$ for $\lambda=7.96$ which lies at the start of the factorisation phase given that size, see \figurename~\ref{fig:MMSE_th}. In that case, outliers emerge and so we have separated the diagonal and off-diagonal contributions. The diagonal one is peaked around $\sqrt{N}\approx 7.7$. The off-diagonal histogram resembles again a normal distribution, and its variance is very close to $1$.}
\label{fig:histo}
\end{figure}

\begin{figure}[t!]
    \centering
    \includegraphics[width=\linewidth]{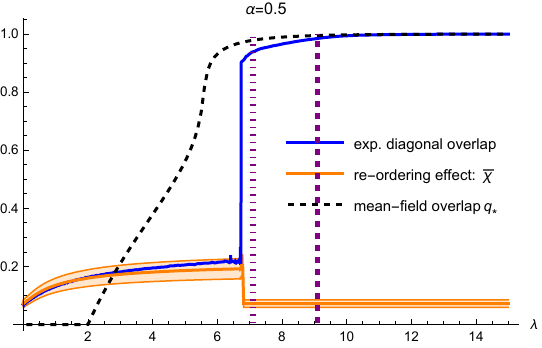}
    \caption{Experimental average diagonal of the squared overlap as a function of $\lambda$ obtained via Monte Carlo after reordering of each sample columns for a single instance $(\bX,\bZ)$ with Rademacher prior and $(N,M) = (60, 30)$ (blue), as in \figurename~\ref{fig:transOverlapSquare}. It is compared to the effect of the re-ordering of the columns of the posterior samples estimated by $\EE \bar\chi$ (orange curve), with two standard deviations confidence interval (band). We also display the equilibrium diagonal overlap $q_*$ given by the mean-field theory \eqref{eq:Kabashima_r}. The complete ansatz \eqref{conjecture} predicts a transition at $\lambda_c(\alpha=0.5)\approx 9.1$ (vertical purple dashed curve), which can be corrected for finite size effects yielding $\approx 7.1$, see caption of \figurename~\ref{fig:MI_vs_SNR_differentalpha}. }
    \label{fig:reordering_finitesize}
\end{figure}

\figurename~\ref{fig:reordering_finitesize} compares the value of $\bar\chi$ with its uncertainty (orange) to the arithmetic mean of the diagonal elements of the averaged squared overlap $N^{-2}\langle(\bX^\intercal\bx)^{\odot 2}\rangle$ (blue). There is an evident agreement in the denoising phase: the curve obtained via MCMC never detaches from $\bar\chi$ by more than two standard deviations. This suggests that our assumptions on the overlap elements $Q_{\mu\nu}$'s are effective, and above all that the diagonals in \figurename~\ref{fig:transOverlapSquare} for $\lambda\leq 6.723$, i.e., in the denoising phase for that size, are compatible with the diagonal-amplifying effect of the column reordering procedure. This entails that the posterior patterns $(\bx_\mu)$ are delocalized in the quasi basis of the ground truth patterns $(\bX_\mu)$. On the contrary, after the transition the values of $\bar\chi$ are incompatible (further than 5 standard deviations) with the empirical curve, signaling that the values in the diagonals of the bottom center and right panels in \figurename~\ref{fig:transOverlapSquare} are not only due to reordering, but a real effect happening in the factorisation phase. Note that the sudden decrease in $\EE \bar\chi$ in \figurename~\ref{fig:reordering_finitesize}  is related to the collapse of the blue histogram at the bottom of \figurename~\ref{fig:histo} onto a variance one normal law right after the transition, while it was more spreaded just before it, i.e., it is a manifestation of the transition through $\sigma^2(\lambda)$. As last comment, we can show that the effect $\EE \bar \chi$ of the reordering vanishes as $O(2\sigma^2(\lambda)\ln(2M)/N)$ (as it is a mean of maxima of i.i.d.\ Gaussians, see \cite{Embrechts2013ExtremeValueTheoryFinanceInsurance}). 

In conclusion, from the evidence we gathered we expect that the seemingly bright diagonals in \figurename~\ref{fig:transOverlapSquare} for $\lambda\leq 6.723$ disappear in the thermodynamic limit. Consequently, in the denoising phase it is more plausible that factorisation is \emph{not} possible, in contrast with the factorisation phase, and only denoising of $\bX\bX^\intercal$ is. This aligns with the fact that one cannot extract an estimator of $\bX$ from the RIE, which we think is Bayes-optimal in that phase (or at least we do not see how to do so).

{Note that \figurename~\ref{fig:offDiag}, \ref{fig:transOverlapSquare}, \ref{fig:X_MSE}, \ref{fig:reordering_finitesize} once more suggest that the phase transition is discontinuous for discrete~priors.}

\begin{figure}[t!]
       \includegraphics[width=\linewidth]{./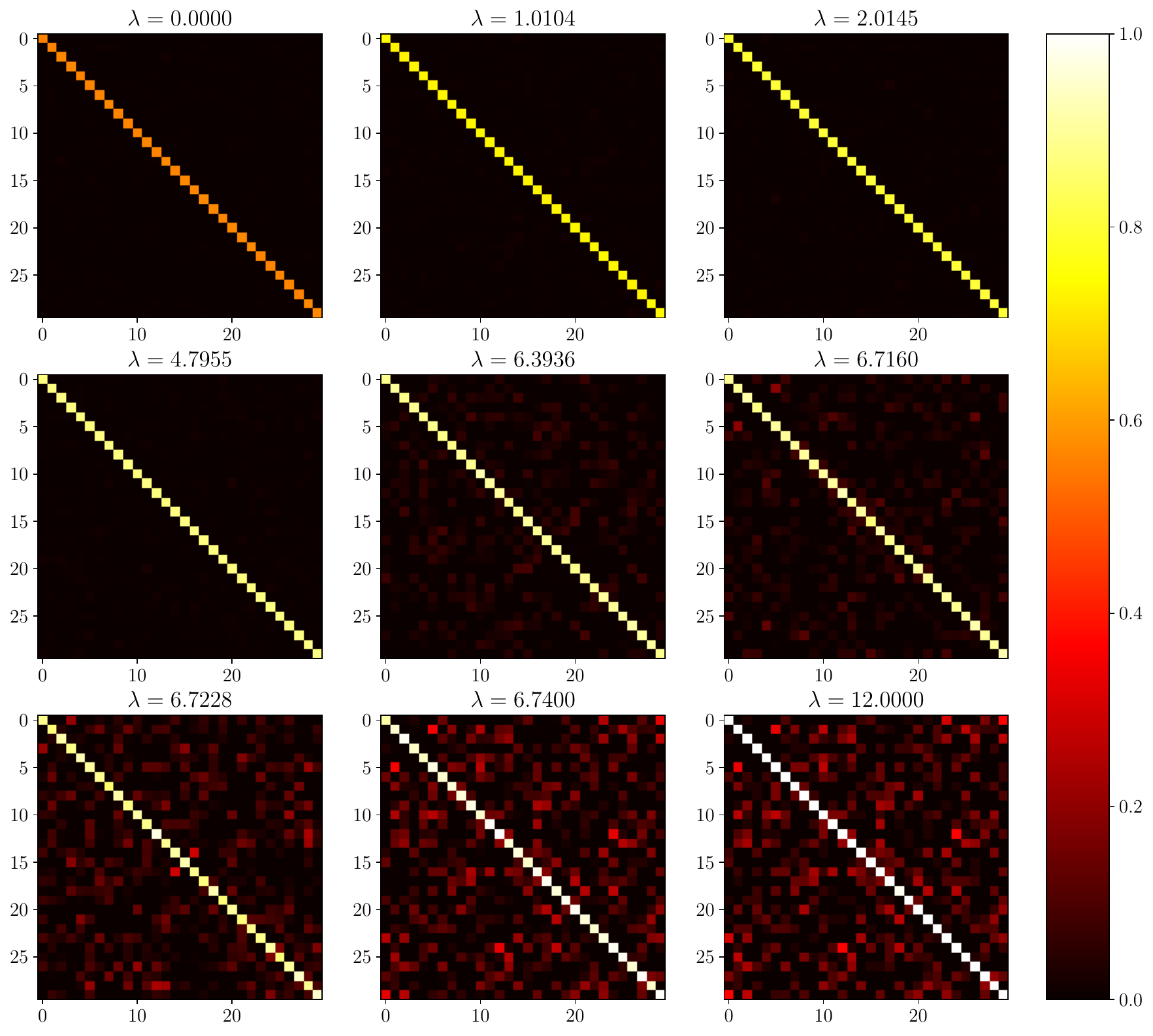}
    \caption{Average overlap $N^{-1}  \bX^\intercal \langle\bx \bO_*(\bx)\rangle$ computed by Monte Carlo after reordering of each sample columns for a single instance of $(\bX,\bZ)$ with Rademacher prior and size $(N,M) = (60, 30)$. }
\label{fig:OptimalRot}
\end{figure}

We have argued that non-trivial estimation of $\bX$ is impossible in the denoising phase. In order to clarify what is inferred about it that allows denoising of $\bX\bX^\intercal$, consider the following rotation for a given posterior sample:
\begin{equation*}
    \bO_*(\bx) := \mathop{\rm argmin}_{\bO \in \R^{M \times M}\,:\,  \bO^\intercal \bO = I_M} \| \bX - \bx \bO\|^2.
\end{equation*}
In \figurename~\ref{fig:OptimalRot} we plot the overlap matrix between \enquote{optimally rotated} posterior samples and the ground truth, averaged over Monte Carlo samples, i.e., $\bX^\intercal \langle \bx \bO_*(\bx)\rangle /N$. Clearly, in the denoising phase the profile of this matrix is diagonal, but with the diagonal now remaining order one as $N$ increases, and whose intensity grows with $\lambda$. Note that there is a finite diagonal overlap even when $\lambda = 0$, as two random $M$-dimensional subspaces of $\R^N$ always have a common subspace of dimension $O(N)$. This demonstrates two points: on one hand the denoising phase is the regime where the ground truth factor is estimated but only up to an unknown rotation of the basis in which the patterns $(\bX_\mu)$ are expressed. On the other hand, this also reveals that it is not sufficient to break the permutation symmetry of the columns in order to see a non-trivial overlap appear (as was done in \figurename~\ref{fig:transOverlapSquare}). One instead needs to break the much larger symmetry group of rotations, which further requires an oracle already knowing the planted matrix. This \enquote{effective rotational invariance} is what we believe to be the source of universality in the denoising phase, and what prevents factorisation. We refer to \cite{semerjian2025some} for a related discussion.

As last comment, solving the saddle point equations \eqref{eq:Kabashima_r}, we get that the mean-field overlap $q_*$ detaches from $0$ around $\lambda=2$ in the denoising phase, see \figurename~\ref{fig:reordering_finitesize}. {We have verified that this has an effect on the multiscale curves in \figurename~\ref{fig:MMSE_th} and \ref{fig:MI_vs_SNR_differentalpha}, but it is just too small to be visible by eye.} This may seem in contradiction with our negative conclusions concerning factorisation in this phase. However, we claim that this is an artefact of the mean-field theory being applied in the \enquote{matrix model/denoising regime} where the necessary concentration-of-measure effects for it to hold are not present. As discussed in Section~\ref{sec:completeansatz}, mean-field equations are reliable only after the transition $\lambda_c(\alpha=0.5)\approx 9.1$, as manifest from the figure. We therefore do not interpret this detachment from $0$ as a true transition nor as a signal of possible factorisation in this phase.

\section{Conclusion and perspectives} 
In this paper we have first conducted numerical experiments to grasp features of the phase diagram of matrix denoising in the challenging regime of extensive-rank beyond the rotational invariant setting. We have clarified the rich phenomenology of the model. In particular, we uncovered a first order phase transition separating a regime akin to a matrix model where strong universality properties hold and the RIE is a Bayes-optimal denoising algorithm, from a fundamentally different phase where it is not anymore due to universality breaking, and which is more amenable to statistical physics techniques. The free entropy \eqref{eq:free_entropy} seen as a function of the SNR $\lambda$ thus interpolates between a matrix model solvable by the HCIZ integral (the denoising phase) and a planted mean-field spin model (the factorisation phase).

We have then proposed a mean-field theory based on a combination of spin glass approaches to compute the mutual information and minimum mean-square error. Interestingly, it yields the same equations as the replica approach of \cite{SK13EPL,SK13ISIT,KMZ_DL-2013,Marc-Kabashima}. The latter was believed to be incorrect for a long time. Through our independent analysis, we have instead clarified in which regime it is reliable --the factorisation phase-- or not --the denoising phase-- and have shown how to correct it when not accurate thanks to universality, as conjectured by Semerjian \cite{semerjian2024matrix} but for the whole phase diagram. Our complete ansatz for the two phases is in good quantitative agreement with the experiments, up to unusually large finite-size corrections which we were able to quantify. 

The numerical evidence suggests that optimal denoising is algorithmically hard only beyond the transition point, due to the presence of the universal RIE state at all signal-to-noise ratios. We have experimentally confirmed that this state prevents Monte Carlo algorithms to outperform the RIE. 

There are many directions to pursue from here. Concerning our present setting, we have proven that beyond a certain SNR universality cannot hold. It would be interesting to rigorously confirm that the universality indeed holds at low SNR and breaks precisely at the predicted transition $\lambda_c(\alpha)$, and that the mean-field theory does become asymptotically exact beyond that point. 

Concerning generalisations, the most natural one is the non-symmetric version of the problem, with more general noises, as considered in \cite{SK13EPL,SK13ISIT,KMZ_DL-2013,Marc-Kabashima}. 

In addition, our setting shares several similarities with the Hopfield model \cite{Hopfield82} of associative memory. The phenomenology of the latter is well understood (at least at the replica symmetric level) thanks to the seminal work \cite{Amit1}. The key idea is that samples from the Boltzmann-Gibbs measure cannot have extensive projections on all memorized patterns, but only on a finite subset and typically $O(1/\sqrt{N})$ projections onto the others. These infinitesimal contributions, being extensively many, give rise to a contribution that can be treated as effective noise called \emph{pattern interference}. Its intensity depends on the load of the memory $M/N\approx \alpha$. There exists a retrieval phase for low $\alpha$ and temperatures, where an algorithm sampling from the Boltzmann-Gibbs distribution with informative initialisation \enquote{recalls} a pattern. Out of this region, the samples have instead $O(1/\sqrt{N})$ overlap with all the patterns.

We observe that for matrix denoising too, in the denoising phase, the overlap matrix presents $O(1/\sqrt{N})$ entries. However, instead of treating these elements as noise as done in Hopfield, here they actually give a fundamental contribution to decreasing the MMSE as illustrated in \figurename~\ref{fig:offDiag}. This is one of the main differences with the Hopfield model: the sum of spurious alignments of the posterior sample columns $\bx_\mu$ onto other patterns of the ground truth is exploitable as information when denoising $\bX\bX^\intercal$, and it is not a disturbance. The term $\Tr(\bx\bx^\intercal)^2$ in the Hamiltonian \eqref{eq:Hamiltonian} contributes to this phenomenology, generating a caging effect that prevents posterior patterns from overlapping and thus helps.
As a consequence, we find that the ``load'' $\alpha$ that the memory can withstand is higher. Our theory supports the fact that it is possible to retrieve \emph{all} the patterns even for $\alpha$'s sensibly larger than the Hopfield critical load $\alpha_c\approx0.138$ for binary patterns, provided an informative initialisation is available and the signal-to-noise ratio is sufficiently high.

Furthermore, our theory and its agreement with experiments in the factorisation phase indicates that this caging effect becomes strong enough to almost completely cleanse the system from pattern interference, yielding an MMSE akin to that of several quasi decoupled rank-1 problems.

Another major difference is that in matrix denoising one attempts to recall \emph{all} patterns (columns of $\bX$) at once (see the recent related work \cite{agliari2024generalized} where multiple, but finitely many, patterns are jointly recovered), differently from the approach of \cite{camilli2023matrix} for instance.

{To conclude this analogy, the study of the memorization properties of this model should be carried out in a non Bayes-optimal setting. One issue is that the universal branch of the matrix denoising problem is known exactly only thanks to the Nishimori identities which allow to simplify the HCIZ integral \cite{perturbative_Maillard21,Jean_Farzad_matrixinference_2024}. Bayes-optimality also prevents replica symmetry breaking \cite{mezard1990spin}. These simplifying symmetries break down in the absence of Bayes-optimality, and solving the model without relying on them could be much more challenging. }

{A further connection that is drawing attention is that with information theoretical limits of neural networks. It was recently pinpointed by some of the present authors that a theory similar to ours can be used to describe learning transitions in shallow neural networks. More specifically, in \cite{maillard2024bayes} the authors exploited a mapping between the learning of an extensive-width neural network with quadratic activation function in a teacher-student setting and a matrix denoising problem, where the hidden matrix follows a Wishart distribution (and thus is \emph{exactly} rotationally invariant). Based on the ideas of the present paper, \cite{2025_nn_quadratic_Barbier} goes beyond by tackling any activation function but also possibly non-rotationally invariant weight matrices. Given the similar nature of the problem treated in \cite{2025_nn_quadratic_Barbier}, the remarkable agreement between theoretical predictions and numerical experiments in the said paper also supports our claims of exactness of the theory proposed here. Hence, our theory (together with the one of \cite{SK13EPL,SK13ISIT,KMZ_DL-2013,Marc-Kabashima}) and in particular the rich phenomenological picture we have identified has the potential to become valuable tools for describing a wide variety of matrix models going much beyond matrix denoising.}

Another stretched, yet potentially fruitful comparison, is that with Anderson localization \cite{anderson1958absence}. The phenomenon is different as being quantum by nature and is not a thermodynamic phase transition. Yet, the similarity is striking due to both transitions being linked to localization, i.e., macroscopic pairwise alignment of two bases of vectors: the eigenvectors of the energy operator of the quantum system at hand localize in lattice basis when tuning the amount of disorder, while in the present problem the quasi basis of posterior patterns $(\bx_\mu)$ localizes at the transition in the quasi basis of planted ones $(\bX_\mu)$, which manifests itself by the abrupt increase of the overlap diagonal in \figurename~\ref{fig:transOverlapSquare}. It would be interesting to explore how far this analogy can be pushed.

A final direction to pursue would be to compare known Bayesian algorithms for matrix inference to our results, such as BiGAMP \cite{PSCvI}, the AMP algorithm of \cite{Marc-Kabashima} or Unitary AMP \cite{yuan2024unitary}. Given our results on the hardness of denoising using Monte Carlo, we think that these algorithms adapted to the present setting may, too, face a hard phase and be outperformed by the RIE without additional information on $\bX$. We plan to look into this in a future work.

\section*{Acknowledgments}

J.B. and F.C. were funded by the European Union (ERC, CHORAL, project number 101039794). Views and opinions expressed are however those of the authors only and do not necessarily reflect those of the European Union or the European Research Council. Neither the European Union nor the granting authority can be held responsible for them. J.K. was funded by the Natural Sciences and Engineering Research Council of Canada (RGPIN-2020-04597). K.O. was funded by JSPS KAKENHI Grant number 22KJ1074, JST CREST Grant number JPMJCR1912, and Grant-in-Aid for Transformative Research Areas (A), “Foundation of Machine Learning Physics” (22H05117). Computational resources of the AI Bridging Cloud Infrastructure (ABCI) provided by the National Institute of Advanced Industrial Science and Technology (AIST) were used. Part of the work of K.O. was done while visiting the International Center for Theoretical Physics in Trieste.

\normalsize
\bibliography{total_biblio}

\begin{thebibliography}{159}%
\makeatletter
\providecommand \@ifxundefined [1]{%
 \@ifx{#1\undefined}
}%
\providecommand \@ifnum [1]{%
 \ifnum #1\expandafter \@firstoftwo
 \else \expandafter \@secondoftwo
 \fi
}%
\providecommand \@ifx [1]{%
 \ifx #1\expandafter \@firstoftwo
 \else \expandafter \@secondoftwo
 \fi
}%
\providecommand \natexlab [1]{#1}%
\providecommand \enquote  [1]{``#1''}%
\providecommand \bibnamefont  [1]{#1}%
\providecommand \bibfnamefont [1]{#1}%
\providecommand \citenamefont [1]{#1}%
\providecommand \href@noop [0]{\@secondoftwo}%
\providecommand \href [0]{\begingroup \@sanitize@url \@href}%
\providecommand \@href[1]{\@@startlink{#1}\@@href}%
\providecommand \@@href[1]{\endgroup#1\@@endlink}%
\providecommand \@sanitize@url [0]{\catcode `\\12\catcode `\$12\catcode
  `\&12\catcode `\#12\catcode `\^12\catcode `\_12\catcode `\%12\relax}%
\providecommand \@@startlink[1]{}%
\providecommand \@@endlink[0]{}%
\providecommand \url  [0]{\begingroup\@sanitize@url \@url }%
\providecommand \@url [1]{\endgroup\@href {#1}{\urlprefix }}%
\providecommand \urlprefix  [0]{URL }%
\providecommand \Eprint [0]{\href }%
\providecommand \doibase [0]{https://doi.org/}%
\providecommand \selectlanguage [0]{\@gobble}%
\providecommand \bibinfo  [0]{\@secondoftwo}%
\providecommand \bibfield  [0]{\@secondoftwo}%
\providecommand \translation [1]{[#1]}%
\providecommand \BibitemOpen [0]{}%
\providecommand \bibitemStop [0]{}%
\providecommand \bibitemNoStop [0]{.\EOS\space}%
\providecommand \EOS [0]{\spacefactor3000\relax}%
\providecommand \BibitemShut  [1]{\csname bibitem#1\endcsname}%
\let\auto@bib@innerbib\@empty
\bibitem [{\citenamefont {Ledoit}\ and\ \citenamefont
  {P{\'e}ch{\'e}}(2011)}]{ledoit2011eigenvectors}%
  \BibitemOpen
  \bibfield  {author} {\bibinfo {author} {\bibfnamefont {O.}~\bibnamefont
  {Ledoit}}\ and\ \bibinfo {author} {\bibfnamefont {S.}~\bibnamefont
  {P{\'e}ch{\'e}}},\ }\bibfield  {title} {\bibinfo {title} {Eigenvectors of
  some large sample covariance matrix ensembles},\ }\href@noop {} {\bibfield
  {journal} {\bibinfo  {journal} {Probability Theory and Related Fields}\
  }\textbf {\bibinfo {volume} {151}},\ \bibinfo {pages} {233} (\bibinfo {year}
  {2011})}\BibitemShut {NoStop}%
\bibitem [{\citenamefont {Bun}\ \emph {et~al.}(2015)\citenamefont {Bun},
  \citenamefont {Allez}, \citenamefont {Bouchaud},\ and\ \citenamefont
  {Potters}}]{BABP}%
  \BibitemOpen
  \bibfield  {author} {\bibinfo {author} {\bibfnamefont {J.}~\bibnamefont
  {Bun}}, \bibinfo {author} {\bibfnamefont {R.}~\bibnamefont {Allez}}, \bibinfo
  {author} {\bibfnamefont {J.-P.}\ \bibnamefont {Bouchaud}},\ and\ \bibinfo
  {author} {\bibfnamefont {M.}~\bibnamefont {Potters}},\ }\bibfield  {title}
  {\bibinfo {title} {Rotational invariant estimator for general noisy
  matrices},\ }\href {https://doi.org/10.1109/TIT.2016.2616132} {\bibfield
  {journal} {\bibinfo  {journal} {IEEE Transactions on Information Theory}\
  }\textbf {\bibinfo {volume} {PP}} (\bibinfo {year} {2015})}\BibitemShut
  {NoStop}%
\bibitem [{\citenamefont {Sakata}\ and\ \citenamefont
  {Kabashima}(2013{\natexlab{a}})}]{SK13EPL}%
  \BibitemOpen
  \bibfield  {author} {\bibinfo {author} {\bibfnamefont {A.}~\bibnamefont
  {Sakata}}\ and\ \bibinfo {author} {\bibfnamefont {Y.}~\bibnamefont
  {Kabashima}},\ }\bibfield  {title} {\bibinfo {title} {Statistical mechanics
  of dictionary learning},\ }\href
  {https://doi.org/10.1209/0295-5075/103/28008} {\bibfield  {journal} {\bibinfo
   {journal} {Europhysics Letters}\ }\textbf {\bibinfo {volume} {103}},\
  \bibinfo {pages} {28008} (\bibinfo {year} {2013}{\natexlab{a}})}\BibitemShut
  {NoStop}%
\bibitem [{\citenamefont {{Tošić}}\ and\ \citenamefont
  {{Frossard}}(2011)}]{dictionaryLearning}%
  \BibitemOpen
  \bibfield  {author} {\bibinfo {author} {\bibfnamefont {I.}~\bibnamefont
  {{Tošić}}}\ and\ \bibinfo {author} {\bibfnamefont {P.}~\bibnamefont
  {{Frossard}}},\ }\bibfield  {title} {\bibinfo {title} {Dictionary learning},\
  }\href {https://doi.org/10.1109/MSP.2010.939537} {\bibfield  {journal}
  {\bibinfo  {journal} {IEEE Signal Processing Magazine}\ }\textbf {\bibinfo
  {volume} {28}},\ \bibinfo {pages} {27} (\bibinfo {year} {2011})}\BibitemShut
  {NoStop}%
\bibitem [{\citenamefont {Bengio}\ \emph {et~al.}(2013)\citenamefont {Bengio},
  \citenamefont {Courville},\ and\ \citenamefont
  {Vincent}}]{bengio2013representation}%
  \BibitemOpen
  \bibfield  {author} {\bibinfo {author} {\bibfnamefont {Y.}~\bibnamefont
  {Bengio}}, \bibinfo {author} {\bibfnamefont {A.}~\bibnamefont {Courville}},\
  and\ \bibinfo {author} {\bibfnamefont {P.}~\bibnamefont {Vincent}},\
  }\bibfield  {title} {\bibinfo {title} {Representation learning: A review and
  new perspectives},\ }\href {https://doi.org/10.1109/TPAMI.2013.50} {\bibfield
   {journal} {\bibinfo  {journal} {IEEE Transactions on Pattern Analysis and
  Machine Intelligence}\ }\textbf {\bibinfo {volume} {35}},\ \bibinfo {pages}
  {1798} (\bibinfo {year} {2013})}\BibitemShut {NoStop}%
\bibitem [{\citenamefont {Olshausen}\ and\ \citenamefont
  {Field}(1996)}]{olshausen1996emergence}%
  \BibitemOpen
  \bibfield  {author} {\bibinfo {author} {\bibfnamefont {B.~A.}\ \bibnamefont
  {Olshausen}}\ and\ \bibinfo {author} {\bibfnamefont {D.~J.}\ \bibnamefont
  {Field}},\ }\bibfield  {title} {\bibinfo {title} {Emergence of simple-cell
  receptive field properties by learning a sparse code for natural images},\
  }\href {https://doi.org/https://doi.org/10.1038/381607a0} {\bibfield
  {journal} {\bibinfo  {journal} {Nature}\ }\textbf {\bibinfo {volume} {381}},\
  \bibinfo {pages} {607} (\bibinfo {year} {1996})}\BibitemShut {NoStop}%
\bibitem [{\citenamefont {Olshausen}\ and\ \citenamefont
  {Field}(1997)}]{olshausen1997sparse}%
  \BibitemOpen
  \bibfield  {author} {\bibinfo {author} {\bibfnamefont {B.~A.}\ \bibnamefont
  {Olshausen}}\ and\ \bibinfo {author} {\bibfnamefont {D.~J.}\ \bibnamefont
  {Field}},\ }\bibfield  {title} {\bibinfo {title} {Sparse coding with an
  overcomplete basis set: A strategy employed by v1?},\ }\href@noop {}
  {\bibfield  {journal} {\bibinfo  {journal} {Vision research}\ }\textbf
  {\bibinfo {volume} {37}},\ \bibinfo {pages} {3311} (\bibinfo {year}
  {1997})}\BibitemShut {NoStop}%
\bibitem [{\citenamefont {Kreutz-Delgado}\ \emph {et~al.}(2003)\citenamefont
  {Kreutz-Delgado}, \citenamefont {Murray}, \citenamefont {Rao}, \citenamefont
  {Engan}, \citenamefont {Lee},\ and\ \citenamefont
  {Sejnowski}}]{kreutz2003dictionary}%
  \BibitemOpen
  \bibfield  {author} {\bibinfo {author} {\bibfnamefont {K.}~\bibnamefont
  {Kreutz-Delgado}}, \bibinfo {author} {\bibfnamefont {J.~F.}\ \bibnamefont
  {Murray}}, \bibinfo {author} {\bibfnamefont {B.~D.}\ \bibnamefont {Rao}},
  \bibinfo {author} {\bibfnamefont {K.}~\bibnamefont {Engan}}, \bibinfo
  {author} {\bibfnamefont {T.-W.}\ \bibnamefont {Lee}},\ and\ \bibinfo {author}
  {\bibfnamefont {T.~J.}\ \bibnamefont {Sejnowski}},\ }\bibfield  {title}
  {\bibinfo {title} {Dictionary learning algorithms for sparse
  representation},\ }\href@noop {} {\bibfield  {journal} {\bibinfo  {journal}
  {Neural computation}\ }\textbf {\bibinfo {volume} {15}},\ \bibinfo {pages}
  {349} (\bibinfo {year} {2003})}\BibitemShut {NoStop}%
\bibitem [{\citenamefont {Mairal}\ \emph {et~al.}(2009)\citenamefont {Mairal},
  \citenamefont {Bach}, \citenamefont {Ponce},\ and\ \citenamefont
  {Sapiro}}]{mairal2009}%
  \BibitemOpen
  \bibfield  {author} {\bibinfo {author} {\bibfnamefont {J.}~\bibnamefont
  {Mairal}}, \bibinfo {author} {\bibfnamefont {F.}~\bibnamefont {Bach}},
  \bibinfo {author} {\bibfnamefont {J.}~\bibnamefont {Ponce}},\ and\ \bibinfo
  {author} {\bibfnamefont {G.}~\bibnamefont {Sapiro}},\ }\bibfield  {title}
  {\bibinfo {title} {Online dictionary learning for sparse coding},\ }in\ \href
  {https://doi.org/10.1145/1553374.1553463} {\emph {\bibinfo {booktitle}
  {Proceedings of the 26th Annual International Conference on Machine
  Learning}}},\ \bibinfo {series and number} {ICML '09}\ (\bibinfo  {publisher}
  {Association for Computing Machinery},\ \bibinfo {address} {New York, NY,
  USA},\ \bibinfo {year} {2009})\ p.\ \bibinfo {pages} {689–696}\BibitemShut
  {NoStop}%
\bibitem [{\citenamefont {Cand{\`e}s}\ \emph {et~al.}(2011)\citenamefont
  {Cand{\`e}s}, \citenamefont {Li}, \citenamefont {Ma},\ and\ \citenamefont
  {Wright}}]{candes2011robust}%
  \BibitemOpen
  \bibfield  {author} {\bibinfo {author} {\bibfnamefont {E.~J.}\ \bibnamefont
  {Cand{\`e}s}}, \bibinfo {author} {\bibfnamefont {X.}~\bibnamefont {Li}},
  \bibinfo {author} {\bibfnamefont {Y.}~\bibnamefont {Ma}},\ and\ \bibinfo
  {author} {\bibfnamefont {J.}~\bibnamefont {Wright}},\ }\bibfield  {title}
  {\bibinfo {title} {Robust principal component analysis?},\ }\href@noop {}
  {\bibfield  {journal} {\bibinfo  {journal} {Journal of the ACM (JACM)}\
  }\textbf {\bibinfo {volume} {58}},\ \bibinfo {pages} {1} (\bibinfo {year}
  {2011})}\BibitemShut {NoStop}%
\bibitem [{\citenamefont {Perry}\ \emph {et~al.}(2018)\citenamefont {Perry},
  \citenamefont {Wein}, \citenamefont {Bandeira},\ and\ \citenamefont
  {Moitra}}]{perry2018optimality}%
  \BibitemOpen
  \bibfield  {author} {\bibinfo {author} {\bibfnamefont {A.}~\bibnamefont
  {Perry}}, \bibinfo {author} {\bibfnamefont {A.~S.}\ \bibnamefont {Wein}},
  \bibinfo {author} {\bibfnamefont {A.~S.}\ \bibnamefont {Bandeira}},\ and\
  \bibinfo {author} {\bibfnamefont {A.}~\bibnamefont {Moitra}},\ }\bibfield
  {title} {\bibinfo {title} {Optimality and sub-optimality of {PCA} {I}: Spiked
  random matrix models},\ }\href@noop {} {\bibfield  {journal} {\bibinfo
  {journal} {Annals of Statistics}\ }\textbf {\bibinfo {volume} {46}},\
  \bibinfo {pages} {2416} (\bibinfo {year} {2018})}\BibitemShut {NoStop}%
\bibitem [{\citenamefont {Hajek}\ \emph {et~al.}(2017)\citenamefont {Hajek},
  \citenamefont {Wu},\ and\ \citenamefont {Xu}}]{hajek2017information}%
  \BibitemOpen
  \bibfield  {author} {\bibinfo {author} {\bibfnamefont {B.}~\bibnamefont
  {Hajek}}, \bibinfo {author} {\bibfnamefont {Y.}~\bibnamefont {Wu}},\ and\
  \bibinfo {author} {\bibfnamefont {J.}~\bibnamefont {Xu}},\ }\bibfield
  {title} {\bibinfo {title} {Information limits for recovering a hidden
  community},\ }\href@noop {} {\bibfield  {journal} {\bibinfo  {journal} {IEEE
  Transactions on Information Theory}\ }\textbf {\bibinfo {volume} {63}},\
  \bibinfo {pages} {4729} (\bibinfo {year} {2017})}\BibitemShut {NoStop}%
\bibitem [{\citenamefont {Belouchrani}\ \emph {et~al.}(1997)\citenamefont
  {Belouchrani}, \citenamefont {Abed-Meraim}, \citenamefont {Cardoso},\ and\
  \citenamefont {Moulines}}]{belouchrani1997blind}%
  \BibitemOpen
  \bibfield  {author} {\bibinfo {author} {\bibfnamefont {A.}~\bibnamefont
  {Belouchrani}}, \bibinfo {author} {\bibfnamefont {K.}~\bibnamefont
  {Abed-Meraim}}, \bibinfo {author} {\bibfnamefont {J.-F.}\ \bibnamefont
  {Cardoso}},\ and\ \bibinfo {author} {\bibfnamefont {E.}~\bibnamefont
  {Moulines}},\ }\bibfield  {title} {\bibinfo {title} {A blind source
  separation technique using second-order statistics},\ }\href@noop {}
  {\bibfield  {journal} {\bibinfo  {journal} {IEEE Transactions on signal
  processing}\ }\textbf {\bibinfo {volume} {45}},\ \bibinfo {pages} {434}
  (\bibinfo {year} {1997})}\BibitemShut {NoStop}%
\bibitem [{\citenamefont {Cand{\`e}s}\ and\ \citenamefont
  {Recht}(2009)}]{candes2009exact}%
  \BibitemOpen
  \bibfield  {author} {\bibinfo {author} {\bibfnamefont {E.~J.}\ \bibnamefont
  {Cand{\`e}s}}\ and\ \bibinfo {author} {\bibfnamefont {B.}~\bibnamefont
  {Recht}},\ }\bibfield  {title} {\bibinfo {title} {Exact matrix completion via
  convex optimization},\ }\href@noop {} {\bibfield  {journal} {\bibinfo
  {journal} {Foundations of Computational mathematics}\ }\textbf {\bibinfo
  {volume} {9}},\ \bibinfo {pages} {717} (\bibinfo {year} {2009})}\BibitemShut
  {NoStop}%
\bibitem [{\citenamefont {Cand{\`e}s}\ and\ \citenamefont
  {Tao}(2010)}]{candes2010power}%
  \BibitemOpen
  \bibfield  {author} {\bibinfo {author} {\bibfnamefont {E.~J.}\ \bibnamefont
  {Cand{\`e}s}}\ and\ \bibinfo {author} {\bibfnamefont {T.}~\bibnamefont
  {Tao}},\ }\bibfield  {title} {\bibinfo {title} {The power of convex
  relaxation: Near-optimal matrix completion},\ }\href@noop {} {\bibfield
  {journal} {\bibinfo  {journal} {IEEE Transactions on Information Theory}\
  }\textbf {\bibinfo {volume} {56}},\ \bibinfo {pages} {2053} (\bibinfo {year}
  {2010})}\BibitemShut {NoStop}%
\bibitem [{\citenamefont {Mnih}\ and\ \citenamefont
  {Salakhutdinov}(2007)}]{PMF_salakhutdinov}%
  \BibitemOpen
  \bibfield  {author} {\bibinfo {author} {\bibfnamefont {A.}~\bibnamefont
  {Mnih}}\ and\ \bibinfo {author} {\bibfnamefont {R.~R.}\ \bibnamefont
  {Salakhutdinov}},\ }\bibfield  {title} {\bibinfo {title} {Probabilistic
  matrix factorization},\ }in\ \href
  {https://proceedings.neurips.cc/paper/2007/file/d7322ed717dedf1eb4e6e52a37ea7bcd-Paper.pdf}
  {\emph {\bibinfo {booktitle} {Advances in Neural Information Processing
  Systems}}},\ Vol.~\bibinfo {volume} {20},\ \bibinfo {editor} {edited by\
  \bibinfo {editor} {\bibfnamefont {J.}~\bibnamefont {Platt}}, \bibinfo
  {editor} {\bibfnamefont {D.}~\bibnamefont {Koller}}, \bibinfo {editor}
  {\bibfnamefont {Y.}~\bibnamefont {Singer}},\ and\ \bibinfo {editor}
  {\bibfnamefont {S.}~\bibnamefont {Roweis}}}\ (\bibinfo  {publisher} {Curran
  Associates, Inc.},\ \bibinfo {year} {2007})\BibitemShut {NoStop}%
\bibitem [{\citenamefont {Mairal}\ \emph {et~al.}(2008)\citenamefont {Mairal},
  \citenamefont {Elad},\ and\ \citenamefont
  {Sapiro}}]{Mairal_colorrestoration}%
  \BibitemOpen
  \bibfield  {author} {\bibinfo {author} {\bibfnamefont {J.}~\bibnamefont
  {Mairal}}, \bibinfo {author} {\bibfnamefont {M.}~\bibnamefont {Elad}},\ and\
  \bibinfo {author} {\bibfnamefont {G.}~\bibnamefont {Sapiro}},\ }\bibfield
  {title} {\bibinfo {title} {Sparse representation for color image
  restoration},\ }\href {https://doi.org/10.1109/TIP.2007.911828} {\bibfield
  {journal} {\bibinfo  {journal} {IEEE Transactions on Image Processing}\
  }\textbf {\bibinfo {volume} {17}},\ \bibinfo {pages} {53} (\bibinfo {year}
  {2008})}\BibitemShut {NoStop}%
\bibitem [{\citenamefont {Abbe}(2017)}]{abbe2017community}%
  \BibitemOpen
  \bibfield  {author} {\bibinfo {author} {\bibfnamefont {E.}~\bibnamefont
  {Abbe}},\ }\bibfield  {title} {\bibinfo {title} {Community detection and
  stochastic block models: recent developments},\ }\href@noop {} {\bibfield
  {journal} {\bibinfo  {journal} {The Journal of Machine Learning Research}\
  }\textbf {\bibinfo {volume} {18}},\ \bibinfo {pages} {6446} (\bibinfo {year}
  {2017})}\BibitemShut {NoStop}%
\bibitem [{\citenamefont {Lelarge}\ and\ \citenamefont
  {Miolane}(2019)}]{lelarge2019fundamental}%
  \BibitemOpen
  \bibfield  {author} {\bibinfo {author} {\bibfnamefont {M.}~\bibnamefont
  {Lelarge}}\ and\ \bibinfo {author} {\bibfnamefont {L.}~\bibnamefont
  {Miolane}},\ }\bibfield  {title} {\bibinfo {title} {Fundamental limits of
  symmetric low-rank matrix estimation},\ }\href@noop {} {\bibfield  {journal}
  {\bibinfo  {journal} {Probability Theory and Related Fields}\ }\textbf
  {\bibinfo {volume} {173}},\ \bibinfo {pages} {859} (\bibinfo {year}
  {2019})}\BibitemShut {NoStop}%
\bibitem [{\citenamefont {Caltagirone}\ \emph {et~al.}(2017)\citenamefont
  {Caltagirone}, \citenamefont {Lelarge},\ and\ \citenamefont
  {Miolane}}]{caltagirone2017recovering}%
  \BibitemOpen
  \bibfield  {author} {\bibinfo {author} {\bibfnamefont {F.}~\bibnamefont
  {Caltagirone}}, \bibinfo {author} {\bibfnamefont {M.}~\bibnamefont
  {Lelarge}},\ and\ \bibinfo {author} {\bibfnamefont {L.}~\bibnamefont
  {Miolane}},\ }\bibfield  {title} {\bibinfo {title} {Recovering asymmetric
  communities in the stochastic block model},\ }\href@noop {} {\bibfield
  {journal} {\bibinfo  {journal} {IEEE Transactions on Network Science and
  Engineering}\ }\textbf {\bibinfo {volume} {5}},\ \bibinfo {pages} {237}
  (\bibinfo {year} {2017})}\BibitemShut {NoStop}%
\bibitem [{\citenamefont {Nishimori}(2001)}]{nishimori2001statistical}%
  \BibitemOpen
  \bibfield  {author} {\bibinfo {author} {\bibfnamefont {H.}~\bibnamefont
  {Nishimori}},\ }\href@noop {} {\emph {\bibinfo {title} {Statistical physics
  of spin glasses and information processing: an introduction}}},\ \bibinfo
  {number} {111}\ (\bibinfo  {publisher} {Clarendon Press},\ \bibinfo {year}
  {2001})\BibitemShut {NoStop}%
\bibitem [{\citenamefont {Sherrington}\ and\ \citenamefont
  {Kirkpatrick}(1975)}]{sherrington1975solvable}%
  \BibitemOpen
  \bibfield  {author} {\bibinfo {author} {\bibfnamefont {D.}~\bibnamefont
  {Sherrington}}\ and\ \bibinfo {author} {\bibfnamefont {S.}~\bibnamefont
  {Kirkpatrick}},\ }\bibfield  {title} {\bibinfo {title} {Solvable model of a
  spin-glass},\ }\href@noop {} {\bibfield  {journal} {\bibinfo  {journal}
  {Physical review letters}\ }\textbf {\bibinfo {volume} {35}},\ \bibinfo
  {pages} {1792} (\bibinfo {year} {1975})}\BibitemShut {NoStop}%
\bibitem [{\citenamefont {Zdeborov{\'a}}\ and\ \citenamefont
  {Krzakala}(2016)}]{zdeborova2016statistical}%
  \BibitemOpen
  \bibfield  {author} {\bibinfo {author} {\bibfnamefont {L.}~\bibnamefont
  {Zdeborov{\'a}}}\ and\ \bibinfo {author} {\bibfnamefont {F.}~\bibnamefont
  {Krzakala}},\ }\bibfield  {title} {\bibinfo {title} {Statistical physics of
  inference: Thresholds and algorithms},\ }\href@noop {} {\bibfield  {journal}
  {\bibinfo  {journal} {Advances in Physics}\ }\textbf {\bibinfo {volume}
  {65}},\ \bibinfo {pages} {453} (\bibinfo {year} {2016})}\BibitemShut
  {NoStop}%
\bibitem [{\citenamefont {M{\'e}zard}\ \emph {et~al.}(1990)\citenamefont
  {M{\'e}zard}, \citenamefont {Parisi},\ and\ \citenamefont
  {Virasoro}}]{mezard1990spin}%
  \BibitemOpen
  \bibfield  {author} {\bibinfo {author} {\bibfnamefont {M.}~\bibnamefont
  {M{\'e}zard}}, \bibinfo {author} {\bibfnamefont {G.}~\bibnamefont {Parisi}},\
  and\ \bibinfo {author} {\bibfnamefont {M.-A.}\ \bibnamefont {Virasoro}},\
  }\href@noop {} {\emph {\bibinfo {title} {Spin glass theory and beyond.}}}\
  (\bibinfo  {publisher} {World Scientific Publishing Co., Inc., Pergamon
  Press},\ \bibinfo {year} {1990})\BibitemShut {NoStop}%
\bibitem [{\citenamefont {Panchenko}(2013)}]{Panchenko2013}%
  \BibitemOpen
  \bibfield  {author} {\bibinfo {author} {\bibfnamefont {D.}~\bibnamefont
  {Panchenko}},\ }\href@noop {} {\emph {\bibinfo {title} {The
  Sherrington-Kirkpatrick Model}}}\ (\bibinfo  {publisher} {Springer Monographs
  in Mathematics},\ \bibinfo {year} {2013})\BibitemShut {NoStop}%
\bibitem [{\citenamefont {Baik}\ \emph {et~al.}(2005)\citenamefont {Baik},
  \citenamefont {Arous},\ and\ \citenamefont {P{\'e}ch{\'e}}}]{baik2005phase}%
  \BibitemOpen
  \bibfield  {author} {\bibinfo {author} {\bibfnamefont {J.}~\bibnamefont
  {Baik}}, \bibinfo {author} {\bibfnamefont {G.~B.}\ \bibnamefont {Arous}},\
  and\ \bibinfo {author} {\bibfnamefont {S.}~\bibnamefont {P{\'e}ch{\'e}}},\
  }\bibfield  {title} {\bibinfo {title} {Phase transition of the largest
  eigenvalue for nonnull complex sample covariance matrices},\ }\href@noop {}
  {\bibfield  {journal} {\bibinfo  {journal} {The Annals of Probability}\
  }\textbf {\bibinfo {volume} {33}},\ \bibinfo {pages} {1643} (\bibinfo {year}
  {2005})}\BibitemShut {NoStop}%
\bibitem [{\citenamefont {Baik}\ and\ \citenamefont
  {Silverstein}(2006)}]{baik2006eigenvalues}%
  \BibitemOpen
  \bibfield  {author} {\bibinfo {author} {\bibfnamefont {J.}~\bibnamefont
  {Baik}}\ and\ \bibinfo {author} {\bibfnamefont {J.~W.}\ \bibnamefont
  {Silverstein}},\ }\bibfield  {title} {\bibinfo {title} {Eigenvalues of large
  sample covariance matrices of spiked population models},\ }\href@noop {}
  {\bibfield  {journal} {\bibinfo  {journal} {Journal of multivariate
  analysis}\ }\textbf {\bibinfo {volume} {97}},\ \bibinfo {pages} {1382}
  (\bibinfo {year} {2006})}\BibitemShut {NoStop}%
\bibitem [{\citenamefont {Benaych-Georges}\ and\ \citenamefont
  {Nadakuditi}(2011)}]{BENAYCHGEORGES2011}%
  \BibitemOpen
  \bibfield  {author} {\bibinfo {author} {\bibfnamefont {F.}~\bibnamefont
  {Benaych-Georges}}\ and\ \bibinfo {author} {\bibfnamefont {R.~R.}\
  \bibnamefont {Nadakuditi}},\ }\bibfield  {title} {\bibinfo {title} {The
  eigenvalues and eigenvectors of finite, low rank perturbations of large
  random matrices},\ }\href
  {https://doi.org/https://doi.org/10.1016/j.aim.2011.02.007} {\bibfield
  {journal} {\bibinfo  {journal} {Advances in Mathematics}\ }\textbf {\bibinfo
  {volume} {227}},\ \bibinfo {pages} {494} (\bibinfo {year}
  {2011})}\BibitemShut {NoStop}%
\bibitem [{\citenamefont {Johnstone}(2001)}]{johnstone2001distribution}%
  \BibitemOpen
  \bibfield  {author} {\bibinfo {author} {\bibfnamefont {I.}~\bibnamefont
  {Johnstone}},\ }\bibfield  {title} {\bibinfo {title} {On the distribution of
  the largest eigenvalue in principal components analysis},\ }\href@noop {}
  {\bibfield  {journal} {\bibinfo  {journal} {The Annals of statistics}\
  }\textbf {\bibinfo {volume} {29}},\ \bibinfo {pages} {295} (\bibinfo {year}
  {2001})}\BibitemShut {NoStop}%
\bibitem [{\citenamefont {Johnstone}\ and\ \citenamefont
  {Lu}(2004)}]{johnstone2004sparse}%
  \BibitemOpen
  \bibfield  {author} {\bibinfo {author} {\bibfnamefont {I.}~\bibnamefont
  {Johnstone}}\ and\ \bibinfo {author} {\bibfnamefont {A.}~\bibnamefont {Lu}},\
  }\bibfield  {title} {\bibinfo {title} {Sparse principal components
  analysis},\ }\href@noop {} {\bibfield  {journal} {\bibinfo  {journal}
  {Unpublished manuscript}\ } (\bibinfo {year} {2004})}\BibitemShut {NoStop}%
\bibitem [{\citenamefont {Zou}\ \emph {et~al.}(2006)\citenamefont {Zou},
  \citenamefont {Hastie},\ and\ \citenamefont {Tibshirani}}]{zou2006sparse}%
  \BibitemOpen
  \bibfield  {author} {\bibinfo {author} {\bibfnamefont {H.}~\bibnamefont
  {Zou}}, \bibinfo {author} {\bibfnamefont {T.}~\bibnamefont {Hastie}},\ and\
  \bibinfo {author} {\bibfnamefont {R.}~\bibnamefont {Tibshirani}},\ }\bibfield
   {title} {\bibinfo {title} {Sparse principal component analysis},\
  }\href@noop {} {\bibfield  {journal} {\bibinfo  {journal} {Journal of
  computational and graphical statistics}\ }\textbf {\bibinfo {volume} {15}},\
  \bibinfo {pages} {265} (\bibinfo {year} {2006})}\BibitemShut {NoStop}%
\bibitem [{\citenamefont {Johnstone}\ and\ \citenamefont
  {Lu}(2012)}]{johnstone2012consistency}%
  \BibitemOpen
  \bibfield  {author} {\bibinfo {author} {\bibfnamefont {I.}~\bibnamefont
  {Johnstone}}\ and\ \bibinfo {author} {\bibfnamefont {A.}~\bibnamefont {Lu}},\
  }\bibfield  {title} {\bibinfo {title} {On consistency and sparsity for
  principal components analysis in high dimensions},\ }\href@noop {} {\bibfield
   {journal} {\bibinfo  {journal} {Journal of the American Statistical
  Association}\ } (\bibinfo {year} {2012})}\BibitemShut {NoStop}%
\bibitem [{\citenamefont {Barbier}\ \emph {et~al.}(2016)\citenamefont
  {Barbier}, \citenamefont {Dia}, \citenamefont {Macris}, \citenamefont
  {Krzakala}, \citenamefont {Lesieur},\ and\ \citenamefont
  {Zdeborov\'{a}}}]{XXT}%
  \BibitemOpen
  \bibfield  {author} {\bibinfo {author} {\bibfnamefont {J.}~\bibnamefont
  {Barbier}}, \bibinfo {author} {\bibfnamefont {M.}~\bibnamefont {Dia}},
  \bibinfo {author} {\bibfnamefont {N.}~\bibnamefont {Macris}}, \bibinfo
  {author} {\bibfnamefont {F.}~\bibnamefont {Krzakala}}, \bibinfo {author}
  {\bibfnamefont {T.}~\bibnamefont {Lesieur}},\ and\ \bibinfo {author}
  {\bibfnamefont {L.}~\bibnamefont {Zdeborov\'{a}}},\ }\bibfield  {title}
  {\bibinfo {title} {Mutual information for symmetric rank-one matrix
  estimation: A proof of the replica formula},\ }in\ \href@noop {} {\emph
  {\bibinfo {booktitle} {Proceedings of the 30th International Conference on
  Neural Information Processing Systems}}},\ \bibinfo {series and number}
  {NIPS'16}\ (\bibinfo  {publisher} {Curran Associates Inc.},\ \bibinfo
  {address} {Red Hook, NY, USA},\ \bibinfo {year} {2016})\ p.\ \bibinfo {pages}
  {424–432}\BibitemShut {NoStop}%
\bibitem [{\citenamefont {Miolane}(2019)}]{miolane2019fundamental}%
  \BibitemOpen
  \bibfield  {author} {\bibinfo {author} {\bibfnamefont {L.}~\bibnamefont
  {Miolane}},\ }\emph {\bibinfo {title} {Fundamental limits of inference: A
  statistical physics approach.}},\ \href@noop {} {Ph.D. thesis},\ \bibinfo
  {school} {Ecole normale sup{\'e}rieure-ENS PARIS; Inria Paris} (\bibinfo
  {year} {2019})\BibitemShut {NoStop}%
\bibitem [{\citenamefont {Lesieur}\ \emph {et~al.}(2015)\citenamefont
  {Lesieur}, \citenamefont {Krzakala},\ and\ \citenamefont
  {Zdeborová}}]{lesieur2015mmse}%
  \BibitemOpen
  \bibfield  {author} {\bibinfo {author} {\bibfnamefont {T.}~\bibnamefont
  {Lesieur}}, \bibinfo {author} {\bibfnamefont {F.}~\bibnamefont {Krzakala}},\
  and\ \bibinfo {author} {\bibfnamefont {L.}~\bibnamefont {Zdeborová}},\
  }\bibfield  {title} {\bibinfo {title} {Mmse of probabilistic low-rank matrix
  estimation: Universality with respect to the output channel},\ }in\ \href
  {https://doi.org/10.1109/ALLERTON.2015.7447070} {\emph {\bibinfo {booktitle}
  {2015 53rd Annual Allerton Conference on Communication, Control, and
  Computing (Allerton)}}}\ (\bibinfo {year} {2015})\ pp.\ \bibinfo {pages}
  {680--687}\BibitemShut {NoStop}%
\bibitem [{\citenamefont {Lesieur}\ \emph {et~al.}(2017)\citenamefont
  {Lesieur}, \citenamefont {Krzakala},\ and\ \citenamefont
  {Zdeborov{\'a}}}]{lesieur2017constrained}%
  \BibitemOpen
  \bibfield  {author} {\bibinfo {author} {\bibfnamefont {T.}~\bibnamefont
  {Lesieur}}, \bibinfo {author} {\bibfnamefont {F.}~\bibnamefont {Krzakala}},\
  and\ \bibinfo {author} {\bibfnamefont {L.}~\bibnamefont {Zdeborov{\'a}}},\
  }\bibfield  {title} {\bibinfo {title} {Constrained low-rank matrix
  estimation: Phase transitions, approximate message passing and
  applications},\ }\href@noop {} {\bibfield  {journal} {\bibinfo  {journal}
  {Journal of Statistical Mechanics: Theory and Experiment}\ }\textbf {\bibinfo
  {volume} {2017}},\ \bibinfo {pages} {073403} (\bibinfo {year}
  {2017})}\BibitemShut {NoStop}%
\bibitem [{\citenamefont {Deshpande}\ and\ \citenamefont
  {Montanari}(2014)}]{6875223}%
  \BibitemOpen
  \bibfield  {author} {\bibinfo {author} {\bibfnamefont {Y.}~\bibnamefont
  {Deshpande}}\ and\ \bibinfo {author} {\bibfnamefont {A.}~\bibnamefont
  {Montanari}},\ }\bibfield  {title} {\bibinfo {title}
  {Information-theoretically optimal sparse {PCA}},\ }in\ \href@noop {} {\emph
  {\bibinfo {booktitle} {IEEE Int. Symp. on Inf. Theory}}}\ (\bibinfo {year}
  {2014})\ pp.\ \bibinfo {pages} {2197--2201}\BibitemShut {NoStop}%
\bibitem [{\citenamefont {Deshpande}\ \emph {et~al.}(2017)\citenamefont
  {Deshpande}, \citenamefont {Abbe},\ and\ \citenamefont
  {Montanari}}]{deshpande2017asymptotic}%
  \BibitemOpen
  \bibfield  {author} {\bibinfo {author} {\bibfnamefont {Y.}~\bibnamefont
  {Deshpande}}, \bibinfo {author} {\bibfnamefont {E.}~\bibnamefont {Abbe}},\
  and\ \bibinfo {author} {\bibfnamefont {A.}~\bibnamefont {Montanari}},\
  }\bibfield  {title} {\bibinfo {title} {Asymptotic mutual information for the
  balanced binary stochastic block model},\ }\href@noop {} {\bibfield
  {journal} {\bibinfo  {journal} {Information and Inference: A Journal of the
  IMA}\ }\textbf {\bibinfo {volume} {6}},\ \bibinfo {pages} {125} (\bibinfo
  {year} {2017})}\BibitemShut {NoStop}%
\bibitem [{\citenamefont {Barbier}\ \emph {et~al.}(2018)\citenamefont
  {Barbier}, \citenamefont {Dia}, \citenamefont {Macris}, \citenamefont
  {Krzakala},\ and\ \citenamefont {Zdeborov{á}}}]{XX_long}%
  \BibitemOpen
  \bibfield  {author} {\bibinfo {author} {\bibfnamefont {J.}~\bibnamefont
  {Barbier}}, \bibinfo {author} {\bibfnamefont {M.}~\bibnamefont {Dia}},
  \bibinfo {author} {\bibfnamefont {N.}~\bibnamefont {Macris}}, \bibinfo
  {author} {\bibfnamefont {F.}~\bibnamefont {Krzakala}},\ and\ \bibinfo
  {author} {\bibfnamefont {L.}~\bibnamefont {Zdeborov{á}}},\ }\bibfield
  {title} {\bibinfo {title} {Rank-one matrix estimation: analysis of
  algorithmic and information theoretic limits by the spatial coupling
  method},\ }\href {https://arxiv.org/pdf/1812.02537.pdf} {\bibfield  {journal}
  {\bibinfo  {journal} {arXiv e-prints}\ } (\bibinfo {year} {2018})},\ \Eprint
  {https://arxiv.org/abs/1812.02537} {1812.02537} \BibitemShut {NoStop}%
\bibitem [{\citenamefont {Korada}\ and\ \citenamefont
  {Macris}(2009)}]{koradamacris}%
  \BibitemOpen
  \bibfield  {author} {\bibinfo {author} {\bibfnamefont {S.}~\bibnamefont
  {Korada}}\ and\ \bibinfo {author} {\bibfnamefont {N.}~\bibnamefont
  {Macris}},\ }\bibfield  {title} {\bibinfo {title} {Exact solution of the
  gauge symmetric p-spin glass model on a complete graph},\ }\href@noop {}
  {\bibfield  {journal} {\bibinfo  {journal} {Journal of Statistical Physics}\
  }\textbf {\bibinfo {volume} {136}},\ \bibinfo {pages} {205} (\bibinfo {year}
  {2009})}\BibitemShut {NoStop}%
\bibitem [{\citenamefont {Barbier}\ and\ \citenamefont
  {Macris}(2019{\natexlab{a}})}]{BarbierM17a}%
  \BibitemOpen
  \bibfield  {author} {\bibinfo {author} {\bibfnamefont {J.}~\bibnamefont
  {Barbier}}\ and\ \bibinfo {author} {\bibfnamefont {N.}~\bibnamefont
  {Macris}},\ }\bibfield  {title} {\bibinfo {title} {The adaptive interpolation
  method: a simple scheme to prove replica formulas in bayesian inference},\
  }\href@noop {} {\bibfield  {journal} {\bibinfo  {journal} {Probability Theory
  and Related Fields}\ }\textbf {\bibinfo {volume} {174}},\ \bibinfo {pages}
  {1133} (\bibinfo {year} {2019}{\natexlab{a}})}\BibitemShut {NoStop}%
\bibitem [{\citenamefont {Barbier}\ and\ \citenamefont
  {Macris}(2019{\natexlab{b}})}]{Barbier_2019}%
  \BibitemOpen
  \bibfield  {author} {\bibinfo {author} {\bibfnamefont {J.}~\bibnamefont
  {Barbier}}\ and\ \bibinfo {author} {\bibfnamefont {N.}~\bibnamefont
  {Macris}},\ }\bibfield  {title} {\bibinfo {title} {The adaptive interpolation
  method for proving replica formulas. applications to the
  curie{\textendash}weiss and wigner spike models},\ }\href
  {https://doi.org/10.1088/1751-8121/ab2735} {\bibfield  {journal} {\bibinfo
  {journal} {Journal of Physics A: Mathematical and Theoretical}\ }\textbf
  {\bibinfo {volume} {52}},\ \bibinfo {pages} {294002} (\bibinfo {year}
  {2019}{\natexlab{b}})}\BibitemShut {NoStop}%
\bibitem [{\citenamefont {Reeves}(2020)}]{reeves2020information}%
  \BibitemOpen
  \bibfield  {author} {\bibinfo {author} {\bibfnamefont {G.}~\bibnamefont
  {Reeves}},\ }\bibfield  {title} {\bibinfo {title} {Information-theoretic
  limits for the matrix tensor product},\ }\href@noop {} {\bibfield  {journal}
  {\bibinfo  {journal} {IEEE Journal on Selected Areas in Information Theory}\
  }\textbf {\bibinfo {volume} {1}},\ \bibinfo {pages} {777} (\bibinfo {year}
  {2020})}\BibitemShut {NoStop}%
\bibitem [{\citenamefont {Aizenman}\ \emph {et~al.}(2003)\citenamefont
  {Aizenman}, \citenamefont {Sims},\ and\ \citenamefont {Starr}}]{ASS}%
  \BibitemOpen
  \bibfield  {author} {\bibinfo {author} {\bibfnamefont {M.}~\bibnamefont
  {Aizenman}}, \bibinfo {author} {\bibfnamefont {R.}~\bibnamefont {Sims}},\
  and\ \bibinfo {author} {\bibfnamefont {S.~L.}\ \bibnamefont {Starr}},\
  }\bibfield  {title} {\bibinfo {title} {{Extended variational principle for
  the Sherrington-Kirkpatrick spin-glass model}},\ }\href
  {https://doi.org/10.1103/PhysRevB.68.214403} {\bibfield  {journal} {\bibinfo
  {journal} {Phys. Rev. B}\ }\textbf {\bibinfo {volume} {68}},\ \bibinfo
  {pages} {214403} (\bibinfo {year} {2003})}\BibitemShut {NoStop}%
\bibitem [{\citenamefont {El~Alaoui}\ and\ \citenamefont
  {Krzakala}(2018)}]{el2018estimation}%
  \BibitemOpen
  \bibfield  {author} {\bibinfo {author} {\bibfnamefont {A.}~\bibnamefont
  {El~Alaoui}}\ and\ \bibinfo {author} {\bibfnamefont {F.}~\bibnamefont
  {Krzakala}},\ }\bibfield  {title} {\bibinfo {title} {Estimation in the spiked
  wigner model: a short proof of the replica formula},\ }in\ \href@noop {}
  {\emph {\bibinfo {booktitle} {2018 IEEE International Symposium on
  Information Theory (ISIT)}}}\ (\bibinfo {organization} {IEEE},\ \bibinfo
  {year} {2018})\ pp.\ \bibinfo {pages} {1874--1878}\BibitemShut {NoStop}%
\bibitem [{\citenamefont {Mourrat}(2020)}]{mourrat_Hamilton_spike}%
  \BibitemOpen
  \bibfield  {author} {\bibinfo {author} {\bibfnamefont {J.-C.}\ \bibnamefont
  {Mourrat}},\ }\bibfield  {title} {\bibinfo {title} {{Hamilton–Jacobi
  equations for finite-rank matrix inference}},\ }\href
  {https://doi.org/10.1214/19-AAP1556} {\bibfield  {journal} {\bibinfo
  {journal} {The Annals of Applied Probability}\ }\textbf {\bibinfo {volume}
  {30}},\ \bibinfo {pages} {2234 } (\bibinfo {year} {2020})}\BibitemShut
  {NoStop}%
\bibitem [{\citenamefont {Chen}\ \emph {et~al.}(2022)\citenamefont {Chen},
  \citenamefont {Mourrat},\ and\ \citenamefont {Xia}}]{chen2022statistical}%
  \BibitemOpen
  \bibfield  {author} {\bibinfo {author} {\bibfnamefont {H.}~\bibnamefont
  {Chen}}, \bibinfo {author} {\bibfnamefont {J.-C.}\ \bibnamefont {Mourrat}},\
  and\ \bibinfo {author} {\bibfnamefont {J.}~\bibnamefont {Xia}},\ }\bibfield
  {title} {\bibinfo {title} {Statistical inference of finite-rank tensors},\
  }\href@noop {} {\bibfield  {journal} {\bibinfo  {journal} {Annales Henri
  Lebesgue}\ }\textbf {\bibinfo {volume} {5}},\ \bibinfo {pages} {1161}
  (\bibinfo {year} {2022})}\BibitemShut {NoStop}%
\bibitem [{\citenamefont {Kabashima}(2003)}]{kabashima2003cdma}%
  \BibitemOpen
  \bibfield  {author} {\bibinfo {author} {\bibfnamefont {Y.}~\bibnamefont
  {Kabashima}},\ }\bibfield  {title} {\bibinfo {title} {A cdma multiuser
  detection algorithm on the basis of belief propagation},\ }\href@noop {}
  {\bibfield  {journal} {\bibinfo  {journal} {Journal of Physics A:
  Mathematical and General}\ }\textbf {\bibinfo {volume} {36}},\ \bibinfo
  {pages} {11111} (\bibinfo {year} {2003})}\BibitemShut {NoStop}%
\bibitem [{\citenamefont {Donoho}\ \emph {et~al.}(2009)\citenamefont {Donoho},
  \citenamefont {Maleki},\ and\ \citenamefont {Montanari}}]{donoho2009message}%
  \BibitemOpen
  \bibfield  {author} {\bibinfo {author} {\bibfnamefont {D.~L.}\ \bibnamefont
  {Donoho}}, \bibinfo {author} {\bibfnamefont {A.}~\bibnamefont {Maleki}},\
  and\ \bibinfo {author} {\bibfnamefont {A.}~\bibnamefont {Montanari}},\
  }\bibfield  {title} {\bibinfo {title} {Message-passing algorithms for
  compressed sensing},\ }\href@noop {} {\bibfield  {journal} {\bibinfo
  {journal} {Proceedings of the National Academy of Sciences}\ }\textbf
  {\bibinfo {volume} {106}},\ \bibinfo {pages} {18914} (\bibinfo {year}
  {2009})}\BibitemShut {NoStop}%
\bibitem [{\citenamefont {Rangan}\ and\ \citenamefont
  {Fletcher}(2012)}]{rangan2012iterative}%
  \BibitemOpen
  \bibfield  {author} {\bibinfo {author} {\bibfnamefont {S.}~\bibnamefont
  {Rangan}}\ and\ \bibinfo {author} {\bibfnamefont {A.}~\bibnamefont
  {Fletcher}},\ }\bibfield  {title} {\bibinfo {title} {Iterative estimation of
  constrained rank-one matrices in noise},\ }in\ \href@noop {} {\emph {\bibinfo
  {booktitle} {IEEE Int. Symp. on Inf. Theory}}}\ (\bibinfo {year} {2012})\
  pp.\ \bibinfo {pages} {1246--1250}\BibitemShut {NoStop}%
\bibitem [{\citenamefont {Barbier}\ \emph {et~al.}(2019)\citenamefont
  {Barbier}, \citenamefont {Krzakala}, \citenamefont {Macris}, \citenamefont
  {Miolane},\ and\ \citenamefont {Zdeborová}}]{BarbierGLM-PNAS}%
  \BibitemOpen
  \bibfield  {author} {\bibinfo {author} {\bibfnamefont {J.}~\bibnamefont
  {Barbier}}, \bibinfo {author} {\bibfnamefont {F.}~\bibnamefont {Krzakala}},
  \bibinfo {author} {\bibfnamefont {N.}~\bibnamefont {Macris}}, \bibinfo
  {author} {\bibfnamefont {L.}~\bibnamefont {Miolane}},\ and\ \bibinfo {author}
  {\bibfnamefont {L.}~\bibnamefont {Zdeborová}},\ }\bibfield  {title}
  {\bibinfo {title} {Optimal errors and phase transitions in high-dimensional
  generalized linear models},\ }\href {https://doi.org/10.1073/pnas.1802705116}
  {\bibfield  {journal} {\bibinfo  {journal} {Proceedings of the National
  Academy of Sciences}\ }\textbf {\bibinfo {volume} {116}},\ \bibinfo {pages}
  {5451} (\bibinfo {year} {2019})}\BibitemShut {NoStop}%
\bibitem [{\citenamefont {Brennan}\ \emph {et~al.}(2018)\citenamefont
  {Brennan}, \citenamefont {Bresler},\ and\ \citenamefont
  {Huleihel}}]{brennan2018reducibility}%
  \BibitemOpen
  \bibfield  {author} {\bibinfo {author} {\bibfnamefont {M.}~\bibnamefont
  {Brennan}}, \bibinfo {author} {\bibfnamefont {G.}~\bibnamefont {Bresler}},\
  and\ \bibinfo {author} {\bibfnamefont {W.}~\bibnamefont {Huleihel}},\
  }\bibfield  {title} {\bibinfo {title} {Reducibility and computational lower
  bounds for problems with planted sparse structure},\ }in\ \href@noop {}
  {\emph {\bibinfo {booktitle} {Conference On Learning Theory}}}\ (\bibinfo
  {organization} {PMLR},\ \bibinfo {year} {2018})\ pp.\ \bibinfo {pages}
  {48--166}\BibitemShut {NoStop}%
\bibitem [{\citenamefont {Brennan}\ and\ \citenamefont
  {Bresler}(2020)}]{brennan2020reducibility}%
  \BibitemOpen
  \bibfield  {author} {\bibinfo {author} {\bibfnamefont {M.}~\bibnamefont
  {Brennan}}\ and\ \bibinfo {author} {\bibfnamefont {G.}~\bibnamefont
  {Bresler}},\ }\bibfield  {title} {\bibinfo {title} {Reducibility and
  statistical-computational gaps from secret leakage},\ }in\ \href@noop {}
  {\emph {\bibinfo {booktitle} {Conference on Learning Theory}}}\ (\bibinfo
  {organization} {PMLR},\ \bibinfo {year} {2020})\ pp.\ \bibinfo {pages}
  {648--847}\BibitemShut {NoStop}%
\bibitem [{\citenamefont {Barbier}\ \emph {et~al.}(2023)\citenamefont
  {Barbier}, \citenamefont {Camilli}, \citenamefont {Mondelli},\ and\
  \citenamefont {Sáenz}}]{PNAS_structured_PCA}%
  \BibitemOpen
  \bibfield  {author} {\bibinfo {author} {\bibfnamefont {J.}~\bibnamefont
  {Barbier}}, \bibinfo {author} {\bibfnamefont {F.}~\bibnamefont {Camilli}},
  \bibinfo {author} {\bibfnamefont {M.}~\bibnamefont {Mondelli}},\ and\
  \bibinfo {author} {\bibfnamefont {M.}~\bibnamefont {Sáenz}},\ }\bibfield
  {title} {\bibinfo {title} {Fundamental limits in structured principal
  component analysis and how to reach them},\ }\href
  {https://doi.org/10.1073/pnas.2302028120} {\bibfield  {journal} {\bibinfo
  {journal} {Proceedings of the National Academy of Sciences}\ }\textbf
  {\bibinfo {volume} {120}},\ \bibinfo {pages} {e2302028120} (\bibinfo {year}
  {2023})}\BibitemShut {NoStop}%
\bibitem [{\citenamefont {Barbier}\ \emph {et~al.}(2025)\citenamefont
  {Barbier}, \citenamefont {Camilli}, \citenamefont {Xu},\ and\ \citenamefont
  {Mondelli}}]{barbier2025information}%
  \BibitemOpen
  \bibfield  {author} {\bibinfo {author} {\bibfnamefont {J.}~\bibnamefont
  {Barbier}}, \bibinfo {author} {\bibfnamefont {F.}~\bibnamefont {Camilli}},
  \bibinfo {author} {\bibfnamefont {Y.}~\bibnamefont {Xu}},\ and\ \bibinfo
  {author} {\bibfnamefont {M.}~\bibnamefont {Mondelli}},\ }\bibfield  {title}
  {\bibinfo {title} {Information limits and thouless-anderson-palmer equations
  for spiked matrix models with structured noise},\ }\href@noop {} {\bibfield
  {journal} {\bibinfo  {journal} {Physical Review Research}\ }\textbf {\bibinfo
  {volume} {7}},\ \bibinfo {pages} {013081} (\bibinfo {year}
  {2025})}\BibitemShut {NoStop}%
\bibitem [{\citenamefont {Barbier}\ \emph {et~al.}(2022)\citenamefont
  {Barbier}, \citenamefont {Hou}, \citenamefont {Mondelli},\ and\ \citenamefont
  {S{\'a}enz}}]{barbier2022price}%
  \BibitemOpen
  \bibfield  {author} {\bibinfo {author} {\bibfnamefont {J.}~\bibnamefont
  {Barbier}}, \bibinfo {author} {\bibfnamefont {T.}~\bibnamefont {Hou}},
  \bibinfo {author} {\bibfnamefont {M.}~\bibnamefont {Mondelli}},\ and\
  \bibinfo {author} {\bibfnamefont {M.}~\bibnamefont {S{\'a}enz}},\ }\bibfield
  {title} {\bibinfo {title} {The price of ignorance: how much does it cost to
  forget noise structure in low-rank matrix estimation?},\ }\href@noop {}
  {\bibfield  {journal} {\bibinfo  {journal} {Advances in Neural Information
  Processing Systems}\ }\textbf {\bibinfo {volume} {35}},\ \bibinfo {pages}
  {36733} (\bibinfo {year} {2022})}\BibitemShut {NoStop}%
\bibitem [{\citenamefont {Barbier}\ \emph {et~al.}(2024)\citenamefont
  {Barbier}, \citenamefont {Ko},\ and\ \citenamefont
  {Rahman}}]{barbier2024information}%
  \BibitemOpen
  \bibfield  {author} {\bibinfo {author} {\bibfnamefont {J.}~\bibnamefont
  {Barbier}}, \bibinfo {author} {\bibfnamefont {J.}~\bibnamefont {Ko}},\ and\
  \bibinfo {author} {\bibfnamefont {A.~A.}\ \bibnamefont {Rahman}},\ }\bibfield
   {title} {\bibinfo {title} {Information-theoretic limits for sublinear-rank
  symmetric matrix factorization},\ }in\ \href@noop {} {\emph {\bibinfo
  {booktitle} {International Zurich Seminar on Information and Communication
  (IZS 2024)}}}\ (\bibinfo {year} {2024})\ p.~\bibinfo {pages} {16}\BibitemShut
  {NoStop}%
\bibitem [{\citenamefont {Alberici}\ \emph
  {et~al.}(2021{\natexlab{a}})\citenamefont {Alberici}, \citenamefont
  {Camilli}, \citenamefont {Contucci},\ and\ \citenamefont
  {Mingione}}]{alberici2021MSKNL}%
  \BibitemOpen
  \bibfield  {author} {\bibinfo {author} {\bibfnamefont {D.}~\bibnamefont
  {Alberici}}, \bibinfo {author} {\bibfnamefont {F.}~\bibnamefont {Camilli}},
  \bibinfo {author} {\bibfnamefont {P.}~\bibnamefont {Contucci}},\ and\
  \bibinfo {author} {\bibfnamefont {E.}~\bibnamefont {Mingione}},\ }\bibfield
  {title} {\bibinfo {title} {The multi-species mean-field spin-glass on the
  {N}ishimori line},\ }\href@noop {} {\bibfield  {journal} {\bibinfo  {journal}
  {Journal of Statistical Physics}\ }\textbf {\bibinfo {volume} {182}},\
  \bibinfo {pages} {1} (\bibinfo {year} {2021}{\natexlab{a}})}\BibitemShut
  {NoStop}%
\bibitem [{\citenamefont {Alberici}\ \emph
  {et~al.}(2021{\natexlab{b}})\citenamefont {Alberici}, \citenamefont
  {Camilli}, \citenamefont {Contucci},\ and\ \citenamefont
  {Mingione}}]{albericiDBM}%
  \BibitemOpen
  \bibfield  {author} {\bibinfo {author} {\bibfnamefont {D.}~\bibnamefont
  {Alberici}}, \bibinfo {author} {\bibfnamefont {F.}~\bibnamefont {Camilli}},
  \bibinfo {author} {\bibfnamefont {P.}~\bibnamefont {Contucci}},\ and\
  \bibinfo {author} {\bibfnamefont {E.}~\bibnamefont {Mingione}},\ }\bibfield
  {title} {\bibinfo {title} {The solution of the deep boltzmann machine on the
  {N}ishimori line},\ }\href@noop {} {\bibfield  {journal} {\bibinfo  {journal}
  {Communications in Mathematical Physics}\ ,\ \bibinfo {pages} {1}} (\bibinfo
  {year} {2021}{\natexlab{b}})}\BibitemShut {NoStop}%
\bibitem [{\citenamefont {Zhang}\ and\ \citenamefont
  {Mondelli}(2024)}]{zhang-marco2024matrixdenoising}%
  \BibitemOpen
  \bibfield  {author} {\bibinfo {author} {\bibfnamefont {Y.}~\bibnamefont
  {Zhang}}\ and\ \bibinfo {author} {\bibfnamefont {M.}~\bibnamefont
  {Mondelli}},\ }\bibfield  {title} {\bibinfo {title} {Matrix denoising with
  doubly heteroscedastic noise: Fundamental limits and optimal spectral
  methods},\ }\href {https://arxiv.org/abs/2405.13912} {\bibfield  {journal}
  {\bibinfo  {journal} {arXiv preprint 2405.13912}\ } (\bibinfo {year}
  {2024})}\BibitemShut {NoStop}%
\bibitem [{\citenamefont {Pourkamali}\ \emph {et~al.}(2024)\citenamefont
  {Pourkamali}, \citenamefont {Barbier},\ and\ \citenamefont
  {Macris}}]{Jean_Farzad_matrixinference_2024}%
  \BibitemOpen
  \bibfield  {author} {\bibinfo {author} {\bibfnamefont {F.}~\bibnamefont
  {Pourkamali}}, \bibinfo {author} {\bibfnamefont {J.}~\bibnamefont
  {Barbier}},\ and\ \bibinfo {author} {\bibfnamefont {N.}~\bibnamefont
  {Macris}},\ }\bibfield  {title} {\bibinfo {title} {Matrix inference in
  growing rank regimes},\ }\href {https://doi.org/10.1109/TIT.2024.3422263}
  {\bibfield  {journal} {\bibinfo  {journal} {IEEE Transactions on Information
  Theory}\ }\textbf {\bibinfo {volume} {70}},\ \bibinfo {pages} {8133}
  (\bibinfo {year} {2024})}\BibitemShut {NoStop}%
\bibitem [{\citenamefont {Potters}\ and\ \citenamefont
  {Bouchaud}(2020)}]{potters2020first}%
  \BibitemOpen
  \bibfield  {author} {\bibinfo {author} {\bibfnamefont {M.}~\bibnamefont
  {Potters}}\ and\ \bibinfo {author} {\bibfnamefont {J.-P.}\ \bibnamefont
  {Bouchaud}},\ }\href@noop {} {\emph {\bibinfo {title} {A First Course in
  Random Matrix Theory: For Physicists, Engineers and Data Scientists}}}\
  (\bibinfo  {publisher} {Cambridge University Press},\ \bibinfo {year}
  {2020})\BibitemShut {NoStop}%
\bibitem [{\citenamefont {Migdal}(1975)}]{Migdal_Character_expansion}%
  \BibitemOpen
  \bibfield  {author} {\bibinfo {author} {\bibfnamefont {A.~A.}\ \bibnamefont
  {Migdal}},\ }\bibfield  {title} {\bibinfo {title} {{Recursion Equations in
  Gauge Theories}},\ }\href@noop {} {\bibfield  {journal} {\bibinfo  {journal}
  {Sov. Phys. JETP}\ }\textbf {\bibinfo {volume} {42}},\ \bibinfo {pages} {413}
  (\bibinfo {year} {1975})}\BibitemShut {NoStop}%
\bibitem [{\citenamefont {Kazakov}\ \emph
  {et~al.}(1996{\natexlab{a}})\citenamefont {Kazakov}, \citenamefont
  {Staudacher},\ and\ \citenamefont {Wynter}}]{KAZAKOV_character_Gravity}%
  \BibitemOpen
  \bibfield  {author} {\bibinfo {author} {\bibfnamefont {V.~A.}\ \bibnamefont
  {Kazakov}}, \bibinfo {author} {\bibfnamefont {M.}~\bibnamefont
  {Staudacher}},\ and\ \bibinfo {author} {\bibfnamefont {T.}~\bibnamefont
  {Wynter}},\ }\bibfield  {title} {\bibinfo {title} {Exact solution of discrete
  two-dimensional r2 gravity},\ }\href
  {https://doi.org/https://doi.org/10.1016/0550-3213(96)00184-8} {\bibfield
  {journal} {\bibinfo  {journal} {Nuclear Physics B}\ }\textbf {\bibinfo
  {volume} {471}},\ \bibinfo {pages} {309} (\bibinfo {year}
  {1996}{\natexlab{a}})}\BibitemShut {NoStop}%
\bibitem [{\citenamefont {Harish-Chandra}(1957)}]{harish1957differential}%
  \BibitemOpen
  \bibfield  {author} {\bibinfo {author} {\bibnamefont {Harish-Chandra}},\
  }\bibfield  {title} {\bibinfo {title} {Differential operators on a semisimple
  lie algebra},\ }\href@noop {} {\bibfield  {journal} {\bibinfo  {journal}
  {American Journal of Mathematics}\ ,\ \bibinfo {pages} {87}} (\bibinfo {year}
  {1957})}\BibitemShut {NoStop}%
\bibitem [{\citenamefont {Itzykson}\ and\ \citenamefont
  {Zuber}(1980)}]{itzykson1980planar}%
  \BibitemOpen
  \bibfield  {author} {\bibinfo {author} {\bibfnamefont {C.}~\bibnamefont
  {Itzykson}}\ and\ \bibinfo {author} {\bibfnamefont {J.-B.}\ \bibnamefont
  {Zuber}},\ }\bibfield  {title} {\bibinfo {title} {The planar approximation.
  ii},\ }\href@noop {} {\bibfield  {journal} {\bibinfo  {journal} {Journal of
  Mathematical Physics}\ }\textbf {\bibinfo {volume} {21}},\ \bibinfo {pages}
  {411} (\bibinfo {year} {1980})}\BibitemShut {NoStop}%
\bibitem [{\citenamefont {Kazakov}\ and\ \citenamefont
  {Migdal}(1993)}]{KAZAKOV-Migdal_Gauge}%
  \BibitemOpen
  \bibfield  {author} {\bibinfo {author} {\bibfnamefont {V.~A.}\ \bibnamefont
  {Kazakov}}\ and\ \bibinfo {author} {\bibfnamefont {A.~A.}\ \bibnamefont
  {Migdal}},\ }\bibfield  {title} {\bibinfo {title} {Induced gauge theory at
  large n},\ }\href
  {https://doi.org/https://doi.org/10.1016/0550-3213(93)90342-M} {\bibfield
  {journal} {\bibinfo  {journal} {Nuclear Physics B}\ }\textbf {\bibinfo
  {volume} {397}},\ \bibinfo {pages} {214} (\bibinfo {year}
  {1993})}\BibitemShut {NoStop}%
\bibitem [{\citenamefont {Ambj\o{}rn}\ \emph {et~al.}(2005)\citenamefont
  {Ambj\o{}rn}, \citenamefont {Durhuus},\ and\ \citenamefont
  {Jonsson}}]{Ambjorn1997_quantum_geometry_RMT}%
  \BibitemOpen
  \bibfield  {author} {\bibinfo {author} {\bibfnamefont {J.}~\bibnamefont
  {Ambj\o{}rn}}, \bibinfo {author} {\bibfnamefont {B.}~\bibnamefont
  {Durhuus}},\ and\ \bibinfo {author} {\bibfnamefont {T.}~\bibnamefont
  {Jonsson}},\ }\href {https://doi.org/10.1017/CBO9780511524417} {\emph
  {\bibinfo {title} {{Quantum Geometry}: {A Statistical Field Theory
  Approach}}}},\ Cambridge Monographs on Mathematical Physics\ (\bibinfo
  {publisher} {Cambridge Univ. Press},\ \bibinfo {address} {Cambridge, UK},\
  \bibinfo {year} {2005})\BibitemShut {NoStop}%
\bibitem [{\citenamefont {Francesco}\ \emph {et~al.}(1995)\citenamefont
  {Francesco}, \citenamefont {Ginsparg},\ and\ \citenamefont
  {Zinn-Justin}}]{Francesco_1995_RMT_gravity}%
  \BibitemOpen
  \bibfield  {author} {\bibinfo {author} {\bibfnamefont {P.}~\bibnamefont
  {Francesco}}, \bibinfo {author} {\bibfnamefont {P.}~\bibnamefont
  {Ginsparg}},\ and\ \bibinfo {author} {\bibfnamefont {J.}~\bibnamefont
  {Zinn-Justin}},\ }\bibfield  {title} {\bibinfo {title} {2d gravity and random
  matrices},\ }\href {https://doi.org/10.1016/0370-1573(94)00084-g} {\bibfield
  {journal} {\bibinfo  {journal} {Physics Reports}\ }\textbf {\bibinfo {volume}
  {254}},\ \bibinfo {pages} {1–133} (\bibinfo {year} {1995})}\BibitemShut
  {NoStop}%
\bibitem [{\citenamefont {Verbaarschot}\ and\ \citenamefont
  {Wettig}(2000)}]{Verbaarschot_RMT_QCD}%
  \BibitemOpen
  \bibfield  {author} {\bibinfo {author} {\bibfnamefont {J.}~\bibnamefont
  {Verbaarschot}}\ and\ \bibinfo {author} {\bibfnamefont {T.}~\bibnamefont
  {Wettig}},\ }\bibfield  {title} {\bibinfo {title} {Random matrix theory and
  chiral symmetry in qcd},\ }\href
  {https://doi.org/10.1146/annurev.nucl.50.1.343} {\bibfield  {journal}
  {\bibinfo  {journal} {Annual Review of Nuclear and Particle Science}\
  }\textbf {\bibinfo {volume} {50}},\ \bibinfo {pages} {343–410} (\bibinfo
  {year} {2000})}\BibitemShut {NoStop}%
\bibitem [{\citenamefont {Kazakov}(1986)}]{KAZAKOV_Ising_Planar}%
  \BibitemOpen
  \bibfield  {author} {\bibinfo {author} {\bibfnamefont {V.~A.}\ \bibnamefont
  {Kazakov}},\ }\bibfield  {title} {\bibinfo {title} {Ising model on a
  dynamical planar random lattice: Exact solution},\ }\href
  {https://doi.org/https://doi.org/10.1016/0375-9601(86)90433-0} {\bibfield
  {journal} {\bibinfo  {journal} {Physics Letters A}\ }\textbf {\bibinfo
  {volume} {119}},\ \bibinfo {pages} {140} (\bibinfo {year}
  {1986})}\BibitemShut {NoStop}%
\bibitem [{\citenamefont {Kazakov}\ \emph
  {et~al.}(1996{\natexlab{b}})\citenamefont {Kazakov}, \citenamefont
  {Staudacher},\ and\ \citenamefont {Wynter}}]{Kazakov_almostflat_planar}%
  \BibitemOpen
  \bibfield  {author} {\bibinfo {author} {\bibfnamefont {V.~A.}\ \bibnamefont
  {Kazakov}}, \bibinfo {author} {\bibfnamefont {M.}~\bibnamefont
  {Staudacher}},\ and\ \bibinfo {author} {\bibfnamefont {T.}~\bibnamefont
  {Wynter}},\ }\bibfield  {title} {\bibinfo {title} {Almost flat planar
  diagrams},\ }\href {https://doi.org/10.1007/bf02103721} {\bibfield  {journal}
  {\bibinfo  {journal} {Communications in Mathematical Physics}\ }\textbf
  {\bibinfo {volume} {179}},\ \bibinfo {pages} {235–256} (\bibinfo {year}
  {1996}{\natexlab{b}})}\BibitemShut {NoStop}%
\bibitem [{\citenamefont {Boulatov}\ and\ \citenamefont
  {Kazakov}(1987)}]{BOULATOV1987379}%
  \BibitemOpen
  \bibfield  {author} {\bibinfo {author} {\bibfnamefont {D.~V.}\ \bibnamefont
  {Boulatov}}\ and\ \bibinfo {author} {\bibfnamefont {V.~A.}\ \bibnamefont
  {Kazakov}},\ }\bibfield  {title} {\bibinfo {title} {The ising model on a
  random planar lattice: The structure of the phase transition and the exact
  critical exponents},\ }\href
  {https://doi.org/https://doi.org/10.1016/0370-2693(87)90312-1} {\bibfield
  {journal} {\bibinfo  {journal} {Physics Letters B}\ }\textbf {\bibinfo
  {volume} {186}},\ \bibinfo {pages} {379} (\bibinfo {year}
  {1987})}\BibitemShut {NoStop}%
\bibitem [{\citenamefont {Kazakov}(1988)}]{KAZAKOV_Potts_Planar_lattice}%
  \BibitemOpen
  \bibfield  {author} {\bibinfo {author} {\bibfnamefont {V.~A.}\ \bibnamefont
  {Kazakov}},\ }\bibfield  {title} {\bibinfo {title} {Exactly solvable potts
  models, bond- and tree-like percolation on dynamical (random) planar
  lattice},\ }\href
  {https://doi.org/https://doi.org/10.1016/0920-5632(88)90089-8} {\bibfield
  {journal} {\bibinfo  {journal} {Nuclear Physics B - Proceedings Supplements}\
  }\textbf {\bibinfo {volume} {4}},\ \bibinfo {pages} {93} (\bibinfo {year}
  {1988})}\BibitemShut {NoStop}%
\bibitem [{\citenamefont {Brezin}\ \emph {et~al.}(1978)\citenamefont {Brezin},
  \citenamefont {Itzykson}, \citenamefont {Parisi},\ and\ \citenamefont
  {Zuber}}]{Brezin_Parisi_PlanarDiagrams}%
  \BibitemOpen
  \bibfield  {author} {\bibinfo {author} {\bibfnamefont {E.}~\bibnamefont
  {Brezin}}, \bibinfo {author} {\bibfnamefont {C.}~\bibnamefont {Itzykson}},
  \bibinfo {author} {\bibfnamefont {G.}~\bibnamefont {Parisi}},\ and\ \bibinfo
  {author} {\bibfnamefont {J.~B.}\ \bibnamefont {Zuber}},\ }\bibfield  {title}
  {\bibinfo {title} {{Planar Diagrams}},\ }\href
  {https://doi.org/10.1007/BF01614153} {\bibfield  {journal} {\bibinfo
  {journal} {Commun. Math. Phys.}\ }\textbf {\bibinfo {volume} {59}},\ \bibinfo
  {pages} {35} (\bibinfo {year} {1978})}\BibitemShut {NoStop}%
\bibitem [{\citenamefont {Zvonkin}(1997)}]{ZVONKIN_map_enumeration}%
  \BibitemOpen
  \bibfield  {author} {\bibinfo {author} {\bibfnamefont {A.}~\bibnamefont
  {Zvonkin}},\ }\bibfield  {title} {\bibinfo {title} {Matrix integrals and map
  enumeration: An accessible introduction},\ }\href
  {https://doi.org/https://doi.org/10.1016/S0895-7177(97)00210-0} {\bibfield
  {journal} {\bibinfo  {journal} {Mathematical and Computer Modelling}\
  }\textbf {\bibinfo {volume} {26}},\ \bibinfo {pages} {281} (\bibinfo {year}
  {1997})}\BibitemShut {NoStop}%
\bibitem [{\citenamefont {Kazakov}(2000)}]{kazakov2000sreview}%
  \BibitemOpen
  \bibfield  {author} {\bibinfo {author} {\bibfnamefont {V.~A.}\ \bibnamefont
  {Kazakov}},\ }\bibfield  {title} {\bibinfo {title} {Solvable matrix models},\
  }\href@noop {} {\bibfield  {journal} {\bibinfo  {journal} {arXiv}\ }
  (\bibinfo {year} {2000})},\ \Eprint {https://arxiv.org/abs/hep-th/0003064}
  {hep-th/0003064 [hep-th]} \BibitemShut {NoStop}%
\bibitem [{\citenamefont {Ba}\ \emph {et~al.}(2022)\citenamefont {Ba},
  \citenamefont {Erdogdu}, \citenamefont {Suzuki}, \citenamefont {Wang},
  \citenamefont {Wu},\ and\ \citenamefont {Yang}}]{ba2022high}%
  \BibitemOpen
  \bibfield  {author} {\bibinfo {author} {\bibfnamefont {J.}~\bibnamefont
  {Ba}}, \bibinfo {author} {\bibfnamefont {M.~A.}\ \bibnamefont {Erdogdu}},
  \bibinfo {author} {\bibfnamefont {T.}~\bibnamefont {Suzuki}}, \bibinfo
  {author} {\bibfnamefont {Z.}~\bibnamefont {Wang}}, \bibinfo {author}
  {\bibfnamefont {D.}~\bibnamefont {Wu}},\ and\ \bibinfo {author}
  {\bibfnamefont {G.}~\bibnamefont {Yang}},\ }\bibfield  {title} {\bibinfo
  {title} {High-dimensional asymptotics of feature learning: How one gradient
  step improves the representation},\ }\href@noop {} {\bibfield  {journal}
  {\bibinfo  {journal} {Advances in Neural Information Processing Systems}\
  }\textbf {\bibinfo {volume} {35}},\ \bibinfo {pages} {37932} (\bibinfo {year}
  {2022})}\BibitemShut {NoStop}%
\bibitem [{\citenamefont {Dandi}\ \emph {et~al.}(2024)\citenamefont {Dandi},
  \citenamefont {Krzakala}, \citenamefont {Loureiro}, \citenamefont {Pesce},\
  and\ \citenamefont {Stephan}}]{JMLR:v25:23-1543}%
  \BibitemOpen
  \bibfield  {author} {\bibinfo {author} {\bibfnamefont {Y.}~\bibnamefont
  {Dandi}}, \bibinfo {author} {\bibfnamefont {F.}~\bibnamefont {Krzakala}},
  \bibinfo {author} {\bibfnamefont {B.}~\bibnamefont {Loureiro}}, \bibinfo
  {author} {\bibfnamefont {L.}~\bibnamefont {Pesce}},\ and\ \bibinfo {author}
  {\bibfnamefont {L.}~\bibnamefont {Stephan}},\ }\bibfield  {title} {\bibinfo
  {title} {How two-layer neural networks learn, one (giant) step at a time},\
  }\href {http://jmlr.org/papers/v25/23-1543.html} {\bibfield  {journal}
  {\bibinfo  {journal} {Journal of Machine Learning Research}\ }\textbf
  {\bibinfo {volume} {25}},\ \bibinfo {pages} {1} (\bibinfo {year}
  {2024})}\BibitemShut {NoStop}%
\bibitem [{\citenamefont {Wang}\ \emph {et~al.}(2024)\citenamefont {Wang},
  \citenamefont {Wu},\ and\ \citenamefont {Fan}}]{wang2024nonlinear}%
  \BibitemOpen
  \bibfield  {author} {\bibinfo {author} {\bibfnamefont {Z.}~\bibnamefont
  {Wang}}, \bibinfo {author} {\bibfnamefont {D.}~\bibnamefont {Wu}},\ and\
  \bibinfo {author} {\bibfnamefont {Z.}~\bibnamefont {Fan}},\ }\bibfield
  {title} {\bibinfo {title} {Nonlinear spiked covariance matrices and signal
  propagation in deep neural networks},\ }in\ \href@noop {} {\emph {\bibinfo
  {booktitle} {The Thirty Seventh Annual Conference on Learning Theory}}}\
  (\bibinfo {organization} {PMLR},\ \bibinfo {year} {2024})\ pp.\ \bibinfo
  {pages} {4891--4957}\BibitemShut {NoStop}%
\bibitem [{\citenamefont {Maillard}\ \emph {et~al.}(2024)\citenamefont
  {Maillard}, \citenamefont {Troiani}, \citenamefont {Martin}, \citenamefont
  {Krzakala},\ and\ \citenamefont {Zdeborov{\'a}}}]{maillard2024bayes}%
  \BibitemOpen
  \bibfield  {author} {\bibinfo {author} {\bibfnamefont {A.}~\bibnamefont
  {Maillard}}, \bibinfo {author} {\bibfnamefont {E.}~\bibnamefont {Troiani}},
  \bibinfo {author} {\bibfnamefont {S.}~\bibnamefont {Martin}}, \bibinfo
  {author} {\bibfnamefont {F.}~\bibnamefont {Krzakala}},\ and\ \bibinfo
  {author} {\bibfnamefont {L.}~\bibnamefont {Zdeborov{\'a}}},\ }\bibfield
  {title} {\bibinfo {title} {Bayes-optimal learning of an extensive-width
  neural network from quadratically many samples},\ }\href@noop {} {\bibfield
  {journal} {\bibinfo  {journal} {arXiv preprint arXiv:2408.03733}\ } (\bibinfo
  {year} {2024})}\BibitemShut {NoStop}%
\bibitem [{\citenamefont {{Barbier}}\ \emph {et~al.}(2025)\citenamefont
  {{Barbier}}, \citenamefont {{Camilli}}, \citenamefont {{Nguyen}},
  \citenamefont {{Pastore}},\ and\ \citenamefont
  {{Skerk}}}]{2025_nn_quadratic_Barbier}%
  \BibitemOpen
  \bibfield  {author} {\bibinfo {author} {\bibfnamefont {J.}~\bibnamefont
  {{Barbier}}}, \bibinfo {author} {\bibfnamefont {F.}~\bibnamefont
  {{Camilli}}}, \bibinfo {author} {\bibfnamefont {M.-T.}\ \bibnamefont
  {{Nguyen}}}, \bibinfo {author} {\bibfnamefont {M.}~\bibnamefont
  {{Pastore}}},\ and\ \bibinfo {author} {\bibfnamefont {R.}~\bibnamefont
  {{Skerk}}},\ }\bibfield  {title} {\bibinfo {title} {{Optimal generalisation
  and learning transition in extensive-width shallow neural networks near
  interpolation}},\ }\href {https://doi.org/10.48550/arXiv.2501.18530}
  {\bibfield  {journal} {\bibinfo  {journal} {arXiv e-prints}\ ,\ \bibinfo
  {eid} {arXiv:2501.18530}} (\bibinfo {year} {2025})},\ \Eprint
  {https://arxiv.org/abs/2501.18530} {arXiv:2501.18530 [stat.ML]} \BibitemShut
  {NoStop}%
\bibitem [{\citenamefont {Cover}(1999)}]{cover1999elements}%
  \BibitemOpen
  \bibfield  {author} {\bibinfo {author} {\bibfnamefont {T.~M.}\ \bibnamefont
  {Cover}},\ }\href@noop {} {\emph {\bibinfo {title} {Elements of information
  theory}}}\ (\bibinfo  {publisher} {John Wiley \& Sons},\ \bibinfo {year}
  {1999})\BibitemShut {NoStop}%
\bibitem [{\citenamefont {Polyanskiy}\ and\ \citenamefont
  {Wu}(2025)}]{polyanskiy2025information}%
  \BibitemOpen
  \bibfield  {author} {\bibinfo {author} {\bibfnamefont {Y.}~\bibnamefont
  {Polyanskiy}}\ and\ \bibinfo {author} {\bibfnamefont {Y.}~\bibnamefont
  {Wu}},\ }\href {https://books.google.it/books?id=CySo0AEACAAJ} {\emph
  {\bibinfo {title} {Information Theory: From Coding to Learning}}}\ (\bibinfo
  {publisher} {Cambridge University Press},\ \bibinfo {year}
  {2025})\BibitemShut {NoStop}%
\bibitem [{\citenamefont {Guo}\ \emph {et~al.}(2005)\citenamefont {Guo},
  \citenamefont {Shamai},\ and\ \citenamefont {Verdu}}]{Verdu_I-MMSE}%
  \BibitemOpen
  \bibfield  {author} {\bibinfo {author} {\bibfnamefont {D.}~\bibnamefont
  {Guo}}, \bibinfo {author} {\bibfnamefont {S.}~\bibnamefont {Shamai}},\ and\
  \bibinfo {author} {\bibfnamefont {S.}~\bibnamefont {Verdu}},\ }\bibfield
  {title} {\bibinfo {title} {Mutual information and minimum mean-square error
  in gaussian channels},\ }\href {https://doi.org/10.1109/TIT.2005.844072}
  {\bibfield  {journal} {\bibinfo  {journal} {IEEE Transactions on Information
  Theory}\ }\textbf {\bibinfo {volume} {51}},\ \bibinfo {pages} {1261}
  (\bibinfo {year} {2005})}\BibitemShut {NoStop}%
\bibitem [{\citenamefont {Semerjian}(2024)}]{semerjian2024matrix}%
  \BibitemOpen
  \bibfield  {author} {\bibinfo {author} {\bibfnamefont {G.}~\bibnamefont
  {Semerjian}},\ }\bibfield  {title} {\bibinfo {title} {Matrix denoising:
  Bayes-optimal estimators via low-degree polynomials},\ }\href
  {https://doi.org/10.1007/s10955-024-03359-9} {\bibfield  {journal} {\bibinfo
  {journal} {Journal of Statistical Physics}\ }\textbf {\bibinfo {volume}
  {191}},\ \bibinfo {pages} {139} (\bibinfo {year} {2024})}\BibitemShut
  {NoStop}%
\bibitem [{\citenamefont {Pillai}\ and\ \citenamefont
  {Yin}(2014)}]{10.1214/13-AAP939}%
  \BibitemOpen
  \bibfield  {author} {\bibinfo {author} {\bibfnamefont {N.~S.}\ \bibnamefont
  {Pillai}}\ and\ \bibinfo {author} {\bibfnamefont {J.}~\bibnamefont {Yin}},\
  }\bibfield  {title} {\bibinfo {title} {{Universality of covariance
  matrices}},\ }\href {https://doi.org/10.1214/13-AAP939} {\bibfield  {journal}
  {\bibinfo  {journal} {The Annals of Applied Probability}\ }\textbf {\bibinfo
  {volume} {24}},\ \bibinfo {pages} {935 } (\bibinfo {year}
  {2014})}\BibitemShut {NoStop}%
\bibitem [{\citenamefont {Matytsin}(1994)}]{Matytsin_1994}%
  \BibitemOpen
  \bibfield  {author} {\bibinfo {author} {\bibfnamefont {A.}~\bibnamefont
  {Matytsin}},\ }\bibfield  {title} {\bibinfo {title} {On the large-n limit of
  the itzykson-zuber integral},\ }\href
  {https://doi.org/10.1016/0550-3213(94)90471-5} {\bibfield  {journal}
  {\bibinfo  {journal} {Nuclear Physics B}\ }\textbf {\bibinfo {volume}
  {411}},\ \bibinfo {pages} {805–820} (\bibinfo {year} {1994})}\BibitemShut
  {NoStop}%
\bibitem [{\citenamefont {Guionnet}\ and\ \citenamefont
  {Zeitouni}(2002)}]{guionnet2002large}%
  \BibitemOpen
  \bibfield  {author} {\bibinfo {author} {\bibfnamefont {A.}~\bibnamefont
  {Guionnet}}\ and\ \bibinfo {author} {\bibfnamefont {O.}~\bibnamefont
  {Zeitouni}},\ }\bibfield  {title} {\bibinfo {title} {Large deviations
  asymptotics for spherical integrals},\ }\href
  {https://www.sciencedirect.com/science/article/pii/S0022123601938339}
  {\bibfield  {journal} {\bibinfo  {journal} {Journal of functional analysis}\
  }\textbf {\bibinfo {volume} {188}},\ \bibinfo {pages} {461} (\bibinfo {year}
  {2002})}\BibitemShut {NoStop}%
\bibitem [{\citenamefont {Guionnet}(2004)}]{guionnet2004first}%
  \BibitemOpen
  \bibfield  {author} {\bibinfo {author} {\bibfnamefont {A.}~\bibnamefont
  {Guionnet}},\ }\bibfield  {title} {\bibinfo {title} {First order asymptotics
  of matrix integrals: a rigorous approach towards the understanding of matrix
  models},\ }\href {https://doi.org/0.1007/s00220-003-0992-4} {\bibfield
  {journal} {\bibinfo  {journal} {Communications in mathematical physics}\
  }\textbf {\bibinfo {volume} {244}},\ \bibinfo {pages} {527} (\bibinfo {year}
  {2004})}\BibitemShut {NoStop}%
\bibitem [{\citenamefont {Zuber}(2008)}]{zuber2008large}%
  \BibitemOpen
  \bibfield  {author} {\bibinfo {author} {\bibfnamefont {J.-B.}\ \bibnamefont
  {Zuber}},\ }\bibfield  {title} {\bibinfo {title} {The large-n limit of matrix
  integrals over the orthogonal group},\ }\href@noop {} {\bibfield  {journal}
  {\bibinfo  {journal} {Journal of Physics A: Mathematical and Theoretical}\
  }\textbf {\bibinfo {volume} {41}},\ \bibinfo {pages} {382001} (\bibinfo
  {year} {2008})}\BibitemShut {NoStop}%
\bibitem [{\citenamefont {Maillard}\ \emph {et~al.}(2022)\citenamefont
  {Maillard}, \citenamefont {Krzakala}, \citenamefont {M{\'{e}}zard},\ and\
  \citenamefont {Zdeborov{\'{a}}}}]{perturbative_Maillard21}%
  \BibitemOpen
  \bibfield  {author} {\bibinfo {author} {\bibfnamefont {A.}~\bibnamefont
  {Maillard}}, \bibinfo {author} {\bibfnamefont {F.}~\bibnamefont {Krzakala}},
  \bibinfo {author} {\bibfnamefont {M.}~\bibnamefont {M{\'{e}}zard}},\ and\
  \bibinfo {author} {\bibfnamefont {L.}~\bibnamefont {Zdeborov{\'{a}}}},\
  }\bibfield  {title} {\bibinfo {title} {Perturbative construction of
  mean-field equations in extensive-rank matrix factorization and denoising},\
  }\href {https://doi.org/10.1088/1742-5468/ac7e4c} {\bibfield  {journal}
  {\bibinfo  {journal} {Journal of Statistical Mechanics: Theory and
  Experiment}\ }\textbf {\bibinfo {volume} {2022}},\ \bibinfo {pages} {083301}
  (\bibinfo {year} {2022})}\BibitemShut {NoStop}%
\bibitem [{\citenamefont {Troiani}\ \emph {et~al.}(2022)\citenamefont
  {Troiani}, \citenamefont {Erba}, \citenamefont {Krzakala}, \citenamefont
  {Maillard},\ and\ \citenamefont {Zdeborov{\'a}}}]{troiani2022optimal}%
  \BibitemOpen
  \bibfield  {author} {\bibinfo {author} {\bibfnamefont {E.}~\bibnamefont
  {Troiani}}, \bibinfo {author} {\bibfnamefont {V.}~\bibnamefont {Erba}},
  \bibinfo {author} {\bibfnamefont {F.}~\bibnamefont {Krzakala}}, \bibinfo
  {author} {\bibfnamefont {A.}~\bibnamefont {Maillard}},\ and\ \bibinfo
  {author} {\bibfnamefont {L.}~\bibnamefont {Zdeborov{\'a}}},\ }\bibfield
  {title} {\bibinfo {title} {Optimal denoising of rotationally invariant
  rectangular matrices},\ }in\ \href@noop {} {\emph {\bibinfo {booktitle}
  {Mathematical and Scientific Machine Learning}}}\ (\bibinfo {organization}
  {PMLR},\ \bibinfo {year} {2022})\ pp.\ \bibinfo {pages} {97--112}\BibitemShut
  {NoStop}%
\bibitem [{\citenamefont {Pourkamali}\ and\ \citenamefont
  {Macris}(2023)}]{pourkamali2023rectangular}%
  \BibitemOpen
  \bibfield  {author} {\bibinfo {author} {\bibfnamefont {F.}~\bibnamefont
  {Pourkamali}}\ and\ \bibinfo {author} {\bibfnamefont {N.}~\bibnamefont
  {Macris}},\ }\bibfield  {title} {\bibinfo {title} {Rectangular rotational
  invariant estimator for general additive noise matrices},\ }in\ \href@noop {}
  {\emph {\bibinfo {booktitle} {2023 IEEE International Symposium on
  Information Theory (ISIT)}}}\ (\bibinfo {organization} {IEEE},\ \bibinfo
  {year} {2023})\ pp.\ \bibinfo {pages} {2081--2086}\BibitemShut {NoStop}%
\bibitem [{\citenamefont {Pourkamali}\ and\ \citenamefont
  {Macris}(2024{\natexlab{a}})}]{pourkamali2024rectangular}%
  \BibitemOpen
  \bibfield  {author} {\bibinfo {author} {\bibfnamefont {F.}~\bibnamefont
  {Pourkamali}}\ and\ \bibinfo {author} {\bibfnamefont {N.}~\bibnamefont
  {Macris}},\ }\bibfield  {title} {\bibinfo {title} {Rectangular rotational
  invariant estimator for high-rank matrix estimation},\ }\href@noop {}
  {\bibfield  {journal} {\bibinfo  {journal} {arXiv preprint arXiv:2403.04615}\
  } (\bibinfo {year} {2024}{\natexlab{a}})}\BibitemShut {NoStop}%
\bibitem [{\citenamefont {Landau}\ \emph {et~al.}(2023)\citenamefont {Landau},
  \citenamefont {Mel},\ and\ \citenamefont {Ganguli}}]{landau2023singular}%
  \BibitemOpen
  \bibfield  {author} {\bibinfo {author} {\bibfnamefont {I.~D.}\ \bibnamefont
  {Landau}}, \bibinfo {author} {\bibfnamefont {G.~C.}\ \bibnamefont {Mel}},\
  and\ \bibinfo {author} {\bibfnamefont {S.}~\bibnamefont {Ganguli}},\
  }\bibfield  {title} {\bibinfo {title} {Singular vectors of sums of
  rectangular random matrices and optimal estimation of high-rank signals: The
  extensive spike model},\ }\href@noop {} {\bibfield  {journal} {\bibinfo
  {journal} {Physical Review E}\ }\textbf {\bibinfo {volume} {108}},\ \bibinfo
  {pages} {054129} (\bibinfo {year} {2023})}\BibitemShut {NoStop}%
\bibitem [{\citenamefont {Guionnet}\ and\ \citenamefont
  {Huang}(2023)}]{guionnet2021large}%
  \BibitemOpen
  \bibfield  {author} {\bibinfo {author} {\bibfnamefont {A.}~\bibnamefont
  {Guionnet}}\ and\ \bibinfo {author} {\bibfnamefont {J.}~\bibnamefont
  {Huang}},\ }\bibfield  {title} {\bibinfo {title} {Asymptotics of rectangular
  spherical integrals},\ }\href
  {https://doi.org/https://doi.org/10.1016/j.jfa.2023.110144} {\bibfield
  {journal} {\bibinfo  {journal} {Journal of Functional Analysis}\ }\textbf
  {\bibinfo {volume} {285}},\ \bibinfo {pages} {110144} (\bibinfo {year}
  {2023})}\BibitemShut {NoStop}%
\bibitem [{\citenamefont {Pourkamali}\ and\ \citenamefont
  {Macris}(2024{\natexlab{b}})}]{pourkamali2024bayesian}%
  \BibitemOpen
  \bibfield  {author} {\bibinfo {author} {\bibfnamefont {F.}~\bibnamefont
  {Pourkamali}}\ and\ \bibinfo {author} {\bibfnamefont {N.}~\bibnamefont
  {Macris}},\ }\bibfield  {title} {\bibinfo {title} {Bayesian extensive-rank
  matrix factorization with rotational invariant priors},\ }\href@noop {}
  {\bibfield  {journal} {\bibinfo  {journal} {Advances in Neural Information
  Processing Systems}\ }\textbf {\bibinfo {volume} {36}} (\bibinfo {year}
  {2024}{\natexlab{b}})}\BibitemShut {NoStop}%
\bibitem [{\citenamefont {Bodin}\ and\ \citenamefont
  {Macris}(2023)}]{bodin2023gradient}%
  \BibitemOpen
  \bibfield  {author} {\bibinfo {author} {\bibfnamefont {A.}~\bibnamefont
  {Bodin}}\ and\ \bibinfo {author} {\bibfnamefont {N.}~\bibnamefont {Macris}},\
  }\bibfield  {title} {\bibinfo {title} {Gradient flow on extensive-rank
  positive semi-definite matrix denoising},\ }in\ \href@noop {} {\emph
  {\bibinfo {booktitle} {2023 IEEE Information Theory Workshop (ITW)}}}\
  (\bibinfo {organization} {IEEE},\ \bibinfo {year} {2023})\ pp.\ \bibinfo
  {pages} {365--370}\BibitemShut {NoStop}%
\bibitem [{\citenamefont {Sakata}\ and\ \citenamefont
  {Kabashima}(2013{\natexlab{b}})}]{SK13ISIT}%
  \BibitemOpen
  \bibfield  {author} {\bibinfo {author} {\bibfnamefont {A.}~\bibnamefont
  {Sakata}}\ and\ \bibinfo {author} {\bibfnamefont {Y.}~\bibnamefont
  {Kabashima}},\ }\bibfield  {title} {\bibinfo {title} {Sample complexity of
  {B}ayesian optimal dictionary learning},\ }\href@noop {} {\bibfield
  {journal} {\bibinfo  {journal} {2013 IEEE International Symposium on
  Information Theory}\ ,\ \bibinfo {pages} {669 }} (\bibinfo {year}
  {2013}{\natexlab{b}})}\BibitemShut {NoStop}%
\bibitem [{\citenamefont {Krzakala}\ \emph
  {et~al.}(2013{\natexlab{a}})\citenamefont {Krzakala}, \citenamefont
  {Mézard},\ and\ \citenamefont {Zdeborová}}]{KMZ_DL-2013}%
  \BibitemOpen
  \bibfield  {author} {\bibinfo {author} {\bibfnamefont {F.}~\bibnamefont
  {Krzakala}}, \bibinfo {author} {\bibfnamefont {M.}~\bibnamefont {Mézard}},\
  and\ \bibinfo {author} {\bibfnamefont {L.}~\bibnamefont {Zdeborová}},\
  }\bibfield  {title} {\bibinfo {title} {Phase diagram and approximate message
  passing for blind calibration and dictionary learning},\ }in\ \href
  {https://doi.org/10.1109/ISIT.2013.6620308} {\emph {\bibinfo {booktitle}
  {2013 IEEE International Symposium on Information Theory}}}\ (\bibinfo {year}
  {2013})\ pp.\ \bibinfo {pages} {659--663}\BibitemShut {NoStop}%
\bibitem [{\citenamefont {Kabashima}\ \emph {et~al.}(2016)\citenamefont
  {Kabashima}, \citenamefont {Krzakala}, \citenamefont {Mézard}, \citenamefont
  {Sakata},\ and\ \citenamefont {Zdeborová}}]{Marc-Kabashima}%
  \BibitemOpen
  \bibfield  {author} {\bibinfo {author} {\bibfnamefont {Y.}~\bibnamefont
  {Kabashima}}, \bibinfo {author} {\bibfnamefont {F.}~\bibnamefont {Krzakala}},
  \bibinfo {author} {\bibfnamefont {M.}~\bibnamefont {Mézard}}, \bibinfo
  {author} {\bibfnamefont {A.}~\bibnamefont {Sakata}},\ and\ \bibinfo {author}
  {\bibfnamefont {L.}~\bibnamefont {Zdeborová}},\ }\bibfield  {title}
  {\bibinfo {title} {Phase transitions and sample complexity in bayes-optimal
  matrix factorization},\ }\href {https://doi.org/10.1109/TIT.2016.2556702}
  {\bibfield  {journal} {\bibinfo  {journal} {IEEE Transactions on Information
  Theory}\ }\textbf {\bibinfo {volume} {62}},\ \bibinfo {pages} {4228}
  (\bibinfo {year} {2016})}\BibitemShut {NoStop}%
\bibitem [{\citenamefont {Krzakala}\ \emph {et~al.}()\citenamefont {Krzakala},
  \citenamefont {Mézard}, \citenamefont {Manuel},\ and\ \citenamefont
  {Zdeborová}}]{comments}%
  \BibitemOpen
  \bibfield  {author} {\bibinfo {author} {\bibfnamefont {F.}~\bibnamefont
  {Krzakala}}, \bibinfo {author} {\bibfnamefont {M.}~\bibnamefont {Mézard}},
  \bibinfo {author} {\bibfnamefont {A.}~\bibnamefont {Manuel}},\ and\ \bibinfo
  {author} {\bibfnamefont {L.}~\bibnamefont {Zdeborová}},\ }\href@noop {}
  {\bibinfo {title} {Unpublished work}}\BibitemShut {NoStop}%
\bibitem [{\citenamefont {Schmidt}(2018)}]{schmidt:tel-03227132}%
  \BibitemOpen
  \bibfield  {author} {\bibinfo {author} {\bibfnamefont {H.~C.}\ \bibnamefont
  {Schmidt}},\ }\emph {\bibinfo {title} {{Statistical Physics of Sparse and
  Dense Models in Optimization and Inference}}},\ \href
  {https://theses.hal.science/tel-03227132} {Ph.D. thesis},\ \bibinfo  {school}
  {{Universit{\'e} Paris Saclay (COmUE)}} (\bibinfo {year} {2018})\BibitemShut
  {NoStop}%
\bibitem [{\citenamefont {Barbier}\ and\ \citenamefont
  {Macris}(2022)}]{barbier2022DL}%
  \BibitemOpen
  \bibfield  {author} {\bibinfo {author} {\bibfnamefont {J.}~\bibnamefont
  {Barbier}}\ and\ \bibinfo {author} {\bibfnamefont {N.}~\bibnamefont
  {Macris}},\ }\bibfield  {title} {\bibinfo {title} {Statistical limits of
  dictionary learning: Random matrix theory and the spectral replica method},\
  }\href {https://doi.org/10.1103/PhysRevE.106.024136} {\bibfield  {journal}
  {\bibinfo  {journal} {Phys. Rev. E}\ }\textbf {\bibinfo {volume} {106}},\
  \bibinfo {pages} {024136} (\bibinfo {year} {2022})}\BibitemShut {NoStop}%
\bibitem [{\citenamefont {Plefka}(1982)}]{Plefka1982ConvergenceCO}%
  \BibitemOpen
  \bibfield  {author} {\bibinfo {author} {\bibfnamefont {T.}~\bibnamefont
  {Plefka}},\ }\bibfield  {title} {\bibinfo {title} {Convergence condition of
  the tap equation for the infinite-ranged ising spin glass model},\ }\href
  {https://api.semanticscholar.org/CorpusID:123236717} {\bibfield  {journal}
  {\bibinfo  {journal} {Journal of Physics A}\ }\textbf {\bibinfo {volume}
  {15}},\ \bibinfo {pages} {1971} (\bibinfo {year} {1982})}\BibitemShut
  {NoStop}%
\bibitem [{\citenamefont {Georges}\ and\ \citenamefont
  {Yedidia}(1991)}]{georges1991expand}%
  \BibitemOpen
  \bibfield  {author} {\bibinfo {author} {\bibfnamefont {A.}~\bibnamefont
  {Georges}}\ and\ \bibinfo {author} {\bibfnamefont {J.~S.}\ \bibnamefont
  {Yedidia}},\ }\bibfield  {title} {\bibinfo {title} {How to expand around
  mean-field theory using high-temperature expansions},\ }\href@noop {}
  {\bibfield  {journal} {\bibinfo  {journal} {Journal of Physics A:
  Mathematical and General}\ }\textbf {\bibinfo {volume} {24}},\ \bibinfo
  {pages} {2173} (\bibinfo {year} {1991})}\BibitemShut {NoStop}%
\bibitem [{\citenamefont {Thouless}\ \emph {et~al.}(1977)\citenamefont
  {Thouless}, \citenamefont {Anderson},\ and\ \citenamefont
  {Palmer}}]{thouless1977solution}%
  \BibitemOpen
  \bibfield  {author} {\bibinfo {author} {\bibfnamefont {D.~J.}\ \bibnamefont
  {Thouless}}, \bibinfo {author} {\bibfnamefont {P.~W.}\ \bibnamefont
  {Anderson}},\ and\ \bibinfo {author} {\bibfnamefont {R.~G.}\ \bibnamefont
  {Palmer}},\ }\bibfield  {title} {\bibinfo {title} {Solution of solvable model
  of a spin glass'},\ }\href@noop {} {\bibfield  {journal} {\bibinfo  {journal}
  {Philosophical Magazine}\ }\textbf {\bibinfo {volume} {35}},\ \bibinfo
  {pages} {593} (\bibinfo {year} {1977})}\BibitemShut {NoStop}%
\bibitem [{\citenamefont {Camilli}(2023)}]{camilli2023new}%
  \BibitemOpen
  \bibfield  {author} {\bibinfo {author} {\bibfnamefont {F.}~\bibnamefont
  {Camilli}},\ }\emph {\bibinfo {title} {New perspectives in statistical
  mechanics and high-dimensional inference}},\ \href
  {https://doi.org/10.48676/unibo/amsdottorato/10592} {Ph.D. thesis},\ \bibinfo
   {school} {University of Bologna and École Normale Supérieure - PSL}
  (\bibinfo {year} {2023})\BibitemShut {NoStop}%
\bibitem [{\citenamefont {Camilli}\ and\ \citenamefont
  {M{\'e}zard}(2023)}]{camilli2023matrix}%
  \BibitemOpen
  \bibfield  {author} {\bibinfo {author} {\bibfnamefont {F.}~\bibnamefont
  {Camilli}}\ and\ \bibinfo {author} {\bibfnamefont {M.}~\bibnamefont
  {M{\'e}zard}},\ }\bibfield  {title} {\bibinfo {title} {Matrix factorization
  with neural networks},\ }\href@noop {} {\bibfield  {journal} {\bibinfo
  {journal} {Physical Review E}\ }\textbf {\bibinfo {volume} {107}},\ \bibinfo
  {pages} {064308} (\bibinfo {year} {2023})}\BibitemShut {NoStop}%
\bibitem [{\citenamefont {Camilli}\ and\ \citenamefont
  {M{\'e}zard}(2024)}]{camilli2024decimation}%
  \BibitemOpen
  \bibfield  {author} {\bibinfo {author} {\bibfnamefont {F.}~\bibnamefont
  {Camilli}}\ and\ \bibinfo {author} {\bibfnamefont {M.}~\bibnamefont
  {M{\'e}zard}},\ }\bibfield  {title} {\bibinfo {title} {The decimation scheme
  for symmetric matrix factorization},\ }\href@noop {} {\bibfield  {journal}
  {\bibinfo  {journal} {Journal of Physics A: Mathematical and Theoretical}\
  }\textbf {\bibinfo {volume} {57}},\ \bibinfo {pages} {085002} (\bibinfo
  {year} {2024})}\BibitemShut {NoStop}%
\bibitem [{\citenamefont {Hopfield}(1982)}]{Hopfield82}%
  \BibitemOpen
  \bibfield  {author} {\bibinfo {author} {\bibfnamefont {J.~J.}\ \bibnamefont
  {Hopfield}},\ }\bibfield  {title} {\bibinfo {title} {Neural networks and
  physical systems with emergent collective computational abilities.},\ }\href
  {https://doi.org/10.1073/pnas.79.8.2554} {\bibfield  {journal} {\bibinfo
  {journal} {Proceedings of the National Academy of Sciences}\ }\textbf
  {\bibinfo {volume} {79}},\ \bibinfo {pages} {2554} (\bibinfo {year}
  {1982})}\BibitemShut {NoStop}%
\bibitem [{\citenamefont {Amit}\ \emph {et~al.}(1985)\citenamefont {Amit},
  \citenamefont {Gutfreund},\ and\ \citenamefont {Sompolinsky}}]{Amit1}%
  \BibitemOpen
  \bibfield  {author} {\bibinfo {author} {\bibfnamefont {D.~J.}\ \bibnamefont
  {Amit}}, \bibinfo {author} {\bibfnamefont {H.}~\bibnamefont {Gutfreund}},\
  and\ \bibinfo {author} {\bibfnamefont {H.}~\bibnamefont {Sompolinsky}},\
  }\bibfield  {title} {\bibinfo {title} {Storing infinite numbers of patterns
  in a spin-glass model of neural networks},\ }\href
  {https://doi.org/10.1103/PhysRevLett.55.1530} {\bibfield  {journal} {\bibinfo
   {journal} {Phys. Rev. Lett.}\ }\textbf {\bibinfo {volume} {55}},\ \bibinfo
  {pages} {1530} (\bibinfo {year} {1985})}\BibitemShut {NoStop}%
\bibitem [{\citenamefont {Hopfield}\ \emph {et~al.}(1983)\citenamefont
  {Hopfield}, \citenamefont {Feinstein},\ and\ \citenamefont
  {Palmer}}]{Hopfield_unlearning}%
  \BibitemOpen
  \bibfield  {author} {\bibinfo {author} {\bibfnamefont {J.~J.}\ \bibnamefont
  {Hopfield}}, \bibinfo {author} {\bibfnamefont {D.~I.}\ \bibnamefont
  {Feinstein}},\ and\ \bibinfo {author} {\bibfnamefont {R.~G.}\ \bibnamefont
  {Palmer}},\ }\bibfield  {title} {\bibinfo {title} {‘unlearning’ has a
  stabilizing effect in collective memories},\ }\href
  {https://doi.org/10.1038/304158a0} {\bibfield  {journal} {\bibinfo  {journal}
  {Nature}\ }\textbf {\bibinfo {volume} {304}},\ \bibinfo {pages} {158}
  (\bibinfo {year} {1983})}\BibitemShut {NoStop}%
\bibitem [{\citenamefont {{Van Hemmen}}\ \emph {et~al.}(1990)\citenamefont
  {{Van Hemmen}}, \citenamefont {Ioffe}, \citenamefont {Kühn},\ and\
  \citenamefont {Vaas}}]{vanHemmen_unlearning}%
  \BibitemOpen
  \bibfield  {author} {\bibinfo {author} {\bibfnamefont {J.}~\bibnamefont {{Van
  Hemmen}}}, \bibinfo {author} {\bibfnamefont {L.}~\bibnamefont {Ioffe}},
  \bibinfo {author} {\bibfnamefont {R.}~\bibnamefont {Kühn}},\ and\ \bibinfo
  {author} {\bibfnamefont {M.}~\bibnamefont {Vaas}},\ }\bibfield  {title}
  {\bibinfo {title} {Increasing the efficiency of a neural network through
  unlearning},\ }\href
  {https://doi.org/https://doi.org/10.1016/0378-4371(90)90345-S} {\bibfield
  {journal} {\bibinfo  {journal} {Physica A: Statistical Mechanics and its
  Applications}\ }\textbf {\bibinfo {volume} {163}},\ \bibinfo {pages} {386}
  (\bibinfo {year} {1990})}\BibitemShut {NoStop}%
\bibitem [{\citenamefont {Benedetti}\ \emph {et~al.}(2022)\citenamefont
  {Benedetti}, \citenamefont {Ventura}, \citenamefont {Marinari}, \citenamefont
  {Ruocco},\ and\ \citenamefont {Zamponi}}]{Benedetti_unlearning}%
  \BibitemOpen
  \bibfield  {author} {\bibinfo {author} {\bibfnamefont {M.}~\bibnamefont
  {Benedetti}}, \bibinfo {author} {\bibfnamefont {E.}~\bibnamefont {Ventura}},
  \bibinfo {author} {\bibfnamefont {E.}~\bibnamefont {Marinari}}, \bibinfo
  {author} {\bibfnamefont {G.}~\bibnamefont {Ruocco}},\ and\ \bibinfo {author}
  {\bibfnamefont {F.}~\bibnamefont {Zamponi}},\ }\bibfield  {title} {\bibinfo
  {title} {Supervised perceptron learning vs unsupervised hebbian unlearning:
  Approaching optimal memory retrieval in hopfield-like networks},\ }\href
  {https://doi.org/10.1063/5.0084219} {\bibfield  {journal} {\bibinfo
  {journal} {The Journal of Chemical Physics}\ }\textbf {\bibinfo {volume}
  {156}},\ \bibinfo {pages} {104107} (\bibinfo {year} {2022})}\BibitemShut
  {NoStop}%
\bibitem [{\citenamefont {Kunisky}\ \emph {et~al.}(2019)\citenamefont
  {Kunisky}, \citenamefont {Wein},\ and\ \citenamefont
  {Bandeira}}]{kunisky2019notes}%
  \BibitemOpen
  \bibfield  {author} {\bibinfo {author} {\bibfnamefont {D.}~\bibnamefont
  {Kunisky}}, \bibinfo {author} {\bibfnamefont {A.~S.}\ \bibnamefont {Wein}},\
  and\ \bibinfo {author} {\bibfnamefont {A.~S.}\ \bibnamefont {Bandeira}},\
  }\bibfield  {title} {\bibinfo {title} {Notes on computational hardness of
  hypothesis testing: Predictions using the low-degree likelihood ratio},\ }in\
  \href@noop {} {\emph {\bibinfo {booktitle} {ISAAC Congress (International
  Society for Analysis, its Applications and Computation)}}}\ (\bibinfo
  {organization} {Springer},\ \bibinfo {year} {2019})\ pp.\ \bibinfo {pages}
  {1--50}\BibitemShut {NoStop}%
\bibitem [{\citenamefont {Marinari}\ and\ \citenamefont
  {Parisi}(1992)}]{Marinari92SimulatedTempering}%
  \BibitemOpen
  \bibfield  {author} {\bibinfo {author} {\bibfnamefont {E.}~\bibnamefont
  {Marinari}}\ and\ \bibinfo {author} {\bibfnamefont {G.}~\bibnamefont
  {Parisi}},\ }\bibfield  {title} {\bibinfo {title} {Simulated tempering: A new
  monte carlo scheme},\ }\href {https://doi.org/10.1209/0295-5075/19/6/002}
  {\bibfield  {journal} {\bibinfo  {journal} {Europhysics Letters}\ }\textbf
  {\bibinfo {volume} {19}},\ \bibinfo {pages} {451} (\bibinfo {year}
  {1992})}\BibitemShut {NoStop}%
\bibitem [{\citenamefont {Hukushima}\ and\ \citenamefont
  {Nemoto}(1996)}]{Hukushima96REMC}%
  \BibitemOpen
  \bibfield  {author} {\bibinfo {author} {\bibfnamefont {K.}~\bibnamefont
  {Hukushima}}\ and\ \bibinfo {author} {\bibfnamefont {K.}~\bibnamefont
  {Nemoto}},\ }\bibfield  {title} {\bibinfo {title} {Exchange monte carlo
  method and application to spin glass simulations},\ }\href
  {https://doi.org/10.1143/JPSJ.65.1604} {\bibfield  {journal} {\bibinfo
  {journal} {Journal of the Physical Society of Japan}\ }\textbf {\bibinfo
  {volume} {65}},\ \bibinfo {pages} {1604} (\bibinfo {year}
  {1996})}\BibitemShut {NoStop}%
\bibitem [{\citenamefont {Neal}(2001)}]{neal2001annealed}%
  \BibitemOpen
  \bibfield  {author} {\bibinfo {author} {\bibfnamefont {R.~M.}\ \bibnamefont
  {Neal}},\ }\bibfield  {title} {\bibinfo {title} {Annealed importance
  sampling},\ }\href@noop {} {\bibfield  {journal} {\bibinfo  {journal}
  {Statistics and computing}\ }\textbf {\bibinfo {volume} {11}},\ \bibinfo
  {pages} {125} (\bibinfo {year} {2001})}\BibitemShut {NoStop}%
\bibitem [{git(2024)}]{github_code}%
  \BibitemOpen
  \href@noop {} {}\bibinfo {howpublished}
  {\url{https://github.com/darjeeling-0/MCMC_for_RademacherMF}} (\bibinfo
  {year} {2024})\BibitemShut {NoStop}%
\bibitem [{\citenamefont {Bun}\ \emph {et~al.}(2017)\citenamefont {Bun},
  \citenamefont {Bouchaud},\ and\ \citenamefont {Potters}}]{BUN20171}%
  \BibitemOpen
  \bibfield  {author} {\bibinfo {author} {\bibfnamefont {J.}~\bibnamefont
  {Bun}}, \bibinfo {author} {\bibfnamefont {J.-P.}\ \bibnamefont {Bouchaud}},\
  and\ \bibinfo {author} {\bibfnamefont {M.}~\bibnamefont {Potters}},\
  }\bibfield  {title} {\bibinfo {title} {Cleaning large correlation matrices:
  Tools from random matrix theory},\ }\href
  {https://doi.org/https://doi.org/10.1016/j.physrep.2016.10.005} {\bibfield
  {journal} {\bibinfo  {journal} {Physics Reports}\ }\textbf {\bibinfo {volume}
  {666}},\ \bibinfo {pages} {1} (\bibinfo {year} {2017})}\BibitemShut {NoStop}%
\bibitem [{\citenamefont {Antenucci}\ \emph {et~al.}(2019)\citenamefont
  {Antenucci}, \citenamefont {Franz}, \citenamefont {Urbani},\ and\
  \citenamefont {Zdeborov{\'a}}}]{antenucci2019glassy}%
  \BibitemOpen
  \bibfield  {author} {\bibinfo {author} {\bibfnamefont {F.}~\bibnamefont
  {Antenucci}}, \bibinfo {author} {\bibfnamefont {S.}~\bibnamefont {Franz}},
  \bibinfo {author} {\bibfnamefont {P.}~\bibnamefont {Urbani}},\ and\ \bibinfo
  {author} {\bibfnamefont {L.}~\bibnamefont {Zdeborov{\'a}}},\ }\bibfield
  {title} {\bibinfo {title} {Glassy nature of the hard phase in inference
  problems},\ }\href@noop {} {\bibfield  {journal} {\bibinfo  {journal}
  {Physical Review X}\ }\textbf {\bibinfo {volume} {9}},\ \bibinfo {pages}
  {011020} (\bibinfo {year} {2019})}\BibitemShut {NoStop}%
\bibitem [{\citenamefont {Mezard}\ and\ \citenamefont
  {Montanari}(2009)}]{mezard2009information}%
  \BibitemOpen
  \bibfield  {author} {\bibinfo {author} {\bibfnamefont {M.}~\bibnamefont
  {Mezard}}\ and\ \bibinfo {author} {\bibfnamefont {A.}~\bibnamefont
  {Montanari}},\ }\href@noop {} {\emph {\bibinfo {title} {Information, physics
  and computation}}}\ (\bibinfo  {publisher} {Oxford University Press},\
  \bibinfo {year} {2009})\BibitemShut {NoStop}%
\bibitem [{\citenamefont {Anderson}\ \emph {et~al.}(2010)\citenamefont
  {Anderson}, \citenamefont {Guionnet},\ and\ \citenamefont
  {Zeitouni}}]{anderson2010introduction}%
  \BibitemOpen
  \bibfield  {author} {\bibinfo {author} {\bibfnamefont {G.~W.}\ \bibnamefont
  {Anderson}}, \bibinfo {author} {\bibfnamefont {A.}~\bibnamefont {Guionnet}},\
  and\ \bibinfo {author} {\bibfnamefont {O.}~\bibnamefont {Zeitouni}},\
  }\href@noop {} {\emph {\bibinfo {title} {An introduction to random
  matrices}}},\ \bibinfo {number} {118}\ (\bibinfo  {publisher} {Cambridge
  university press},\ \bibinfo {year} {2010})\BibitemShut {NoStop}%
\bibitem [{\citenamefont {Bai}\ and\ \citenamefont
  {Silverstein}(2010)}]{bai2010spectral}%
  \BibitemOpen
  \bibfield  {author} {\bibinfo {author} {\bibfnamefont {Z.}~\bibnamefont
  {Bai}}\ and\ \bibinfo {author} {\bibfnamefont {J.~W.}\ \bibnamefont
  {Silverstein}},\ }\href@noop {} {\emph {\bibinfo {title} {Spectral analysis
  of large dimensional random matrices}}},\ Vol.~\bibinfo {volume} {20}\
  (\bibinfo  {publisher} {Springer},\ \bibinfo {year} {2010})\BibitemShut
  {NoStop}%
\bibitem [{\citenamefont {Tao}\ and\ \citenamefont {Vu}(2010)}]{tao2010random}%
  \BibitemOpen
  \bibfield  {author} {\bibinfo {author} {\bibfnamefont {T.}~\bibnamefont
  {Tao}}\ and\ \bibinfo {author} {\bibfnamefont {V.}~\bibnamefont {Vu}},\
  }\bibfield  {title} {\bibinfo {title} {Random matrices: Universality of local
  eigenvalue statistics up to the edge},\ }\href@noop {} {\bibfield  {journal}
  {\bibinfo  {journal} {Communications in Mathematical Physics}\ }\textbf
  {\bibinfo {volume} {298}},\ \bibinfo {pages} {549} (\bibinfo {year}
  {2010})}\BibitemShut {NoStop}%
\bibitem [{\citenamefont {Erd{\H{o}}s}\ \emph {et~al.}(2010)\citenamefont
  {Erd{\H{o}}s}, \citenamefont {P{\'e}ch{\'e}}, \citenamefont {Ram{\'\i}rez},
  \citenamefont {Schlein},\ and\ \citenamefont {Yau}}]{erdHos2010bulk}%
  \BibitemOpen
  \bibfield  {author} {\bibinfo {author} {\bibfnamefont {L.}~\bibnamefont
  {Erd{\H{o}}s}}, \bibinfo {author} {\bibfnamefont {S.}~\bibnamefont
  {P{\'e}ch{\'e}}}, \bibinfo {author} {\bibfnamefont {J.~A.}\ \bibnamefont
  {Ram{\'\i}rez}}, \bibinfo {author} {\bibfnamefont {B.}~\bibnamefont
  {Schlein}},\ and\ \bibinfo {author} {\bibfnamefont {H.-T.}\ \bibnamefont
  {Yau}},\ }\bibfield  {title} {\bibinfo {title} {Bulk universality for wigner
  matrices},\ }\href@noop {} {\bibfield  {journal} {\bibinfo  {journal}
  {Communications on Pure and Applied Mathematics: A Journal Issued by the
  Courant Institute of Mathematical Sciences}\ }\textbf {\bibinfo {volume}
  {63}},\ \bibinfo {pages} {895} (\bibinfo {year} {2010})}\BibitemShut
  {NoStop}%
\bibitem [{\citenamefont {Erd{\H{o}}s}\ \emph {et~al.}(2011)\citenamefont
  {Erd{\H{o}}s}, \citenamefont {Schlein},\ and\ \citenamefont
  {Yau}}]{erdHos2011universality}%
  \BibitemOpen
  \bibfield  {author} {\bibinfo {author} {\bibfnamefont {L.}~\bibnamefont
  {Erd{\H{o}}s}}, \bibinfo {author} {\bibfnamefont {B.}~\bibnamefont
  {Schlein}},\ and\ \bibinfo {author} {\bibfnamefont {H.-T.}\ \bibnamefont
  {Yau}},\ }\bibfield  {title} {\bibinfo {title} {Universality of random
  matrices and local relaxation flow},\ }\href@noop {} {\bibfield  {journal}
  {\bibinfo  {journal} {Inventiones mathematicae}\ }\textbf {\bibinfo {volume}
  {185}},\ \bibinfo {pages} {75} (\bibinfo {year} {2011})}\BibitemShut
  {NoStop}%
\bibitem [{\citenamefont {Tao}\ and\ \citenamefont
  {Vu}(2012)}]{10.1214/11-AOP648}%
  \BibitemOpen
  \bibfield  {author} {\bibinfo {author} {\bibfnamefont {T.}~\bibnamefont
  {Tao}}\ and\ \bibinfo {author} {\bibfnamefont {V.}~\bibnamefont {Vu}},\
  }\bibfield  {title} {\bibinfo {title} {{Random covariance matrices:
  Universality of local statistics of eigenvalues}},\ }\href
  {https://doi.org/10.1214/11-AOP648} {\bibfield  {journal} {\bibinfo
  {journal} {The Annals of Probability}\ }\textbf {\bibinfo {volume} {40}},\
  \bibinfo {pages} {1285 } (\bibinfo {year} {2012})}\BibitemShut {NoStop}%
\bibitem [{\citenamefont {Ajanki}\ \emph {et~al.}(2017)\citenamefont {Ajanki},
  \citenamefont {Erd{\H{o}}s},\ and\ \citenamefont
  {Kr{\"u}ger}}]{ajanki2017universality}%
  \BibitemOpen
  \bibfield  {author} {\bibinfo {author} {\bibfnamefont {O.~H.}\ \bibnamefont
  {Ajanki}}, \bibinfo {author} {\bibfnamefont {L.}~\bibnamefont
  {Erd{\H{o}}s}},\ and\ \bibinfo {author} {\bibfnamefont {T.}~\bibnamefont
  {Kr{\"u}ger}},\ }\bibfield  {title} {\bibinfo {title} {Universality for
  general wigner-type matrices},\ }\href@noop {} {\bibfield  {journal}
  {\bibinfo  {journal} {Probability Theory and Related Fields}\ }\textbf
  {\bibinfo {volume} {169}},\ \bibinfo {pages} {667} (\bibinfo {year}
  {2017})}\BibitemShut {NoStop}%
\bibitem [{\citenamefont {Wang}\ \emph {et~al.}(2015)\citenamefont {Wang},
  \citenamefont {Machta},\ and\ \citenamefont
  {Katzgraber}}]{Wang15_EASpinGlassNumerics}%
  \BibitemOpen
  \bibfield  {author} {\bibinfo {author} {\bibfnamefont {W.}~\bibnamefont
  {Wang}}, \bibinfo {author} {\bibfnamefont {J.}~\bibnamefont {Machta}},\ and\
  \bibinfo {author} {\bibfnamefont {H.~G.}\ \bibnamefont {Katzgraber}},\
  }\bibfield  {title} {\bibinfo {title} {Comparing monte carlo methods for
  finding ground states of ising spin glasses: Population annealing, simulated
  annealing, and parallel tempering},\ }\href
  {https://doi.org/10.1103/PhysRevE.92.013303} {\bibfield  {journal} {\bibinfo
  {journal} {Phys. Rev. E}\ }\textbf {\bibinfo {volume} {92}},\ \bibinfo
  {pages} {013303} (\bibinfo {year} {2015})}\BibitemShut {NoStop}%
\bibitem [{\citenamefont {Bai}\ \emph {et~al.}(2007)\citenamefont {Bai},
  \citenamefont {Miao},\ and\ \citenamefont
  {Pan}}]{f8052592-465a-34f1-b9a1-f4c6eade3b19}%
  \BibitemOpen
  \bibfield  {author} {\bibinfo {author} {\bibfnamefont {Z.~D.}\ \bibnamefont
  {Bai}}, \bibinfo {author} {\bibfnamefont {B.~Q.}\ \bibnamefont {Miao}},\ and\
  \bibinfo {author} {\bibfnamefont {G.~M.}\ \bibnamefont {Pan}},\ }\bibfield
  {title} {\bibinfo {title} {On asymptotics of eigenvectors of large sample
  covariance matrix},\ }\href {http://www.jstor.org/stable/25450021} {\bibfield
   {journal} {\bibinfo  {journal} {The Annals of Probability}\ }\textbf
  {\bibinfo {volume} {35}},\ \bibinfo {pages} {1532} (\bibinfo {year}
  {2007})}\BibitemShut {NoStop}%
\bibitem [{\citenamefont {Erd{\H{o}}s}\ \emph {et~al.}(2009)\citenamefont
  {Erd{\H{o}}s}, \citenamefont {Schlein},\ and\ \citenamefont
  {Yau}}]{erdHos2009local}%
  \BibitemOpen
  \bibfield  {author} {\bibinfo {author} {\bibfnamefont {L.}~\bibnamefont
  {Erd{\H{o}}s}}, \bibinfo {author} {\bibfnamefont {B.}~\bibnamefont
  {Schlein}},\ and\ \bibinfo {author} {\bibfnamefont {H.-T.}\ \bibnamefont
  {Yau}},\ }\bibfield  {title} {\bibinfo {title} {Local semicircle law and
  complete delocalization for wigner random matrices},\ }\href@noop {}
  {\bibfield  {journal} {\bibinfo  {journal} {Communications in Mathematical
  Physics}\ }\textbf {\bibinfo {volume} {287}},\ \bibinfo {pages} {641}
  (\bibinfo {year} {2009})}\BibitemShut {NoStop}%
\bibitem [{\citenamefont {Benaych-Georges}(2011)}]{benaych2011eigenvectors}%
  \BibitemOpen
  \bibfield  {author} {\bibinfo {author} {\bibfnamefont {F.}~\bibnamefont
  {Benaych-Georges}},\ }\bibfield  {title} {\bibinfo {title} {Eigenvectors of
  wigner matrices: universality of global fluctuations},\ }\href@noop {}
  {\bibfield  {journal} {\bibinfo  {journal} {arXiv preprint arXiv:1104.1219}\
  } (\bibinfo {year} {2011})}\BibitemShut {NoStop}%
\bibitem [{\citenamefont {O'Rourke}\ \emph {et~al.}(2016)\citenamefont
  {O'Rourke}, \citenamefont {Vu},\ and\ \citenamefont
  {Wang}}]{o2016eigenvectors}%
  \BibitemOpen
  \bibfield  {author} {\bibinfo {author} {\bibfnamefont {S.}~\bibnamefont
  {O'Rourke}}, \bibinfo {author} {\bibfnamefont {V.}~\bibnamefont {Vu}},\ and\
  \bibinfo {author} {\bibfnamefont {K.}~\bibnamefont {Wang}},\ }\bibfield
  {title} {\bibinfo {title} {Eigenvectors of random matrices: a survey},\
  }\href@noop {} {\bibfield  {journal} {\bibinfo  {journal} {Journal of
  Combinatorial Theory, Series A}\ }\textbf {\bibinfo {volume} {144}},\
  \bibinfo {pages} {361} (\bibinfo {year} {2016})}\BibitemShut {NoStop}%
\bibitem [{\citenamefont {Barbier}\ \emph {et~al.}(2020)\citenamefont
  {Barbier}, \citenamefont {Macris}, \citenamefont {Dia},\ and\ \citenamefont
  {Krzakala}}]{barbier2020mutual}%
  \BibitemOpen
  \bibfield  {author} {\bibinfo {author} {\bibfnamefont {J.}~\bibnamefont
  {Barbier}}, \bibinfo {author} {\bibfnamefont {N.}~\bibnamefont {Macris}},
  \bibinfo {author} {\bibfnamefont {M.}~\bibnamefont {Dia}},\ and\ \bibinfo
  {author} {\bibfnamefont {F.}~\bibnamefont {Krzakala}},\ }\bibfield  {title}
  {\bibinfo {title} {Mutual information and optimality of approximate
  message-passing in random linear estimation},\ }\href@noop {} {\bibfield
  {journal} {\bibinfo  {journal} {IEEE Transactions on Information Theory}\
  }\textbf {\bibinfo {volume} {66}},\ \bibinfo {pages} {4270} (\bibinfo {year}
  {2020})}\BibitemShut {NoStop}%
\bibitem [{\citenamefont {Krzakala}\ \emph
  {et~al.}(2013{\natexlab{b}})\citenamefont {Krzakala}, \citenamefont
  {M{\'e}zard},\ and\ \citenamefont {Zdeborov{\'a}}}]{krzakala2013compressed}%
  \BibitemOpen
  \bibfield  {author} {\bibinfo {author} {\bibfnamefont {F.}~\bibnamefont
  {Krzakala}}, \bibinfo {author} {\bibfnamefont {M.}~\bibnamefont
  {M{\'e}zard}},\ and\ \bibinfo {author} {\bibfnamefont {L.}~\bibnamefont
  {Zdeborov{\'a}}},\ }\bibfield  {title} {\bibinfo {title} {Compressed sensing
  under matrix uncertainty: Optimum thresholds and robust approximate message
  passing},\ }in\ \href@noop {} {\emph {\bibinfo {booktitle} {2013 IEEE
  International Conference on Acoustics, Speech and Signal Processing}}}\
  (\bibinfo {organization} {IEEE},\ \bibinfo {year} {2013})\ pp.\ \bibinfo
  {pages} {5519--5523}\BibitemShut {NoStop}%
\bibitem [{\citenamefont {Talagrand}(2010)}]{Tala_vol1}%
  \BibitemOpen
  \bibfield  {author} {\bibinfo {author} {\bibfnamefont {M.}~\bibnamefont
  {Talagrand}},\ }\href@noop {} {\emph {\bibinfo {title} {Mean Field Models for
  Spin Glasses: Volume I: Basic Examples}}}\ (\bibinfo  {publisher}
  {Springer},\ \bibinfo {year} {2010})\BibitemShut {NoStop}%
\bibitem [{\citenamefont {Talagrand}(2011)}]{Tala_vol2}%
  \BibitemOpen
  \bibfield  {author} {\bibinfo {author} {\bibfnamefont {M.}~\bibnamefont
  {Talagrand}},\ }\href {https://doi.org/10.1007/978-3-642-22253-5} {\emph
  {\bibinfo {title} {Mean Field Models for Spin Glasses: Volume II: Advanced
  Replica-Symmetry and Low Temperature}}},\ Vol.~\bibinfo {volume} {55}\
  (\bibinfo  {publisher} {Springer},\ \bibinfo {year} {2011})\BibitemShut
  {NoStop}%
\bibitem [{\citenamefont {Parisi}(1979)}]{parisi1979infinite}%
  \BibitemOpen
  \bibfield  {author} {\bibinfo {author} {\bibfnamefont {G.}~\bibnamefont
  {Parisi}},\ }\bibfield  {title} {\bibinfo {title} {Infinite number of order
  parameters for spin-glasses},\ }\href@noop {} {\bibfield  {journal} {\bibinfo
   {journal} {Physical Review Letters}\ }\textbf {\bibinfo {volume} {43}},\
  \bibinfo {pages} {1754} (\bibinfo {year} {1979})}\BibitemShut {NoStop}%
\bibitem [{\citenamefont {Parisi}(1980)}]{parisi1980sequence}%
  \BibitemOpen
  \bibfield  {author} {\bibinfo {author} {\bibfnamefont {G.}~\bibnamefont
  {Parisi}},\ }\bibfield  {title} {\bibinfo {title} {A sequence of approximated
  solutions to the sk model for spin glasses},\ }\href@noop {} {\bibfield
  {journal} {\bibinfo  {journal} {Journal of Physics A: Mathematical and
  General}\ }\textbf {\bibinfo {volume} {13}},\ \bibinfo {pages} {L115}
  (\bibinfo {year} {1980})}\BibitemShut {NoStop}%
\bibitem [{\citenamefont {Sompolinsky}\ \emph {et~al.}(1990)\citenamefont
  {Sompolinsky}, \citenamefont {Tishby},\ and\ \citenamefont
  {Seung}}]{PhysRevLett.65.1683}%
  \BibitemOpen
  \bibfield  {author} {\bibinfo {author} {\bibfnamefont {H.}~\bibnamefont
  {Sompolinsky}}, \bibinfo {author} {\bibfnamefont {N.}~\bibnamefont
  {Tishby}},\ and\ \bibinfo {author} {\bibfnamefont {H.~S.}\ \bibnamefont
  {Seung}},\ }\bibfield  {title} {\bibinfo {title} {Learning from examples in
  large neural networks},\ }\href {https://doi.org/10.1103/PhysRevLett.65.1683}
  {\bibfield  {journal} {\bibinfo  {journal} {Phys. Rev. Lett.}\ }\textbf
  {\bibinfo {volume} {65}},\ \bibinfo {pages} {1683} (\bibinfo {year}
  {1990})}\BibitemShut {NoStop}%
\bibitem [{\citenamefont {Angelini}\ and\ \citenamefont
  {Ricci-Tersenghi}(2023)}]{angelini2023limits}%
  \BibitemOpen
  \bibfield  {author} {\bibinfo {author} {\bibfnamefont {M.~C.}\ \bibnamefont
  {Angelini}}\ and\ \bibinfo {author} {\bibfnamefont {F.}~\bibnamefont
  {Ricci-Tersenghi}},\ }\bibfield  {title} {\bibinfo {title} {Limits and
  performances of algorithms based on simulated annealing in solving sparse
  hard inference problems},\ }\href@noop {} {\bibfield  {journal} {\bibinfo
  {journal} {Physical Review X}\ }\textbf {\bibinfo {volume} {13}},\ \bibinfo
  {pages} {021011} (\bibinfo {year} {2023})}\BibitemShut {NoStop}%
\bibitem [{\citenamefont {Decelle}\ \emph {et~al.}(2011)\citenamefont
  {Decelle}, \citenamefont {Krzakala}, \citenamefont {Moore},\ and\
  \citenamefont {Zdeborov{\'a}}}]{decelle2011asymptotic}%
  \BibitemOpen
  \bibfield  {author} {\bibinfo {author} {\bibfnamefont {A.}~\bibnamefont
  {Decelle}}, \bibinfo {author} {\bibfnamefont {F.}~\bibnamefont {Krzakala}},
  \bibinfo {author} {\bibfnamefont {C.}~\bibnamefont {Moore}},\ and\ \bibinfo
  {author} {\bibfnamefont {L.}~\bibnamefont {Zdeborov{\'a}}},\ }\bibfield
  {title} {\bibinfo {title} {Asymptotic analysis of the stochastic block model
  for modular networks and its algorithmic applications},\ }\href@noop {}
  {\bibfield  {journal} {\bibinfo  {journal} {Physical Review E—Statistical,
  Nonlinear, and Soft Matter Physics}\ }\textbf {\bibinfo {volume} {84}},\
  \bibinfo {pages} {066106} (\bibinfo {year} {2011})}\BibitemShut {NoStop}%
\bibitem [{\citenamefont {Barbier}(2019)}]{barbier2019overlap}%
  \BibitemOpen
  \bibfield  {author} {\bibinfo {author} {\bibfnamefont {J.}~\bibnamefont
  {Barbier}},\ }\bibfield  {title} {\bibinfo {title} {Overlap matrix
  concentration in optimal {B}ayesian inference},\ }\href@noop {} {\bibfield
  {journal} {\bibinfo  {journal} {Information and Inference: A Journal of the
  IMA}\ } (\bibinfo {year} {2019})}\BibitemShut {NoStop}%
\bibitem [{\citenamefont {Barbier}\ and\ \citenamefont
  {Panchenko}(2022)}]{barbier2022strong}%
  \BibitemOpen
  \bibfield  {author} {\bibinfo {author} {\bibfnamefont {J.}~\bibnamefont
  {Barbier}}\ and\ \bibinfo {author} {\bibfnamefont {D.}~\bibnamefont
  {Panchenko}},\ }\bibfield  {title} {\bibinfo {title} {Strong replica symmetry
  in high-dimensional optimal bayesian inference},\ }\href@noop {} {\bibfield
  {journal} {\bibinfo  {journal} {Communications in mathematical physics}\
  }\textbf {\bibinfo {volume} {393}},\ \bibinfo {pages} {1199} (\bibinfo {year}
  {2022})}\BibitemShut {NoStop}%
\bibitem [{\citenamefont {Embrechts}\ \emph {et~al.}(2013)\citenamefont
  {Embrechts}, \citenamefont {Kl{\"u}ppelberg},\ and\ \citenamefont
  {Mikosch}}]{Embrechts2013ExtremeValueTheoryFinanceInsurance}%
  \BibitemOpen
  \bibfield  {author} {\bibinfo {author} {\bibfnamefont {P.}~\bibnamefont
  {Embrechts}}, \bibinfo {author} {\bibfnamefont {C.}~\bibnamefont
  {Kl{\"u}ppelberg}},\ and\ \bibinfo {author} {\bibfnamefont {T.}~\bibnamefont
  {Mikosch}},\ }\href@noop {} {\emph {\bibinfo {title} {Modelling extremal
  events: for insurance and finance}}},\ Vol.~\bibinfo {volume} {33}\ (\bibinfo
   {publisher} {Springer Science \& Business Media},\ \bibinfo {year}
  {2013})\BibitemShut {NoStop}%
\bibitem [{\citenamefont {Semerjian}(2025)}]{semerjian2025some}%
  \BibitemOpen
  \bibfield  {author} {\bibinfo {author} {\bibfnamefont {G.}~\bibnamefont
  {Semerjian}},\ }\bibfield  {title} {\bibinfo {title} {Some observations on
  the ambivalent role of symmetries in bayesian inference problems},\
  }\href@noop {} {\bibfield  {journal} {\bibinfo  {journal} {Comptes Rendus.
  Physique}\ }\textbf {\bibinfo {volume} {26}},\ \bibinfo {pages} {199}
  (\bibinfo {year} {2025})}\BibitemShut {NoStop}%
\bibitem [{\citenamefont {Agliari}\ \emph {et~al.}(2024)\citenamefont
  {Agliari}, \citenamefont {Alessandrelli}, \citenamefont {Barra},
  \citenamefont {Centonze},\ and\ \citenamefont
  {Ricci-Tersenghi}}]{agliari2024generalized}%
  \BibitemOpen
  \bibfield  {author} {\bibinfo {author} {\bibfnamefont {E.}~\bibnamefont
  {Agliari}}, \bibinfo {author} {\bibfnamefont {A.}~\bibnamefont
  {Alessandrelli}}, \bibinfo {author} {\bibfnamefont {A.}~\bibnamefont
  {Barra}}, \bibinfo {author} {\bibfnamefont {M.~S.}\ \bibnamefont
  {Centonze}},\ and\ \bibinfo {author} {\bibfnamefont {F.}~\bibnamefont
  {Ricci-Tersenghi}},\ }\bibfield  {title} {\bibinfo {title} {Generalized
  hetero-associative neural networks},\ }\href@noop {} {\bibfield  {journal}
  {\bibinfo  {journal} {arXiv preprint arXiv:2409.08151}\ } (\bibinfo {year}
  {2024})}\BibitemShut {NoStop}%
\bibitem [{\citenamefont {Anderson}(1958)}]{anderson1958absence}%
  \BibitemOpen
  \bibfield  {author} {\bibinfo {author} {\bibfnamefont {P.~W.}\ \bibnamefont
  {Anderson}},\ }\bibfield  {title} {\bibinfo {title} {Absence of diffusion in
  certain random lattices},\ }\href@noop {} {\bibfield  {journal} {\bibinfo
  {journal} {Physical review}\ }\textbf {\bibinfo {volume} {109}},\ \bibinfo
  {pages} {1492} (\bibinfo {year} {1958})}\BibitemShut {NoStop}%
\bibitem [{\citenamefont {Parker}\ \emph {et~al.}(2014)\citenamefont {Parker},
  \citenamefont {Schniter},\ and\ \citenamefont {Cevher}}]{PSCvI}%
  \BibitemOpen
  \bibfield  {author} {\bibinfo {author} {\bibfnamefont {J.~T.}\ \bibnamefont
  {Parker}}, \bibinfo {author} {\bibfnamefont {P.}~\bibnamefont {Schniter}},\
  and\ \bibinfo {author} {\bibfnamefont {V.}~\bibnamefont {Cevher}},\
  }\bibfield  {title} {\bibinfo {title} {Bilinear generalized approximate
  message passing - part i: Derivation},\ }\href@noop {} {\bibfield  {journal}
  {\bibinfo  {journal} {IEEE Transactions on Signal Processing}\ }\textbf
  {\bibinfo {volume} {62}},\ \bibinfo {pages} {5839} (\bibinfo {year}
  {2014})}\BibitemShut {NoStop}%
\bibitem [{\citenamefont {Yuan}\ \emph {et~al.}(2024)\citenamefont {Yuan},
  \citenamefont {Guo}, \citenamefont {Eldar},\ and\ \citenamefont
  {Li}}]{yuan2024unitary}%
  \BibitemOpen
  \bibfield  {author} {\bibinfo {author} {\bibfnamefont {Z.}~\bibnamefont
  {Yuan}}, \bibinfo {author} {\bibfnamefont {Q.}~\bibnamefont {Guo}}, \bibinfo
  {author} {\bibfnamefont {Y.~C.}\ \bibnamefont {Eldar}},\ and\ \bibinfo
  {author} {\bibfnamefont {Y.}~\bibnamefont {Li}},\ }\bibfield  {title}
  {\bibinfo {title} {Unitary approximate message passing for matrix
  factorization},\ }in\ \href@noop {} {\emph {\bibinfo {booktitle} {ICASSP
  2024-2024 IEEE International Conference on Acoustics, Speech and Signal
  Processing (ICASSP)}}}\ (\bibinfo {organization} {IEEE},\ \bibinfo {year}
  {2024})\ pp.\ \bibinfo {pages} {9576--9580}\BibitemShut {NoStop}%
\bibitem [{\citenamefont {Ogata}(1990)}]{ogata1990monte}%
  \BibitemOpen
  \bibfield  {author} {\bibinfo {author} {\bibfnamefont {Y.}~\bibnamefont
  {Ogata}},\ }\bibfield  {title} {\bibinfo {title} {A monte carlo method for an
  objective bayesian procedure},\ }\href@noop {} {\bibfield  {journal}
  {\bibinfo  {journal} {Annals of the Institute of statistical Mathematics}\
  }\textbf {\bibinfo {volume} {42}},\ \bibinfo {pages} {403} (\bibinfo {year}
  {1990})}\BibitemShut {NoStop}%
\bibitem [{\citenamefont {Jerrum}\ and\ \citenamefont
  {Sinclair}(1993)}]{Jerrum93PolyIsing}%
  \BibitemOpen
  \bibfield  {author} {\bibinfo {author} {\bibfnamefont {M.}~\bibnamefont
  {Jerrum}}\ and\ \bibinfo {author} {\bibfnamefont {A.}~\bibnamefont
  {Sinclair}},\ }\bibfield  {title} {\bibinfo {title} {Polynomial-time
  approximation algorithms for the ising model},\ }\href
  {https://doi.org/10.1137/0222066} {\bibfield  {journal} {\bibinfo  {journal}
  {SIAM Journal on Computing}\ }\textbf {\bibinfo {volume} {22}},\ \bibinfo
  {pages} {1087} (\bibinfo {year} {1993})}\BibitemShut {NoStop}%
\bibitem [{\citenamefont {Syed}\ \emph {et~al.}(2022)\citenamefont {Syed},
  \citenamefont {Bouchard-C{\^o}t{\'e}}, \citenamefont {Deligiannidis},\ and\
  \citenamefont {Doucet}}]{Syed22nonrev}%
  \BibitemOpen
  \bibfield  {author} {\bibinfo {author} {\bibfnamefont {S.}~\bibnamefont
  {Syed}}, \bibinfo {author} {\bibfnamefont {A.}~\bibnamefont
  {Bouchard-C{\^o}t{\'e}}}, \bibinfo {author} {\bibfnamefont {G.}~\bibnamefont
  {Deligiannidis}},\ and\ \bibinfo {author} {\bibfnamefont {A.}~\bibnamefont
  {Doucet}},\ }\bibfield  {title} {\bibinfo {title} {Non-reversible parallel
  tempering: a scalable highly parallel mcmc scheme},\ }\href@noop {}
  {\bibfield  {journal} {\bibinfo  {journal} {Journal of the Royal Statistical
  Society Series B: Statistical Methodology}\ }\textbf {\bibinfo {volume}
  {84}},\ \bibinfo {pages} {321} (\bibinfo {year} {2022})}\BibitemShut
  {NoStop}%
\bibitem [{\citenamefont {Machta}(2009)}]{Machta09}%
  \BibitemOpen
  \bibfield  {author} {\bibinfo {author} {\bibfnamefont {J.}~\bibnamefont
  {Machta}},\ }\bibfield  {title} {\bibinfo {title} {Strengths and weaknesses
  of parallel tempering},\ }\href {https://doi.org/10.1103/PhysRevE.80.056706}
  {\bibfield  {journal} {\bibinfo  {journal} {Phys. Rev. E}\ }\textbf {\bibinfo
  {volume} {80}},\ \bibinfo {pages} {056706} (\bibinfo {year}
  {2009})}\BibitemShut {NoStop}%
\bibitem [{\citenamefont {Krzakala}\ \emph {et~al.}(2012)\citenamefont
  {Krzakala}, \citenamefont {M{\'e}zard}, \citenamefont {Sausset},
  \citenamefont {Sun},\ and\ \citenamefont
  {Zdeborov{\'a}}}]{krzakala2012probabilistic}%
  \BibitemOpen
  \bibfield  {author} {\bibinfo {author} {\bibfnamefont {F.}~\bibnamefont
  {Krzakala}}, \bibinfo {author} {\bibfnamefont {M.}~\bibnamefont
  {M{\'e}zard}}, \bibinfo {author} {\bibfnamefont {F.}~\bibnamefont {Sausset}},
  \bibinfo {author} {\bibfnamefont {Y.}~\bibnamefont {Sun}},\ and\ \bibinfo
  {author} {\bibfnamefont {L.}~\bibnamefont {Zdeborov{\'a}}},\ }\bibfield
  {title} {\bibinfo {title} {Probabilistic reconstruction in compressed
  sensing: algorithms, phase diagrams, and threshold achieving matrices},\
  }\href@noop {} {\bibfield  {journal} {\bibinfo  {journal} {Journal of
  Statistical Mechanics: Theory and Experiment}\ }\textbf {\bibinfo {volume}
  {2012}},\ \bibinfo {pages} {P08009} (\bibinfo {year} {2012})}\BibitemShut
  {NoStop}%
\bibitem [{\citenamefont {Barbier}\ and\ \citenamefont
  {Krzakala}(2017)}]{barbier2017approximate}%
  \BibitemOpen
  \bibfield  {author} {\bibinfo {author} {\bibfnamefont {J.}~\bibnamefont
  {Barbier}}\ and\ \bibinfo {author} {\bibfnamefont {F.}~\bibnamefont
  {Krzakala}},\ }\bibfield  {title} {\bibinfo {title} {Approximate
  message-passing decoder and capacity achieving sparse superposition codes},\
  }\href@noop {} {\bibfield  {journal} {\bibinfo  {journal} {IEEE Transactions
  on Information Theory}\ }\textbf {\bibinfo {volume} {63}},\ \bibinfo {pages}
  {4894} (\bibinfo {year} {2017})}\BibitemShut {NoStop}%
\end{thebibliography}%

\appendix

\onecolumngrid
\section{Irrelevance of the diagonal of the data}\label{appendix:diagonals}
\begin{proposition}[Information irrelevance of the data diagonal components]
    The diagonal part of the data $(Y_{ii})_{i\leq N}$ does not contribute to the MI in the high dimensional limit. Specifically, the inference problem
    \begin{align}\label{eq:model-wout-diagonal}
        \tilde Y_{ij}=\sqrt{\frac{\lambda}{N}}\sum_{\mu=1}^MX_{i\mu} X_{j\mu}+\tilde Z_{ij},\quad \tilde Z_{ij}\iid\mathcal{N}(0,1),\quad 1\leq i<j\leq N
    \end{align}has the same asymptotic mutual information density between data and signal as \eqref{eq:channel0}.
\end{proposition}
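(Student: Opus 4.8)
The plan is to bound the difference of mutual informations by the mutual information carried by the diagonal observations alone, then show this quantity is $o(N^2)$ so that it vanishes after dividing by $MN = \Theta(N^2)$. Concretely, write $\bY = (\bY_{\rm off}, \bY_{\rm diag})$ where $\bY_{\rm off} = (Y_{ij})_{i<j}$ and $\bY_{\rm diag} = (Y_{ii})_{i\le N}$, and note that $\bY_{\rm off}$ has the same law as $\tilde\bY$ in \eqref{eq:model-wout-diagonal} up to the harmless rescaling of the off-diagonal noise variance from $2$ to $1$ (which only shifts $\lambda$ by a constant factor and does not affect the qualitative statement; alternatively one keeps variance $2$ throughout). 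By the chain rule for mutual information,
\begin{align*}
I(\bX\bX^\intercal;\bY) = I(\bX\bX^\intercal;\bY_{\rm off}) + I(\bX\bX^\intercal;\bY_{\rm diag}\mid \bY_{\rm off}),
\end{align*}
so it suffices to prove $I(\bX\bX^\intercal;\bY_{\rm diag}\mid \bY_{\rm off}) = o(N^2)$. Since conditioning can only decrease a conditional mutual information in this Gaussian-channel setting after one accounts for the tensorised structure, the cleanest route is instead to bound $I(\bX\bX^\intercal;\bY_{\rm diag}) \ge I(\bX\bX^\intercal;\bY_{\rm diag}\mid \bY_{\rm off})$ — this inequality holds because $\bY_{\rm diag}$ and $\bY_{\rm off}$ are conditionally independent given $\bX$, hence $I(\bY_{\rm diag};\bY_{\rm off}\mid \bX\bX^\intercal)=0$ and the chain rule gives $I(\bX\bX^\intercal;\bY_{\rm diag}\mid \bY_{\rm off}) = I(\bX\bX^\intercal;\bY_{\rm diag}) - I(\bX\bX^\intercal;\bY_{\rm off};\bY_{\rm diag}) \le I(\bX\bX^\intercal;\bY_{\rm diag})$ whenever the interaction information is nonnegative; a fully rigorous version avoids signed interaction information by using that $\bY_{\rm diag}\to\bX\bX^\intercal\to\bY_{\rm off}$ fails to be Markov but $\bY_{\rm off}\to\bX\to\bY_{\rm diag}$ does hold, so $I(\bX;\bY_{\rm diag}\mid\bY_{\rm off})\le I(\bX;\bY_{\rm diag})$ by the data-processing-type identity $I(\bX;\bY_{\rm diag}\mid\bY_{\rm off}) = I(\bX;\bY_{\rm diag}) - I(\bY_{\rm off};\bY_{\rm diag})$ combined with $I(\bY_{\rm off};\bY_{\rm diag})\ge 0$. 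I would pick whichever of these phrasings the referee finds least fussy; the upshot is the bound by the unconditional $I(\bX\bX^\intercal;\bY_{\rm diag})$.

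Now estimate $I(\bX\bX^\intercal;\bY_{\rm diag})\le I(\bX;\bY_{\rm diag})$ directly. The diagonal channel is $Y_{ii} = \sqrt{\lambda/N}\,\|\bX_i\|^2 + Z_{ii}$ with independent Gaussian noise across $i$, so $I(\bX;\bY_{\rm diag}) = \sum_{i=1}^N I(\bX_i; Y_{ii}) = N\, I(\bX_1;Y_{11})$ by i.i.d.-ness of the rows. It remains to show $I(\bX_1;Y_{11}) = o(N)$. For this I would use that $\|\bX_1\|^2 = \sum_{\mu\le M} X_{1\mu}^2$ is a sum of $M=\Theta(N)$ i.i.d. sub-Gaussian variables of mean $1$, hence concentrates: $\|\bX_1\|^2/N \to \alpha$ with Gaussian-type fluctuations of order $\sqrt{M}=\Theta(\sqrt N)$. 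Thus $Y_{11}$ equals a deterministic-plus-fluctuation term $\sqrt{\lambda/N}\cdot(N\alpha + O(\sqrt N))$ plus $O(1)$ noise, i.e. the signal fluctuation $\sqrt{\lambda/N}\,(\|\bX_1\|^2 - \E\|\bX_1\|^2)$ has variance $\Theta(1)$. A clean quantitative bound: $I(\bX_1;Y_{11}) \le I(\|\bX_1\|^2; Y_{11})\le \tfrac12\ln\big(1 + \tfrac{\lambda}{N}\mathrm{Var}(\|\bX_1\|^2)\big) = \tfrac12\ln(1 + \lambda\,\mathrm{Var}(X_{1\mu}^2)\cdot M/N) = O(1)$, using the Gaussian-input upper bound on mutual information through an additive-Gaussian-noise channel with fixed input second moment. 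Therefore $I(\bX;\bY_{\rm diag}) = N\cdot O(1) = O(N) = o(N^2) = o(MN)$, and dividing by $MN$ kills it.

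Putting the pieces together: $\big|\, I(\bX\bX^\intercal;\bY) - I(\bX\bX^\intercal;\bY_{\rm off})\,\big| \le I(\bX;\bY_{\rm diag}) = O(N)$, so the two MI densities coincide in the limit, and $I(\bX\bX^\intercal;\bY_{\rm off})/(MN)$ is exactly the MI density of model \eqref{eq:model-wout-diagonal} (modulo the noise-variance normalisation). The main obstacle — really the only nontrivial point — is making the ``$I(\bX_1;Y_{11}) = O(1)$'' step rigorous under merely sub-Gaussian $P_X$: one needs a genuine variance (hence moment) bound on $\|\bX_1\|^2$, which is immediate since sub-Gaussian implies all moments finite, and then the Gaussian-channel mutual-information upper bound $I(S;\sqrt{c}\,S+Z)\le\frac12\ln(1+c\,\mathrm{Var}(S))$ (valid for any real input $S$ with finite variance, $Z\sim\mathcal N(0,1)$ independent, which follows from $I(S;Y)=h(Y)-h(Y\mid S)\le \frac12\ln(2\pi e\,\mathrm{Var}(Y)) - \frac12\ln(2\pi e)$) closes it. The secondary subtlety is the conditioning step; I would present it via the identity $I(\bX;\bY_{\rm diag}\mid\bY_{\rm off}) = I(\bX;\bY_{\rm off},\bY_{\rm diag}) - I(\bX;\bY_{\rm off})$ and the fact that $I(\bX;\bY_{\rm off},\bY_{\rm diag}) \le I(\bX;\bY_{\rm off}) + I(\bX;\bY_{\rm diag})$ is \emph{not} generally true, so one should instead bound $I(\bX;\bY_{\rm off},\bY_{\rm diag}) - I(\bX;\bY_{\rm off}) = \E_{\bY_{\rm off}} I(\bX;\bY_{\rm diag}\mid \bY_{\rm off}=\cdot)$ and observe that for each fixed $\bY_{\rm off}$ the diagonal channel still factorises over $i$ with the same Gaussian form, so the same $O(N)$ bound applies uniformly. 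This uniform-in-$\bY_{\rm off}$ estimate is the part I would write out most carefully.
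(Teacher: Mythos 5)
Your proof is correct, but it takes a genuinely different route from the paper's. The paper proceeds by interpolation: it introduces a parameter $t\in[0,1]$ scaling the SNR of the diagonal observations only, computes the $t$-derivative of the interpolating MI density, performs Gaussian integration by parts on the diagonal noise, and invokes the Nishimori identity to cancel the leading terms, leaving $O(N^{-1})$ uniformly in $t$. Your argument instead stays entirely inside textbook information theory: chain rule to isolate $I(\bX;\bY_{\rm diag}\mid\bY_{\rm off})$, the conditional-independence identity $I(\bX;\bY_{\rm diag}\mid\bY_{\rm off}) = I(\bX;\bY_{\rm diag}) - I(\bY_{\rm off};\bY_{\rm diag}) \le I(\bX;\bY_{\rm diag})$ (valid because $\bY_{\rm off}\to\bX\to\bY_{\rm diag}$ is Markov), then row-wise factorisation and the Gaussian channel capacity bound $I(S;\sqrt{c}\,S+Z)\le\tfrac12\ln(1+c\,\mathrm{Var}(S))$. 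Both give the same quantitative rate $O(N^{-1})$ for the MI-density difference, including in the vanishing-$\alpha$ regime mentioned in the paper's remark (in that case your bound is $O(\alpha N)=O(M)$, still $o(MN)$). Your route is more elementary in that it needs no Gaussian integration by parts and no Nishimori identity; the paper's route is more in the spin-glass idiom and extends more naturally to other perturbations of the Hamiltonian.

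Two small corrections to your write-up. First, your closing worry about the conditioning step is unfounded: the identity you already used, $I(\bX;\bY_{\rm diag}\mid\bY_{\rm off}) = I(\bX;\bY_{\rm diag}) - I(\bY_{\rm off};\bY_{\rm diag})$, is exact here precisely because $I(\bY_{\rm off};\bY_{\rm diag}\mid\bX)=0$, so the bound by the unconditional MI is settled; the alternative uniform-in-$\bY_{\rm off}$ route you sketch is actually more delicate, since conditioning on $\bY_{\rm off}$ correlates the rows of $\bX$ and the row-wise decomposition of $I(\bX;\bY_{\rm diag}\mid\bY_{\rm off}=\cdot)$ then only holds as a subadditivity inequality for product channels. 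Second, no noise-variance rescaling is needed: in the paper's GOE convention $Z_{ij}\sim\mathcal{N}(0,1+\delta_{ij})$ the off-diagonal entries already have variance $1$, matching \eqref{eq:model-wout-diagonal} exactly; only the diagonal has variance $2$, which affects none of your estimates beyond a constant.
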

\begin{proof}
    The proof is based on standard interpolation. The interpolating inference problem is
    \begin{align*}
        Y^t_{ij}=\begin{cases}
            \tilde Y_{ij} &i<j\\
            \sqrt{\frac{t\lambda}{N}}\sum_{\mu=1}^MX_{i\mu}^2 +Z_{ii} &i=j
        \end{cases}
    \end{align*}
    where $Z_{ii}\iid\mathcal{N}(0,2)$.
    The MI density for this problem is
    \begin{align*}
        \frac{1}{MN}I(\bY^t;\bX)&=\frac{\lambda}{4MN^2}\sum_{i\neq j}\sum_{\mu,\nu=1}^M\EE X_{i\mu}X_{j\mu}X_{i\nu}X_{j\nu}+\frac{t\lambda}{4MN^2}\sum_{i=1}^N\sum_{\mu,\nu=1}^M\EE X_{i\mu}^2X^2_{i\nu}\nonumber\\
        &\quad -\frac{1}{MN}\EE\ln \int dP_X(\bx)\exp\sum_{i<j}\Big[\sqrt{\frac{\lambda}{N}}
        \tilde Y_{ij}\bx_i\cdot\bx_j-\frac{\lambda}{2N}(\bx_i\cdot\bx_j)^2
        \Big]\nonumber\\
        &\quad \times\exp\frac{1}{2}\sum_{i=1}^N\Big[\sqrt{\frac{t\lambda}{N}}
         Y^t_{ii}\|\bx_i\|^2-\frac{t\lambda}{2N}\|\bx_i\|^4\Big]  \nonumber  \\       
         &=\frac{\lambda(1+
    \alpha t)}{4}-\frac{1}{MN}\EE\ln \int dP_X(\bx)\exp\sum_{i<j}\Big[\sqrt{\frac{\lambda}{N}}
    \tilde Y_{ij}\bx_i\cdot\bx_j-\frac{\lambda}{2N}(\bx_i\cdot\bx_j)^2
    \Big]\nonumber\\
    &\quad \times\exp\frac{1}{2}\sum_{i=1}^N\Big[\sqrt{\frac{t\lambda}{N}}
     Y^t_{ii}\|\bx_i\|^2-\frac{t\lambda}{2N}\|\bx_i\|^4\Big]+O(N^{-1}).
    \end{align*}
    The last error term is uniform in $t\in[0,1]$.
    The $t$-derivative of this interpolating MI thus reads
    \begin{align*}
    \frac{d}{dt}\frac{1}{MN}I(\bX;\bY^t)&=\frac{\lambda\alpha}{4}-\frac{1}{MN}\sum_{i=1}^N\EE\Big\langle
    \frac{1}{4}\sqrt{\frac{\lambda}{tN}}Z_{ii}\|\bx_i\|^2+\lambda \frac{\|\bX_i\|^2\|\bx_i\|^2}{2N}-\frac{\lambda}{4N}\|\bx_i\|^4
    \Big\rangle_t+O(N^{-1})
\end{align*}Now we can integrate the $Z_{ii}$'s by parts, taking into account that their variance is $2$:
\begin{align*}
    \frac{d}{dt}\frac{1}{MN}I(\bX;\bY^t)&=\frac{\lambda\alpha}{4}-\frac{1}{MN}\sum_{i=1}^N\EE\Big\langle
    \frac{\lambda}{2N}\|\bx_i^{(1)}\|^4-\frac{\lambda}{2N}\|\bx_i^{(1)}\|^2\|\bx^{(2)}_i\|^2+\lambda \frac{\|\bX_i\|^2\|\bx_i^{(1)}\|^2}{2N}-\frac{\lambda}{4N}\|\bx_i^{(1)}\|^4
    \Big\rangle_t +O(N^{-1})\nonumber\\
    &=\frac{\lambda\alpha}{4}-\frac{\lambda}{4MN^2}\sum_{i=1}^N\EE\|\bX\|^4+O(N^{-1})=O(N^{-1}),
\end{align*}
where in the last step we also used the Nishimori identity
\begin{align*}
    \EE\langle\|\bx^{(1)}_i\|^2\|\bx^{(2)}_i\|^2\rangle_t=\EE\langle\|\bX_i\|^2\|\bx^{(1)}_i\|^2\rangle_t.
\end{align*} and superscripts $\bx^{(k)}$ denote replica indices, namely conditionally (on $\bY$) independent samples from the posterior measure. The error term is again uniform in $t$ so the proof is complete by noticing that $\bY^{t=1}$ are the observations in \eqref{eq:channel0} and $\bY^{t=0}$ is \eqref{eq:model-wout-diagonal}.
\end{proof}

The above proposition holds also if $\alpha$ is vanishing in the high-dimensional limit, i.e. for low rank settings.

\section{Details of the Monte Carlo procedure}\label{app:MCMC}

In this appendix, we elaborate on the numerical procedures necessary to evaluate the MI and MMSE for finite size systems. The target distribution is the posterior measure given by 
\begin{equation*}
    P_\lambda(\bx ; \bX, \bZ ) = \frac{P_X(\bx) e^{-H_\lambda(\bx; \bX, \bZ )}}{\mathcal{Z}_\lambda (\bX, \bZ)} , \quad -H_\lambda(\bx ; \bX, \bZ) = \sqrt{\frac{\lambda}{N}}\sum_{i<j,1}^N( \bX_i \cdot \bX_j + Z_{ij} )\bx_i\cdot\bx_j-\frac{\lambda}{2N}\sum_{i<j,1}^N(\bx_i\cdot\bx_j)^2,
\end{equation*}
where $\mathcal{Z}_\lambda$ is the partition function for SNR $\lambda$. Here, all expressions are explicitly rewritten using the quenched random variables $\bX, \bZ$, and the subscript for $N,M$ are dropped for brevity.

Evaluating the MI numerically requires one to calculate $\ln  \mathcal{Z}_\lambda (\bX, \bZ)$ with high precision, which is difficult to perform with standard MCMC (Markov Chain Monte Carlo) samplers. Therefore, we use bridge sampling \cite{ogata1990monte, Jerrum93PolyIsing}, more popularly known as \textit{Annealed Importance Sampling} \cite{neal2001annealed}. Given a strictly increasing sequence of SNRs $0 =\lambda_0 < \lambda_1 < \cdots < \lambda_R$, the logarithm of the partition function at $\lambda_R$ can alternatively be expressed as a telescopic sum:
\begin{equation}\label{eq:telescope}
    \ln  \mathcal{Z}_{\lambda_R}(\bX, \bZ) = \ln  \prod_{r = 0}^{R-1} \frac{\mathcal{Z}_{\lambda_{r+1}}(\bX, \bZ)}{\mathcal{Z}_{\lambda_r}(\bX, \bZ)} = \sum_{r = 0}^{R-1} \ln  \Big\langle  \exp \Big[ H_{\lambda_r}(\bx ; \bX , \bZ) - H_{\lambda_{r+1}}(\bx; \bX, \bZ) \Big] \Big\rangle_{\lambda_{r}}, 
\end{equation}
where $\langle\, \cdot \,\rangle_{\lambda}$ is the expectation with respect to $P_{\lambda}(\bx ; \bX, \bZ)$. By taking $\lambda_{r+1}$ and $\lambda_r$ sufficiently close, and consequently taking $ H_{\lambda_r}(\bx ; \bX , \bZ) - H_{\lambda_{r+1}}(\bx; \bX, \bZ)$ close to 0, the value in the brackets would yield low variance, allowing one to  estimate its mean with high precision using samples generated from $P_{\lambda}(\bx ; \bX, \bZ)$. 
To sample from a sequence of posterior measures with SNRs $\{\lambda_{r}\}_{r=0}^{R}$ in an efficient manner, we employ parallel tempering, or the replica exchange Monte Carlo method \cite{Hukushima96REMC, Marinari92SimulatedTempering}. 
This approach involves simultaneous execution of ``local'' MCMC samplers specifically simulating each of the $R+1$ systems. In addition, configurations between adjacent SNRs are occasionally exchanged while satisfying the detailed balance condition for fast mixing time. 
More explicitly, given an adjacent pair of SNRs $\lambda$ and  $\lambda^\prime$, with its current configuration of their corresponding sampler being given by $\bx$ and $\bx^\prime$, respectively, the exchange probability of the two configurations can be given using Metropolis' method: 
\begin{equation*}
    \min \Big\{ 1, \exp\Big[ H_{\lambda^\prime} (\bx^\prime; \bX, \bZ) - H_{\lambda^\prime}(\bx; \bX, \bZ) + H_\lambda (\bx; \bX, \bZ) - H_\lambda (\bx^\prime; \bX, \bZ) \Big] \Big\}.
\end{equation*}
In our experiments, we use a single spin-flip Metropolis-Hastings algorithm as the local MCMC sampler for each SNR. 
A single MCMC step is defined as $30NM$ local spin flip moves on all $R+1$ systems, followed by a replica exchange move between all adjacent pairs of SNR. 
Before each replica exchange move, the configuration of $\bx \bx^\intercal$ of each MCMC chain is recorded. After the Monte Carlo procedure has finished, the thermal average appearing in the expression for the log-partition function \eqref{eq:telescope} needed for the MI  and in the MMSE \eqref{MMSEuseful} is obtained by the empirical average over these samples. 
This whole Monte Carlo procedure is executed for several instances of $(\bX, \bZ)$ to then evaluate the average over the quenched randomness. 

\begin{table}[t]
\begin{minipage}[t]{.65\textwidth}
    \centering
    \begin{tabular}{|c||c| c | c | c | c | c | }
    \hline
    \begin{tabular}{@{}c@{}}Experiment \\ No. \end{tabular} & $(N,M)$ & $\quad \alpha \quad $ & \begin{tabular}{@{}c@{}}burnin \\ MCMC steps \end{tabular}& \begin{tabular}{@{}c@{}}sampling  \\ MCMC steps \end{tabular} & $R+1$ & \begin{tabular}{@{}c@{}}No. of random  \\ instances\end{tabular}
       \\ \hline \hline
        (1) &  (10, 5) & 0.5 & 2,000 & 8,000 & 72 & 128 \\
        (2) & (20, 10) & 0.5 & 10,000 & 40,000 & 135 & 128 \\
        (3) & (40, 20) & 0.5 & 10,000& 40,000 & 261 & 32 \\
        \hline 
        (4)$^\star$ & (20, 10) & 0.5& 10,000 & 40,000 & 29 & 128 \\
        (5)$^\star$ & (40, 20) &0.5& 10,000 & 40,000 & 29 & 128 \\
        (6)$^\star$ & (60, 30) & 0.5& 40,000 & 160,000 & 29 &   128 \\
        (7)$^\star$ & (40, 28) & 0.7 & 10,000 & 40,000 & 29 & 128 \\
        \hline
        (8) & (60, 30) & 0.5 & 500,000 & 500,000 & 313 & 1 \\
        \hline
        (9) & (60, 18) & 0.3 & 40,000 & 160,000& 256 & 32 \\
        (10) & (40, 28) & 0.7 & 40,000 & 160,000 & 261 & 32 \\
        \hline
    \end{tabular}
    \caption{Specific parameters used to run each Monte Carlo experiment. Experiments (1)-(3) are conducted for both the Rademacher prior case and the non-symmetric prior case ($P_X = \frac14\delta_{\sqrt{3}} + \frac34\delta_{-1/\sqrt{3}}$) with the same parameters. The rest of the experiments are conducted only for the Rademacher prior case.}
    \label{tab:exp_data}
    \end{minipage}%
    \begin{minipage}[t]{.3\textwidth}
    \centering
        \begin{tabular}{|c||c |}
        \hline
    \begin{tabular}{@{}c@{}}Figure \\ reference \end{tabular} & \begin{tabular}{@{}c@{}}Experiment \\ No. \end{tabular} \\
    \hline \hline
      \ref{fig:ExactMIfiniteSize}, \ref{fig:MCMCvsRIE}, \ref{fig:MCMC_NonSym}  &  (1,2,3) \\
      \ref{fig:MMSE_log} & (4,5,6) \\
            \ref{fig:MMSE_th} ($\alpha = 0.3$) & (9) \\
      \ref{fig:MMSE_th} ($\alpha = 0.5$) & \begin{tabular}{@{}c@{}} (3) (for $\lambda < 8$) \\ (6)  (for $\lambda > 8$)  \end{tabular} \\
      \ref{fig:MMSE_th} ($\alpha = 0.7$) & \begin{tabular}{@{}c@{}} (10) (for $\lambda < 12$) \\ (7)  (for $\lambda > 12$)  \end{tabular} \\
      \ref{fig:offDiag} & (1,2,3) \\
      \ref{fig:transOverlapSquare}, \ref{fig:X_MSE}, \ref{fig:histo}, \ref{fig:reordering_finitesize}, \ref{fig:OptimalRot} & (8) \\
      \hline
    \end{tabular}
    \caption{Table of corresponding experiments used to generate each figure in this paper. }
    \label{tab:fig_exp_ref}
    \end{minipage}
\end{table}

The specific number of MCMC steps, the number of replicas $R$, as well as the number of random instances of $(\bX, \bZ)$ used for the simulations are given in 
Table~\ref{tab:exp_data}. Table~\ref{tab:fig_exp_ref} 
specifies which data was used to generate each figure in this paper. 
The precise values of $\{\lambda_r\}_{r=0}^R$ are chosen such that the exchange rate between adjacent SNRs is always finite (typically larger than 0.3) by using the adaptive algorithm given by \cite{Syed22nonrev}. 
This is with the exception of experiments designated with the star in Table~\ref{tab:exp_data}, whose SNR values are given by equispaced values in the range 8 to 15. 

As indicated in Table~\ref{tab:fig_exp_ref}, the MCMC curves in \figurename~\ref{fig:MMSE_th} are obtained by augmenting data from two different types of experiments: one from the usual replica exchange MCMC ((3), (11)), and the other from MCMC specifically run under high SNR, (7) and (8). Note that with only the results from (3) and (11), the mean of the MMSE will become hidden in the statistical error from finite number of random instances. The results from high SNR, (7) and (8), are given to clarify the exponential decay of the MMSE, which have significantly lower statistical error due to the larger number of random instances we can afford to average over. 

\vspace{5pt}
\textbf{Experiments for \figurename~\ref{fig:multiscalevsMCMC}.} \ \ The experimental procedure to obtain \figurename~\ref{fig:multiscalevsMCMC} is very different from that to obtain the other ones. Unlike the other experiments, here we only consider MCMC without replica exchange moves, i.e. independent MCMC chains are simply run for each value of SNR. 
The informative points (black markers) are obtained by initializing a MCMC chain on the ground truth, and performing $50000 \times 30NM$ local spin flip moves for each SNR. This is repeated on 36 random realizations of $(\bX, \bZ)$ to evaluate the disordered average. The uninformative points (red markers) are obtained by initializing each MCMC chain to a random configuration, and repeating the same process as that to obtain the informative points. Last, the averaged uninformative points, indicated by the blue markers, are obtained by running 32 individual MCMC chains with random initialization for each SNR and $(\bX, \bZ)$. The estimator for $\bX \bX^\intercal$ is obtained by averaging over all 32 runs with different initializations. More concretely, for a given realization of $(\bX, \bZ)$, if we let $\bx_i(t) \in \{-1, +1\}^{N\times M}$ be the configuration of the MCMC chain after $30NMt$ local spin flip moves were performed from a random initialization indexed by $i$, the estimate of $\bX \bX^\intercal$ is given by 
\begin{equation*}
    \langle \bx \bx^\intercal \rangle_{\text{ uninfo. averaged}} = \frac{1}{32\times 4000} \sum_{i = 1}^{32} \sum_{t = 1001}^{5000} \bx_i(t) \bx_i(t)^\intercal. 
\end{equation*}
This procedure is then repeated over 36 random realizations of $(\bX, \bZ)$ to evaluate the disordered average. Note that for each initialization, only $5000 \times 30NM$ local spin flip moves were performed in total, since we found after performing the experiments for uninformative points (red markers) that the MCMC chains get stuck to a random spin configuration typically after $\sim 100 \times 30NM$ local spin flip moves for SNR larger than 7. Note that this is different from the procedure to obtain uninformative points (red markers), where for a given realization of $(\bX, \bZ)$, the estimate is given only by a single initialization:
\begin{equation*}
    \langle \bx \bx^\intercal \rangle_{\text{ uninfo.}} = \frac{1}{40000} \sum_{t = 10001}^{50000} \bx_1(t) \bx_1(t)^\intercal. 
\end{equation*}

\textbf{Initialization of the MCMC chains.}\ \ 
In principle, given sufficient amount of Monte Carlo steps, the MCMC simulations are able to sample from the true equilibrium measure, irrespective of the configuration in which each chain was initialised. Here, the ground truth $\bX$ is chosen as the initial configuration, with the obvious exception of experiment (3). This is done for the following reason. As seen in \figurename~\ref{fig:multiscalevsMCMC}, for low SNR the system converges to the RIE irrespective of initialisation, indicating a benign energy landscape explorable by local spin flips. On the other hand, under high SNR, an MCMC chain with random initialisation encounters a possibly glassy phase, where mixing time increases exponentially with respect to $N$. This resembles a ``golf-course"-type of energy landscape \cite{Machta09}, where it consists of a large number of excited glassy states, and a small number of ground states, which correspond to the ground truth $\bX$. An important observation is that while it is difficult to reach the ground truth from the excited states by using replica exchange and local spin flip moves, one can easily reach excited states from a ground state by exchanging configurations with replicas in low SNR. Therefore, given that the replicas exchange with finite probability in our simulations, replicas with a paramagnetic equilibrium will not remain trapped in the initial ground state for exponentially long periods; 
these replicas can access equilibrium states through exchanges with replicas in low SNR, or via local spin flip moves if the landscape is benign (which is the case in the denoising phase).  
Simultaneously, for the replicas within the factorisation phase, initialising on $\bX$ eliminates the effort to search for rare ground states across the entire configuration space. 
This enables efficient investigation of the problem without introducing significant bias towards informative states in the numerical results.


\section{Details of the multiscale mean-field computation}
\label{app:multiscale_MFT}
The subsequent analysis will rely on special identities inherent to the Bayes-optimality of the setting. As a simple consequence of the tower rule for conditional expectations one can prove the Nishimori identities which will induce numerous simplifications along the way. With a little abuse of notation, we denote by $\langle\,\cdot\,\rangle$ also the infinite product measure $\bigotimes_{n=1}^\infty\EE[\,\cdot\mid\bY]$. Hence, \emph{replicas}, namely conditionally (on $\bY$) independent samples from the posterior \eqref{eq:posterior}, are averaged jointly by the measure $\langle\,\cdot\,\rangle$. Now we can state the following:
\begin{proposition}[Nishimori identity]\label{prop:Nishi_ID}
Let $f$ be a bounded function of the observations $\mathbf{Y}$, the ground truth factor $\mathbf{X}$ and of replicas $(\mathbf{x}^{(k)})_{k\le n}$ drawn independently from the posterior \eqref{eq:posterior}. Then
\begin{align}\label{general_NishiId}
\begin{split}
    &\mathbb{E}_{\mathbf{X},\mathbf{Y}}\langle f(\mathbf{Y};\mathbf{X},\mathbf{x}^{(2)},\dots,\mathbf{x}^{(n)})\rangle=\mathbb{E}_{\mathbf{Y}}\langle f(\mathbf{Y};\mathbf{x}^{(1)},\mathbf{x}^{(2)},\dots,\mathbf{x}^{(n)})\rangle.
\end{split}
\end{align}
\end{proposition}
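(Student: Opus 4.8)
The plan is to derive the identity directly from Bayes' rule by exhibiting the explicit joint law of $(\mathbf{X},\mathbf{x}^{(1)},\dots,\mathbf{x}^{(n)})$ given $\mathbf{Y}$ and observing that it is symmetric under permutation of the roles of $\mathbf{X}$ and the replicas. First I would write out, for fixed data $\mathbf{Y}$, the joint density of the ground truth together with $n$ replicas drawn conditionally i.i.d.\ from the posterior \eqref{eq:posterior}:
\begin{align*}
    dP(\mathbf{x}^{(0)}=\mathbf{X},\mathbf{x}^{(1)},\dots,\mathbf{x}^{(n)}\mid\mathbf{Y})
    =\prod_{k=0}^{n}\frac{\exp\big(-\tfrac14\|\mathbf{Y}-\sqrt{\lambda/N}\,\mathbf{x}^{(k)}\mathbf{x}^{(k)\intercal}\|^2\big)}{C(\mathbf{Y})}\,dP_X(\mathbf{x}^{(k)}),
\end{align*}
where I temporarily relabel the ground truth as the index-$0$ replica. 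The point is that this expression, as a function of $(\mathbf{x}^{(0)},\mathbf{x}^{(1)},\dots,\mathbf{x}^{(n)})$, is manifestly invariant under any permutation of the $n+1$ indices: the factorised prior and the likelihood factor both treat all indices on the same footing.

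Next I would integrate against $dP(\mathbf{Y})$, the true marginal of the data. Since $\mathbf{Y}$ is generated as in \eqref{eq:channel0}, the pair $(\mathbf{X},\mathbf{Y})$ has joint law $dP_X(\mathbf{X})\,P_{Y\mid X}(\mathbf{Y}\mid\mathbf{X})\,d\mathbf{Y}$, and the conditional law of $(\mathbf{x}^{(1)},\dots,\mathbf{x}^{(n)})$ given $\mathbf{Y}$ is the $n$-fold product posterior. Hence the full joint law of $(\mathbf{Y};\mathbf{X},\mathbf{x}^{(1)},\dots,\mathbf{x}^{(n)})$ is
\begin{align*}
    dP(\mathbf{Y})\,dP(\mathbf{x}^{(0)},\dots,\mathbf{x}^{(n)}\mid\mathbf{Y}),
\end{align*}
with $\mathbf{x}^{(0)}\equiv\mathbf{X}$. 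By the permutation symmetry established in the previous step, swapping the index $0$ (the planted signal) with the index $1$ (the first replica) leaves this joint law unchanged. Therefore
\begin{align*}
    \mathbb{E}_{\mathbf{X},\mathbf{Y}}\langle f(\mathbf{Y};\mathbf{X},\mathbf{x}^{(2)},\dots,\mathbf{x}^{(n)})\rangle
    =\mathbb{E}_{\mathbf{Y}}\langle f(\mathbf{Y};\mathbf{x}^{(1)},\mathbf{x}^{(2)},\dots,\mathbf{x}^{(n)})\rangle,
\end{align*}
which is exactly \eqref{general_NishiId}. Boundedness of $f$ guarantees all integrals are finite and the manipulations (Fubini, relabelling) are legitimate.

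The only subtle point — and the step I would be most careful about — is the bookkeeping of \emph{which} average the expectation $\mathbb{E}$ runs over: on the left $\mathbf{X}$ is integrated against the prior while being correlated with $\mathbf{Y}$ through the channel, whereas on the right all of $\mathbf{x}^{(1)},\dots,\mathbf{x}^{(n)}$ are posterior samples and only $\mathbf{Y}$ carries the ``quenched'' randomness. Making transparent that these two prescriptions coincide is precisely the content of the identity, and the cleanest way to see it is the relabelling-as-replica-zero device above, which reduces everything to the exchangeability of the $n+1$ indices under the explicit joint density. An equivalent one-line route, if preferred, is the tower rule: $\mathbb{E}_{\mathbf{X},\mathbf{Y}}[g(\mathbf{X},\mathbf{Y})]=\mathbb{E}_{\mathbf{Y}}\mathbb{E}[g(\mathbf{X},\mathbf{Y})\mid\mathbf{Y}]=\mathbb{E}_{\mathbf{Y}}\langle g(\mathbf{x}^{(1)},\mathbf{Y})\rangle$ applied with $g(\mathbf{x},\mathbf{Y})=\langle f(\mathbf{Y};\mathbf{x},\mathbf{x}^{(2)},\dots,\mathbf{x}^{(n)})\rangle$, using that conditionally on $\mathbf{Y}$ the ground truth $\mathbf{X}$ and the replicas are i.i.d.\ from the posterior; I would present this as the main argument and mention the replica-zero picture as intuition.
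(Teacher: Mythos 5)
Your proof is correct, and your preferred route (the tower-rule computation at the end) is exactly what the paper invokes: the text says the identity is ``a simple consequence of the tower rule for conditional expectations'' and cites \cite{lelarge2019fundamental}. The replica-zero / permutation-symmetry argument you spell out is the standard way of packaging the same fact — that $\mathbf{X}$, conditional on $\mathbf{Y}$, is distributed as one more independent posterior sample — so the two presentations are equivalent in content.
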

An elementary proof of this fact can be found in \cite{lelarge2019fundamental}.

\subsection{First scale reduction: \texorpdfstring{$O(MN)\to O(M)$}{} reduction by the cavity method}\label{app:first_scale}
Here we detail more precisely how to devise properly the RLE problem mentioned in Section \ref{sec:multiscale_cavity} via the cavity method. Recall the Hamiltonian \eqref{eq:Hamiltonian} with $N$ rows and $M$ columns and where $Y_{ij}$ has been expanded in signal and noise contributions:
\begin{align}
    &-H_{N,M}(\bx;\bZ,\bX)=\sum_{i<j}^{N}\Big[\sqrt{\frac{\lambda}{N}}Z_{ij}\bx_i\cdot \bx_j\nonumber+\frac{\lambda}{N}(\bX_i\cdot \bX_j)\bx_i\cdot \bx_j-\frac{\lambda}{2N}(\bx_i\cdot \bx_j)^2\Big]
\end{align}
with $\bx_i\cdot\bx_j=\sum_{\mu\le M} x_{i\mu}x_{j\mu}$. Recall also the notation for the row cavity $\bx_N\equiv\beeta\in\mathbb{R}^M$, $\bX_N\equiv\bH$ and the bulk (i.e., all but the cavity) variables $\bar\bx,\bar\bX\in\mathbb{R}^{N-1\times M}$. Now we segregate the contribution of $\beeta$ from the Hamiltonian:
\begin{align}
    &-H_{N,M}(\bar\bX,\beeta;\lambda)= -H'_{N,M}(\bar\bX;\lambda)-H^{\rm row}_{N,M}(\beeta;\bar\bX,\lambda),\nonumber
\end{align}with bulk Hamiltonian
\begin{align}\label{eq:Hbulk} 
        &-H'_{N,M}(\bar\bX;\lambda)=\sum_{i<j}^{N-1}\Big[\sqrt{\frac{\lambda}{N}}Z_{ij}\bx_i\cdot \bx_j +\frac{\lambda}{N}(\bX_i\cdot \bX_j)\bx_i\cdot \bx_j-\frac{\lambda}{2N}(\bx_i\cdot \bx_j)^2\Big],    
\end{align}
and cavity Hamiltonian
\begin{align}       \label{eq:Hrow} 
    &-H^{\rm row}_{N,M}(\beeta;\bar\bX,\lambda)=\sum_{i=1}^{N-1}\Big[\sqrt{\frac{\lambda}{N}}Z_{iN}\bx_i\cdot \beeta +\frac{\lambda}{N}(\bX_i\cdot \bH)\bx_i\cdot \beeta-\frac{\lambda}{2N}(\bx_i\cdot \beeta)^2\Big].
\end{align}

In order to lighten the notation we have dropped dependencies on the ground truth and the noise.
From $-H^{\rm row}_{N,M}$ we recognize (up to a constant) the log-likelihood of an RLE with uncertain design inference problem over both $(\bar\bX,\bH)$ with data $\tilde \bY(\lambda) = \bY_N =(\tilde Y_i=Y_{Ni})_{i\le N-1}$ generated as in \eqref{eq:RLE_channel_cavity}, or more generally as
\begin{align}\label{eq:RLE_GLM_def}
    \tilde Y_i\sim P_{\rm out}\Big(\,\cdot\mid\frac{\bX_i\cdot\bH}{\sqrt{N}} \Big), \quad i\le N-1,
\end{align}
with output channel
\begin{align}\label{eq:Gaussian_output_channel}
    P_{\rm out}(y\mid x):=\frac{1}{\sqrt{2\pi}}e^{-\frac{1}{2}(y-\sqrt{\lambda}x)^2}.
\end{align} 
In addition, we also have side information about the bulk variables coming from the observations \eqref{eq:RLE_channel_bulk}. 

We are going to work in this general setting and specialise to the Gaussian linear channel \eqref{eq:Gaussian_output_channel} at the end. Note that with these notations, the mutual information $I(\tilde\bY(\lambda), \bY'(\zeta);\bar\bX,\bH)$ is equal to the original one \eqref{eq:MI_and_free_entropy} when $\zeta=\lambda$. 

Let us also introduce 
\begin{align*}    
\langle\,\cdot\,\rangle'=\frac{1}{\mathcal{Z}'(\bY'(\zeta))}\int_{\mathbb{R}^{N-1\times M}} dP_X(\bar\bx) e^{-H'_{N,M}(\bar\bx;\zeta)}(\,\cdot\,)
\end{align*}with proper normalisation $\mathcal{Z}'$. $\langle\,\cdot\,\rangle'$ is the posterior measure associated to the channel \eqref{eq:RLE_channel_bulk} with SNR $\zeta$, and it is the one averaging over the bulk degrees of freedom.

\subsection{Second scale: \texorpdfstring{$O(M)\to O(1)$}{} reduction by bulk ansatz and replica formulas}\label{app:second_scale}
We recall that within the mean-field ansatz \eqref{eq:aux_channel_bulk} the simplified $\sigma$-dependent bulk measure reads
\begin{align*}
\langle\,\cdot\,\rangle_{\rm eff}'=\prod_{i,\mu=1}^{N-1,M}\frac{1}{\mathcal{Z}'(Y'_{i\mu})}\int dP_X(x_{i\mu})e^{\sqrt{\sigma} Y'_{i\mu}x_{i\mu}-\frac{1}{2}\sigma x_{i\mu}^2}(\,\cdot\,) ,
\end{align*} where $\mathcal{Z}'(Y'_{i\mu})$ is the proper normalisation per bulk entry.
Under this simplified bulk measure, we define the free entropy $\phi^{\rm RLE}_{{\rm eff},N}$ similarly to \eqref{eq:RLE_entropy}:
\begin{align*}
    &\phi^{\rm RLE}_{{\rm eff},N}(\lambda,\sigma)=\phi^{\rm RLE}_{{\rm eff},N} :=-\frac{1}{\alpha}\EE\ln P_{\rm out}(\tilde Y_1\mid 0)+\frac{1}{M}\EE_{\tilde \bY(\lambda),\bY'_{\rm eff}(\sigma)}\ln \int_{\mathbb{R}^M} dP_X(\beeta)\Big\langle \prod_{i=1}^{N-1}P_{\rm out}\Big(\tilde Y_i\mid \frac{\bx_i\cdot\beeta}{\sqrt{N}} \Big)\Big\rangle_{\rm eff}'.
\end{align*}
For the first term we have used the fact that $P_{\rm out}(\tilde\bY\mid 0)$ factorises over the elements of $\tilde\bY$, giving a sum of contributions that are all equal in average. Again, the different symbol for the above quantity stresses the fact that we have used our mean-field ansatz. Notice also that we are using the more generic definition \eqref{eq:RLE_GLM_def}. There is an immediate simplification due the factorisation of the bulk measure:
\begin{align*}
    &\phi^{\rm RLE}_{{\rm eff},N}=-\frac{1}{\alpha}\EE\ln P_{\rm out}(\tilde Y_1\mid 0)+\frac{1}{M}\EE_{\tilde \bY(\lambda),\bY'_{\rm eff}(\sigma)}\ln\int dP_X(\beeta)\prod_{i=1}^{N-1}\Big\langle P_{\rm out}\Big(\tilde Y_i\mid \frac{\bx_i\cdot\beeta}{\sqrt{N}} \Big)\Big\rangle_{\rm eff}'
\end{align*}
where we were able to factorise the $i$-product.

Under the factorised measure $\langle\,\cdot\,\rangle'_{\rm eff}$, the rescaled sums $\bx_i\cdot\beeta/\sqrt{N}$ behave as Gaussians by the central limit theorem, for any fixed $\beeta$. Therefore, it is now equivalent to think of the $x_{i\mu}$'s as independent Gaussians with mean and variance 
\begin{align*}
  m(Y'_{i\mu})&:=\langle x_{i\mu}\rangle'_{\rm eff},\quad V(Y'_{i\mu}):=\langle x^2_{i\mu}\rangle'_{\rm eff}-(\langle x_{i\mu}\rangle'_{\rm eff})^{2}  ,
\end{align*}
or equivalently
\begin{align}\label{eq:Gaussian_equivalent_design}
    x_{i\mu}\distreq m(Y'_{i\mu})+\xi_{i\mu}\sqrt{V(Y'_{i\mu})},
\end{align}
with $\xi_{i\mu}\iid\mathcal{N}(0,1)$ the new annealed bulk degrees of freedom. It is now the moment to define $J(\sigma)$ in \eqref{eq:def_J} as the overlap for the bulk variables
\begin{align*}
    J&=J(\sigma):=\EE \,m(Y'_{i\mu})X_{i\mu}=\EE_{X,Z'} \Big[X\frac{\int dP_X(x) \,x\,e^{(\sqrt{\sigma}Z'+X\sigma)x-\frac{1}{2}\sigma x^2}}{\int dP_X(x)e^{(\sqrt{\sigma}Z'+X\sigma)x-\frac{1}{2}\sigma x^2}}\Big]
\end{align*}
with $Z'\sim\mathcal{N}(0,1), X\sim P_X$. By the Nishimori identity it is also equal to $J=\EE \,m(Y'_{i\mu})^2$. Hence, using the properties of the prior, we have that $(m(Y'_{i\mu}))_{i,\mu}$ are i.i.d. sub-Gaussian random variables with
\begin{align}
\EE\, m(Y'_{i\mu})=0, \quad   \EE\, m(Y'_{i\mu})^2=J.\label{momentsm}
\end{align} 
Rewriting $\phi_{{\rm eff},N}^{\rm RLE}$ by means of \eqref{eq:Gaussian_equivalent_design} we get 
\begin{align*}
   \phi_{{\rm eff},N}^{\rm RLE}=\frac{1}{M}\EE_{\tilde \bY,\bY'_{\rm eff}}\ln\int dP_X(\beeta)\prod_{i=1}^{N-1}\EE_{\bxi_i} P_{\rm out}\Big(\tilde Y_i\mid\frac{m(\bY_i')\cdot\beeta}{\sqrt{N}}+\frac{(\sqrt{V(\bY'_i)}\circ\bxi_i)\cdot\beeta}{\sqrt{N}}\Big)-\frac{1}{\alpha}\EE\ln P_{\rm out}(\tilde Y_1\mid 0)+o_N(1)
\end{align*}
with $\bxi_i=(\xi_{i\mu})_{\mu\leq M}$, $m(\bY'_i)=(m(Y'_{i\mu}))_\mu$ and similarly for $V(\bY'_i)$, and $\circ$ denotes component-wise product. The remainder is due to the fact that $i$ runs to $N-1$ and not to $N$. It will be dropped in the following.

Being in a Bayes-optimal setting,  strong concentration effects allow us to simplify $((\sqrt{V(\bY'_i)/N}\circ\bxi_i)\cdot \beeta)_i$. Indeed, conditionally on $\beeta$ these are independent centered Gaussian random variables with variance
\begin{align*}
  \frac1N \sum_{\mu=1}^M V(Y'_{i\mu}) \eta^2_\mu= \alpha \EE V(\sqrt{\sigma}X+Z')+o_N(1)
\end{align*}where the equality holds with overwhelming probability when $N,M\to\infty$. We stress that we have used $\|\beeta\|^2/M= 1+o_N(1)$. $\beeta$ is a sample from a posterior, so this replacement is a priori not guaranteed, but thanks to Bayes-optimality $\|\beeta\|^2/M$ concentrates onto its mean which is the same as for the ground truth by the Nishimori identity. Still using the Nishimori identity one gets directly that $\EE V(Y'_{i\mu})=1-J$. Therefore, by defining the noisier channel
\begin{align*}
    \tilde P_{\rm out}^{(\sigma,\alpha)}(y\mid x):=\EE_{\xi\sim\mathcal{N}(0,1)}P_{\rm out}\Big(y\mid x+\xi\sqrt{\alpha(1-J(\sigma))}\Big)
\end{align*}and collecting all the above observations, up to negligible corrections the free entropy rewrites as
\begin{align*}
    &\phi_{{\rm eff},N}^{\rm RLE}=-\frac{1}{\alpha}\EE\ln P_{\rm out}(\tilde Y_1\mid 0)+\frac{1}{M}\EE_{\tilde \bY,\bY'_{\rm eff}}\ln\int dP_X(\beeta)\prod_{i=1}^{N-1}\tilde P_{\rm out}^{(\sigma,\alpha)}\Big(\tilde Y_i\mid\frac{m(\bY'_i)\cdot \beeta}{\sqrt{N}}\Big).
\end{align*}
The first contribution is simple to compute since $\tilde Y_1\sim P_{\rm out}(\,\cdot\mid\bX_1\cdot\bH/\sqrt{N})$ is asymptotically equal in law to $\tilde Y_1\sim P_{\rm out}(\,\cdot\mid\sqrt{\alpha}\,Z')$ with $Z'\sim\mathcal{N}(0,1)$. A final simplification is to replace the i.i.d. centered sub-Gaussian variables $(m(Y'_{i\mu}))$ playing the role of quenched covariates by Gaussians $(\sqrt{J}\tau_{i\mu})$ with $\tau_{i\mu}\iid \mathcal{N}(0,1)$, using \eqref{momentsm}. This leaves the free entropy unchanged at leading order because the replica computation for the GLM and its rigorous proof simply leverage on the fact that $(m(\bY_i')\cdot \beeta/\sqrt{N})_{i\le N}$ behave as Gaussian random variables under the quenched expectation w.r.t.\ $\bY'$, and these have the same law in the large system limit as $(\btau_i\cdot \beeta\sqrt{J/N})_i$. Therefore, we have
\begin{align*}
    &\phi_{{\rm eff},N}^{\rm RLE}=-\frac{1}{\alpha}\EE\ln P_{\rm out}(\tilde Y_1\mid 0)
    +\frac{1}{M}\EE_{\tilde \bY,\btau}\ln\int dP_X(\beeta)\prod_{i=1}^{N-1}\tilde P_{\rm out}^{(\sigma,\alpha)}\Big(\tilde Y_i\mid\sqrt{J}\frac{\btau_i\cdot \beeta}{\sqrt{N}}\Big)
\end{align*}
at leading order. It is finally in the familiar form of free entropy of a GLM with quenched Gaussian covariates matrix, whose rigorous limiting value is found in \cite{BarbierGLM-PNAS}:
\begin{align*}
    &\phi_{{\rm eff}}^{\rm RLE}=\lim_{N\to\infty}\phi_{{\rm eff},N}^{\rm RLE}={\rm extr}_+\Big\{-\frac{rq}{2}+\EE_{\xi,H}\ln\int  dP_X(\eta) e^{(\xi\sqrt{r}+Hr)\eta-\frac{1}{2}r\eta^2}-\frac{1}{\alpha}\EE_{\xi}\int dy  P_{\rm out}(y\mid \sqrt{\alpha}\,\xi)\ln  P_{\rm out}(y\mid 0)\nonumber\\
    & +\frac{1}{\alpha}\EE_{\xi,u_0}\int dy \tilde P_{\rm out}^{(\sigma,\alpha)}(y\mid \sqrt{\alpha J(\sigma)(1-q)}\,u_0+\sqrt{\alpha J(\sigma)q}\,\xi)\ln \EE_u\tilde P_{\rm out}^{(\sigma,\alpha)}(y\mid \sqrt{\alpha J(\sigma)(1-q)}\,u+\sqrt{\alpha J(\sigma)q}\,\xi)\Big\}
\end{align*}    
where extremization is intended w.r.t.\ $(r,q)$, and $(\xi,u_0,u)$ are i.i.d. standard Gaussian variables, $H\sim P_X$. ${\rm extr}_+$ selects the solution of the saddle point equations, obtained by equating to zero the gradient of the replica potential $\{\cdots\}$, which maximizes it. If we specify the above to the Gaussian output channel \eqref{eq:Gaussian_output_channel} we readily obtain \eqref{eq:RLE_RS_free_entropy}.

\section{Properties of the mean-field mutual information}
\label{app:MI_reconstruction_potential}
\subsection{Reconstruction of \texorpdfstring{\eqref{eq:MI_Matrix_facto_pot}}{} from the saddle point equations}
We are looking for an MI potential, denoted in the following as $\iota (r,q;\alpha,\lambda)$, generating the saddle point equations \eqref{eq:Kabashima_r}. We start by imposing the I-MMSE relation \eqref{eq:I-MMSE}:
\begin{align}
    \frac{d}{d\lambda}\iota (r_*,q_*;\alpha,\lambda)=\frac14\frac{1-q_*(\alpha,\lambda)^2}{1+\lambda\alpha(1-q_*(\alpha,\lambda)^2)}.\label{IMMSEapprox}
\end{align}
We observe that \eqref{eq:Kabashima_r} defines a vector field
\begin{align*}
    v_r(r,q):=q-\EE H\langle \eta \rangle_{r},\quad 
    v_q(r,q):=r-\frac{\lambda q}{1+\lambda\alpha(1-q^2)}.
\end{align*}
It is conservative, because it is irrotational on a simply connected domain:
\begin{align*}
    \frac{\partial}{\partial q}v_r(r,q)=1=\frac{\partial}{\partial r}v_q(r,q).
\end{align*}
Therefore, any line integral of this vector field starting from $(0,0)$ and ending at an arbitrary point $(r,q)$ will yield the potential, up to a constant shift and an arbitrary common rescaling of the vector field components. We choose the integration path $(0,0)\to (r,0)\to (r,q)$:  
\begin{align*}
    \iota(r,q;\alpha,\lambda)K+C&=\int_0^r v_r(s,0) ds+\int_0^q v_q(r,p) dp\nonumber\\
    &=-2\,\EE_{\xi,H}\ln\int dP_X(\eta)e^{(\xi\sqrt{r}+Hr)\eta-\frac{1}{2}r\eta^2}+rq+\frac{1}{2\alpha}\ln\big(1+\lambda\alpha(1-q^2)\big)
\end{align*}where we have introduced two constants $K$ and $C$ to be fixed later. The term $\EE H\langle \eta \rangle_{s}=1-{\rm mmse}(H\mid \sqrt{s}H+\xi)$ has been integrated w.r.t. $s$ using the I-MMSE relation for a scalar Gaussian channel \cite{Verdu_I-MMSE}. It is easy to check that this potential generates the correct saddle point equations. Now we evaluate it at their solutions. This will yield a MI whose basic properties will help us to fix $K$ and $C$. In particular, at $\lambda=0$ one has $q_*=r_*=0$, and $\iota(r_*,q_*;\alpha,0)=0$, which directly entails $C=0$. In order to fix $K$ we compute the derivative
\begin{align*}
    \frac{d}{d\lambda}\iota(r_*,q_*;\alpha,\lambda)=\frac{1}{2K}\frac{1-q_*(\alpha,\lambda)^2}{1+\lambda\alpha(1-q_*(\alpha,\lambda)^2)}.
\end{align*}The I-MMSE relation \eqref{IMMSEapprox} then requires $K=2$. This determines \eqref{eq:MI_Matrix_facto_pot}.

\subsection{Large SNR limit of \texorpdfstring{\eqref{eq:MI_Matrix_facto}}{} for discrete priors}\label{app:large-lambda-limit-MI}
With a discrete prior one can see that $1-q^2= o_\lambda(\lambda^{-1})=1-q$. This must be verified in order for $r$ in \eqref{eq:Kabashima_r} to diverge as $\sim\lambda$, so that $q$ indeed saturates to $1$ in the infinite SNR limit. Furthermore, the $\EE_{\xi,H}\dots$ term of \eqref{eq:MI_Matrix_facto_pot} is related to a Bayes-optimal scalar Gaussian channel with SNR $r$. Hence, when $r\sim\lambda\to\infty$, the integral over the prior can be done by saddle point and must select $\eta=H$:
\begin{align*}
    \EE_{\xi,H}\ln\int dP_X(\eta)&e^{(\xi\sqrt{r}+Hr)\eta-\frac{1}{2}r\eta^2}=\frac{r}{2}+\EE_H\ln\int dP_X(\eta)\mathbbm{1}(\eta=H)+o_\lambda(1).
\end{align*}
Here $\mathbbm{1}(\eta=H)$ is the indicator function of the event $\eta=H$ (recall that $H$ is discrete in this derivation). The integral on the second line is thus precisely $-\mathcal{H}(X)$. Plugging this into \eqref{eq:MI_Matrix_facto_pot} we get
\begin{align*}
    &\iota(\alpha,\lambda)=
    \frac{r(q-1)}{2}+o_\lambda(1)+\frac{1}{4\alpha}\ln\big(1+o_\lambda(1)\big)+
    \mathcal{H}(X).
\end{align*}
When $\lambda\to\infty$, $r\sim\lambda$ at leading order and both $\lambda(1-q^2),\lambda(1-q)$ are $o_\lambda(1)$. Hence
$$\lim_{\lambda\to\infty}\iota(\alpha,\lambda)=\mathcal{H}(X).$$

\section{Mean-square error of the rotational invariant estimator when \texorpdfstring{$\alpha\to 0$}{}}\label{appendix:alpha-to-0-RIE-MSE}
Consider the usual problem,
\begin{align*}
    \frac{\bY}{\sqrt{N}}=\frac{\sqrt{\lambda}}{N}\bX\bX^\intercal +\frac{\bZ}{\sqrt{N}}
\end{align*}where we rescaled the observations so that they have $O(1)$ eigenvalues and a well defined, finitely supported, asymptotic spectral density $\rho_Y$. Standard random matrix theory \cite{potters2020first} shows that  $\rho_Y=\rho_{\rm s.c.}\boxplus\rho_{\sqrt{\lambda} {\rm MP}}(\,\cdot\,;\alpha)$ is the free convolution between the semicircular density and the spectral density of a Wishart matrix (Marchenko-Pastur distribution of parameter $1/\alpha$) multiplied by $\sqrt{\lambda}$. We recall once more the shrinking procedure of \cite{BABP} needed to get the eigenvalues of the RIE:
\begin{align}\label{eq:BABP_shirnkage_appendix}    \xi_i=\xi_i(\bY):=\frac{\gamma_{\bY,i}-2\pi\mathsf{H}[\rho_Y](\gamma_{\bY,i})}{\sqrt{\lambda}}
\end{align}
where $(\gamma_{\bY,i})$ are the eigenvalues of $\bY/\sqrt{N}$, and
\begin{align*}
    \pi \mathsf{H}[\rho_Y](x)=\text{P.V.}\int dy\frac{\rho_Y(y)}{x-y}=\lim_{\epsilon\to 0^+} \text{Re} \,g_Y(x+i\epsilon).
\end{align*}
$g_Y$ is the Stieltjes transform of the spectral density $\rho_Y$ \cite{potters2020first}:
\begin{align*}
    g_Y(z):=\int dy\frac{\rho_Y(y)}{z-y},\quad z\in\mathbb{C}\setminus\text{Supp}(\rho_Y).
\end{align*}
We consider the low-rank limit $\alpha\to 0$. For simplicity we set $M=1$, but the argument below can be extended to any $M$ finite, and we claim it can be extended to sub-linear rank regimes $M=\alpha N^\gamma$ with $\gamma<1$. Under this hypothesis, $g_Y=g_{Z}$, since low-rank perturbations are not enough to modify the spectral density of the noise. $\bY$ then turns into a low-rank perturbation of a full-rank Wigner matrix. Hence, some eigenvalues (one for $M=1$) will pop out of the bulk spectrum of the noise. 

The eigenvalue(s) that pops out, for $\lambda >1$, is located at $z^*=\sqrt{\lambda}+\frac{1}{\sqrt{\lambda}}$ according to our notation \cite{baik2005phase}.  The Stietljes transform of the observations evaluated at $z^*$ is thus
\begin{align*}
    g_Y(z^*)\xrightarrow[]{\alpha\to 0}g_{Z}(z^*)=\frac{1}{2}\Big(z^*-z^*\sqrt{1-4/z^{*2}}\Big)=\frac{1}{2}\Big(\sqrt{\lambda}+\frac{1}{\sqrt{\lambda}}-\sqrt{\lambda-2+\frac{1}{\lambda}}\Big)=\frac{1}{\sqrt{\lambda}},
\end{align*}
and hence
\begin{align*}
    \pi \mathsf{H}[\rho_Y](z^*)\xrightarrow[]{\alpha\to 0}\frac{1}{\sqrt{\lambda}}.
\end{align*}
From the above, we can infer the shrinkage of the leading eigenvalue:
\begin{align*}
    \xi_1=\frac{1}{\sqrt{\lambda}}\Big(\sqrt{\lambda}+\frac{1}{\sqrt{\lambda}}-\frac{2}{\sqrt{\lambda}}\Big)=1-\frac{1}{\lambda}.
\end{align*}

All other eigenvalues are flushed to $0$. In fact, we can safely assume that under low-rank perturbations of the noise bulk, all eigenvalues expect the leading one lie inside the bulk. In that case, thanks to the identity
\begin{align*}
    \text{P.V.}\int dy\frac{\rho_Z(y)}{x-y}=\frac{x}{2},\quad x\in\R,
\end{align*}
it is straightforward to see that \eqref{eq:BABP_shirnkage_appendix} yields $0$ for all such eigenvalues.

Therefore the estimator given by the RIE in the low rank limit (recall $M=1$) would read
\begin{align*}
    \Big\langle \frac{\bX\bX^\intercal}{N}\Big\rangle_\bY= \bu_{\bY}\bu_{\bY}^\intercal \Big(1-\frac{1}{\lambda}\Big),
\end{align*}where $\bu_{\bY}$ is the unit eigenvector corresponding to the leading eigenvalue $z^*$ of $\bY$. This estimator is also Bayes-optimal when the prior on $\bX$ is rotational invariant, and the MSE attained is 
\begin{align*}
    \text{MSE}=1-\Big(1-\frac{1}{\lambda}\Big)^2.
\end{align*}
This MSE is attained by the RIE regardless of the prior, and is equal to that of PCA \cite{lelarge2019fundamental}, which is sub-optimal for general prior other than Gaussian. This argument thus supports the sub-optimality of the RIE in the limit $\alpha\to 0$, unless there is a discontinuous behavior in $\alpha$.

\section{The Sakata and Kabashima replica approach to \texorpdfstring{\eqref{eq:MI_Matrix_facto}}{}}\label{appendix:Kabashima_derivation}
Here we adapt to our model the method of \cite{SK13EPL} developed originally for dictionary learning in an optimisation setting, then concurrently extended in \cite{SK13ISIT,KMZ_DL-2013,Marc-Kabashima} to the Bayesian case. In order to match notations, we consider the generic output channel
\begin{align*}
    Y_{ij}\sim P_{\rm out}\Big(\cdot\mid\frac{\bX_i\cdot\bX_j}{\sqrt{N}}\Big),\quad 1\leq i<j\leq N.
\end{align*}
The free entropy we are interested in computing, to be later connected to the MI, is
\begin{align*}
    \tilde\phi_N:=\frac{1}{MN}\EE\ln \tilde{\mathcal{Z}}(\bY)-\frac{1}{MN}\EE\ln \prod_{i<j} P_{\rm out}(Y_{ij}\mid 0)
\end{align*}
where
\begin{align*}
    \tilde{\mathcal{Z}}(\bY):=\int dP_X(\bx) \prod_{i<j,1}^N P_{\rm out}\Big(Y_{ij}\mid\frac{\bx_i\cdot\bx_j}{\sqrt{N}}\Big).
\end{align*}
We proceed to the computation of the log partition function with the replica method:
\begin{align*}
    \EE\tilde{\mathcal{Z}}^n(\bY)=\int \prod_{i<j}^N d Y_{ij}\int \prod_{a=0}^n dP_X(\bx^a) \prod_{i<j,1}^N P_{\rm out}\Big(Y_{ij}\mid\frac{\bx^a_i\cdot\bx^a_j}{\sqrt{N}}\Big),
\end{align*}where we set $\bx^0=\bX$.
We now introduce a constraint on the overlaps
\begin{align*}
    \EE \tilde{\mathcal{Z}}^n(\bY)\propto\int\prod_{i<j}^N d Y_{ij} \int \prod_{a=0}^n dP_X(\bx^a) \int\prod_{{a\leq b=0}}^n dQ_{ab}\delta\big(NMQ_{ab}-\Tr[\bx^a\bx^{b\intercal}]\big) \prod_{a=0}^n\prod_{i<j,1}^N P_{\rm out}\Big(Y_{ij}\mid\frac{\bx^a_i\cdot\bx^a_j}{\sqrt{N}}\Big).
\end{align*}
Now we aim at identifying the joint law of the variables
\begin{align*}
    z_{ij}^a=\frac{\bx^a_i\cdot\bx^a_j}{\sqrt{N}},\quad 1\leq i<j\leq N.
\end{align*}
More precisely, let us introduce a delta function to isolate these variables
\begin{align*}
    \int d\bY\int\prod_{{a\leq b=0}}^n dQ_{ab}\int \prod_{a=0}^n dP_X(\bx^a)\prod_{{a\leq b=0}}^n\delta\big(NMQ_{ab}-\Tr[\bx^a\bx^{b\intercal}]\big)\prod_{i<j,1}^N\prod_{a=0}^n\delta\Big(z_{ij}^a-\frac{\bx^a_i\cdot\bx^a_j}{\sqrt{N}}\Big)   P_{\rm out}\big(Y_{ij}\mid z_{ij}^a\big)
\end{align*}
and let us isolate their non normalised probability measure for a given value of the overlaps $Q_{ab}$:
\begin{align*}
    \int \prod_{a=0}^n dP_X(\bx^a)\prod_{{a\leq b=0}}^n\delta\big(NMQ_{ab}-\Tr[\bx^a\bx^{b\intercal}]\big)\prod_{i<j,1}^N\prod_{a=0}^n\delta\Big(z_{ij}^a-\frac{\bx^a_i\cdot\bx^a_j}{\sqrt{N}}\Big).
\end{align*}
The key assumption is then to consider the $z$'s as a Gaussian family with zero mean and a covariance equal to
\begin{align*}
    \EE_{z} z_{ij}^az_{kl}^b=\alpha Q^2_{ab}\delta_{ik}\delta_{jl}.
\end{align*}
This \enquote{Gaussian ansatz} allows to move forward with the replica approach. With the above covariance structure, coupled only in replica space, we can write
\begin{align*}
    P_z((z^a_{ij})_{a\leq n} \mid (Q_{ab})_{a,b=0}^n)=\frac{1}{\sqrt{(2\pi)^{n+1}\det \mathcal{T}}}\exp\Big[
    -\frac{1}{2}\sum_{a,b=0}^nz^a_{ij}z^b_{ij}(\mathcal{T}^{-1})_{ab}
    \Big]
\end{align*}with $\mathcal{T}:= (\alpha Q_{ab}^2)_{a,b=0}^n$. Hence we can replace
\begin{align*}
    \int \prod_{a=0}^n dP_X(\bx^a)\prod_{{a\leq b=0}}^n\delta\big(NMQ_{ab}-\Tr[\bx^a\bx^{b\intercal}]\big)\prod_{i<j,1}^N\prod_{a=0}^n\delta\Big(z_{ij}^a-\frac{\bx^a_i\cdot\bx^a_j}{\sqrt{N}}\Big) \approx P_z((z^a_{ij})^{a\leq n}\mid  (Q_{ab})_{a,b=0}^n)e^{\ln V(Q)}
\end{align*}where
\begin{align*}
    V(Q):=\int \prod_{a=0}^ndP_X(\bx^a)\prod_{{a\leq b=0}}^n\delta\big(NMQ_{ab}-\Tr[\bx^a\bx^{b\intercal}]\big),
\end{align*}is a normalisation.
Therefore the replicated partition function reads
\begin{align*}
    \EE\tilde{\mathcal{Z}}^n(\bY)\propto \int\prod_{{a\leq b=0}}^n dQ_{ab} \exp\big(\ln V(Q)\big)\Big[\EE_{\bz\sim\mathcal{N}(0;\mathcal{T})}\int dY \prod_{a=0}^n P_{\rm out}\big(Y\mid z^a\big)\Big]^{N^2/2}.
\end{align*}Letting $N$ and $M$ diverge we get the following expression by saddle point w.r.t.\ the integration variables $(Q_{ab})_{0\leq a,b\leq n}$:
\begin{align*}
    \frac{1}{MN}\ln \EE\tilde{\mathcal{Z}}^n(\bY)={\rm extr}_+\Big\{\frac{1}{MN}\ln V(Q)+\frac{1}{2\alpha}\ln \EE_{\bz\sim\mathcal{N}(0;\mathcal{T})}  \int dY \prod_{a=0}^n P_{\rm out}\big(Y\mid z^a\big)\Big\}+o_N(1).
\end{align*}
Now we specify our saddle point to the replica symmetric ansatz: $\mathcal{T}_{RS}=\alpha (1-q^2)\delta_{ab}+\alpha q^2$, and we then send $n\to 0$. Here we have further imposed that $Q_{a0}=q=Q_{ab}$ for any $a\neq b$, and $v=Q_{aa}$ for all $a=0,\dots,n$, which amounts to impose the Nishimori identities on the solution of the saddle point. This is justified by the the Bayes-optimal setting. The computation of $V(Q)$ also requires the introduction of Fourier delta representations. These steps are standard, and we defer the reader to \cite{SK13ISIT} for this computation, since for what $V(Q)$ is concerned, it is identical. In the end one gets:
\begin{align*}
    \frac{1}{MN}\ln V(Q)=-n\,{\rm extr}_+\Big\{\frac{r  q}{2}+\EE\ln\int dP_X(x)\exp\Big[(\sqrt{ r }Z+ r  X)x-\frac{ r }{2}x^2\Big]\Big\}+O(n^2)
\end{align*}
where extremization is w.r.t.\ $r$. The other term is easily dealt with:
\begin{align*}
    \frac{1}{\alpha}&\ln \EE_{\bz\sim\mathcal{N}(0;\mathcal{T})}  \int dY \prod_{a=0}^n P_{\rm out}\big(Y\mid z^a\big)=\nonumber\\
    &=\frac{n}{\alpha}\EE_{u_0,\zeta} \int dY P_{\rm out}\big(Y\mid \sqrt{\alpha(v^2-q^2)}u_0+ \sqrt{\alpha q^2}\zeta\big)\ln \EE_{u}P_{\rm out}\big(Y\mid \sqrt{\alpha(v^2-q^2)}u+ \sqrt{\alpha q^2}\zeta\big)+O(n^2).
\end{align*}
We reach
\begin{align*}
    &\frac{1}{MNn}\ln \EE\tilde{\mathcal{Z}}^n(\bY)={\rm extr}_+\Big\{-\frac{r q}{2}+\EE\ln\int dP_X(x)\exp\Big[(\sqrt{ r }Z+ r  X)x-\frac{ r }{2}x^2\Big]\nonumber\\
    &+\frac{1}{2\alpha}\EE_{u_0,\zeta} \int dY P_{\rm out}\big(Y\mid \sqrt{\alpha(1-q^2)}u_0+ \sqrt{\alpha q^2}\zeta\big)\ln \EE_{u}P_{\rm out}\big(Y\mid \sqrt{\alpha(1-q^2)}u+ \sqrt{\alpha q^2}\zeta\big)\Big\}+o_N(1)+O(n),
\end{align*}where extremization is always intended w.r.t.\ $r,q$.
Specifying to Gaussian channel
\begin{align*}
    P_{\rm out}(y\mid x)=\frac{1}{\sqrt{2\pi}}\exp\Big(-\frac{1}{2}(y-\sqrt{\lambda} x)^2\Big).
\end{align*}one gets
\begin{align*}
    {\rm extr}_+\Big\{-\frac{ r  q}{2}+\EE\ln\int dP_X(x)\exp\Big[(\sqrt{ r }Z+ r  X)x-\frac{ r }{2}x^2\Big]-\frac{1}{4\alpha}\ln(1+\lambda\alpha (1-q^2))-\frac{1}{4\alpha}\log 2\pi e\Big\}+O(n)
\end{align*}
and we can now subtract the term
\begin{align*}
    \frac{1}{MN}\EE\ln \prod_{i<j}^NP_{\rm out}(Y_{ij}\mid 0)=\frac{1}{2\alpha}\EE\ln P_{\rm out}(Y\mid 0)+o_N(1)=-\frac{1}{4\alpha}\log 2\pi e-\frac{\lambda }{4}.
\end{align*}
Putting altogether we get the free entropy
\begin{align*}
    \tilde\phi=\lim_{N\to\infty}\tilde \phi_N={\rm extr}_+\Big\{\frac{\lambda}{4}-\frac{r q}{2}+\EE\ln\int dP_X(x)\exp\Big[(\sqrt{ r }Z+ r  X)x-\frac{ r }{2}x^2\Big]-\frac{1}{4\alpha}\ln(1+\lambda\alpha (1-q^2))\Big\}.
\end{align*}
For Gaussian channel, in order to obtain the MI it is sufficient to subtract $\lambda/4$ from the above and flip the sign:
\begin{align*}
    \iota(\alpha,\lambda)={\rm extr}_-\Big\{\frac{rq}{2}+\frac{1}{4\alpha}\ln(1+\alpha\lambda (1-q^2))-\EE\ln\int dP_X(x)\exp\Big[(\sqrt{ r }Z+ r  X)x-\frac{ r }{2}x^2\Big]\Big\},
\end{align*}
where $r$ and $q$ have to be extremized. This formula matches \eqref{eq:MI_Matrix_facto}.

\section{Can the mean-field theory \texorpdfstring{$\iota(\alpha,\lambda)$}{} be improved?}\label{sec:improv}

We have shown that the mean-field equations \eqref{eq:MI_Matrix_facto} based on the bulk ansatz \eqref{eq:aux_channel_bulk} produce excellent results in the factorisation phase, where an MMSE lower than the Gaussian one is possible; they are however only approximative in the denoising phase where $\iota^*$ correct them. One may wonder if it possible to do better by generalizing ansatz \eqref{eq:aux_channel_bulk}. We have performed several attempts keeping the bulk measure factorised over the indices $(i,\mu)$ as needed to justify the computation of the RLE free entropy.

As first trial, one may think that each entry $(i,\mu)$ comes with a different SNR $\sigma_{i\mu}$, whose values are i.i.d. drawn from a certain distribution which becomes the new order parameter. In that case, equations \eqref{eq:RLE_saddlepoint_r} simply contain $\EE_\sigma J(\sigma)$ in place of $J(\sigma)$. In particular, the equation for $r$ forces the distribution of $\sigma$ to be a $\delta_{\sigma_*}$ where again $\sigma_*=r_*$, since the r.h.s.\ is deterministic. Hence this approach recovers the proposed solution.

Another approach is to introduce an SNR profile $(\sigma_\mu)$ such that $Y_{i\mu}'=\sqrt{\sigma_\mu}X_{i\mu}+Z'_{i\mu}$, only labeled by $\mu$ to detect possible inhomogeneities at equilibrium in the \enquote{column index space}. The $(\sigma_\mu)$ are then assumed to be divided in an arbitrary number $C$ of blocks to be sent to $+\infty$ after the thermodynamic limit, in a similar fashion to spatial coupling, see \cite{krzakala2012probabilistic,barbier2017approximate} for similar computations. Specifically $\sigma_{\mu}=\sigma_c$ if $\mu\in\Lambda_c\subset\{1,\dots,M\} $. The computation in Section~\ref{sec:multiscale_cavity} (and in Appendix \ref{app:multiscale_MFT}) still works, except that this time we need to introduce more order parameters $(r_c, q_c)_{c=1,\dots,C}$, where $q_c$ corresponds to the overlap $\frac{C}{M}\sum_{\mu\in\Lambda_c}\eta_\mu H_\mu$. These order parameters then satisfy the equations
\begin{align*}
&q_c=\EE H\langle \eta\rangle_{r_c},\qquad r_c=\frac{\lambda J(\sigma_c)}{1+\lambda\alpha\big(1-\frac{1}{C}\sum_{c=1}^CJ(\sigma_c)q_c\big)},
\end{align*}together with the cavity-bulk consistency equation $\sigma_c=r_c$. This system always admits the paramagnetic solution $\sigma_c=r_c=q_c=0$ for any given set of $c$'s, and the homogeneous solution $(\sigma_c,r_c,q_c)=(\sigma_*,r_*,q_*)$ for all $c$'s, which also coincides with the solution given by the ansatz \eqref{eq:aux_channel_bulk}. No other solution seems to exist. Interestingly, the latter seems to be always the most convenient in terms of MMSE, since we see numerically that in that case all the SNR values $\sigma_c$ get attracted to a higher SNR $\sigma_*$ w.r.t. the case when some $\sigma_c$ are set to $0$. Furthermore, if one interprets the above system of equations as a fixed point equation for $\sigma_c$,
\begin{align*}
    \sigma_c=\frac{\lambda J(\sigma_c)}{1+\lambda\alpha\big(1-\frac{1}{C}\sum_{c=1}^CJ(\sigma_c)^2\big)},
\end{align*}the solutions where some $\sigma_c=0$ are unstable under such iteration which discards them. One can also run a population dynamics \cite{mezard2009information} using the above equation interpreted as a distributional fixed point equation, which thus looks for a solution where the law of the $(\sigma_\mu)$ is the same for all $\mu$, rather than to look for solutions for the ordered profile $(\sigma_c$) shared by all rows. And, again, the final histogram of $(\sigma_\mu)$ ends up peaked around the value of the homogeneous solution $\sigma_*$.

A final attempt we made consists in admitting a block structure also in the $i$-index, i.e. the row index: $\sigma_{i\mu}=\sigma_{\ell c}$ for $i\in\Lambda_\ell,\mu\in\Lambda_c$, with $\ell=1,\dots,L$ and $c$ as before. With this block structure, one can recast the fixed points equations in terms of the same order parameters as above with some slight differences:
\begin{align*}
 &q_{\ell' c}=\EE H\langle \eta\rangle_{r_{\ell' c}},\qquad r_{\ell'c}=\frac{1}{L}\sum_{\ell=1}^L\frac{\lambda J(\sigma_{\ell c})}{1+\lambda\alpha\big(1-\frac{1}{C}\sum_{c=1}^CJ(\sigma_{\ell c})q_c\big)},
\end{align*}
where $\ell'$ corresponds to the index of the row-block from which the cavity was extracted. They are completed with the cavity-bulk consistency $\sigma_{\ell' c}=r_{\ell'c}$. All this must be true for all $\ell'\in \{1,\ldots,L\}$ as the cavity can be extracted from any block. Because the r.h.s. of the $r_{\ell'c}=\cdots$ equation is $\ell'$-independent, solutions must be homogeneous in the row-block indices and we are back to the previous case.

In conclusion, none of these possible generalizations of \eqref{eq:aux_channel_bulk} yields an improvement w.r.t.\ our current ansatz. This strengthen our belief that the mean-field theory is exact in the factorisation phase, as it is robust to richer ansatz as long as the bulk measure is assumed to be factorised.

\end{document}